\newcommand{\nat}{\ensuremath{\mathbb{N}}\xspace}
\newcommand{\tup}[1]{\langle{#1}\rangle}
\newcommand{\ext}[1]{\llbracket#1\rrbracket}
\newcommand{\props}{\mathsf{P}}
\newcommand{\oprops}{\mathsf{P'}}
\newcommand{\qprops}{\mathsf{Q}}
\newcommand{\acts}{\mathsf{D}}
\newcommand{\aact}{\ell}
\newcommand{\moddom}{S}
\newcommand{\tmoddom}{T}
\newcommand{\npmoddom}{M}
\newcommand{\model}{\mathbb{\moddom}}
\newcommand{\tmodel}{\mathbb{\tmoddom}}
\newcommand{\npmodel}{\mathbb{\npmoddom}}
\newcommand{\osmodel}{\mathbf{D}}
\newcommand{\compset}[1]{\{#1\}}
\newcommand{\mmodels}{\Vdash}
\newcommand{\umods}{\mathfrak{M}_1}
\newcommand{\sumods}{\umods^s}
\newcommand{\omegaunrav}[1]{{#1}^\omega}
\newcommand{\unravel}[1]{\hat{#1}}
\newcommand{\bis}{\mathrel{\underline{\leftrightarrow}}}
\newcommand{\resto}{{\upharpoonright}}
\newcommand{\dcup}{\uplus}
\newcommand{\stl}{\rightarrowtriangle} 
\newcommand{\lfp}{\mathsf{LFP}}
\newcommand{\gfp}{\mathsf{GFP}}
\newcommand{\tc}{\mathsf{TC}}
\newcommand{\st}{\mathrm{ST}}
\newcommand{\sorts}{\mathcal{S}}
\newcommand{\asort}{\mathtt{s}}
\newcommand{\aSort}{\mathtt{S}}
\newcommand{\pto}{\rightharpoonup}
\newcommand{\last}{\ensuremath{\mathtt{last}}\xspace}
\newcommand{\pdl}{\ensuremath{\mathrm{PDL}}\xspace}
\newcommand{\wcl}{\ensuremath{\mathrm{WCL}}\xspace}
\newcommand{\mso}{\ensuremath{\mathrm{MSO}}\xspace}
\newcommand{\muML}{\ensuremath{\mu\mathrm{ML}}\xspace}
\newcommand{\ML}{\ensuremath{\mathrm{ML}}}
\newcommand{\fo}{\ensuremath{\mathrm{FO}}\xspace}
\newcommand{\fotc}{\ensuremath{\mathrm{FO(TC)}}\xspace}
\newcommand{\folfp}{\ensuremath{\mathrm{FO(LFP)}}\xspace}
\newcommand{\unfolfp}{\ensuremath{\mathrm{FO(LFP^1)}}\xspace}
\newcommand{\binfotc}{\ensuremath{\mathrm{FO(TC^1)}}\xspace}
\newcommand{\ofo}{\ensuremath{\fo_1}\xspace}
\newcommand{\foe}{\ensuremath{\mathrm{FOE}}\xspace}
\newcommand{\ofoe}{\ensuremath{\foe_1}\xspace}
\newcommand{\mucaML}{\mu_\mathit{ca}\textup{ML}}
\newcommand{\mufoe}{\ensuremath{\mu\foe}\xspace}
\newcommand{\mucafoe}{\ensuremath{\mu_{ca}\foe}\xspace}
\newcommand{\muffoe}{\ensuremath{\mufoe^\ff}\xspace}
\newcommand{\mucaffoe}{\ensuremath{\mucafoe^\ff}\xspace}
\newcommand{\ff}{\textup{\guillemotright}}
\newcommand{\add}[2]{#1\mathsf{ADD}_{#2}}
\newcommand{\mult}[2]{#1\mathsf{MUL}_{#2}}
\newcommand{\monot}[2]{#1\mathsf{MON}_{#2}}
\newcommand{\dbnf}{\nabla}										
\newcommand{\dgbnfofo}[2]{\dbnf_{\fo}(#1,#2)}					
\newcommand{\dbnfofoe}[2]{\dbnf_{\foe}(#1,#2)}					
\newcommand{\mondgbnfofo}[3]{\dbnf^{#3}_\fo(#1,#2)}				
\newcommand{\mondbnfofoe}[3]{\dbnf^{#3}_{\foe}(#1,#2)}			
\newcommand{\posdgbnfofo}[2]{\dbnf^+_{\fo}(#1,#2)}				
\newcommand{\posdbnfofoe}[2]{\dbnf^+_{\foe}(#1,#2)}				
\newcommand{\fovar}{\mathsf{iVar}}
\newcommand{\ass}{g}
\newcommand{\val}{V}
\newcommand{\uval}{U}
\newcommand{\tscolors}{\kappa}
\newcommand{\tsval}{\tscolors^\natural}
\newcommand{\aut}{\ensuremath\mathbb{A}\xspace}
\newcommand{\baut}{\ensuremath\mathbb{B}\xspace}
\newcommand{\tmap}{\Delta}
\newcommand{\tmapProj}{\tmap^\exists}
\newcommand{\pmap}{\Omega}
\newcommand{\pmapProj}{\pmap^\exists}
\newcommand{\ord}{\preceq}
\newcommand{\reach}{\leadsto}
\newcommand{\lthen}{\to}
\newcommand{\llang}{\ensuremath{\mathcal{L}}\xspace}
\newcommand{\trees}{\ensuremath{\mathcal{T}}}
\newcommand{\mccomp}{C} 
\newcommand{\inc}{\sqsubseteq}
\newcommand{\here}[1]{{\Downarrow}#1}
\newcommand{\arediff}[1]{\mathrm{diff}(#1)}
\newcommand{\aresame}[1]{\mathrm{equal}(#1)}
\newcommand{\foeq}{\approx}
\newcommand{\tadd}{\vartriangle}
\newcommand{\tmono}{\circ}
\newcommand{\tmodal}{\blacktriangledown}
\newcommand{\vlist}[1]{\overline{{\mathbf{#1}}}}
\newcommand{\efgame}{\mathrm{EF}}
\newcommand{\agame}{\mathcal{A}}
\newcommand{\game}{\mathcal{G}}
\newcommand{\eloise}{\ensuremath{\exists}\xspace}
\newcommand{\abelard}{\ensuremath{\forall}\xspace}
\newcommand{\player}{\ensuremath{\Pi}\xspace}
\newcommand{\pmatches}[2]{\mathsf{PM}^{#1}_{#2}}
\newcommand{\win}{\text{\sl Win}} 
\newcommand{\Aut}{\mathit{Aut}}
\newcommand{\AutWA}{\mathit{Aut}_{wa}}
\newcommand{\seq}{{;}}
\newcommand{\choice}{\oplus}
\newcommand{\prog}{\pi}
\newcommand{\pprog}{\rho}
\newcommand{\relprog}{R}
\newcommand{\existsc}{\exists_{c}}
\newcommand{\existswc}{\exists_{wc}}
\newcommand{\forallwc}{\forall_{wc}}
\newcommand{\fconst}{F} 
\newcommand{\fwa}[2]{{#1}^{\!\div}_{#2}} 
\newcommand{\matches}{\mathcal{B}}
\newcommand{\match}{\pi}
\newcommand{\shA}{A^{\wp}}
\newcommand{\Dom}{\mathsf{Dom}}
\newcommand{\Ran}{\mathsf{Ran}}
\newcommand{\dual}[1]{#1^{\delta}}
\newcommand{\nada}{\varnothing}
\newcommand{\diff}[1]{\mathtt{diff}(#1)}
\newtheorem{theorem}{Theorem}[section]
\newtheorem{fact}[theorem]{Fact}
\newtheorem{proposition}[theorem]{Proposition}
\newtheorem{lemma}[theorem]{Lemma}
\newtheorem{corollary}[theorem]{Corollary}
\newtheorem{conjecture}[theorem]{Conjecture}
\theoremstyle{definition}
\newtheorem{definition}[theorem]{Definition}
\newtheorem{remark}[theorem]{Remark}
\newenvironment{proofof}[1]{\begin{trivlist}\item[\hskip\labelsep{\bf
Proof~of~{#1}.\ }]}{\hspace*{\fill} {\sc qed}\end{trivlist}}
\newtheorem{claim2}{\sc Claim}
\newenvironment{claim}{\begin{claim2}\rm}{\end{claim2}\rm}
\newenvironment{claimfirst}{\setcounter{claim2}{0}
               \begin{claim2}\rm}{\end{claim2}\rm}
\newenvironment{pfclaim}{\begin{trivlist}\item[]{\sc Proof of
Claim.}}{\hfill {\mbox{$\blacktriangleleft$}}\end{trivlist}}
\newenvironment{tbs}{%
   \small\tt
   \begin{itemize}}{\end{itemize}}
\newcommand{\btbs}{\begin{tbs}}                                                                      
\newcommand{\etbs}{\end{tbs}}
\title{%
Characterization theorems for \pdl and \fotc
}
\author{%
    Facundo Carreiro%
    \thanks{E-mail: \texttt{fcarreiro@dc.uba.ar}}\\\\
    \emph{Institute for Logic, Language and Computation}\\
    \emph{University of Amsterdam, The Netherlands}
}
\date{\small%
First version: January 12, 2015\\
Last revision: \today
}
\begin{document}

\maketitle

\noindent
\begin{center}
\begin{tabular}{|p{0.85\textwidth}|}
\hline
\textbf{Warning}: Parts of this manuscript are not valid due to a gap in the proofs of Section~\ref{ssec:cafoe2wcl}. Namely, the automata characterization of $\mucafoe$ and the bisimulation result for $\binfotc$ do not follow from the contents of this manuscript. More detail about the specific problems and possible solutions are to appear in my dissertation on December 2015.\\
\hline
\end{tabular}
\end{center}

\vspace{0.8cm}

\begin{abstract}
\noindent 
Our main contributions can be divided in three parts:
\begin{enumerate}[(1)]
  \itemsep 0 pt
  \item
  \textit{Fixpoint extensions of first-order logic}: we give a precise syntactic and semantic characterization of the relationship between \binfotc and \folfp.
  \item
  \textit{Automata and expressiveness on trees}: we introduce a new class of parity automata which, on trees, captures the expressive power of \binfotc and \wcl (weak chain logic). The latter logic is a variant of \mso which quantifies over finite chains. 
  \item 
  \textit{Expressiveness modulo bisimilarity}: we show that \pdl is expressively equivalent to the bisimulation-invariant fragment of both \binfotc and \wcl. 
\end{enumerate}
In particular, point (3) closes the open problems of the bisimulation-invariant characterizations of \pdl, \binfotc and \wcl all at once.

\bigskip
\noindent \textbf{Keywords:} Propositional Dynamic Logic, Transitive-closure logic, Chain Logic, characterization theorem, bisimulation-invariant fragment, fixed points, parity automata, Janin--Walukiewicz theorem, van~Benthem theorem, complete additivity.
\end{abstract}

\newpage
{\small%
\tableofcontents}
\newpage

\section{Introduction}\label{sec:intro}

In this article we study the relative expressive power of various modal and first-order fixpoint logics; that is, logics which have some kind of iteration or recursion mechanism.

On the modal side, we consider Propositional Dynamic Logic, a well-known extension of the basic modal language with non-deterministic programs (in particular, with the iteration construct $\prog^*$.) On the first-order side, we consider extensions of first-order logic with a reflexive-transitive closure operator (similar to that of \pdl) and first-order logic extended with a fixpoint operator (similar to the modal $\mu$-calculus).

\bigskip\noindent
This article can be roughly divided in three parts:
\begin{enumerate}[(1)]
  \itemsep 0 pt
  \item
  \textit{Fixpoint extensions of first-order logic}: we give a precise syntactic and semantic characterization of the relationship between \binfotc and \folfp.
  \item
  \textit{Automata and expressiveness on trees}: we introduce a new class of parity automata which, on trees, captures the expressive power of \binfotc and \wcl (weak chain logic). The latter logic is a variant of \mso which quantifies over finite chains. 
  \item 
  \textit{Expressiveness modulo bisimilarity}: we show that \pdl is expressively equivalent to the bisimulation-invariant fragment of both \binfotc and \wcl. 
\end{enumerate}

\paragraph{Extensions of first-order logic.}
It is well known that the reflexive-transitive closure $R^*$ of a binary relation $R$ is not expressible in first-order logic~\cite{MALQ:MALQ19750210112}. Therefore, a straightforward way to extend first-order logic is to add a reflexive-transitive closure operator:
\[
[\tc_{\vlist{x},\vlist{y}}.\varphi(\vlist{x},\vlist{y})](\vlist{u},\vlist{v})
\]
which states that $(\vlist{u},\vlist{v})$ belongs to the transitive closure of the relation denoted by $\varphi(\vlist{x},\vlist{y})$. In the above expression, the sequences of variables $\vlist{x},\vlist{y},\vlist{u},\vlist{v}$ should all be of the same length; this length is called the \emph{arity} of the transitive closure.
This extension of first-order logic, called \fotc or sometimes transitive-closure logic, was introduced by Immerman in~\cite{Immerman87languagesthat} where he showed that it captures the class of NLOGSPACE queries.

The arity hierarchy of \fotc was proven strict for finite models~\cite{Grohe1996103}.
Moreover, in some restricted classes of trees, full \fotc is even more expressive than \mso~\cite{fotcmorethanmso}. In this paper, however, we restrict our attention to \binfotc, which is \fotc restricted to sequences of length one; that is, the reflexive-transitive closure can only be applied to formulas $\varphi(x,y)$ defining a \emph{binary relation}. This fragment of \fotc is easily seen to be included in \mso.

\medskip
A more general way to extend first-order logic is to add a fixpoint operator as in~\cite{Chandra198299}. Consider, as an example, a first-order formula $\varphi(p,x)$ where $p$ is a monadic predicate and $x$ is a free variable. The set of elements $s\in\npmoddom$ of some model $\npmodel$ which satisfy $\varphi(p,s)$ clearly depends on the interpretation of $p$. This dependency can be formalized as a map 
\[
F^\varphi_{p{:}x}(Y) := \{s \in \npmoddom \mid \varphi(Y, s) \text{ is true at } \npmodel \}.
\]
Assuming that $\varphi$ is monotone in $p$, the least and greatest fixpoints of this map will exist by the Knaster-Tarski theorem. It is now easy to extend first-order logic with a fixpoint construction
\[
[\lfp_{p{:}x}.\varphi(p,x)](z)
\]
which holds iff the interpretation of $z$ belongs to the least fixpoint of the map $F^\varphi_{p{:}x}$. This extension is called first-order logic with \emph{unary} fixpoints and is usually denoted by \unfolfp. This is because the arity of the fixpoint relation (in our case $p$) is unary. As with the transitive closure operator, we can consider a logic where the arity of the fixpoint is not bounded. This logic is known as \folfp and was shown to capture PSPACE queries~\cite{Immerman87languagesthat}. In this article we focus on \unfolfp, which we will denote as $\mufoe$.

\medskip
It is not difficult to see (\textit{cf.}~Section~\ref{sec:prel:fotc}) that \fotc is included in \folfp and \binfotc is included in $\mufoe$. However, to the best of our knowledge the exact fragment of $\mufoe$ that 
corresponds to $\binfotc$ has not been characterized. 
As we will prove in this article, the key notion leading to such a 
characterization is that of \emph{complete additivity}.

A formula $\varphi$ is said to be completely additive in $p$ if for any non-empty family
of subsets $\{P_i\}_{i\in I}$ and variable $x$ 
it satisfies the following equation:
\[
F^\varphi_{p{:}x}(\bigcup_i P_i) = \bigcup_{ i } F^\varphi_{p{:}x}(P_i).
\]
Complete additivity has been studied (under the name `continuity') by van
Benthem~\cite{vanbenthem:1996}, in the context of operations on relations that 
are \emph{safe for} (that is, preserve) bisimulations. On the modal side,
both Hollenberg~\cite{MarcoPhD} and Fontaine and Venema~\cite{GaellePhD,FV12} gave syntactic 
characterizations of complete additivity for the modal $\mu$-calculus.\fcwarning{muML not introduced} Finally,
Carreiro and Venema~\cite{AiML2014} showed that \pdl is equivalent to the fragment $\mucaML$ of the $\mu$-calculus where the fixpoint operator is restricted to completely additive formulas.

\begin{fact}[\cite{AiML2014}]
  $\pdl$ is effectively equivalent to $\mucaML$ over all models.
\end{fact}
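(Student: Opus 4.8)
The plan is to establish the two inclusions $\pdl \subseteq \mucaML$ and $\mucaML \subseteq \pdl$ separately, each by a translation that is clearly computable, so that the resulting equivalence is effective. The conceptual pivot throughout is the observation that, by instantiating the defining equation on singletons, a map $F = F^{\varphi}_{p{:}x}$ is completely additive precisely when $F(Y) = \bigcup_{y \in Y} F(\{y\})$; that is, membership in the value of $F$ is determined pointwise. This is exactly the ``reachability'' flavour that the Kleene star expresses, and it is what makes the two logics match up.

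For the inclusion $\pdl \subseteq \mucaML$ I would define the translation by simultaneous induction on programs and formulas, sending $\langle a\rangle$ to a diamond, $\langle\pi_1\seq\pi_2\rangle\psi$ to $\langle\pi_1\rangle\langle\pi_2\rangle\psi$, $\langle\pi_1\choice\pi_2\rangle\psi$ to a disjunction, a test $\langle\chi?\rangle\psi$ to $\chi\wedge\psi$, and crucially $\langle\pi^*\rangle\psi$ to the least fixpoint $\mu p.(\psi \vee \langle\pi\rangle p)$ for a fresh $p$; boxes are handled dually, with $[\pi^*]\psi$ becoming $\nu p.(\psi\wedge[\pi]p)$. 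The only thing to verify is that these fixpoints land inside $\mucaML$, i.e.\ that $\langle\pi\rangle p$ is completely additive in $p$ (and dually $[\pi]p$ is completely multiplicative). This follows from two routine facts: a single diamond $\langle a\rangle p$ is completely additive because $\langle a\rangle$ distributes over arbitrary unions, and complete additivity in $p$ is preserved by the clauses for $\seq$, $\choice$, tests and iteration; disjoining the $p$-free $\psi$ does no harm, since the family in the defining equation is required to be non-empty, so the constant disjunct is absorbed.

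The substantial direction is $\mucaML \subseteq \pdl$, which I would prove by induction on the nesting depth of fixpoints. Its heart is a normal-form lemma for complete additivity: if $\varphi(p)$ is completely additive in $p$ and all its proper subformulas are already known to be \pdl-definable, then $\varphi$ is equivalent to an expression in which $p$ occurs in a purely ``linear-diamond'' fashion, so that its one-step behaviour is captured by a single \pdl program $\pi_\varphi$ together with a $p$-free exit condition $\chi$; concretely the target shape is $\varphi(p) \equiv \chi \vee \langle\pi_\varphi\rangle p$. Granting such a normal form, the pointwise reading of complete additivity yields exactly $\mu p.\varphi \equiv \langle\pi_\varphi^{*}\rangle\chi$, a \pdl formula, mirroring the forward translation. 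The dual fixpoint $\nu p.\varphi$, whose body is completely multiplicative, is handled by negation: $\neg\varphi$ becomes completely additive after substituting $\neg p$ for $p$, the $\mu$-case applies, and we negate the resulting formula, using that \pdl is closed under negation. Nested fixpoints are treated from the inside out: once an inner $\mu q.\theta$ has been replaced by a \pdl formula, the surrounding body is again an additive combination of \pdl formulas, so the normal-form step applies once more and the induction is well-founded.

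The main obstacle is this backward direction, and within it the normal-form lemma. Proving that every completely additive body can be massaged into the shape $\chi \vee \langle\pi_\varphi\rangle p$ requires showing that wherever $p$ sits inside a conjunction the other conjuncts may be taken $p$-free --- this is precisely where complete additivity, rather than mere monotonicity, is used, since a genuine conjunction $\langle a\rangle p \wedge \langle b\rangle p$ is \emph{not} additive --- while disjunctions of diamonds must be folded into a single program via $\choice$ and sequential contexts via $\seq$. One must also check with care that the reflexive-transitive closure $\pi_\varphi^{*}$ reproduces exactly the ordinal unfolding of $\mu p.\varphi$, finitely many approximants sufficing along each branch precisely because of additivity, and track the interaction with alternating completely multiplicative $\nu$-subformulas so that the translation terminates and remains effective.
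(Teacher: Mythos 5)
This statement is a \textsc{Fact} imported from the cited source \cite{AiML2014}: the present manuscript contains no proof of it, so there is no internal argument to compare yours against. That said, your reconstruction follows the route one would expect (and, for the converse direction, the syntactic proof strategy of the cited paper): the forward translation sends $\prog^*$ to $\mu p.(\psi\vee\tup{\prog}p)$ and checks that the body is completely additive in the fresh $p$, while the backward direction rests on a normal form $\chi\vee\tup{\prog_\varphi}p$ for completely additive bodies, whence $\mu p.\varphi\equiv\tup{\prog_\varphi^*}\chi$ by constructivity of completely additive maps. Note that the cited paper also gives an automata-theoretic proof via $\AutWA(\fo_1)$, which is the formulation this manuscript actually uses elsewhere (Fact~\ref{fact:autwaofo2pdl}).

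Two points in your backward direction need more care than you give them. First, $\mucaML$ is a \emph{syntactic} fragment (fixpoints restricted to a grammar of completely additive formulas), so the translation into \pdl should proceed by induction on that grammar; if you instead start from the semantic property of complete additivity, your normal-form lemma is essentially the Hollenberg/Fontaine--Venema syntactic characterization of complete additivity for $\muML$, a substantial result you would be invoking as a black box. Second, the normal form must be proved for a \emph{set} of variables, with target shape $\chi\vee\bigvee_i\tup{\prog_i}p_i$: an inner fixpoint $\mu q.\theta$ typically has the outer additive variable $p$ free in $\theta$, and only the multi-variable form lets you eliminate $q$ via $\tup{\prog_q^*}$ and land back in normal form with respect to $p$. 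You gesture at this (``track the interaction''), but it is precisely where the induction lives, and as stated your single-variable lemma does not cover it.
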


The first contribution of this article goes in the same direction as the above result. First of all, we consider a syntactic fragment $\mucafoe$ of $\mufoe$ by restricting the application of the unary fixpoint to formulas which are completely additive. We prove that \binfotc effectively corresponds to this fragment.

\begin{theorem}\label{thm:fotcmucafoe}
  $\binfotc$ is effectively equivalent to $\mucafoe$ over all models.
\end{theorem}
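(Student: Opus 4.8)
The plan is to exhibit effective, truth-preserving translations in both directions, each defined by recursion on the structure of the formula. Since both logics extend $\foe$ and are closed under the Boolean connectives and first-order quantifiers, every first-order clause of either translation is homomorphic and routine; the entire content of the theorem lives in the clauses handling the reflexive-transitive closure operator of $\binfotc$ and the unary least-fixpoint operator of $\mucafoe$. So the two things to establish are (a) that a transitive closure can be rewritten as a completely additive least fixpoint, and (b) that a completely additive least fixpoint can be rewritten as a transitive closure.

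For direction (a) I would use that the reflexive-transitive closure image of a point is the least fixpoint of the ``one more step'' operator. Concretely, for any already-translated $\varphi(x,y)$ I claim
\[
[\tc_{x,y}.\varphi(x,y)](u,v) \;\equiv\; [\lfp_{p{:}y}.\,(y = u \lor \exists z\,(p(z) \land \varphi(z,y)))](v).
\]
The only point needing a check is that the body $\psi(p,y) := (y = u \lor \exists z\,(p(z)\land\varphi(z,y)))$ is completely additive in $p$, so that the result lies in $\mucafoe$. This is immediate from the shape of $\psi$: its associated map sends $Y$ to $\{u\} \cup \{y : \exists z \in Y,\ \varphi(z,y)\}$, and since the families $\{P_i\}$ range over \emph{non-empty} families both the constant summand $\{u\}$ and the existential summand distribute over unions. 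An easy induction on the finite approximants of the fixpoint then identifies them with the $\varphi$-paths out of $u$, yielding the equivalence.

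Direction (b) is where complete additivity does the real work, and I expect it to be the main obstacle. Given $[\lfp_{p{:}x}.\varphi(p,x)](z)$ with $\varphi$ completely additive in $p$, I would extract from $\varphi$ two first-order ingredients by substitution into its atoms: the initialization predicate $\alpha(w) := \varphi[p(t) := \bot](w)$, obtained by replacing every atom $p(t)$ by $\bot$ (falsity), which defines $F^\varphi_{p{:}x}(\varnothing)$; and the step relation $\beta(x,y) := \varphi[p(t) := (t = x)](y)$, obtained by replacing every atom $p(t)$ by $t = x$, which defines $F^\varphi_{p{:}x}(\{x\})$. I then claim
\[
[\lfp_{p{:}x}.\varphi(p,x)](z) \;\equiv\; \exists w\,\bigl(\alpha(w) \land [\tc_{x,y}.\beta(x,y)](w,z)\bigr).
\]
The justification rests on the structural consequence of complete additivity that $F^\varphi_{p{:}x}(Y) = \bigcup_{s \in Y} F^\varphi_{p{:}x}(\{s\})$ for every non-empty $Y$ (apply the defining equation to the family of singletons of $Y$). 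This lets one prove, by a two-way induction, that $z$ belongs to the least fixpoint exactly when $z$ is reachable along a finite $\beta$-path from some point satisfying $\alpha$ --- precisely the content of the right-hand side. Without additivity the fixpoint would depend on the joint interpretation of $p$ rather than on its points one at a time, and no such path characterization would hold.

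The technical care needed for (b), and hence the crux of the proof, is threefold: first, checking that the atom-level substitutions defining $\alpha$ and $\beta$ are legitimate and commute with the recursive translation, so that $\alpha,\beta$ are genuine $\binfotc$ formulas (here one uses that $p$ occurs only in atoms $p(t)$, and, by complete additivity, never under a negation); second, correctly threading any additional free individual variables of $\varphi$ through as parameters; and third --- the most delicate point --- having a usable handle on the class of completely additive formulas. If $\mucafoe$ is taken as a genuinely syntactic fragment, one first needs a normal form guaranteeing complete additivity syntactically, on which the substitution argument can be carried out uniformly; establishing and exploiting such a normal form is, I expect, the heart of the matter, whereas the two displayed equivalences above are the comparatively easy skeleton of the argument.
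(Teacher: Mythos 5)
Your proposal is correct and follows essentially the same route as the paper: the $\tc$-to-$\lfp$ direction uses the identical formula $[\lfp_{p{:}y}.\,y\foeq u \lor \exists x(p(x)\land\varphi(x,y))](v)$, and the converse direction is exactly the paper's decomposition via Corollary~\ref{cor:fpcafrel}, with your $\alpha$ and $\beta$ corresponding to the paper's $\varphi(\bot,v)$ and $\varphi(p,v)[p(y)\mapsto u\foeq y]$ defining $F(\nada)$ and the step relation ${\stl_F}$. The ``two-way induction'' you defer to is precisely the content of the paper's Lemma~\ref{lem:charcaffp} and Theorem~\ref{thm:propscafmap}(2), and your concern about a syntactic handle on complete additivity is addressed by the paper's fragment $\add{\mufoe}{\qprops}$, whose soundness is proved (completeness is only conjectured, but only soundness is needed here).
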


As a minor contribution towards the above theorem we give, in Section~\ref{sec:fixpoint-caf}, a general characterization of fixpoints of arbitrary completely additive \emph{maps}, that is, maps which need not be induced by a formula.



\paragraph{Automata and expressiveness on trees.} 
It is difficult to overstress the importance of automata-theoretic techniques in logical questions.
The literature is vast in this topic, and we only name a few results which are relevant for the current article:
The classical work of Rabin~\cite{Rabin69} introduced tree automata to show that the monadic second-order theory of the infinite binary tree is decidable. This is one of the most fundamental decidability results to which many other decidability results in logic and computer science can be reduced. 

In a more contemporary paper~\cite{Walukiewicz96}, Walukiewicz introduced \emph{parity automata} for \mso on arbitrary trees. These automata were crucial in proving that the modal $\mu$-calculus is the bisimulation-invariant fragment of \mso~\cite{Jan96}. On the modal side, other classes of parity automata were used to prove the small model property~\cite{ydesmall} and uniform interpolation~\cite{MuIntp2000JSL} for the $\mu$-calculus.

If we restrict to \emph{sibling-ordered} trees, nested tree-walking automata were used
to separate \binfotc and \mso~\cite{Cate:2010:TCL:1706591.1706598}. 
Also, \fotc is known to correspond to automata with nested pebbles~\cite{FOTCpebbles} on finite ranked trees. However, not much is known about automata models for \fotc on arbitrary unordered unranked trees.

\medskip
In this part of the article we introduce a new class of parity automata which we call \emph{additive-weak parity automata}. The main result of this section shows that these automata capture many of the logics of this article (over trees) and will work as a common bridge between them in later sections. Before we turn to a description of these automata, we first have a look at the automata introduced by Walukiewicz~\cite{Walukiewicz96}, corresponding to $\mso$ (over tree models).

We fix the set of proposition letters of our models as $\props$ and think of
$\wp(\props)$ as an \emph{alphabet} or set of \emph{colors}.
An \mso-automaton is then a tuple $\aut = \tup{A, \tmap,
\pmap, a_I}$, where $A$ is a finite set of states, $a_I$ an initial state, and $\pmap:
A \to \nat$ is a parity function.
The transition function $\tmap$ maps a pair $(a,c) \in A \times \wp(\props)$ to a
sentence in the monadic first-order language with equality $\ofoe(A)$, of which the
state space $A$ provides the set of (monadic) predicates.
We shall refer to $\ofoe$ as the \emph{one-step language} of $\mso$-automata,
and denote the class of $\mso$-automata with $\Aut(\ofoe)$. The semantics of
these automata will be provided later in the article.
Walukiewicz's key result linking $\mso$ to $\Aut(\ofoe)$ states the following:
\begin{fact}[\cite{Walukiewicz96}]
  \mso and $\Aut(\ofoe)$ are effectively equivalent over tree models.
\end{fact}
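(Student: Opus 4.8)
The plan is to establish the two effective translations separately, between \mso-sentences and automata in $\Aut(\ofoe)$, both interpreted over tree models. The easier direction translates an automaton into a formula, while the harder direction proceeds by induction on \mso-formulas and requires closing the automaton class under the Boolean operations and projection.

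For the direction from automata to \mso, fix an automaton $\aut = \tup{A, \tmap, \pmap, a_I}$. I would write an \mso-sentence $\varphi_\aut$ asserting the existence of an accepting run. Concretely, I quantify existentially over a family $(X_a)_{a \in A}$ of monadic second-order variables encoding, for each node of the tree, the state (or, in the alternating reading, the set of states) labelling it in the run. The statement that this labelling starts in $a_I$ at the root and respects $\tmap$ at every node is first-order over the second-order variables, and crucially the local transition conditions $\tmap(a,c)$ already live in $\ofoe$, which is a fragment of \mso; hence they can be relativized to the successor set of each node and transcribed directly. Finally, the parity acceptance condition --- that along every infinite branch the maximum priority seen infinitely often is even --- is a standard \mso-expressible property of paths. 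Assembling these pieces yields $\varphi_\aut$ equivalent to $\aut$.

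For the direction from \mso to automata, I proceed by induction on the structure of the formula (after fixing a suitable syntactic normal form for \mso). The atomic cases yield simple automata directly. The inductive steps correspond to closure properties of $\Aut(\ofoe)$: disjunction and conjunction are handled by union and product constructions on automata; negation by a complementation construction (dualizing each one-step sentence to its Boolean dual, which stays in $\ofoe$ since the one-step language is closed under the relevant dualities, and complementing the parity function); and existential set-quantification $\exists X.\psi$ by a projection construction that existentially guesses the interpretation of $X$ along the run.

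The main obstacle is the interaction between complementation and projection and the acceptance condition. Complementation of a parity automaton is only sound once we appeal to the positional determinacy of parity games, which guarantees that the dual automaton accepts exactly the complement language. Projection, in turn, cannot be performed naively on an alternating device: I would first invoke a simulation theorem converting the alternating automaton into an equivalent non-deterministic one (again leaning on positional determinacy together with a powerset-style construction over the one-step models), and only then project out the guessed proposition letter. Verifying that these constructions stay inside $\Aut(\ofoe)$ --- i.e.\ that the transformed one-step sentences remain expressible in monadic first-order logic with equality --- is where the bulk of the technical work lies, and is precisely the point at which the choice of $\ofoe$ as the one-step language is essential.
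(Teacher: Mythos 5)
This statement is imported as a Fact from~\cite{Walukiewicz96}; the paper gives no proof of its own, only the one-line remark that the hard direction goes by induction on \mso-formulas using closure of $\Aut(\ofoe)$ under Boolean operations and set quantification, which is exactly your plan (simulation theorem, complementation via Boolean duals and positional determinacy, projection, plus run-encoding for the converse). Your sketch matches the standard Walukiewicz argument; the only cosmetic mismatch is that this paper's parity convention is ``minimum priority seen infinitely often is even'' rather than maximum.
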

One of the directions of this result is proved by inductively showing that every formula
$\varphi$ in $\mso$ can be effectively transformed into an equivalent
automaton $\aut_{\varphi} \in \Aut(\ofoe)$. In order to do that, these automata
are shown to be closed under the operations of \mso, i.e., Boolean operators and quantification over sets.

\medskip
The additive-weak parity automata that we introduce in Section~\ref{sec:aut} are a restriction of \mso-automata. In order to state the constraints, observe that given a parity automaton $\aut$ we can induce a graph on $A$ by setting transition from $a$ to $b$ if $b$ occurs in $\tmap(a,c)$ for some $c \in \wp(\props)$. A parity automaton $\aut$ is called an additive-weak automaton if it satisfies the following constraints for every \emph{maximal} strongly connected component $\mccomp \subseteq A$, states $a,b \in \mccomp$ and color $c\in \wp(\props)$:
\begin{description}
\itemsep 0 pt
\item[(weakness)] $\pmap(a) = \pmap(b)$.
\item[(additivity)] for every color $c\in\wp(\props)$:\\
    If $\pmap(b)=1$ then $\tmap(a,c)$ is completely additive in $\mccomp$.\\
    If $\pmap(b)=0$ then $\tmap(a,c)$ is completely multiplicative in $\mccomp$.
\end{description}
The class of these automata is denoted by $\AutWA(\ofoe)$.

As opposed to $\Aut(\ofoe)$, this class is not closed under the existential quantification of \mso. We prove that, instead, it is closed under the operations of \emph{weak chain logic} (\wcl), a monadic second-order logic which quantifies over \emph{finite chains} (as opposed to arbitrary sets). Here, we define a chain on a tree $\tmodel$ to be a set $X$ such that all elements of $X$ belong to the same branch. The original (non-weak) chain logic was introduced by Thomas in~\cite{Thomas:1984:LAS:2868.2871} and further studied in~\cite{Thomas96languagesautomata,Mikolaj-Thesis}. 

The second main contrubution of this article proves that, on trees, the class of additive-weak automata captures the expressive power of many of the fixpoint logics that we have considered, and also of weak chain logic.

\begin{theorem}\label{thm:allthesame}
On trees, the following formalisms are expressively equivalent:
\begin{enumerate}[(i)]
  \itemsep 0 pt
  \item $\AutWA(\ofoe)$: additive-weak automata based on $\ofoe$,
  \item $\wcl$: weak chain logic,
  \item $\mucafoe$: completely additive restriction of $\mufoe$,
  \item $\binfotc$: first-order logic with binary reflexive-transitive closure.
\end{enumerate}
Moreover, the equivalence is given by \emph{effective} translations.
\end{theorem}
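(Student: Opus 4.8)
The plan is to prove the four-way equivalence by exhibiting a system of \emph{effective} translations that makes the class $\AutWA(\ofoe)$ of additive-weak automata the common hub. Two of the four equivalences come essentially for free: Theorem~\ref{thm:fotcmucafoe} already gives that $\binfotc$ and $\mucafoe$ are effectively equivalent over \emph{all} models, in particular over trees, so (iii)$\,\equiv\,$(iv). It therefore suffices to establish the two biconditionals (i)$\,\equiv\,$(ii) and (i)$\,\equiv\,$(iii); every remaining equivalence (for instance the awkward direct passage between $\binfotc$ and $\wcl$) is then obtained by composing translations through the automaton model, which keeps all constructions effective.

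For (i)$\,\equiv\,$(iii) I would set up the standard correspondence between parity automata and fixpoint formulas, adapted to the additive-weak setting. In the direction (i)$\,\to\,$(iii), given $\aut=\tup{A,\tmap,\pmap,a_I}$ I introduce one fixpoint variable per state and read off a system of equations whose bodies are assembled from the one-step $\ofoe$-sentences $\tmap(a,c)$; weakness (the priority being constant on each maximal strongly connected component) lets me assign a single fixpoint type to each component --- least when the priority is $1$, greatest when it is $0$ --- yielding an alternation-free nesting, and the additivity/multiplicativity constraint on each component guarantees that the corresponding fixpoint is applied to a completely additive (resp.\ multiplicative) body, so the resulting formula lies in $\mucafoe$. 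The match between the acceptance game of $\aut$ and the fixpoint semantics is the usual unfolding argument. In the direction (iii)$\,\to\,$(i) I proceed by induction on $\mucafoe$-formulas: atoms, Booleans and first-order quantifiers produce base automata that trivially satisfy the additive-weak conditions, and the critical cases of a completely additive least fixpoint $[\lfp_{p{:}x}.\varphi](z)$ (and, dually, a completely multiplicative greatest fixpoint) are handled by closing off a new maximal strongly connected component of priority $1$ (resp.\ $0$), where complete additivity (resp.\ multiplicativity) of $\varphi$ in $p$ is exactly what supplies the required constraint.

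For (i)$\,\equiv\,$(ii) I would prove that $\AutWA(\ofoe)$ is closed under the operations of $\wcl$ and, conversely, that acceptance of an additive-weak automaton is $\wcl$-definable. The direction (ii)$\,\to\,$(i) is by induction on $\wcl$-formulas, the Boolean cases being routine; the essential step is existential quantification over a \emph{finite chain}, which I simulate by a projection construction that adjoins a component ``guessing'' the chain along a branch --- restricting the guess to finite chains is precisely what forces priority $1$ together with the additivity constraint, so that the projected automaton stays inside $\AutWA(\ofoe)$. For (i)$\,\to\,$(ii) I encode an accepting run of $\aut$ by finite-chain second-order variables and express the parity condition in $\wcl$; under weakness the priority seen along any branch eventually stabilises inside a single component, and additivity forces the relevant existential obligations to be witnessed after finitely many steps, so finite chains suffice to describe an accepting run.

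The main obstacle will be the two directions of (i)$\,\equiv\,$(ii), namely the exact alignment between the additive-weak constraints and finite-chain quantification. On the one hand, the projection needed for existential chain quantification must be shown to land back in $\AutWA(\ofoe)$: naive projection in the unrestricted class $\Aut(\ofoe)$ destroys weakness, and it is only the restriction to \emph{finite} chains that allows the guessing component to be given a uniform priority and a completely additive transition. On the other hand, expressing automaton acceptance in $\wcl$ hinges on proving that finite chains are enough to witness accepting runs, which ultimately rests on the additivity constraint forcing least-fixpoint (hence finitely reachable) behaviour in the components where existential choices are made. Verifying that both constructions are simultaneously correct and priority-preserving is where the real work of the theorem lies.
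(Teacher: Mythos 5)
Your high-level plan (hub through the automata, plus the observation that (iii)$\,\equiv\,$(iv) is already Theorem~\ref{thm:fotcmucafoe}) is in the right spirit, but the paper does not prove two biconditionals; it closes a one-directional \emph{cycle} $\wcl \to \AutWA(\ofoe) \to \mucaffoe \hookrightarrow \mucafoe \to \wcl$, and this difference is not cosmetic, because the two ``missing'' directions you propose to supply directly would not go through as described. First, the direct induction (iii)$\,\to\,$(i) from $\mucafoe$ to automata breaks down before you ever reach the fixpoint case: a $\mucafoe$ formula is a genuine first-order formula with several free individual variables, unguarded quantifiers $\exists x.\varphi$ that can look anywhere in the model, and fixpoints with parameters, none of which has a ``base automaton that trivially satisfies the additive-weak conditions.'' The paper reaches automata from $\mucafoe$ only indirectly, by first translating into $\wcl$ (where individual variables are re-encoded as singleton chain variables) and then using the finite-chain projection construction; indeed the paper explicitly lists a direct proof that $\AutWA(\ofoe)$ is ``closed under the $\lfp$ operation'' as an open problem in its conclusions.

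Second, your direction (i)$\,\to\,$(ii) --- encoding an accepting run by finite-chain second-order variables --- cannot work: a run of a parity automaton labels the whole (generally infinite) tree with states, and the sets $\{s \mid (a,s) \text{ occurs in the run}\}$ are spread across all branches, so they are neither finite nor chains and hence not in the range of $\existswc$. The paper never expresses acceptance in $\wcl$ directly; instead it translates an automaton into a $\mucaffoe$ formula via a finite unravelling into a tree with back edges (Section~\ref{sec:aut2caf}), and only afterwards translates fixpoints into $\wcl$ using the key fact that least fixpoints of completely additive maps that restrict to descendants are already reached inside some \emph{finite chain} (Theorem~\ref{thm:restofinsucchain}), which lets the $\mathsf{PRE}$-based $\mso$ translation be relativized to a finite chain. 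Finally, note that the cycle structure is load-bearing: the correctness of the $\mucafoe\to\wcl$ step requires knowing that the inductively translated subformula restricts to descendants, and the paper obtains this by composing the already-established legs $\wcl \to \AutWA(\ofoe) \to \mucaffoe$ and applying Lemma~\ref{lem:restosuc}; your hub decomposition does not obviously make that argument available.
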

It is worth remarking that even though \binfotc and \wcl coincide over trees, the logics \binfotc and \wcl themselves are not equivalent. This is shown in Section~\ref{ssec:versus}.

Another point worth observing is that the original automata characterization of \mso was done for the signature with only one binary relation $R$ (and many monadic predicates). In this article we prove all the results for a signature with a family of binary relations $R_{\aact\in\acts}$, mainly because of our interest in the connection with poly-modal logics like \pdl. Adapting the automata to this setting required new techniques which are (necessarily) different from the ones applicable to \mso-automata. For example, we use \emph{multi-sorted} one-step languages in our automata.

In order to prove the above results, we need a detailed analysis of certain fragments of $\ofoe$. The third contribution of this article is to provide, in Section~\ref{sec:onestep}, a syntactical characterization of the monotone, completely additive and completely multiplicative fragments of \emph{multi-sorted} monadic first-order logic with and without equality.


\paragraph{Expressiveness modulo bisimilarity.}
The last part of this article concerns the relative expressive power of some languages
when restricted to properties which are bisimulation-invariant.
The interest in such expressiveness questions stems from applications where
transition systems model computational processes, and bisimilar
structures represent the \emph{same} process.
Seen from this perspective, properties of transition structures are relevant
only if they are invariant under bisimilarity.
This explains the importance of bisimulation-invariance results of the form
\[
M \equiv L / {\bis} ,
\]
stating that, 
one language $M$ is expressively complete with respect to the
relevant (i.e., bisimulation-invariant) properties that can be formulated in
another language $L$.
In this setting, generally $L$ is some rich yardstick formalism such as
first-order or monadic second-order logic, and $M$ is some modal-style
fragment of $L$, usually displaying much better computational behavior
than the full language $L$.

A seminal result in the theory of modal logic is van Benthem's Characterization
Theorem~\cite{vanBenthemPhD}, stating that every bisimulation-invariant
first-order formula is actually equivalent to (the standard
translation of) a modal formula:
\[
\ML \equiv \mathrm{FO}/{\bis} .
\]
Over the years, a wealth of variants of the Characterization Theorem have been
obtained.
For instance, Rosen proved that van Benthem's theorem is one of the few
preservation results that transfers to the setting of finite
models~\cite{rose:moda97}; for a recent, rich source of van Benthem-style
characterization results, see Dawar \& Otto~\cite{DawarO09}.
In this paper we are mainly interested is the work of Janin \&
Walukiewicz~\cite{Jan96}, who extended van Benthem's result to the setting
of fixpoint logics, by proving that the modal $\mu$-calculus ($\muML$) is the
bisimulation-invariant fragment of monadic second-order
logic (\mso):
\[
\muML \equiv \mso/{\bis} .
\]
Despite the continuous study of the connection between modal and classical logics there are still important logics which are not well understood and represent exciting problems. In particular, the bisimulation-invarant fragments of \wcl and \binfotc have not been characterized. Also, it is not known wether there is a natural classical logic whose bisimulation-invariant fragment corresponds to \pdl (see~\cite[p.~91]{MarcoPhD}), even though there are results leading towards this direction~\cite{safe,MarcoPhD,vanbenthem:1996}.

The language now called Propositional Dynamic Logic was first investigated by 
Fisher and Ladner~\cite{DBLP:journals/jcss/FischerL79} as a logic to reason 
about computer program execution.
\pdl extends the basic modal logic with an infinite collection of diamonds 
$\tup{\prog}$ where the intended intuitive interpretation of $\tup{\prog}\varphi$
is that ``some terminating execution of the program $\prog$ from the current 
state leads to a state satisfying~$\varphi$''.

One of the most important and characteristic features of \pdl is that the 
program construction $\prog^*$ (corresponding to iteration) endows \pdl with 
second-order capabilities while still keeping it computationally well-behaved.
For an extensive treatment of \pdl we refer the reader 
to~\cite{Harel:2000:DL:557365}.


As the reader may have observed, automata will play a crucial role connecting the logics in this article. The case of \pdl will be no exception. The following recent result by Carreiro and Venema gives a characterization of \pdl as additive-weak parity automata based on the one step language of first-order logic \emph{without} equality:

\begin{fact}[\cite{AiML2014}]\label{fact:autwaofo2pdl}
  $\pdl$ is effectively equivalent to $\AutWA(\ofo)$ over all models.
\end{fact}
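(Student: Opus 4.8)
The plan is to prove the two inclusions $\pdl \sqsubseteq \AutWA(\ofo)$ and $\AutWA(\ofo) \sqsubseteq \pdl$ by effective, semantics-preserving translations, using the already-available equivalence $\pdl \equiv \mucaML$ as a stepping stone. In fact the cleanest route is to establish $\AutWA(\ofo) \equiv \mucaML$ and then compose with the preceding Fact; the additive-weak constraints on the automaton side are exactly the syntactic counterpart of the combination of completely additive fixpoints and absence of alternation that defines $\mucaML$ on the formula side. Throughout, the choice of the equality-free one-step language $\ofo$ (rather than $\ofoe$) is what keeps the one-step behaviour \emph{modal} and bisimulation-invariant, so that we land inside $\pdl$ rather than inside a counting or full \mso fragment.

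For the direction from automata to formulas, I would read a $\mucaML$ formula off an automaton $\aut = \tup{A,\tmap,\pmap,a_I}$ by the standard device of viewing the states as fixpoint variables and the transition map $\tmap$ as a system of mutually recursive one-step equations, one equation $x_a \approx \tmap(a,\cdot)$ per state. Solving this system hierarchically along the maximal strongly connected components yields a formula: a component of priority $1$ is solved by a least fixpoint, a component of priority $0$ by a greatest fixpoint. Here the two defining constraints do precisely the required work. \emph{Weakness} guarantees that the priority is constant on each component, so no genuine parity alternation occurs inside a component and the resulting nesting of $\lfp$ and $\gfp$ is alternation-free. \emph{Additivity} guarantees that in a priority-$1$ component every $\tmap(a,c)$ is completely additive in the component, so the least fixpoints that appear are completely additive (and dually the greatest fixpoints completely multiplicative). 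Together these place the formula in $\mucaML$, after which the preceding Fact delivers an equivalent $\pdl$ formula.

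For the converse I would proceed by induction on $\pdl$ (equivalently on $\mucaML$), showing that $\AutWA(\ofo)$ is closed under all the formula- and program-forming operations. The atomic and Boolean cases are routine: literals give one-state automata, and disjunction/conjunction are handled by a disjoint-union/product construction with the priority taken as the maximum, which preserves both weakness and the additivity discipline. The modalities $\tup{\prog}$ and $[\prog]$ add a fresh initial state whose transition is an $\ofo$-sentence encoding ``make one $\prog$-step''. The decisive case is iteration: for $\tup{\prog^*}\varphi$ one introduces a fresh state $a$ sitting in its own strongly connected component, assigns it priority $1$, and sets $\tmap(a,\cdot)$ to a completely additive sentence expressing ``either $\varphi$ holds now, or take one $\prog$-step and return to $a$''. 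This produces exactly an additive priority-$1$ component, i.e.\ a completely additive least fixpoint; dually $[\prog^*]$ yields a completely multiplicative priority-$0$ component.

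The main obstacle I anticipate is the tight two-way matching between \emph{completely additive one-step $\ofo$-sentences} and \emph{$\pdl$ program terms}. Concretely one needs a normal-form analysis of the additive fragment of the relevant multi-sorted $\ofo$ showing that an $\ofo$-sentence which is completely additive in a set of states decomposes into a choice of single modal steps into those states, and \emph{only} such steps; this is what forces an additive least fixpoint to be a \emph{regular} iteration $\prog^*$ (built from $\seq$, $\choice$ and tests) rather than an arbitrary recursion, and conversely lets every $\prog^*$ be captured by one additive component. Getting this correspondence exactly right --- so that the syntactic additive-weak constraints coincide with semantic complete additivity of the induced fixpoint map, and so that the equality-free one-step language captures precisely the bisimulation-safe behaviour --- is the crux on which both translations, and hence the whole equivalence, rest.
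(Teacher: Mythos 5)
This statement is imported verbatim as a Fact from~\cite{AiML2014} and carries no proof in the present paper, but your proposal follows essentially the route that source takes (and that the paper mirrors for the $\ofoe$-based automata in Theorem~\ref{thm:autofoetomuffoe} and Corollary~\ref{cor:ofoadditivenf}): factor through $\mucaML$, read a formula off the automaton by solving the transition equations SCC by SCC with weakness supplying alternation-freeness and additivity supplying complete additivity of the resulting fixpoints, and rest the converse on the syntactic normal form of the completely additive fragment of one-step $\ofo$. The one point to tighten is the iteration case: the induction must run over $\mucaML$ with \emph{atomic} modalities (a ``one $\prog$-step'' for a compound program is not a single move of the acceptance game), and ``$\varphi$ holds now'' cannot appear inside a one-step sentence about successors --- it has to be absorbed by disjoining the transition of $\aut_\varphi$'s initial state into that of the fresh fixpoint state, which stays within $\add{\ofo}{A'}$ because that disjunct mentions no state of the new component.
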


In the last part of this article we build on the previous sections and obtain the following characterization result which closes the open questions for \pdl, \wcl and \binfotc.

\begin{theorem}\label{thm:mainbinfotc}
  $\pdl$ is effectively equivalent to $\binfotc/{\bis}$.
\end{theorem}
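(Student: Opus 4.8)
The plan is to establish the two inclusions $\pdl \subseteq \binfotc/{\bis}$ and $\binfotc/{\bis} \subseteq \pdl$ separately, using the automata bridge of Theorem~\ref{thm:allthesame} together with Fact~\ref{fact:autwaofo2pdl}. The inclusion $\pdl \subseteq \binfotc/{\bis}$ I would obtain by a direct standard translation. By recursion on programs one defines, for each $\pdl$ program $\prog$, an $\binfotc$ formula $\relprog_\prog(x,y)$ describing the binary relation that $\prog$ denotes: atomic programs use the accessibility relations $R_\aact$, the clauses for $\prog_1\seq\prog_2$ and $\prog_1\choice\prog_2$ use relation composition and disjunction, tests use the standard translation of the test formula guarded by $x=y$ (equality is harmless here, as it is available in $\binfotc$), and the iteration $\prog^*$ is handled precisely by the binary reflexive-transitive closure operator applied to $\relprog_\prog$. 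Feeding this into the usual diamond clause $\st(\tup{\prog}\varphi)(x) = \exists y\,(\relprog_\prog(x,y)\wedge \st(\varphi)(y))$ yields for every $\pdl$ formula $\alpha$ an equivalent $\binfotc$ formula $\st(\alpha)(x)$. Since $\pdl$ is bisimulation-invariant, $\st(\alpha)$ is a bisimulation-invariant $\binfotc$ formula, and the translation is visibly effective.

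For the converse, let $\psi$ be a bisimulation-invariant $\binfotc$ formula. As every pointed model is bisimilar to its tree unraveling, $\psi$ is determined by its behaviour on tree models, and there Theorem~\ref{thm:allthesame} supplies an effectively obtained equivalent automaton $\aut \in \AutWA(\ofoe)$. Bisimulation-invariance of $\psi$ means that the tree language recognized by $\aut$ is closed under tree bisimilarity. The core of the argument is then an \emph{equality-elimination} lemma: any $\aut \in \AutWA(\ofoe)$ whose tree language is bisimulation-closed admits an equivalent (on trees) automaton $\aut' \in \AutWA(\ofo)$ over the equality-free one-step language. Granting this, Fact~\ref{fact:autwaofo2pdl} converts $\aut'$ into an equivalent $\pdl$ formula $\alpha$, so that $\psi$ and $\alpha$ agree on all trees; since both are bisimulation-invariant and every model is bisimilar to its unraveling, they agree on all pointed models. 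Composing the effective translations gives the desired effective map from bisimulation-invariant $\binfotc$ into $\pdl$.

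The proof idea for the equality-elimination lemma follows the pattern of the Janin--Walukiewicz theorem and proceeds via $\omega$-unravelings. Every tree $\tmodel$ is bisimilar to its $\omega$-unraveling $\omegaunrav{\tmodel}$, in which each successor-type is realized either not at all or infinitely often. On such $\omega$-expanded one-step models the distinctness tests of $\foe$ carry no information, so each transition formula $\tmap(a,c) \in \ofoe(A)$ can be replaced by a formula $\tmap'(a,c) \in \ofo(A)$ that agrees with it on every $\omega$-expanded one-step model. The resulting $\aut'$ then agrees with $\aut$ on all $\omega$-unravelings; and since the language of $\aut$ is bisimulation-closed by hypothesis while that of $\aut'$ is bisimulation-closed because it is equality-free, the two automata accept exactly the same trees.

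The main obstacle is making this one-step replacement while respecting the structural constraints defining $\AutWA$. It is not enough to eliminate equality modulo $\omega$-expansion: on each maximal strongly connected component $\mccomp$ the new transition formula $\tmap'(a,c)$ must remain completely additive when $\pmap = 1$ and completely multiplicative when $\pmap = 0$, so that $\aut'$ is again an additive-weak automaton. This is exactly where the syntactic analysis of the monotone, completely additive and completely multiplicative fragments of multi-sorted monadic first-order logic developed in Section~\ref{sec:onestep} must be brought to bear: one needs that the passage from $\ofoe$ to $\ofo$ on $\omega$-expanded one-step models can be carried out by a translation that preserves monotonicity, complete additivity and complete multiplicativity. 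Verifying that such a constraint-preserving equality-elimination exists for the multi-sorted one-step languages used by these automata is the delicate technical heart of the theorem, and it is where I would expect the genuine difficulty to lie.
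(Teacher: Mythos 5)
Your proposal is correct and follows essentially the same route as the paper: the easy inclusion via the standard translation of programs using the $\tc$ operator, and the converse via Theorem~\ref{thm:allthesame} to pass to $\AutWA(\ofoe)$ on trees, followed by an $\omega$-unravelling-based one-step elimination of equality landing in $\AutWA(\ofo)$ (the paper's construction $(-)^{\bullet}$ of Lemma~\ref{lem:bisautofoe}), and finally Fact~\ref{fact:autwaofo2pdl}. The ``delicate technical heart'' you single out --- that the equality-elimination at the one-step level must preserve complete additivity and multiplicativity on each strongly connected component --- is exactly what the paper handles via the normal forms and syntactic fragment characterizations of Section~\ref{sec:onestep}.
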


\begin{theorem}\label{thm:mainwcl}
  $\pdl$ is effectively equivalent to $\wcl/{\bis}$.
\end{theorem}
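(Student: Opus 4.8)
The plan is to obtain Theorem~\ref{thm:mainwcl} as a consequence of Theorem~\ref{thm:mainbinfotc} together with the tree-level equivalence of $\wcl$ and $\binfotc$ supplied by Theorem~\ref{thm:allthesame}. The conceptual glue is that a bisimulation-invariant property is entirely determined by its restriction to tree models: every pointed model $(\model,s)$ is bisimilar to its tree unraveling $(\unravel{\model},s)$, so any bisimulation-invariant formula takes the same value at $(\model,s)$ and at $(\unravel{\model},s)$. Thus I would carry out all reasoning on trees, where Theorem~\ref{thm:allthesame} makes $\wcl$ and $\binfotc$ interchangeable, and invoke unravelings only at the very end to lift a tree-level equivalence to an equivalence over all models.

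I would prove the two inclusions separately. For $\wcl/{\bis}\subseteq\pdl$, let $\varphi\in\wcl$ be bisimulation-invariant. By Theorem~\ref{thm:allthesame} there is a $\binfotc$ formula $\psi$ with $\psi\equiv\varphi$ over all trees; since $\varphi$ is bisimulation-invariant, $\psi$ is bisimulation-invariant \emph{as a property of trees}. Feeding this into the tree-level content of Theorem~\ref{thm:mainbinfotc} yields a $\pdl$ formula $\chi$ with $\chi\equiv\psi$, and hence $\chi\equiv\varphi$, over trees. Now both $\varphi$ and $\chi$ are bisimulation-invariant and agree on all trees, so for arbitrary $(\model,s)$ we get $\model,s\models\varphi$ iff $\unravel{\model},s\models\varphi$ iff $\unravel{\model},s\models\chi$ iff $\model,s\models\chi$, i.e.\ $\varphi\equiv\chi$ over all models. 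For the converse $\pdl\subseteq\wcl/{\bis}$, every $\pdl$ formula is bisimulation-invariant (by Fact~\ref{fact:autwaofo2pdl}, since acceptance by an $\AutWA(\ofo)$ automaton is preserved under bisimilarity), so it suffices to match it by a bisimulation-invariant $\wcl$ formula: I would obtain a $\binfotc$ witness from Theorem~\ref{thm:mainbinfotc} and translate it to $\wcl$ over trees via Theorem~\ref{thm:allthesame}, lifting to all models by the same unraveling argument. As all invoked translations are effective, so are the resulting ones.

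The hard part will be the tree-to-all-models transfer, and more precisely the legitimate use of Theorem~\ref{thm:mainbinfotc} at the tree level. The equivalence of $\wcl$ and $\binfotc$ holds \emph{only} over trees, and neither logic is bisimulation-invariant on its own, so the lifting step genuinely relies on both the source formula and the target $\pdl$ formula being bisimulation-invariant; if at any point an intermediate $\binfotc$ or $\wcl$ formula were merely equivalent to the desired property on trees without itself being bisimulation-invariant over all models, the argument would collapse. The delicate point is therefore to verify that the proof of Theorem~\ref{thm:mainbinfotc} indeed produces, from a formula that is bisimulation-invariant \emph{over trees}, a $\pdl$-equivalent \emph{over trees} — equivalently, to isolate as a lemma that for these logics a bisimulation-invariant property definable over trees is definable over all models by a bisimulation-invariant formula. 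This is where I expect the real work to reside, and, in view of the warning attached to Section~\ref{ssec:cafoe2wcl}, also where the actual gap is to be found.
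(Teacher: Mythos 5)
Your first inclusion, $\wcl/{\bis}\subseteq\pdl$, is workable but is a detour: the paper goes directly $\wcl \to \AutWA(\ofoe) \to \AutWA(\ofo) \to \pdl$ (via Theorem~\ref{thm:allthesame}, Lemma~\ref{lem:bisautofoe} and Fact~\ref{fact:autwaofo2pdl}), using bisimulation invariance of the source formula only in the single step $\model\models\varphi$ iff $\model^\omega\models\varphi$. Your routing through $\binfotc$ forces you to apply Theorem~\ref{thm:mainbinfotc} to a formula $\psi$ that is only bisimulation-invariant \emph{as a tree property}, which the statement of that theorem does not cover; you correctly flag this, and it can indeed be repaired by unfolding the proof (the construction $\psi\mapsto\aut_\psi\mapsto\aut_\psi^\bullet\mapsto\chi$ only needs $\tmodel\models\psi$ iff $\tmodel^\omega\models\psi$ for trees $\tmodel$, which follows from tree-level invariance). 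So this half is essentially the paper's argument with an avoidable complication.

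The second inclusion, $\pdl\subseteq\wcl/{\bis}$, has a genuine gap. From a $\pdl$ formula $\chi$ you produce, via $\binfotc$ and Theorem~\ref{thm:allthesame}, a $\wcl$ formula $\varphi$ that agrees with $\chi$ \emph{only on trees}. Your ``same unraveling argument'' then needs $\unravel{\model}\models\varphi$ iff $\model\models\varphi$, i.e.\ it needs $\varphi$ itself to be bisimulation-invariant (or at least unravelling-invariant) over all models --- but that is precisely what you have no control over: $\varphi$ is an arbitrary $\wcl$ formula whose behaviour on non-tree models is unconstrained by its agreement with $\chi$ on trees. Since $\wcl$ is not bisimulation-invariant (Section~\ref{ssec:versus} exploits exactly the collapse of chain quantification on non-tree models), tree-equivalence cannot be lifted; you would be assuming the conclusion. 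The paper avoids this entirely by giving a \emph{direct} translation valid over all models: it uses $\pdl\equiv\mucaML$, observes that $\muML$-formulas restrict to descendants on arbitrary models (Claim~\ref{claim:psiinv:2}), and then uses Theorem~\ref{thm:restofinsucchain} to replace the least-fixpoint quantification by quantification over finite chains, yielding $\st^{wc}_x:\mucaML\to 2\wcl$ correct on every transition system. Some such all-models argument is indispensable for this direction; your tree-only construction cannot supply it. (Separately, note that the paper's own warning about Section~\ref{ssec:cafoe2wcl} afflicts Theorem~\ref{thm:allthesame}, on which both your halves lean, but that is a defect of the paper's machinery rather than of your strategy.)
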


It is worth remarking that even though the bisimulation-invariant fragments of \binfotc and \wcl coincide, the logics \binfotc and \wcl themselves are not equivalent. This is shown in Section~\ref{ssec:versus}.
Summing up, we give characterizations of \pdl as the bisimulation-invariant fragment of both an extension of first-order logic and a variant of monadic second-order logic.

\section{Preliminaries}\label{sec:prel}
\subsection{General terminology, transition systems and trees} \label{ssec:prelim_trees}

Throughout this article we fix a set $\props$ of elements that will be called
\emph{proposition letters} and denoted with small Latin letters $p$, $q$, etc.
%
We also fix a set $\acts$ of \emph{atomic actions}.

We use overlined boldface letters to represent sequences, for example a list of variables $\vlist{x} := x_1,\dots,x_n$ or a sequence of sets $\vlist{T} \in \wp(S)^n$. We blur the distinction between sets and sequences: a sequence may be used as a set comprised of the elements of the list; in a similar way, we may assume a fixed order on a set and see it as a list. As an abuse of notation (and to simplify notation), given a map $f:A^{n+m}\to B$ and $\vlist{a} \in A^n$, $\vlist{a}'\in A^m$ we write $f(\vlist{a},\vlist{a}')$ to denote $f(a_1,\dots,a_n,a'_1,\dots,a'_m)$.

Given a binary relation $R \subseteq X \times Y$, for any element $x \in X$,
we indicate with $R[x]$ the set $\compset{ y \in Y \mid (x,y) \in R}$ while $R^+$
and $R^{*}$ are defined respectively as the transitive closure of~$R$ and
the reflexive and transitive closure of~$R$. The set $\Ran(R)$ is defined as $\bigcup_{x\in X}R[x]$.

\paragraph{Transition systems.}
A \emph{labeled transition system} (LTS) on the set of propositions $\props$ and actions $\acts$ is a tuple $\model = \tup{\moddom,R_{\aact\in\acts},\tscolors,s_I}$ where $\moddom$ is the universe or domain of $\model$; the map $\tscolors:\moddom\to\wp(\props)$ is a marking (or coloring) of the elements in $\moddom$;
$R_\aact\subseteq \moddom^2$ is the accessibility relation for the atomic action $\aact\in\acts$;
and $s_I \in \moddom$ is a distinguished node.
We use $R$ without a subscript to denote the binary relation defined as $R := \bigcup_{\aact\in\acts} R_\aact$.
%

%
Observe that a marking ${\tscolors:\moddom\to\wp(\props)}$ can be seen as a valuation $\tsval:\props\to\wp (\moddom)$ given by $\tsval(p) = \{s \in \moddom \mid p\in \tscolors(s)\}$. 
We say that $\model$ is \emph{$p$-free} if $p\notin \props$ or $p\notin \tscolors(s)$ for all $s\in\moddom$.
Given a set of propositions $\oprops$ and $p\notin \oprops$, a \emph{$p$-extension} of a LTS $\model = \tup{\moddom,R_{\aact\in\acts},\tscolors,s_I}$ over $\oprops$
is a transition system $\tup{\moddom,R_{\aact\in\acts},\tscolors',s_I}$ over $\oprops\cup\{p\}$
such that $\tscolors'(s)\setminus\{p\} = \tscolors(s)$ for all $s \in \moddom$.
Given a set $X_p \subseteq \moddom$, we use $\model[p\mapsto X_p]$ to denote the $p$-extension
where $p \in \tscolors'(s)$ iff $s \in X_p$.
%


\paragraph{Trees.}
A \emph{$\props$-tree} $\tmodel$ is a LTS over $\props$ in which every node can
be reached from $s_I$, and every node except $s_I$ has a unique $R$-predecessor;
the distinguished node $s_I$ is called the \emph{root} of $\tmodel$.
A tree is called \emph{strict} when $\Ran(R_\aact) \cap \Ran(R_{\aact'}) = \nada$ for every $\aact\neq\aact'$.


\begin{figure}
\centering
\begin{tikzpicture}[
->,>=latex,
level/.style={level distance=1.5cm, sibling distance=2cm/#1}
]
\begin{scope}[xshift=-4.5cm]
\node {$s_I$}
    child {
        node {$s_0$}
            child {
                node {$s_{00}$}
                edge from parent
                node [near start, left] {$\aact,\aact'$}
            }
            child {
                node {$s_{01}$}
                edge from parent
                node [near start, right] {$\aact'$}
            }
            edge from parent
            node [near start, left] {$\aact$}
    }
    child {
        node {$s_{1}$}
        child {
                node {$s_{10}$}
                edge from parent
                node [near start, left] {$\aact$}
            }
            child {
                node {$s_{11}$}
                edge from parent
                node [near start, right] {$\aact,\aact'$}
            }
        edge from parent
        node [near start, right] {$\aact$}
    };
\end{scope}

\begin{scope}
\node {$s_I$}
    child {
        node {$s_0$}
            child {
                node {$s_{00}$}
                edge from parent
                node [near start, left] {$\aact$}
            }
            child {
                node {$s_{01}$}
                edge from parent
                node [near start, right] {$\aact'$}
            }
            edge from parent
            node [near start, left] {$\aact$}
    }
    child {
        node {$s_{1}$}
        child {
                node {$s_{10}$}
                edge from parent
                node [near start, left] {$\aact$}
            }
            child {
                node {$s_{11}$}
                edge from parent
                node [near start, right] {$\aact'$}
            }
        edge from parent
        node [near start, right] {$\aact$}
    };
\end{scope}

\begin{scope}[xshift=4.5cm]
\node {$s_I$}
    child {
        node (s0) {$s_0$}
            child {
                node (s01) {$s_{01}$}
                edge from parent
                node [near start, right] {$\aact'$}
            }
        edge from parent
        node [near start, left] {$\aact$}
    }
    child {
        node (s1) {$s_{1}$}
            child {
                node {$s_{11}$}
                edge from parent
                node [near start, right] {$\aact'$}
            }
        edge from parent
        node [near start, right] {$\aact$}
    };

    \draw[->] (s1) -- (s01) node [midway,above] {$\aact$};
\end{scope}
\end{tikzpicture}
\caption{A tree, a strict tree, and an LTS which is not a tree.}
\end{figure}




Each node $s \in \tmoddom$ uniquely defines a subtree of $\tmodel$ with carrier
$R^{*}[s]$ and root $s$. We denote this subtree by ${\tmodel.s}$.
We use the term \emph{tree language} as a synonym of class of trees.

The \emph{tree unravelling} of an LTS $\model$ is given by $\unravel{\model} := \tup{\hat{\moddom},\hat{R}_{\aact\in\acts},\hat{\tscolors},s_I}$ where $\hat{\moddom}$ is the set of ($\acts$-decorated) finite paths $s_I \to_{\aact_1} e_1 \to_{\aact_2} \dots \to_{\aact_n} e_n$ in $\model$ stemming from $s_I$; $\hat{R}_\aact(t,t')$ holds iff $t'$ is an extension of $t$ through the relation $\aact$; and the color of a path $t\in \hat{\moddom}$ is given by the color of its last node in $\moddom$. The $\omega$-unravelling $\omegaunrav{\model}$ of $\model$ is an unravelling which has $\omega$-many copies of each node different from the root.

\begin{remark}
	The ($\omega$-)unravelling of a labelled transition system is a \emph{strict} tree.
\end{remark}

\noindent
Also observe that if there is only one relation $R$, the notion of tree and strict tree coincide.

\paragraph{Chains and generalized chains.}
Let $\model$ be an arbitrary model. A \emph{chain} on $\model$ is a set $X\subseteq \moddom$ such that $(X,R^*)$ is a totally ordered set; i.e., the following conditions are satisfied for every $x,y\in X$:
\begin{description}
	\itemsep 0pt
	\item[(antisymmetry)] if $x R^* y$ and $y R^* x$ then $x=y$,
	\item[(transitivity)] if $x R^* y$ and $y R^* z$ then $x R^* z$,
	\item[(totality)] $x R^* y$ or $y R^* x$.
\end{description}
A \emph{finite chain} is a chain based on a finite set. 
A \emph{generalized chain} is a set $X \subseteq \moddom$ such that $X\subseteq P$, for some path $P$ of $\model$. A \emph{generalized finite chain} is a finite subset $X \subseteq \moddom$ such that $X\subseteq P$, for some finite path $P$ of $\model$.


\begin{proposition}
	Every chain on $\model$ is also a generalized chain on $\model$.
\end{proposition}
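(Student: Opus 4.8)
The plan is to produce, from a given chain $X$, a single path $P$ of $\model$ whose set of nodes contains $X$; this is exactly what it means for $X$ to be a generalized chain. The only structure we have to work with is that $(X,R^*)$ is a total order, so by totality any two elements $x,y\in X$ are $R^*$-comparable, and whenever $x\mathrel{R^*}y$ with $x\neq y$ there is, by definition of the reflexive-transitive closure, a finite $R$-path $x=w_0,w_1,\dots,w_k=y$ linking them inside $\moddom$. The idea is to thread these witnessing finite paths together along the $R^*$-ordering of $X$, so that the resulting path visits every element of $X$.

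First I would settle the case where $X$ is finite. Using antisymmetry and totality, the order $R^*$ arranges $X$ as a strictly increasing list $x_1,x_2,\dots,x_n$ with $x_i\mathrel{R^*}x_{i+1}$ and $x_i\neq x_{i+1}$. For each consecutive pair choose a finite $R$-path $\pi_i$ from $x_i$ to $x_{i+1}$, and let $P$ be the concatenation $\pi_1\pi_2\cdots\pi_{n-1}$. This $P$ is a genuine finite path of $\model$, and by construction every $x_i$ occurs among its nodes, so $X\subseteq P$ and hence $X$ is a generalized chain. The same concatenation argument works verbatim when $(X,R^*)$ has order type $\omega$, producing an infinite forward path.

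The main obstacle is the general infinite case, and it is essentially a question about the order type of $(X,R^*)$ together with the precise reading of ``path''. A forward sequence of nodes can only exhaust a chain whose order type order-embeds into $\omega$: the nodes of such a sequence are visited in $R^*$-monotone fashion, so a chain that is densely ordered, or infinite in the backward direction, cannot be swept out by a single forward walk. The way I would resolve this is to use the notion of a path as a set of nodes lying along one branch, i.e.\ a set linearly preordered by $R^*$; every chain is in particular pairwise $R^*$-comparable by totality, so $X$ itself already meets this condition and $X\subseteq X$ exhibits $X$ as a subset of a path. The delicate point to get right is therefore the reconciliation of the two descriptions of a path — as an amalgamation of the finite witnessing $R$-paths on the one hand, and as a maximal $R^*$-linearly-ordered set on the other — and verifying that the finite witnesses can indeed be amalgamated into one coherent object. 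Once that is in place, the inclusion of chains into generalized chains follows directly from the definitions.
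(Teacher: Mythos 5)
The paper states this proposition without any proof, so your argument has to stand on its own. Your treatment of finite chains is correct and is surely the intended argument: a finite set totally ordered by $R^*$ enumerates as $x_1,\dots,x_n$ with $x_i\mathrel{R^*}x_{i+1}$, each nontrivial $R^*$-link is witnessed by a finite $R$-walk by definition of the reflexive-transitive closure, and concatenating these walks yields a single path of $\model$ whose node set contains $X$. The same goes for chains of order type $\omega$. Since the paper only ever quantifies over \emph{finite} generalized chains, this is also the only case that is used anywhere downstream.

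The general infinite case, however, is not resolved by your proposal; it is dissolved by changing the definition. The paper's (implicit) notion of path is a walk, i.e.\ a sequence of nodes connected by successive $R$-steps --- this is how paths appear in the definition of the tree unravelling and in the worked examples of Figure~\ref{fig:chains}, where a path may even revisit a vertex. Under that reading your own diagnosis is correct and fatal: a chain whose order type does not embed into $\omega$ need not lie on any walk. Concretely, take $\moddom=\mathbb{Z}$ with $R$ the successor relation, so that $R^*$ is the usual order $\leq$; the whole of $\mathbb{Z}$ is a chain, but every walk is a forward sequence whose node set has order type at most $\omega$, so no walk contains $\mathbb{Z}$ (a dense order such as $(\mathbb{Q},<)$ with $R$ taken to be $<$ itself gives the same failure). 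Declaring a ``path'' to be any set linearly preordered by $R^*$ makes the proposition true but trivially so (via $X\subseteq X$), and it is not the notion the paper works with. The honest fix is either to restrict the proposition to finite chains, which suffices for everything the paper does with it, or to state explicitly that paths of arbitrary order type are admitted; as written, the second half of your argument is a redefinition rather than a proof.
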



\begin{figure}[h]
\centering
\begin{tikzpicture}[
scale=0.85,
font=\small,
every node/.style={circle,draw=black!60,fill=black!10,thick,inner sep=1pt},
cnode/.style={draw=blue!75,fill=blue!20},
nnode/.style={draw=red!75,fill=red!20},
->,>=latex
]
\node[draw=none,fill=none,yshift=-3.5cm,xshift=-0.4cm] {(a)};
\node[draw=none,fill=none,yshift=-3.5cm,xshift=3cm] {(b)};
\node[draw=none,fill=none,yshift=-3.5cm,xshift=7.5cm] {(c)};
\node[draw=none,fill=none,yshift=-3.5cm,xshift=11.5cm] {(d)};
\begin{scope}[
grow=down,
xshift=-0.5cm,
level/.style={level distance=1cm, sibling distance=1.5cm/#1},
]
\node {$1$}
    child {
        node[cnode] {$2$}
            child {
                node {$3$}
                    child {
                        node[cnode] {$4$}
                    }         
            }         
    }
    child {
        node {$5$}
            child {
                node {$6$}
                    child {
                        node {$7$}
                    }         
            }         
    };
\end{scope}
\begin{scope}[xshift=3.5cm,yshift=-1.5cm]
\def \n {6}
\def \radius {1.5cm}
\def \margin {9} 

\foreach \s in {1,...,\n}
{
  \node[cnode] at ({360/\n * (\s - 1)}:\radius) {$\s$};
  \draw ({360/\n * (\s - 1)+\margin}:\radius) 
    arc ({360/\n * (\s - 1)+\margin}:{360/\n * (\s)-\margin}:\radius);
}
\end{scope}
\begin{scope}[
xshift=8.09cm,
yshift=-1.5cm,
]
\def \n {7}
\def \radius {1.5cm}
\def \margin {9+19.5} 

\foreach \s in {2,3,...,\n}
{
  \node[cnode] at ({(360/(\n-1) * (\s - 2)+19.5)}:\radius) {$\s$};
  \draw[->, >=latex] ({360/(\n-1) * (\s - 2)+\margin}:\radius) 
    arc ({360/(\n-1) * (\s - 2)+\margin}:{360/(\n-1) * (\s-1)-\margin}:\radius);
}
\end{scope}
\begin{scope}[
xshift=9.5cm,
level/.style={level distance=1cm}
]
\node[xshift=1cm,cnode] {$1$}
    child {
        node[xshift=-1cm] {$2$}
            child {
                node[xshift=1cm] {$8$}
                    child {
                        node[cnode] {$9$}
                    }         
            }         
    };
\end{scope}
\begin{scope}[
grow=down,
xshift=13.5cm,
level/.style={level distance=1cm, sibling distance=1.5cm/#1},
]
\node[nnode] {$1$}
    child {
        node[nnode] {$2$}
            child {
                node {$3$}
                    child {
                        node[nnode] {$4$}
                    }         
            }         
    }
    child {
        node {$5$}
            child {
                node[nnode] {$6$}
                    child {
                        node {$7$}
                    }         
            }         
    };
\end{scope}
\end{tikzpicture}
\caption{Examples and counter-examples of chains.}
\label{fig:chains}
\end{figure}

In Fig.~\ref{fig:chains} we show some examples of (generalized) chains and non-chains: 
in (a) the set $X_a=\{2,4\}$ is a finite chain. In (b) the generalized finite chain $X_b=\{1,2,3,4,5,6\}$ is witnessed, among others, by the path $3 \to 4 \to 5 \to 6 \to 1 \to 2$. Observe, however, that $X_b$ is \emph{not} a chain, since there is no possible total ordering of $X_b$ by $R^*$ (antisymmetry fails). In (c) the generalized finite chain $X_c = \{1,3,4,5,6,7,9\}$ is witnessed by the path $1 \to 2 \to \dots \to 7 \to 2 \to 8 \to 9$; observe that the element $2$ is repeated in the path. Again, $X_c$ is also not a finite chain. In the last example (d), the set $X_d = \{1,2,4,6\}$ is not a generalized chain (and hence not a chain).

The following proposition states a useful relationship between chains and generalized chains: on trees this distinction vanishes.

\begin{proposition}\label{prop:chainsontrees}
	On trees, chains and general chains coincide.
\end{proposition}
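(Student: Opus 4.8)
The plan is to invoke the previously established proposition that every chain is a generalized chain, so that it suffices to prove the converse inclusion over trees: every generalized chain on a tree is a chain. Unpacking the definitions, a generalized chain $X$ is one contained in some path $P$, whereas being a chain means that $(X, R^*)$ satisfies antisymmetry, transitivity and totality. I would first dispose of transitivity, which comes for free: $R^*$ is transitive on the whole domain, being a reflexive-transitive closure, so its restriction to any $X$ is transitive regardless of the model. This reduces the work to establishing antisymmetry and totality for generalized chains on trees.

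For totality I would note that the comparability condition already holds for generalized chains in an \emph{arbitrary} model, not just on trees: if $X \subseteq P$ with $P = s_0 \to s_1 \to \cdots$, then any two elements $x,y \in X$ occur as $x = s_i$ and $y = s_j$ with, say, $i \le j$; chaining the edges $s_i R s_{i+1} R \cdots R s_j$ and using $R \subseteq R^*$ together with transitivity gives $x R^* y$. Hence totality is met without using the tree assumption at all.

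The real content, and the main obstacle, is antisymmetry, which is exactly where the tree structure enters; indeed, example (b) of Fig.~\ref{fig:chains} exhibits a generalized chain on a cyclic model that fails antisymmetry. I would show that on a tree $R^*$ is antisymmetric on the entire domain, by introducing a depth function. Using that the root $s_I$ has no $R$-predecessor and every other node has a unique one, the sequence of predecessors starting from any node $s$ is deterministic and, since $s$ is reachable from $s_I$, must terminate at the root after finitely many steps; I take this length as a well-defined depth $d(s)$. Then $s R t$ forces $s$ to be the unique predecessor of $t$, so $d(t) = d(s)+1$, whence $R$ strictly increases depth and $x R^+ y$ implies $d(x) < d(y)$. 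Consequently, if $x R^* y$ and $y R^* x$ held with $x \neq y$, we would obtain $x R^+ y$ and $y R^+ x$, hence $d(x) < d(y) < d(x)$, a contradiction; therefore $x = y$.

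Finally I would assemble the three conditions: on a tree any generalized chain $X$ inherits transitivity and antisymmetry of $R^*$ from the ambient domain, and satisfies totality by the path-containment argument, so $(X, R^*)$ is a total order and $X$ is a chain. Combined with the earlier proposition, this yields the coincidence of chains and generalized chains on trees. I expect no further subtleties beyond making the depth argument precise; the one point to state carefully is that the root has no $R$-predecessor, since this underlies both the termination of the predecessor chain and the identity $d(t) = d(s)+1$.
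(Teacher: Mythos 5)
Your proof is correct and rests on the same observation as the paper's: on a tree there are no cycles, so any generalized chain, being contained in a path, is totally ordered by $R^*$. The paper dispatches this in one line by noting that every path sits inside a branch, whereas you verify the order axioms explicitly (with a depth function giving antisymmetry); this is a more detailed rendering of the same argument, not a different one.
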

\begin{proof}
	Observe that every path on a tree $\tmodel$ sits inside some branch of $\tmodel$. Therefore every generalized chain $X$ can be embedded in some branch of $\tmodel$ and hence $(X,R^*)$ will be a total order. The key concept in the background is that on trees there are no cycles.
\end{proof}

\subsection{Games}

We introduce some terminology and background on infinite games.
All the games that we consider involve two players called \emph{\'Eloise}
($\eloise$) and \emph{Abelard} ($\abelard$).
In some contexts we refer to a player $\player$ to specify a
a generic player in $\{\eloise,\abelard\}$.
Given a set $A$, by $A^*$ and $A^\omega$ we denote respectively the set of
words (finite sequences) and streams (or infinite words) over $A$.

A \emph{board game} $\game$ is a tuple $(G_{\eloise},G_{\abelard},E,\win)$,
where $G_{\eloise}$ and $G_{\abelard}$ are disjoint sets whose union
$G=G_{\eloise}\cup G_{\abelard}$ is called the \emph{board} of $\game$,
$E\subseteq G \times G$ is a binary relation encoding the \emph{admissible
moves}, and $\win \subseteq G^{\omega}$ is a \emph{winning condition}.
An \emph{initialized board game} $\game@u_I$ is a tuple
$(G_{\eloise},G_{\abelard},u_I, E,\win)$ where
$u_I \in G$ is the
\emph{initial position} of the game.

A special case of winning condition is induced by a parity function
$\pmap: G \to \nat$ by defining
$\win_\pmap := \{ g \in G^\omega \mid \text{the \emph{minimum} parity occurring infinitely often in $g$ is \emph{even}}\}$.
In this case, we say that $\game$ is a parity game and sometimes
simply write $\game=(G_{\eloise},G_{\abelard},E,\pmap)$.

Given a board game $\game$, a \emph{match} of $\game$ is simply a path
through the graph $(G,E)$; that is, a sequence $\match = (u_i)_{i< \alpha}$ of
elements of $G$, where $\alpha$ is either $\omega$ or a natural number,
and $(u_i,u_{i+1}) \in E$ for all $i$ with $i+1 < \alpha$.
A match of $\game@u_{I}$ is supposed to start at $u_{I}$.
Given a finite match $\match = (u_i)_{i< k}$ for some $k<\omega$, we call
$\last(\match) := u_{k-1}$ the \emph{last position} of the match; the
player $\player$ such that $\last(\match) \in G_{\player}$ is supposed to move
at this position, and if $E[\last(\match)] = \nada$, we say that
$\player$ \emph{got stuck} in $\match$.
A match $\match$ is called \emph{total} if it is either finite, with one of the
two players getting stuck, or infinite. Matches that are not total are called
\emph{partial}.
Any total match $\match$ is \emph{won} by one of the players:
If $\match$ is finite, then it is won by the opponent of the player who gets stuck.
Otherwise, if $\match$ is infinite, the winner is $\eloise$ if $\match \in
\win$, and $\abelard$ if $\match \not\in \win$.

Given a board game $\game$ and a player $\player$, let $\pmatches{G}{\player}$ denote
the set of partial matches of $\game$ whose last position belongs to player
$\player$.
A \emph{strategy for $\player$} is a function $f:\pmatches{G}{\player}\to G$.
A match $\match  = (u_i)_{i< \alpha}$ of $\game$ is
\emph{$f$-guided} if for each $i < \alpha$ such that $u_i \in G_{\player}$ we
have that $u_{i+1} = f(u_0,\dots,u_i)$.
Let $u \in G$ and a $f$ be a strategy for $\player$.
We say that $f$ is a \emph{surviving strategy} for $\player$ in $\game@u$ if
\begin{enumerate}
  \item[(i)] For each $f$-guided partial match $\match$ of $\game@u$, if $\last(\match)$
  is in $G_{\player}$ then $f(\match)$ is legitimate, that is,
  $(\last(\match),f(\match)) \in E$.
\end{enumerate}
We say that $f$ is a \emph{winning strategy} for $\player$ in $\game@u$ if, additionally, 
%
\begin{enumerate}
  \item[(ii)] $\player$ wins each $f$-guided total match of $\game@u$.
\end{enumerate}
If $\player$ has a winning winning strategy for $\game@u$ then $u$ is called a \emph{winning position} for $\player$ in $\game$.
The set of positions of $\game$ that are winning for $\player$ is denoted by $\win_{\player}(\game)$.
A strategy $f$ is called \emph{positional} if $f(\match) = f(\match^{\prime})$ for each $\match,\match^{\prime} \in \Dom(f)$ with $\last(\match) = \last(\match^{\prime})$.
A board game $\game$ with board $G$ is \emph{determined} if $G = \win_{\eloise}(\game) \cup \win_{\abelard}(\game)$, that is, each $u \in G$ is a winning position for one of the two players.

\begin{fact}[Positional Determinacy of Parity Games~\cite{EmersonJ91,Mostowski91Games}]
For each parity game $\game$, there are positional strategies $f_{\eloise}$
and $f_{\abelard}$ respectively for player $\eloise$ and $\abelard$, such that
for every position $u \in G$ there is a player $\player$ such that $f_{\player}$ is a
winning strategy for $\player$ in $\game@u$.
\end{fact}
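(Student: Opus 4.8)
The plan is to prove positional determinacy by transfinite induction on the range of the priority function $\pmap$, following the attractor-decomposition strategy of Zielonka. The base case is when $\pmap$ is constant: if every position has even priority then $\eloise$ wins every infinite match, so any surviving strategy is winning and determinacy reduces to the elementary fact that reachability/safety games are positionally determined (which I would prove directly via the attractor construction below); the case of constant odd priority is dual.

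For the inductive step I would isolate the least priority $m$ occurring in $\game$ and let $Z$ be the set of positions carrying priority $m$. Assume first that $m$ is odd, so that visiting $Z$ infinitely often is good for $\abelard$. I would form the attractor for $\eloise$ of $Z$ --- the set of positions from which $\eloise$ can force a visit to $Z$ in finitely many steps, computed as the least fixpoint of the monotone operator $X \mapsto Z \cup \{u \in G_{\eloise} \mid E[u]\cap X \neq \nada\} \cup \{u \in G_{\abelard} \mid E[u]\subseteq X\}$ --- and recurse on the subgame $\game'$ obtained by deleting this attractor. The subgame uses strictly fewer priorities, so the induction hypothesis applies and partitions its board into two positional winning regions. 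The heart of the argument is then the standard ``remove-and-repeat'' fixpoint: iterating the deletion of $\abelard$'s winning region in the residual subgame together with its attractor to $Z$ yields, in the limit, a decomposition of $G$ into $\win_{\eloise}(\game)$ and $\win_{\abelard}(\game)$ on which both players play positionally, by combining the attractor strategies (which are positional by construction) with the positional strategies supplied by the induction hypothesis on the residual subgames.

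The main obstacle is twofold. First, because the board $G$ may be infinite, the naive induction on the size of the arena used for finite games is unavailable: the recursion must be organised on the well-founded parameter given by the number of distinct priorities, and both the attractor and the ``remove-and-repeat'' construction must be set up as genuine (possibly transfinite) least-fixpoint computations so that positionality and the parity condition survive in the limit. Second, one must check carefully that gluing the positional sub-strategies along the attractors creates no infinite match that evades the parity condition --- concretely, that every infinite $\eloise$-guided match either eventually stays inside a residual region where the inductive strategy guarantees an even least-priority-infinitely-often, or passes through priority-$m$ positions only finitely often. I expect this interface argument, rather than the fixpoint bookkeeping, to be the delicate part.

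An alternative I would keep in reserve, and which fits the fixpoint spirit of this paper, is the signature (progress-measure) method: attach to each position an ordinal-valued tuple indexed by the odd priorities, obtain the least such measure as the least fixpoint of a monotone ``lift'' operator on the complete lattice of measures via Knaster--Tarski, and read off a positional winning strategy for $\eloise$ on its domain; the dual measure then yields the positional strategy for $\abelard$, and hence determinacy. This route replaces the attractor induction by a single global fixpoint, at the cost of more intricate ordinal arithmetic in the definition of the lift.
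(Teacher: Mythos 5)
The paper offers no proof of this Fact --- it is imported wholesale from Emerson--Jutla and Mostowski --- so there is no in-paper argument to compare yours against line by line; the cited proofs are in fact closer to the signature/progress-measure route you keep ``in reserve'' than to the attractor decomposition you develop. Your overall plan (induction on the number of distinct priorities, attractors and the remove-and-repeat step organised as transfinite least fixpoints so that infinite boards are handled, gluing of positional sub-strategies along attractors) is a standard and workable road to positional determinacy, and you correctly identify the gluing step as the delicate part. Note only that the induction parameter requires $\pmap$ to have \emph{finite} range, which the paper's signature $\pmap : G \to \nat$ does not literally guarantee but which holds in every acceptance game the paper actually uses.

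There is, however, a concrete error in the inductive step that sinks exactly the interface argument you flag as delicate: you attach the attractor of $Z$ to the wrong player. When the least priority $m$ is odd, the player who \emph{wants} to revisit $Z$ is $\abelard$, so one must remove $\abelard$'s attractor of $Z$. Then the residual subgame is a trap for $\abelard$ (the moves his subgame strategy uses are genuinely available in the full game, and $\eloise$ cannot be forced out of the residue), so $\eloise$'s positional subgame strategy on her residual winning region lifts verbatim to the full game; iterating the removal of $\mathrm{Attr}_{\eloise}$ of her accumulated winning region then converges, and on the final residue $\abelard$ wins by alternating ``attract to $Z$'' with his subgame strategy. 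With $\eloise$'s attractor of $Z$, as you wrote it, everything is reversed: the complement of $\mathrm{Attr}_{\eloise}(Z)$ traps $\eloise$ but not $\abelard$ (an $\abelard$-position outside that attractor merely has \emph{some} successor outside it and may have others inside), so $\abelard$ can escape the subgame into the attractor and $\eloise$'s subgame wins do not lift to $\game$; worse, the attractor strategy itself is useless to her, since forcing a visit to $Z$ only produces the odd priority $m$ that she needs to see only finitely often. The fix is purely a matter of swapping the players (take the $\sigma$-attractor of $Z$ where $\sigma$ is the player for whom the parity of $m$ is favourable), after which your transfinite remove-and-repeat scheme and the two-case analysis of infinite glued matches --- eventually trapped in a residue, or visiting $Z$ infinitely often --- go through as you describe.
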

From now on, we always assume that each strategy we work with in parity games
is positional. Moreover, we will think of a positional strategy $f_\player$ for player $\player$
as a function $f_\player:G_\player\to G$.

\subsection{Parity automata}

We recall the definition of a parity automaton, adapted to our setting.
Since we will be comparing parity automata defined in terms of various
one-step languages, it makes sense to make the following abstraction.

\begin{definition}
Given a set $A$ and sorts $\sorts = \{\asort_1,\dots,\asort_n\}$,
we define a \emph{one-step model} to be a tuple $\osmodel = (D_{\asort_1},\dots,D_{\asort_n},\val)$
consisting of a domain $D$ and sets $D_{\asort_1},\dots,D_{\asort_n}$ such that $\bigcup_\asort D_\asort = D$,
and a valuation $\val: A \to \wp D$. A one-step model is called \emph{strict} when the sets $D_{\asort\in\sorts}$ are pairwise disjoint, that is, when $D_{\asort_1},\dots,D_{\asort_n}$ is a partition of $D$.
Depending on context, elements of $A$ will be called \emph{monadic predicates}, \emph{names}
or \emph{propositional variables}. When the sets $D_{\asort\in\sorts}$ are not relevant we will just write the one-step model as $(D,\val)$. The class of all one-step models will be denoted by $\umods$ and the class of all strict one-step models will be denoted by $\sumods$.


\begin{figure}[h]
\centering
\begin{tikzpicture}[thick,inner sep=1pt,font=\small,node distance=1.5cm]
	\begin{scope}[every node/.style={circle,draw=black!75,fill=white}]
	\node (n2) {$1$};
	\node[right of=n2] (n3) {$2$};
	\node[right of=n3] (n4) {$3$};
	\node[right of=n4] (n5) {$4$};
	\node[right of=n5] (n6) {$5$};
	\node[right of=n6] (n7) {$6$};
	\end{scope}

	\node[below=9pt of n2] {$\{\}$};
	\node[below=9pt of n3] {$\{b\}$};
	\node[below=9pt of n4] {$\{a\}$};
	\node[below=9pt of n5] {$\{a,b\}$};
	\node[below=9pt of n6] {$\{\}$};
	\node[below=9pt of n7] {$\{a\}$};

	\node[above=9pt of n2] {$\asort_1$};
	\node[above=9pt of n4] {$\asort_2$};
	\node[above=9pt of n6] {$\asort_3$};

	\begin{pgfonlayer}{background}
	\filldraw [line width=4mm,join=round,black]
		(n2.north  -| n2.west) rectangle (n2.south  -| n2.east);

	\filldraw [line width=4mm,join=round,black!50]
		(n3.north -| n3.west) rectangle (n5.south  -| n5.east);
	\filldraw [line width=3mm,join=round,white]
		(n3.north -| n3.west) rectangle (n5.south  -| n5.east);

	\filldraw [line width=4mm,join=round,black!50]
		(n5.north  -| n5.west) rectangle (n7.south  -| n7.east);
	\filldraw [line width=3mm,join=round,black!25]
		(n5.north  -| n5.west) rectangle (n5.south  -| n5.east);

	\end{pgfonlayer}
\end{tikzpicture}
\caption{One-step model with sorts (above) and valuation (below).}
\end{figure}
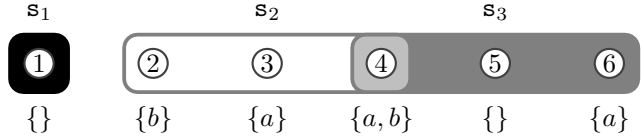

A \emph{(multi-sorted) one-step language} is a map $\llang$ assigning to each set $A$ and sorts $\sorts$, a set $\llang(A,\sorts)$ of objects called \emph{one-step formulas} over $A$ (on sorts $\sorts$). When the sorts are understood from context (or fixed) we simply write $\llang(A)$ instead of $\llang(A,\sorts)$.
We require that $\llang(\bigcap_{i} A_{i},\sorts) = \bigcap_{i} \llang(A_{i},\sorts)$,
so that for each $\varphi \in \llang(A,\sorts)$ there is a smallest $A_{\varphi} \subseteq A$ such
that $\varphi \in \llang(A_{\varphi},\sorts)$; this $A_{\varphi}$ is the set of names that \emph{occur} in $\varphi$.

We assume that one-step languages come with a \emph{truth}
relation: given a one-step model $\osmodel$, a formula $\varphi \in \llang$
is either \emph{true} or \emph{false} in $\osmodel$, denoted by,
respectively, $\osmodel \models \varphi$ and $\osmodel \not\models \varphi$.
We also assume that $\llang$ has a \emph{positive fragment} $\llang^+$
characterizing monotonicity in the sense that a formula $\varphi \in \llang(A,\sorts)$ is
(semantically) monotone iff it is equivalent to a formula $\varphi' \in
\llang^+(A,\sorts)$.
\end{definition}

The one-step languages $\llang$ featuring in this paper all are induced by
well-known logics. Examples include (multi-sorted) monadic first-order logic (with and without equality)
and fragments of these languages.

\begin{definition}
\label{def:parity_aut}
A \emph{parity automaton} based on the one-step language $\llang$, actions $\acts$ and 
alphabet $\wp(\props)$ is a tuple $\aut = \tup{A,\tmap,\pmap,a_I}$ such that $A$ is a
finite set of states of the automaton, $a_I \in A$ is the initial state,
$\tmap: A\times \wp(\props) \to \llang^+(A,\acts)$
is the transition map, and $\pmap: A \to \nat$ is the parity map.
The collection of such automata will be denoted by $\Aut(\llang,\props,\acts)$.
For the rest of the article we fix the set of actions $\acts$ and omit it in our notation,
we also omit the set $\props$ when clear from context or irrelevant.
\end{definition}

Acceptance and rejection of a transition system by an automaton is defined
in terms of the following parity game.

\begin{definition}
Given an automaton $\aut = \tup{A,\tmap,\pmap,a_I}$ in $\Aut(\llang,\props)$ and a $\props$-transition
system $\model = \tup{\moddom,R_{\aact\in\acts},\tscolors,s_I}$, the \emph{acceptance game}
$\agame(\aut,\model)$ of $\aut$ on $\model$ is the parity game defined
according to the rules of the following table.
%
\begin{center}
\small
\begin{tabular}{|l|c|l|c|} \hline
Position & Pl'r & Admissible moves & Parity \\
\hline
    $(a,s) \in A \times \moddom$
  & $\eloise$
  & $\{\val : A \to \wp(R[s]) \mid (R_{\aact_1}[s],\dots,R_{\aact_n}[s],\val) \models \tmap (a, \tscolors(s)) \}$
  & $\pmap(a)$ 
\\
    $\val : A \rightarrow \wp(\moddom)$
  & $\abelard$
  & $\{(b,t) \mid t \in \val(b)\}$
  & $\max(\pmap[A])$
\\ \hline
 \end{tabular}
\end{center}
A transition system $\model$ is \emph{accepted} by $\aut$ if $\exists$ has
a winning strategy in $\agame(\aut,\model)@(a_I,s_I)$, and \emph{rejected}
if $(a_I,s_I)$ is a winning position for $\abelard$.
\end{definition}

Many properties of parity automata are determined at the one-step level.
An important example concerns the notion of complementation.

\begin{definition}
\label{d:bdual1}
Two one-step formulas $\varphi$ and $\psi$ are each other's \emph{Boolean dual}
if for every structure $(D,\val)$ we have
\[
(D,\val) \models \varphi \text{ iff } (D,\val^{c}) \not\models \psi,
\]
where $\val^{c}$ is the valuation given by $\val^{c}(a) \mathrel{:=} D
\setminus \val(a)$, for all $a$.
A one-step language $\llang$ is \emph{closed under Boolean duals} if for every
set $A$, each formula $\varphi \in \llang(A)$ has a Boolean dual $\dual{\varphi}
\in \llang(A)$.
\end{definition}

Following ideas from~\cite{Muller1987267,DBLP:conf/calco/KissigV09}, we can use Boolean duals, together with a
\emph{role switch} between $\abelard$ and $\eloise$, in order to define a
negation or complementation operation on automata.

\begin{definition}
\label{d:caut}
Assume that, for some one-step language $\llang$, the map $\dual{(-)}$
provides, for each set $A$, a Boolean dual $\dual{\varphi} \in \llang(A)$ for each
$\varphi \in \llang(A)$.
Given $\aut = \tup{A,\tmap,\pmap,a_I}$ in $\Aut(\llang)$ we define its
\emph{complement} $\dual{\aut}$ as the automaton
$\tup{A,\dual{\tmap},\dual{\pmap},a_I}$
where $\dual{\tmap}(a,c) := \dual{(\tmap(a,c))}$, and $\dual{\pmap}(a)
:= 1 + \pmap(a)$, for all $a \in A$ and $c \in \wp(\props)$.
\end{definition}

\begin{proposition}
\label{PROP_complementation}
Let $\llang$ and $\dual{(-)}$ be as in the previous definition.
For each automaton $\aut \in \Aut(\llang)$ and each transition structure
$\model$ we have that
\[
\dual{\aut} \text{ accepts } \model
\quad\text{iff}\quad
\aut \text{ rejects } \model.
\]
\end{proposition}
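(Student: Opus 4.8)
The plan is to read the statement game-theoretically: by definition $\dual{\aut}$ accepts $\model$ means $\eloise$ wins $\agame(\dual{\aut},\model)@(a_I,s_I)$, and $\aut$ rejects $\model$ means $\abelard$ wins $\agame(\aut,\model)@(a_I,s_I)$. So it suffices to prove, for every basic position $(a,s)$, that $\eloise$ wins $\agame(\dual{\aut},\model)@(a,s)$ if and only if $\abelard$ wins $\agame(\aut,\model)@(a,s)$; the proposition is the instance $(a,s)=(a_I,s_I)$. Write $G := \agame(\aut,\model)$ and $\dual{G} := \agame(\dual{\aut},\model)$. These are parity games on the \emph{same} board (basic positions $A\times\moddom$ owned by $\eloise$, valuations owned by $\abelard$), differing only in using $\dual{\tmap}$ and $\dual{\pmap}$ instead of $\tmap$ and $\pmap$.

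Two bookkeeping facts drive the argument. First, a parity observation: in any infinite match the valuation positions all carry the maximal parity ($\max(\pmap[A])$ in $G$, one more in $\dual{G}$), so they never realise the minimal parity occurring infinitely often; hence the winner of an infinite match is decided by the sequence $(a_i)_i$ of states seen at its basic positions, with $\eloise$ winning iff $\min_{\mathrm{i.o.}}\pmap(a_i)$ is even in $G$, resp.\ iff $1+\min_{\mathrm{i.o.}}\pmap(a_i)$ is even in $\dual{G}$. Consequently two infinite matches, one in $G$ and one in $\dual{G}$, that pass through the same basic positions have \emph{opposite} winners. Second, a one-step observation from the Boolean dual: if $\val\models\varphi$ and $\uval\models\dual{\varphi}$ over the same domain $R[s]$, then $\val(b)\cap\uval(b)\neq\nada$ for some $b$. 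Indeed $\uval\models\dual{\varphi}$ gives $\uval^{c}\not\models\varphi$; were all these intersections empty we would have $\val\leq\uval^{c}$ pointwise, and monotonicity of the positive formula $\varphi\in\llang^{+}$ would force $\uval^{c}\models\varphi$, a contradiction.

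These let me transfer strategies by synchronising a real match with a shadow match on the same basic positions. For the implication from $\eloise$ winning $\dual{G}$ to $\abelard$ winning $G$, fix a positional winning strategy $f$ of $\eloise$ in $\dual{G}$ and let $\abelard$ play in $G$ while keeping an $f$-guided shadow in $\dual{G}$: when $\eloise$ plays a legal $\val$ at $(a,s)$, set $\uval:=f(a,s)$ and use the one-step observation to pick $(b,t)$ with $t\in\val(b)\cap\uval(b)$, which is legal in $G$ and advances the shadow along $f$. Then $\abelard$ is never stuck, finite matches are lost by $\eloise$, and an infinite match tracks an $f$-guided, hence $\eloise$-won, match of $\dual{G}$ on the same basic positions, which $\abelard$ wins by the parity observation. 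For the converse I define $\eloise$'s move at $(a,s)$ from a positional winning strategy $\sigma$ of $\abelard$ in $G$: let $T(b):=\{t\mid\sigma(\val)=(b,t)\text{ for some legal }\val\}$ and have $\eloise$ play $\uval:=T$. This is legal, because if $T^{c}\models\tmap(a,\tscolors(s))$ then $\sigma(T^{c})=(b,t)$ with $t\in T^{c}(b)$, while $t\in T(b)$ by the definition of $T$ --- absurd; so $T\models\dual{\tmap}(a,\tscolors(s))$. Every answer $(b,t)$ of $\abelard$ with $t\in T(b)$ has a witness $\val$ with $\sigma(\val)=(b,t)$, so a $\sigma$-guided shadow in $G$ can follow to the same $(b,t)$; as before $\eloise$ is never stuck, and an infinite match mirrors a $\sigma$-guided, hence $\abelard$-won, match of $G$, which $\eloise$ wins by the parity observation.

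The main obstacle is precisely the asymmetry that makes this more than a textbook role-swap: in \emph{both} acceptance games it is $\eloise$ who commits to a valuation and $\abelard$ who reacts, so the complement automaton does not simply interchange the players. The Boolean dual must therefore be exploited twice in different guises --- as the intersection lemma in one direction and through the ``image valuation'' $T$ in the other --- and in each case one must arrange that the transferred strategy drives its shadow through exactly the same basic positions, so that the parity shift by one encoded in $\dual{\pmap}$ can flip the winner. Positional determinacy of parity games is what supplies the positional strategies $f$ and $\sigma$ that make $T$ and the synchronisation well defined.
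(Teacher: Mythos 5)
Your proof is correct and follows exactly the route the paper intends: the paper only sketches this proposition with the remark that \eloise's power in $\agame(\dual{\aut},\model)$ equals \abelard's in $\agame(\aut,\model)$, and your argument is a faithful elaboration of that role-switch idea, supplying the two ingredients it rests on (the intersection property of Boolean duals over monotone one-step formulas, and the parity shift on synchronised basic positions). No gaps.
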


The proof of Proposition~\ref{PROP_complementation} is based on the fact
that the power of $\eloise$ in $\agame(\dual{\aut},\model)$ is the same
as that of $\abelard$ in $\agame(\aut,\model)$.

As an immediate consequence of this proposition, one may show that if the
one-step language $\llang$ is closed under Boolean duals, then the class
$\Aut(\llang)$ is closed under taking complementation.
Further on we will use Proposition~\ref{PROP_complementation} to show that
the same may apply to some subsets of $\Aut(\llang)$.
\subsection{Propositional Dynamic Logic}\label{subsec:pdl}

\begin{definition}
The formulas of Propositional Dynamic Logic (\pdl) on propositions $\props$ and atomic actions $\acts$ are given by mutual induction on formulas and programs
\begin{align*}
    \varphi &::= p \mid \lnot\varphi \mid \varphi \lor \varphi \mid \tup{\prog}\varphi \\ 
    \prog &::= \aact \mid \prog\seq\prog \mid \prog\choice\prog \mid \prog^* \mid \varphi?
\end{align*}
where $p\in \props$ and $\aact \in \acts$. 
%
\end{definition}

To give the semantics of \pdl we take a standard approach and define it together with the relation~$\relprog^\model_\prog$ induced by a program $\prog$ on a model $\model = \tup{\moddom, R_{\aact\in\acts}, \tscolors, s_I}$, by mutual induction
\begin{align*}
    \relprog^\model_\aact & := R_\aact &
    \relprog^\model_{\prog\seq\pprog} & := \relprog^\model_\prog \circ \relprog^\model_\pprog \\
    \relprog^\model_{\prog\choice\pprog} & := \relprog^\model_\prog \cup \relprog^\model_\pprog &
    \relprog^\model_{\prog^*}& := (\relprog^\model_\prog)^*\\
    \relprog^\model_{\varphi?} & := \{(s,s) \in S\times S \mid \model[s_I\mapsto s] \mmodels \varphi\}.
\end{align*}
The semantics of \pdl is then given as usual on the boolean operators and as follows on modal operators and propositions.
\begin{align*}
    \model \mmodels p & \quad\text{iff}\quad s_I \in \tsval(p)\\
    \model \mmodels \tup{\prog}\varphi &\quad\text{iff}\quad \text{there exists } t \in S \text{ such that } \relprog^\model_\prog(s,t) \text{ and } \model[s_I\mapsto t] \mmodels \varphi
\end{align*}
We drop the superscript in $\relprog^\model_\prog$ when it is clear from context. 

\subsection{The modal $\mu$-calculus}\label{subsec:mu}

The language of the modal $\mu$-calculus ($\muML$) on $\props$ and $\acts$ is given by the following grammar:
\[
    \varphi ::= q \mid \lnot\varphi \mid \varphi \lor \varphi \mid
    \tup{\aact}\varphi \mid \mu p.\varphi
\]
where $p,q \in \props$, $\aact\in\acts$ and $p$ is positive in $\varphi$ (i.e., $p$ is under an even number of negations). We use the standard convention that no variable is both free and bound in a formula and that every bound variable is fresh.
Let $p$ be a bound variable occuring in some formula $\varphi \in \muML$, we use $\delta_p$ to denote the binding definition of $p$, that is, the formula such that either $\mu p.\delta_p$ is a subformula of $\varphi$.

The semantics of this language is completely standard. Let $\model = \tup{\moddom,R_{\aact\in\acts},\tscolors, s_I}$ be a transition system and $\varphi \in \muML$. We inductively define the \emph{meaning} $\ext{\varphi}^{\model}$ which includes the following clause for the fixpoint operator: 
\[
  \ext{\mu p.\psi}^{\model} := \bigcap \{X \subseteq \moddom \mid X \supseteq \ext{\psi}^{\model[p \mapsto X]} \}
\]
%
We say that $\varphi$ is \emph{true} in $\model$ (notation $\model \mmodels \varphi$) iff $s_I \in \ext{\varphi}^{\model}$.

\subsection{Bisimulation}

Bisimulation is a notion of behavioral equivalence between processes.
For the case of transition systems, it is formally defined as follows.

\begin{definition}
Let $\model = \tup{\moddom, R_{\aact\in\acts}, \tscolors, s_I}$ and
$\model' = \tup{\moddom', R'_{\aact\in\acts}, \tscolors', s'_I}$ be transition systems.
A \emph{bisimulation} is a relation $Z \subseteq \moddom \times \moddom'$
such that for all $(t,t') \in Z$ the following holds:
\begin{description}
  \itemsep 0 pt
  \item[(atom)] $p \in \tscolors(t)$ iff $p \in \tscolors'(t')$ for all $p\in\props$;
  \item[(forth)] for all $\aact\in\acts$ and $s \in R_{\aact}[t]$ there is $s' \in R'_{\aact}[t']$ such that $(s,s') \in Z$;
  \item[(back)] for all $\aact\in\acts$ and $s' \in R'_{\aact}[t']$ there is $s \in R_{\aact}[t]$ such that $(s,s') \in Z$.
\end{description}
Two transition systems $\model$ and $\model'$ are
\emph{bisimilar} (denoted $\model \bis \model'$) if there is a
bisimulation $Z \subseteq \moddom \times \moddom'$ containing $(s_I,s'_I)$.
\end{definition}

The following fact about tree unravellings is central in many theorems of modal logics. It will also play an important role in this paper.

\begin{fact}\label{prop:tree_unrav}
$\model$ and its unravelling $\unravel{\model}$ are bisimilar, for every transition system $\model$.
\end{fact}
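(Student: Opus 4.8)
The plan is to exhibit an explicit bisimulation between $\model$ and its unravelling $\unravel{\model} = \tup{\hat{\moddom}, \hat{R}_{\aact\in\acts}, \hat{\tscolors}, s_I}$, namely the relation that links each finite path to the node of $\model$ at which it ends. Writing $\last(\sigma)$ for the last node (in $\moddom$) of a path $\sigma \in \hat{\moddom}$, I would set
\[
Z := \{(s,\sigma) \in \moddom \times \hat{\moddom} \mid s = \last(\sigma)\}.
\]
The root of $\unravel{\model}$ is the trivial path ending in $s_I$, so $(s_I, s_I) \in Z$, and it remains only to verify the three bisimulation clauses.

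For (atom), if $(s,\sigma) \in Z$ then by the definition of the unravelling $\hat{\tscolors}(\sigma) = \tscolors(\last(\sigma)) = \tscolors(s)$, so $s$ and $\sigma$ carry exactly the same proposition letters. For (forth), given $(s,\sigma) \in Z$ and a successor $s' \in R_\aact[s]$, I would take $\sigma'$ to be the one-step extension of $\sigma$ by the edge $s \to_\aact s'$; this $\sigma'$ lies in $\hat{R}_\aact[\sigma]$ and satisfies $\last(\sigma') = s'$, so $(s',\sigma') \in Z$. For (back), given $(s,\sigma) \in Z$ and a successor $\sigma' \in \hat{R}_\aact[\sigma]$, the definition of $\hat{R}_\aact$ says that $\sigma'$ extends $\sigma$ through $\aact$, so $\sigma' = \sigma \to_\aact e$ for some $e$ with $(s,e) \in R_\aact$ (using $\last(\sigma) = s$); taking $s' := e = \last(\sigma')$ gives $s' \in R_\aact[s]$ and $(s',\sigma') \in Z$.

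Since all three clauses hold and $(s_I,s_I) \in Z$, the relation $Z$ witnesses $\model \bis \unravel{\model}$. There is no genuine obstacle here: the argument is a routine verification once the projection-to-last-node relation has been chosen, and the only point that requires care is reading off the (back) clause from the definition of $\hat{R}_\aact$ — every successor of a path in the unravelling is forced to correspond to a genuine $\aact$-edge out of its last node, which is precisely what makes the unravelling a faithful copy. The same relation carries over verbatim to prove the analogous statement for the $\omega$-unravelling $\omegaunrav{\model}$, since that structure is again obtained by the same path-based construction and $\last$ continues to make sense on its decorated paths.
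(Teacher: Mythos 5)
Your proof is correct: the relation $Z$ linking each path in $\unravel{\model}$ to its last node is the canonical bisimulation, and your verification of the \textbf{(atom)}, \textbf{(forth)} and \textbf{(back)} clauses is exactly the standard argument that the paper implicitly relies on when it states this as a Fact without proof. Your closing remark about the $\omega$-unravelling is also sound, since that structure is again built from decorated paths and $\last$ remains well-defined there.
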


As observed in the introduction, a key concept in this paper is that of bisimulation invariance. Formally, it is defined as follows for an arbitrary language $\llang$:

\begin{definition}
A formula $\varphi \in \llang$ is \emph{bisimulation-invariant} if $\model \bis
\model'$ implies that
$\model \mmodels \varphi$ iff $\model' \mmodels \varphi$,
for all $\model$ and $\model'$.
\end{definition}

\begin{fact}
Every formula of $\muML$, and therefore of \pdl, is bisimulation-invariant.
\end{fact}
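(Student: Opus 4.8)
The plan is to prove the stronger, \emph{local} statement for $\muML$ by induction on formula complexity, from which global bisimulation-invariance of every sentence follows by reading it off at the roots; the claim for \pdl will then be obtained by a parallel induction over programs and formulas. Throughout, fix a bisimulation $Z \subseteq \moddom \times \moddom'$ between $\model$ and $\model'$. Since $\muML$-formulas contain free (propositional) variables that may be captured by $\mu$, the induction hypothesis must accommodate reinterpretation of these variables. Call a pair of valuations on $\model$ and $\model'$ \emph{$Z$-coherent} if for every relevant variable $p$ and every $(t,t') \in Z$ we have $t \in \tsval(p)$ iff $t' \in \tsval'(p)$; the (atom) clause of bisimulation says exactly that the original markings are $Z$-coherent. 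The statement I would prove is: for every $\varphi \in \muML$, every $Z$-coherent pair of valuations, and every $(t,t') \in Z$, one has $t \in \ext{\varphi}^{\model}$ iff $t' \in \ext{\varphi}^{\model'}$. Applying this at $(s_I, s'_I) \in Z$ then yields $\model \mmodels \varphi$ iff $\model' \mmodels \varphi$.

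The Boolean and atomic cases are immediate: variables use $Z$-coherence directly, while negation and disjunction follow from the hypothesis (note that a $Z$-coherent pair of valuations keeps $Z$ a bisimulation between the updated models, since forth and back do not mention the marking). For a diamond $\tup{\aact}\psi$, suppose $t \in \ext{\tup{\aact}\psi}^{\model}$, so some $s \in R_\aact[t]$ satisfies $s \in \ext{\psi}^{\model}$; by (forth) there is $s' \in R'_\aact[t']$ with $(s,s') \in Z$, and the hypothesis gives $s' \in \ext{\psi}^{\model'}$, whence $t' \in \ext{\tup{\aact}\psi}^{\model'}$. The converse uses (back) symmetrically.

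The fixpoint operator is the crux. For $\mu p.\psi$ I would argue through the ordinal approximants of the least fixpoint: set $\psi^0 := \varnothing$, $\psi^{\alpha+1} := \ext{\psi}^{\model[p \mapsto \psi^\alpha]}$, and $\psi^\lambda := \bigcup_{\alpha < \lambda}\psi^\alpha$ at limits, and likewise $(\psi')^\alpha$ on $\model'$; by monotonicity (guaranteed since $p$ is positive in $\psi$) these increasing chains stabilize at $\ext{\mu p.\psi}^{\model}$ and $\ext{\mu p.\psi}^{\model'}$ respectively. A transfinite induction on $\alpha$ shows each pair $(\psi^\alpha, (\psi')^\alpha)$ is $Z$-coherent: the base case is trivial, the limit case holds because unions of $Z$-coherent pairs are $Z$-coherent, and the successor case is exactly where the main hypothesis for $\psi$ is invoked — coherence of $(\psi^\alpha, (\psi')^\alpha)$ makes $p \mapsto \psi^\alpha$ and $p \mapsto (\psi')^\alpha$ a $Z$-coherent update, so the inner hypothesis yields coherence of the next approximants. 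Passing to the stabilized value gives $Z$-coherence of $(\ext{\mu p.\psi}^{\model}, \ext{\mu p.\psi}^{\model'})$, as required; no separate treatment of greatest fixpoints is needed, as those are reached through the already-handled negation case.

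Finally, for \pdl I would run a simultaneous induction on programs and formulas, adding to the formula statement the clause that each program relation is \emph{$Z$-safe}: whenever $(t,t') \in Z$, every $\relprog^{\model}_{\prog}$-successor of $t$ is matched by a $Z$-related $\relprog^{\model'}_{\prog}$-successor of $t'$, and symmetrically. Atomic programs are $Z$-safe by (forth) and (back); $Z$-safety is preserved by composition $\prog\seq\pprog$, by choice $\prog\choice\pprog$, and — by an inner induction on path length — by the reflexive-transitive closure $\prog^*$; the test $\varphi?$ is $Z$-safe precisely because the formula hypothesis makes $\varphi$ bisimulation-invariant at $Z$-related points. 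The diamond clause $\tup{\prog}\varphi$ then goes through exactly as in the modal case, using $Z$-safety of $\relprog_\prog$ together with the hypothesis for $\varphi$. The main obstacle in both halves is the iteration/fixpoint machinery: keeping $Z$-coherence (respectively $Z$-safety) intact across the transfinite unfolding of $\mu$ and across the reflexive-transitive closure of $\prog^*$; the remaining cases are routine.
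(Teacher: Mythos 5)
Your proof is correct. The paper states this as a \emph{Fact} and gives no proof of its own, since it is a standard folklore result; your argument is the canonical one for the $\muML$ half (strengthened inductive hypothesis with $Z$-coherent revaluations, forth/back for the diamonds, transfinite induction on the ordinal approximants for $\mu$, with coherence preserved at successor stages by the inner hypothesis and at limits by unions). The only point where you diverge from what the paper's phrasing suggests is the \pdl half: the wording ``and therefore of \pdl'' indicates the intended route is simply that \pdl embeds into $\muML$ via the standard translation (the paper elsewhere cites $\pdl \equiv \mucaML \subseteq \muML$), so \pdl inherits invariance for free, whereas you give a direct mutual induction on programs and formulas using a $Z$-safety notion for program relations. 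Both routes are sound; yours is self-contained and makes the ``safety for bisimulation'' content of the program constructs explicit (in the spirit of van Benthem's work on safe operations, which the paper cites), at the cost of redoing an induction that the translation into $\muML$ would let you skip. One tiny imprecision worth polishing: when you say a $Z$-coherent revaluation keeps $Z$ a bisimulation ``since forth and back do not mention the marking,'' the relevant point is rather that the (atom) clause for the reinterpreted variables is exactly what $Z$-coherence supplies; the substance of the step is right.
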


\subsection{Weak chain logic}\label{sec:wcl}


The non-weak version of chain logic (CL) was defined in~\cite{Thomas96languagesautomata}, and studied in the context of \emph{trees}. As we said before, this logic is a variant of MSO which changes the usual second-order quantifier to the following quantifier over chains:
\[
\tmodel \models \existsc p.\varphi \quad\text{ iff }\quad  \text{there is a \emph{chain} $X\subseteq \tmoddom$ such that $\tmodel[p\mapsto X] \models \varphi$}.
\]
In this paper we will only work with a \emph{weak} version of CL, that is, the quantification will be over \emph{finite} chains. On the other hand, we also want to consider this logic on the class of \emph{all} models. To give a definition of weak chain logic we adhere to what we think is the ``spirit'' of the definition of CL, as opposed to the ``letter.'' As observed in Section~\ref{ssec:prelim_trees}, the concept of chain on trees coincides with that of ``subset of a path.'' Therefore, on the class of all models, we choose to define the weak second-order quantifier as:
\[
\model \models \existswc p.\varphi \quad\text{ iff }\quad  \text{there is a \emph{generalized finite chain} $X\subseteq \moddom$ such that $\model[p\mapsto X] \models \varphi$}.
\]
More formally, the weak version of chain logic is given as follows.

\begin{definition}
The one-sorted \emph{weak chain logic} (\wcl) on a set of predicates $\props$ and actions $\acts$ is given by
\[
\varphi ::= \here{p} \mid p \inc q \mid R_\aact(p,q) \mid \lnot\varphi \mid \varphi\lor\varphi \mid \existswc p.\varphi
\]
where $p,q \in \props$ and $\aact\in\acts$. We denote this logic by $\wcl(\props,\acts)$ and omit $\props$ and $\acts$ when clear from context.
We adopt the standard convention that no letter is both free and bound in $\varphi$.
\end{definition}

\begin{definition}\label{def:wcl}
Let $\model = \tup{\moddom,R_{\aact\in\acts},\tscolors, s_I}$ be a labelled transition system. The semantics of \wcl is defined as follows:
\begin{align*}
\model \models \here{p} & \quad\text{ iff }\quad  \tsval(p) = \compset{s_I} \\
\model \models p \inc q & \quad\text{ iff }\quad  \tsval(p) \subseteq \tsval(q) \\
\model \models R_\aact(p,q) & \quad\text{ iff }\quad  \text{for all $s\in \tsval(p)$ there is $t\in \tsval(q)$ such that $s{R_\aact}t$} \\
\model \models \lnot\varphi & \quad\text{ iff }\quad  \model \not\models \varphi \\
\model \models \varphi\lor\psi & \quad\text{ iff }\quad  \model \models \varphi \text{ or } \model \models \psi \\
\model \models \existswc p.\varphi & \quad\text{ iff }\quad  \text{there is a \emph{generalized finite chain} $X\subseteq \moddom$ such that $\model[p\mapsto X] \models \varphi$}.
\end{align*}
%
\end{definition}

\paragraph{A digression on second-order languages.}
The reader may have expected a more standard two-sorted language for second-order logic, for example given by
\[
\varphi ::= p(x)
\mid R_\aact(x,y)
\mid x \foeq y
\mid \neg \varphi
\mid \varphi \lor \varphi
\mid \exists x.\varphi
\mid \existswc p.\varphi
\]%
where $p \in \props$, $\aact\in\acts$, $x,y \in \fovar$ (individual variables), 
and $\foeq$ is the symbol for equality. We call this language $2\wcl$.
This semantics of this language is completely standard, with $\exists x$ denoting first-order quantification (that is, quantification over individual states) and $\existswc p$ denoting second-order quantification (that is, in this case, quantification over generalized finite chains).
Both definitions can be proved to be equivalent, however, we choose to keep Definition~\ref{def:wcl} as it will be better suited to work with in the context of automata. 

Formulas of this languages are interpreted over (non-pointed) models $\npmodel = \tup{\npmoddom, R_{\aact\in\acts}, \tscolors}$ with an assignment, that is, a map $\ass: \fovar \to \npmoddom$ interpreting the individual variables as elements of $\npmoddom$.  The key point is that $\wcl$ can interpret $2\wcl$ by encoding every individual variable $x\in\fovar$ as a set variable $p_x$ denoting a \emph{singleton}. The following is a more detailed proof of the remark found in~\cite{Ven12}, adapted for $\wcl$.

\begin{proposition}
  There is a translation $(-)^t:2\wcl(\props,\acts) \to \wcl(\props\uplus \props_X,\acts)$ such that
  %
  \[
  \npmodel,\ass \models \varphi \quad\text{iff}\quad \npmodel[p_{x\in\fovar} \mapsto \{\ass(x)\}] \models \varphi^t,
  \]
  where $\props_X := \{p_x \mid x\in\fovar\}$.
\end{proposition}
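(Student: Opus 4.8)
The plan is to define the translation $(-)^t$ by structural induction on $2\wcl$ formulas, where the guiding intuition is that each individual variable $x$ is simulated by a fresh set variable $p_x \in \props_X$ which, under the intended interpretation $\npmodel[p_{x\in\fovar}\mapsto\{\ass(x)\}]$, always denotes the singleton $\{\ass(x)\}$. First I would handle the atomic cases, which are essentially a dictionary: the atom $p(x)$ translates to $p_x \inc p$ (since for a singleton $\{\ass(x)\}$, containment in $\tsval(p)$ is exactly membership), and the relational atom $R_\aact(x,y)$ translates to $R_\aact(p_x,p_y)$ (the \wcl semantics of $R_\aact(p,q)$ quantifies ``for all $s\in\tsval(p)$ there is $t\in\tsval(q)$ with $sR_\aact t$,'' which for singletons collapses exactly to $\ass(x)\,R_\aact\,\ass(y)$). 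The equality atom $x\foeq y$ is the delicate atomic case: I would translate it as the mutual-inclusion formula $p_x \inc p_y \land p_y \inc p_x$, using that two singletons are mutually contained iff they are equal. Negation and disjunction translate homomorphically, which is immediate since the \wcl semantics of $\lnot$ and $\lor$ is the standard Boolean one.

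The two quantifier cases are where the content lies. For second-order quantification $\existswc p.\varphi$, the translation is transparent: it maps to $\existswc p.\varphi^t$, because both languages quantify $p$ over exactly the same objects (generalized finite chains), and the inductive hypothesis applied to the $p$-extension gives the result directly. For first-order quantification $\exists x.\varphi$, the idea is to replace quantification over a single element by a guarded quantification over set variables that are forced to be singletons. Concretely, I would translate $\exists x.\varphi$ to $\existswc p_x.\bigl(\textsf{sing}(p_x) \land \varphi^t\bigr)$, where $\textsf{sing}(p_x)$ is a \wcl-definable formula asserting that $\tsval(p_x)$ is a singleton. The main technical point I would isolate as a lemma is the definability of $\textsf{sing}(p_x)$ within \wcl: using only the atoms $\here{\cdot}$, $\inc$ and the weak chain quantifier, one can express ``$p_x$ is nonempty'' as $\existswc q.(q\inc p_x \land \lnot(p_x \inc q)$ fails appropriately) — more simply, singleton-ness is captured by saying $p_x$ is nonempty and any chain contained in $p_x$ is already contained in a one-element set, i.e.\ $p_x$ has no two distinct elements. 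Since every subset of a singleton is a generalized finite chain, this can be phrased as: there is a finite chain $q\inc p_x$ with $p_x\inc q$, and for every chain $r\inc p_x$ we have $r\inc q$ forcing collapse. I would verify that under the intended valuation the formula $\textsf{sing}$ holds precisely of singletons.

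With $\textsf{sing}$ in hand, the inductive step for $\exists x$ works because a witness $\ass(x)=s$ for the first-order quantifier corresponds exactly to choosing the singleton generalized finite chain $\{s\}$ as the value of $p_x$, and conversely any singleton value $\{s\}$ (forced by $\textsf{sing}$) induces a first-order witness; the inductive hypothesis then matches $\npmodel,\ass[x\mapsto s]\models\varphi$ with $\npmodel[\dots, p_x\mapsto\{s\}]\models\varphi^t$. The overall correctness statement $\npmodel,\ass\models\varphi$ iff $\npmodel[p_{x\in\fovar}\mapsto\{\ass(x)\}]\models\varphi^t$ then follows by a routine induction, with the convention that in the quantifier cases the extension is updated only at the relevant $p_x$.

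I expect the main obstacle to be the definability of singleton-ness, i.e.\ pinning down $\textsf{sing}(p_x)$ using only the restricted atoms of \wcl and a quantifier that ranges over generalized finite chains rather than arbitrary finite sets. The subtlety is that the weak chain quantifier cannot directly test arbitrary subsets, so I must ensure the candidate subsets needed to detect ``two distinct elements'' are themselves generalized finite chains; this is fine because any subset of the singleton $\tsval(p_x)$ (of size at most one) is trivially a chain, but I would need to argue carefully that the formula behaves correctly even when $\tsval(p_x)$ is not yet known to be a singleton — that is, that $\textsf{sing}$ genuinely excludes all larger and empty sets. Once this lemma is established, the remainder is bookkeeping.
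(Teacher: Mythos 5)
Your proposal matches the paper's proof essentially step for step: the same atomic dictionary ($p(x)\mapsto p_x\inc p$, $R_\aact(x,y)\mapsto R_\aact(p_x,p_y)$, equality as mutual inclusion), homomorphic Booleans, a transparent translation of $\existswc p$, and $\exists x$ rendered as $\existswc p_x.\,\texttt{singleton}(p_x)\land\varphi^t$ with a \wcl-definable singleton guard, followed by a routine induction. Your extra care about making the singleton predicate exclude the empty set is well placed (the paper's own $\texttt{singleton}$ definition is arguably sloppier on exactly this point), but it does not change the fact that the approach is the same.
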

\noindent
The translation is inductively defined as follows:
\begin{itemize}
  \itemsep 0pt
  \item $(p(x))^t := p_x \inc p$,
  \item $(R_\aact(x,y))^t := R_\aact(p_x,p_y)$,
  \item $(x \foeq y)^t := p_x \inc p_y \land p_y \inc p_x$,
  \item Negation and disjunction as usual,
  \item $(\existswc p.\varphi)^t := \existswc p.\varphi^t$,
  \item $(\exists x.\varphi)^t := \existswc p_x. \texttt{singleton}(p_x) \land \varphi^t$
\end{itemize}
where the translation crucially uses the predicates
\begin{align*}
  \texttt{empty}(p) & := \forallwc q. (p \inc q) \\
  \texttt{singleton}(p) & := \forallwc q. (q \inc q \to (\texttt{empty}(q) \lor p\inc q))
\end{align*}
Observe that the translation does not use the operator $\here{p}$ and hence is well-defined on non-pointed models. We finish by proving the following claim.
\begin{claimfirst}
  For every $\varphi\in 2\wcl$ we have $\npmodel,g \models \varphi \text{ iff } \npmodel[p_{x\in\fovar} \mapsto \{\ass(x)\}] \models \varphi^t$.
\end{claimfirst}
\begin{pfclaim}
  We prove the inductive step for the first-order quantification.\\
  \fbox{$\Rightarrow$} Suppose $\npmodel,g \models \exists x.\varphi$ then there is $s\in\npmoddom$ such that $\npmodel,g[x\mapsto s] \models \varphi$. By inductive hypothesis then there exists $s\in\npmoddom$ such that $\npmodel[p_{x} \mapsto \{s\}; p_{y\neq x} \mapsto \{\ass(y)\}] \models \varphi^t$. This clearly implies that $\npmodel[p_{x\in\fovar} \mapsto \{\ass(x)\}] \models \existswc p_x. \texttt{singleton}(p_x) \land \varphi^t$.\\
  \fbox{$\Leftarrow$} This direction is very similar.
\end{pfclaim}

\fcnote{later consider comparing with other logics like MSO and PMSO, see remarks in source.}



\subsection{Fixpoint extension of first-order logic}

In this subsection we give an extension of $\foe$ with a unary fixed point operator. This extension is known in the literature as FO(LFP$^1$) but we will call it $\mufoe$.


As usual with (extensions of) first-order logic, $\mufoe$ will be interpreted over models with an assignment. See Section~\ref{sec:wcl} ($2\wcl$ \textsl{vs.} $\wcl$) for a discussion on how languages with individual variables fit in our setting. Also, because of the presence of individual variables, the syntax and semantics of the fixpoint operator is considerably more involved than for the modal $\mu$-calculus.

\begin{definition}
The \emph{first-order logic with equality and unary fixpoints} ($\mufoe$) on a set of predicates $\props$, actions $\acts$ and individual variables $\fovar$ is given by
\[
\varphi ::= q(x) \mid R_\aact(x,y) \mid x \foeq y \mid \exists x.\varphi \mid \lnot\varphi \mid \varphi \lor \varphi \mid [\lfp_{p{:}x}.\varphi(p,x)](z)
\]
where $p,q\in\props$, $\aact\in\acts$ and $x,y\in\fovar$.
Observe that $z$ is free in the fixpoint clause and the fixpoint operator binds the designated variables $x$ and $p$.
\end{definition}


The semantics of the fixpoint formula $[\lfp_{p{:}x}.\varphi(p,x)](z)$ is the expected one~\cite{Chandra198299}. Given a model $\npmodel$ and an assigment $\ass$, the map $F^\varphi_{p{:}x}:\wp(\npmoddom)\to \wp(\npmoddom)$ is defined as
\[
F^\varphi_{p{:}x}(Y) := \{t \in \npmoddom \mid \npmodel[p \mapsto Y],\ass[x\mapsto t] \models \varphi(p, x) \}.
\]
The formula $\npmodel,\ass \models [\lfp_{p{:}x}.\varphi(p,x)](z)$ is defined to hold iff $\ass(z) \in \lfp(F^\varphi_{p{:}x})$. 

\begin{remark}\label{rem:parameters}
	Suppose that a formula $\varphi \in \mufoe$ has free variables $FV(\varphi) = \{x,\vlist{y}\}$. If we consider the fixpoint formula $\psi := [\lfp_{p{:}x}.\varphi(p,x)](z)$ then $\psi$ would have as free variables $FV(\psi) = \{z,\vlist{y}\}$. The free variables of $\varphi$ which are not bound by the fixpoint (in this case~$\vlist{y}$) are called the \emph{parameters} of the fixpoint.

	Parameters can always be avoided at the expense of increasing the arity of the fixpoint~\cite[p.~184]{Libkin_ElsFiniteModelTheory}. That is, for example, take the fixpoint over a relation $P(x_1,\dots,x_n)$ instead of just a predicate $p$. However, in this paper we will only consider fixpoints over unray predicates, and therefore we will allow the use of parameters.
\end{remark}
\subsection{First-order logic with transitive closure}\label{sec:prel:fotc}

\begin{definition}
The syntax of first-order logic extended with reflexive-transitive closure of \emph{binary} formulas is given by the following grammar:
\[
\varphi ::= p(x) \mid x\foeq y \mid R_\aact(x,y) \mid \lnot \varphi \mid \varphi \lor \varphi \mid \exists x.\varphi \mid [\tc_{x,y}.\varphi(x,y)](z,w)
\]
where $p,q\in\props$, $\aact\in\acts$ and $x,y,z,w\in\fovar$. We denote this logic by \binfotc. The semantics are standard for the first-order part and as follows for the new operator:
\[
\npmodel,\ass \models [\tc_{x,y}.\varphi(x,y)](u,v) \quad\text{iff}\quad (\ass(u),\ass(v)) \in R^*_\varphi
\]
where $R_\varphi := \{(s_x,s_y) \in \npmoddom\times\npmoddom \mid \npmodel,\ass[x\mapsto s_x, y\mapsto s_y] \models \varphi\}$.
\end{definition}

\begin{remark}[{\cite[Example~3.3.8]{FMTapp}}]\label{rem:tcinlfp}
  The meaning of the formula $[\tc_{x,y}.\varphi(x,y)](u,v)$ can be rephrased as saying that $v\in \varphi^*[u]$; that is, $v$ is a $\varphi$-descendant of~$u$. This can be expressed with the formula $[\lfp_{p{:}y}.y\foeq u \lor (\exists x. p(x) \land \varphi(x,y))](v).$
\end{remark}

\subsection{Notational convention}\label{sec:prel:conventions}

The following table works as a summary of the most used notation in this article. It should be taken as a set of general rules from which we try to divert as little as possible.

\begin{center}
\begin{tabular}{|l|l|}
	\hline
	\textbf{Concept} & \textbf{Notation}\\
	\hline
	\hline
	Transition system (pointed model) & $\model = \tup{\moddom,R_{\aact\in\acts},\tscolors,s_I}$\\
	Tree (pointed tree) & $\tmodel = \tup{\tmoddom,R_{\aact\in\acts},\tscolors,s_I}$\\
	Model (non-pointed) & $\npmodel = \tup{\npmoddom,R_{\aact\in\acts},\tscolors}$\\
	One-step model & $\osmodel = (D,\val:A\to\wp(D))$\\
	Automaton & $\aut,\baut,\dots$\\
	\hline
	\hline
	Formula & $\varphi, \psi, \alpha, \beta, \xi, \chi,\dots$ $\Phi,\Psi,\dots$\\
	Set & $A, B, C, D,\dots$ $X, Y, Z, W,\dots$\\
	Sequence of objects & $\vlist{x}, \vlist{y}, \dots$ $\vlist{a}, \vlist{b}, \dots$ $\vlist{X}, \vlist{Y}, \dots$\\
	Propositional variable & $p, q, r, \dots$\\
	Individual (first-order) variable & $x, y, z, w, \dots$\\
	Second-order (set) variable & $X,Y,Z,W,\dots$ $p,q,r,\dots$\\
	\hline
	\hline
	Assignment (of individual variables) & $\ass:\fovar \to \npmoddom$\\
	Valuation (of names/propositions) & $\val:A\to \wp(D)$, $\tsval:\props\to\wp(\moddom)$\\
	Marking/coloring & $\val^\natural:D\to \wp(A)$, $\tscolors:\moddom\to\wp(\props)$\\
	\hline
\end{tabular}
\end{center}

\section{Characterization of $\binfotc$ inside $\mufoe$}\label{sec:fotcinmufoe}

In this section we prove that \binfotc is equivalent to \mucafoe, the fragment of \mufoe where the least fixpoint operator is restricted to completely additive formulas. That is,

\begin{trivlist}
	\item[]\textbf{Theorem~\ref{thm:fotcmucafoe}.} {\it $\binfotc \equiv \mucafoe$ over all models.}
\end{trivlist}


\fcwarning{See commented remark on Thomason}

We start by giving a characterization of completely additive \emph{maps}, which we will later use as a tool to characterize \binfotc.
\subsection{Fixpoint theory of completely additive maps}\label{sec:fixpoint-caf}

\begin{definition}
A function $F:\wp(\moddom)^n \to \wp(\moddom)$ is \emph{completely additive in the $i^\text{th}$-coordinate} if 
for every \emph{non-empty} family of subsets $\{Y_i \subseteq S\}_{i\in I}$ and $X_1,\dots,X_n \subseteq S$ it satisfies:
%
\[
F(X_1, \dots, \bigcup_i Y_i, \dots, X_n) =
\bigcup_i F(X_1, \dots, Y_i, \dots, X_n).
\]
We say that $F$ is \emph{completely additive} (sometimes called completely additive in the product) if for every \emph{non-empty} family $\{\vlist{P}_i \in \wp(\moddom)^n\}_{i\in I}$ it satisfies:
\[
F(\bigcup_i \vlist{P}_i) =
\bigcup_i F(\vlist{P}_i).
\]
%
\end{definition}
\begin{remark}
	Observe that complete additivity in the $i^\text{th}$-coordinate implies monotonicity in the $i^\text{th}$-coordinate. Also, if a function is completely additive then it is so in every coordinate; however, the converse does not hold. A simple counterexample is $F(A,B) = A \cap B$.
\end{remark}

An alternative characterization of complete additivity in the $i^\text{th}$-coordinate is given by asking that $F$ restricts to singletons (or the empty set) in that coordinate. More formally, 
\[
F(X_1, \dots, Y_i, \dots, X_n) =
F(X_1, \dots, \nada, \dots, X_n) \cup
\bigcup_{y\in Y_i} F(X_1, \dots, \{y\}, \dots, X_n).
\]
Along the same line, we can give an alternative characterization of complete additivity in the product. First, we need the following definition.

\begin{definition}
Given $\vlist{X}\in \wp(\moddom)^n$ we say that $\vlist{Y} \in \wp(\moddom)^n$ is an atom of $\vlist{X}$ if and only if ${\vlist{Y} = (\nada,\dots,\{x_i\},\dots,\nada)}$ for some element $x_i \in X_i$ standing at some coordinate $i$. We say that $\vlist{Q}$ is a \emph{quasi-atom} if it is an atom or $\vlist{Q} = (\nada,\dots,\nada)$.
\end{definition}

In this terminology, we can formulate the concept of complete additivity in the product by asking that $F$ restricts to quasi-atoms; i.e., for every $\vlist{P}\in \wp(\moddom)^n$, it should satisfy:
\[
F(\vlist{P}) =
\bigcup \{F(\vlist{Q}) \mid \vlist{Q} \text{ is a quasi-atom of $\vlist{P}$}\}.
\]
Another way to read this last definition is that every $s \in F(\vlist{P})$ only depends on at most one singleton on one of the coordinates.


\paragraph{Finite approximants of completely additive maps.}
Given a monotone map $F:\wp(\moddom) \to \wp(\moddom)$, the approximants of the least fixpoint of $F$ are the sets $F^\alpha(\nada) \subseteq \moddom$, where $\alpha$ is an ordinal. The map $F^\alpha$ is intuitively the $\alpha$-fold composition of $F$. Formally,

\begin{itemize}
	\itemsep 0pt
	\item $F^0(X) := \nada$,
	\item $F^{\alpha+1}(X) := F(F^\alpha(X))$,
	\item $F^\lambda(X) := \bigcup_{\alpha<\lambda} F^\alpha(X)$ for limit ordinals $\lambda$.
\end{itemize}

\noindent The sets $F^\alpha(\nada)$ are called approximants because of the following fact.

\begin{fact}
	For every $s\in\moddom$ we have that $s\in \lfp(F)$ iff $s\in F^\beta(\nada)$ for some ordinal $\beta$.
\end{fact}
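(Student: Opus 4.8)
The plan is to prove both inclusions between $\lfp(F)$ and $\bigcup_\beta F^\beta(\nada)$, using the monotonicity of $F$ (the standing assumption under which the approximants are defined) together with the Knaster--Tarski characterization of $\lfp(F)$ as the least prefixpoint, i.e.\ $\lfp(F) = \bigcap\{X \subseteq \moddom \mid F(X) \subseteq X\}$.

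First I would record two basic facts about the approximants, each by a routine transfinite induction on the ordinal $\alpha$. \textbf{(1)} The approximants form an increasing chain: $\alpha \le \beta$ implies $F^\alpha(\nada) \subseteq F^\beta(\nada)$; the successor step uses monotonicity of $F$ and the limit step is immediate from the definition of $F^\lambda$ as a union. \textbf{(2)} Every approximant is contained in the least fixpoint: $F^\alpha(\nada) \subseteq \lfp(F)$ for all $\alpha$. Here the successor step is $F^{\alpha+1}(\nada) = F(F^\alpha(\nada)) \subseteq F(\lfp(F)) = \lfp(F)$, invoking monotonicity and the fact that $\lfp(F)$ is a fixpoint, while the limit step is again immediate. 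Fact \textbf{(2)} already settles the direction ($\Leftarrow$): if $s \in F^\beta(\nada)$ for some $\beta$, then $s \in \lfp(F)$.

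For the converse direction ($\Rightarrow$) I would argue that the increasing chain from \textbf{(1)} must stabilise. Since all the $F^\alpha(\nada)$ live inside the fixed set $\wp(\moddom)$, there is an ordinal $\gamma$ with $F^\gamma(\nada) = F^{\gamma+1}(\nada)$. At such a $\gamma$ the set $F^\gamma(\nada)$ is a genuine fixpoint of $F$, because $F(F^\gamma(\nada)) = F^{\gamma+1}(\nada) = F^\gamma(\nada)$; in particular it is a prefixpoint, so by minimality $\lfp(F) \subseteq F^\gamma(\nada)$. Combined with \textbf{(2)}, which gives $F^\gamma(\nada) \subseteq \lfp(F)$, this yields $\lfp(F) = F^\gamma(\nada)$. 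Hence any $s \in \lfp(F)$ already lies in the approximant $F^\gamma(\nada)$, completing ($\Rightarrow$) with $\beta := \gamma$.

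The only step beyond bookkeeping is the stabilisation argument, so I expect that to be the main (though entirely standard) obstacle. The cleanest way to make it precise is the cardinality observation that each strict inclusion $F^\alpha(\nada) \subsetneq F^{\alpha+1}(\nada)$ adds at least one new element of $\moddom$, so strict increases can occur at most $|\moddom|$ times; consequently the chain is constant from some ordinal $\gamma \le |\moddom|^+$ onward. Everything else reduces to the two transfinite inductions and the defining property of $\lfp(F)$ as the least prefixpoint.
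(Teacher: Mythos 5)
Your proof is correct and entirely standard: the two transfinite inductions, the use of $\lfp(F)$ as the least prefixpoint, and the cardinality-based stabilisation argument are all sound, and together they establish both directions of the equivalence. Note that the paper states this as a \emph{Fact} without proof (it is the classical Knaster--Tarski approximation result), so there is no in-paper argument to compare against; your writeup is exactly the proof the paper is implicitly relying on, and the stabilisation observation also accounts for the paper's follow-up remark that the chain grows until it stabilises at some closure ordinal.
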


Moreover, this approximation starts at $F^0(\nada) = \nada$ and grows strictly until it stabilizes for some ordinal $\beta$. This ordinal is called the closure or unfolding ordinal of $F$. Moreover, completely additive maps satisfy nicer properties regarding the approximants.

\begin{fact}\label{fact:constructivity}
	If $F$ is completely additive then it is constructive, i.e., $\lfp(F) = \bigcup_{i\in\nat} F^i(\nada)$.
\end{fact}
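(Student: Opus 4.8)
The plan is to exhibit $X_\omega := \bigcup_{i\in\nat} F^i(\nada)$ as a fixpoint of $F$, and then invoke the preceding Fact to identify it with $\lfp(F)$. Since complete additivity entails monotonicity (as noted in the Remark above), the approximant chain $F^0(\nada) \subseteq F^1(\nada) \subseteq \cdots$ is increasing, so that $X_\omega$ is a well-defined set sitting below $\lfp(F)$: indeed each $F^i(\nada) \subseteq \lfp(F)$ by the approximant Fact, whence $X_\omega \subseteq \lfp(F)$.

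It remains to show that $X_\omega$ is actually a fixpoint, for then minimality of the least fixpoint gives $\lfp(F) \subseteq X_\omega$ and we are done. Here is where complete additivity does the real work. The family $\{F^i(\nada)\}_{i\in\nat}$ is a \emph{non-empty} family of subsets of $\moddom$, so we may apply the defining equation of complete additivity (in its unary form $F(\bigcup_i P_i) = \bigcup_i F(P_i)$) directly:
\[
F(X_\omega) = F\Big(\bigcup_{i\in\nat} F^i(\nada)\Big) = \bigcup_{i\in\nat} F\big(F^i(\nada)\big) = \bigcup_{i\in\nat} F^{i+1}(\nada).
\]
Since $F^0(\nada) = \nada$, the right-hand union equals $\bigcup_{i\in\nat} F^i(\nada) = X_\omega$, so $F(X_\omega) = X_\omega$. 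Combining the two inclusions yields $\lfp(F) = X_\omega = \bigcup_{i\in\nat} F^i(\nada)$, which is precisely the claim that $F$ is constructive.

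I would expect the only conceptual point worth stressing to be the role of the additivity hypothesis. For a merely monotone $F$ the previous Fact only guarantees $\lfp(F) = \bigcup_{\alpha} F^\alpha(\nada)$ ranging over \emph{all} ordinals $\alpha$, and the closure ordinal may genuinely exceed $\omega$. Complete additivity is exactly the property that lets $F$ commute with the countable increasing union defining $X_\omega$, thereby collapsing the closure ordinal to at most $\omega$. There is no serious technical obstacle beyond verifying that this commutation step is licensed by the definition; the one subtlety to flag is that the indexing family must be non-empty, which is required for the additivity equation to apply and which holds here since the family is indexed by $\nat$.
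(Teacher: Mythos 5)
Your proof is correct, and it is the standard argument one would give here: since the paper states this Fact without proof, there is nothing to diverge from. Exhibiting $\bigcup_{i\in\nat}F^i(\nada)$ as a fixpoint via the commutation of $F$ with the (non-empty) union of approximants, and then squeezing it between the two inclusions, is exactly the intended reasoning, and your remark that additivity is what collapses the closure ordinal to $\omega$ correctly identifies where the hypothesis is used.
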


Suppose now that we are given a map $G(X,Y)$ which is completely additive. A natural question is whether the (least) fixpoint operation preserves complete additivity. That is, whether $G'(Y) := \lfp_X.G(X,Y)$ is completely additive as well. To answer that question, we will have to look at the finite approximants of $F(X) = G(X,Y)$ where $Y$ is now fixed. In this subsection we give a fairly technical and precise characterization of the finite approximants of completely additive maps, and use it to prove the following interesting theorem.

Let $F:\wp(\moddom) \to \wp(\moddom)$ and $Y\subseteq S$. We define the restriction of $F$ to $Y$ as the function $F_{\resto Y} : \wp(Y) \to \wp(Y)$ given by $F_{\resto Y}(X) := F(X) \cap Y$.
%
\begin{theorem}\label{thm:propscafmap}~
	\begin{enumerate}[(1)]
		\item If $G(X,\vlist{Y})$ is completely additive then so is $H(\vlist{Y}) := \lfp_X.G(X,\vlist{Y})$.
		\item For every completely additive functional $F:\wp(\npmoddom)\to\wp(\npmoddom)$ and $s\in\npmoddom$ we have that
		\[
		s \in \lfp(F) \quad\text{iff}\quad \text{there exists $Y$ such that } s \in \lfp(F_{\resto Y})
		\]
		where $Y = \{t_1,\dots,t_k\}$ satisfies $t_{i+1} \in F_{\resto Y}^{i+1}(\nada)\setminus F_{\resto Y}^i(\nada)$ and $t_k=s$.
	\end{enumerate}
\end{theorem}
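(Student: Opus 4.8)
The plan is to reduce both parts to a small "algebra" of completely additive maps together with the constructivity of such maps (Fact~\ref{fact:constructivity}). I would first record two elementary closure facts, each a one-line verification. \textbf{(a)} Completely additive maps are closed under composition and under tupling into a product $\wp(\moddom)^m$; moreover the projections $\vlist{Y}\mapsto Y_j$ and the constant empty map are completely additive. For composition this is $G(\Phi(\bigcup_i \vlist{Y}_i)) = G(\bigcup_i \Phi(\vlist{Y}_i)) = \bigcup_i G(\Phi(\vlist{Y}_i))$, applying complete additivity of $\Phi$ and then of $G$; tupling and projections hold because unions in $\wp(\moddom)^m$ are computed coordinatewise. \textbf{(b)} A pointwise union $\bigcup_m H_m$ of completely additive maps is completely additive, since $\bigcup_m \bigcup_i H_m(\vlist{Y}_i) = \bigcup_i \bigcup_m H_m(\vlist{Y}_i)$.

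For Part~(1), fix $\vlist{Y}$ and write $H_m(\vlist{Y})$ for the $m$-th approximant of $F_{\vlist{Y}}(X) := G(X,\vlist{Y})$. Since $G$ is completely additive, so is each $F_{\vlist{Y}}$, hence it is constructive and $H(\vlist{Y}) = \lfp_X.G(X,\vlist{Y}) = \bigcup_m H_m(\vlist{Y})$. By fact~(b) it suffices to show each $H_m$ is completely additive in $\vlist{Y}$, which I would prove by induction on $m$: $H_0 \equiv \nada$ is completely additive, and $H_{m+1}(\vlist{Y}) = G(H_m(\vlist{Y}),\vlist{Y})$ is the composition of $G$ with the tupling $\vlist{Y}\mapsto (H_m(\vlist{Y}),\vlist{Y})$, which is completely additive by the induction hypothesis and fact~(a). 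The claim then follows from fact~(b).

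For Part~(2), the direction $(\Leftarrow)$ is immediate and does not use the shape of $Y$: a straightforward induction using monotonicity of $F$ gives $F_{\resto Y}^m(\nada) \subseteq F^m(\nada)$ for all $m$, whence $\lfp(F_{\resto Y}) \subseteq \lfp(F)$, so $s \in \lfp(F_{\resto Y})$ yields $s \in \lfp(F)$. The substantive direction is $(\Rightarrow)$. Here I would assign to each $x \in \lfp(F) = \bigcup_m F^m(\nada)$ its \emph{rank} $r(x)$, the least $m$ with $x \in F^m(\nada)$, and use the single-coordinate characterization $F(P) = F(\nada) \cup \bigcup_{y\in P} F(\{y\})$ to trace a predecessor chain backward from $s$. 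The key point is that if $r(x) = m \geq 2$, then $x \in F(F^{m-1}(\nada))$ forces $x \in F(\{y\})$ for some $y \in F^{m-1}(\nada)$, and minimality of $r(x)$ forces $r(y) = m-1$ exactly. Iterating produces $t_1,\dots,t_m = s$ with $r(t_j) = j$ (so they are pairwise distinct), $t_1 \in F(\nada)$, and $t_j \in F(\{t_{j-1}\})$.

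Finally I would set $Y = \{t_1,\dots,t_m\}$ and check both halves of the chain condition. Membership $t_j \in F_{\resto Y}^j(\nada)$ follows by induction on $j$, using $t_1 \in F(\nada)\cap Y$ and $t_{j+1} \in F(\{t_j\}) \cap Y \subseteq F(F_{\resto Y}^j(\nada))\cap Y$; in particular $s \in \lfp(F_{\resto Y})$. Non-membership $t_j \notin F_{\resto Y}^{j-1}(\nada)$ follows from the inclusion $F_{\resto Y}^{j-1}(\nada) \subseteq F^{j-1}(\nada)$ established in the easy direction together with $r(t_j) = j$. Re-indexing then gives precisely $t_{i+1} \in F_{\resto Y}^{i+1}(\nada)\setminus F_{\resto Y}^i(\nada)$ with $t_k = s$. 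I expect the main obstacle to be this $(\Rightarrow)$ direction, and within it the bookkeeping that pins each predecessor's rank to exactly one less than its successor: this is what simultaneously guarantees distinctness of the $t_j$ and that each appears strictly later than the previous one in the restricted approximation, which is exactly the content of the chain condition.
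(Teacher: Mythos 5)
Your proof is correct, but it is organized differently from the paper's. The paper funnels both parts of the theorem through a single technical lemma (Lemma~\ref{lem:charcaffp}) that characterizes membership in $\lfp_X.G(X,\vlist{Y})$ by a chain $t_1,\dots,t_k$ with $t_1 \in G(\nada,\vlist{Q})$ for a quasi-atom $\vlist{Q}$ of $\vlist{Y}$ and $t_{i+1}\in G(\{t_i\},\vlist{\nada})$, obtained by a downward induction that repeatedly invokes the quasi-atom form of complete additivity; Part~(1) then follows by observing that the whole chain survives when $\vlist{Y}$ is replaced by $\vlist{Q}$, and Part~(2) by restricting the chain to $Y=\{t_1,\dots,t_k\}$. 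Your Part~(1) avoids the chain entirely: you show the class of completely additive maps is closed under composition, tupling with projections, and pointwise unions, prove by induction that each approximant $H_m(\vlist{Y})=G(H_m{-}_1(\vlist{Y}),\vlist{Y})$ is completely additive, and conclude via constructivity (Fact~\ref{fact:constructivity}) that $H=\bigcup_m H_m$ is too. This is more modular and arguably cleaner, though it yields only the equational form of additivity rather than the explicit witnessing chain the paper reuses later (e.g.\ in Corollary~\ref{cor:fpcafrel} and Theorem~\ref{thm:restofinsucchain}). Your Part~(2) forward direction is essentially the paper's backward trace specialized to the unary case, but your rank bookkeeping ($r(t_j)=j$ exactly) is a genuine improvement: it makes the $t_j$ pairwise distinct and explicitly verifies the strictness clause $t_{i+1}\notin F_{\resto Y}^{i}(\nada)$ in the statement, which the paper's proof establishes only the positive half of ($t_i\in F_{\resto Y}^i(\nada)$). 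Both approaches are sound; yours trades the reusable chain lemma for shorter, more self-contained arguments.
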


The following lemma gives a precise characterization of the finite approximants of fixpoints of completely additive functions.

\begin{lemma}\label{lem:charcaffp}
	Let $G:\wp(\npmoddom)^{n+1}\to\wp(\npmoddom)$ be a completely additive functional. 
	For every $s\in \npmoddom$ and $\vlist{Y}\in \wp(\npmoddom)^n$ we have that $s \in \lfp_X.G(X,\vlist{Y})$ iff there exist $t_1,\dots,t_k \in \npmoddom$ such that $t_k=s$ and the following conditions hold:
	\begin{itemize}
		\itemsep 0 pt
		\item $t_1 \in G(\nada,\vlist{Q})$ where $\vlist{Q}\in\wp(\npmoddom)^n$ is a quasi-atom of $\vlist{Y}$; and
		\item $t_{i+1} \in G(\{t_i\}, \vlist{\nada})$, for all $1 \leq i < k$.
	\end{itemize}
\end{lemma}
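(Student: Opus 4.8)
The plan is to prove Lemma~\ref{lem:charcaffp} by establishing the two directions separately, using Fact~\ref{fact:constructivity} (constructivity) as the crucial tool that lets us work with the finite approximants $G^i(\nada,\vlist{Y})$ instead of transfinite ones. Throughout, I will fix $\vlist{Y}$ and write $F(X) := G(X,\vlist{Y})$, so that $\lfp_X.G(X,\vlist{Y}) = \lfp(F) = \bigcup_{i\in\nat} F^i(\nada)$. The heart of the argument is to understand precisely what $F^i(\nada)$ looks like when $G$ is completely additive, and then to relate membership in the approximants to the existence of a finite \emph{chain} of witnesses $t_1,\dots,t_k$ of the stated form.

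\textbf{The key computation: unfolding the approximants.}
First I would use the quasi-atom characterization of complete additivity in the product to compute $F^1(\nada) = G(\nada,\vlist{Y})$ directly; since the first argument is $\nada$, the only quasi-atoms contributing are those of the form $\vlist{Q}$ a quasi-atom of $(\nada,\vlist{Y})$, which reduce to quasi-atoms of $\vlist{Y}$ in the last $n$ coordinates. This explains the base clause $t_1 \in G(\nada,\vlist{Q})$. For the inductive step, I would compute $F^{i+1}(\nada) = G\bigl(F^i(\nada),\vlist{Y}\bigr)$ and again apply complete additivity in the product: every $s \in F^{i+1}(\nada)$ depends on \emph{at most one} singleton in one coordinate. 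Either that singleton lies in a coordinate of $\vlist{Y}$ (giving $s \in G(\nada,\vlist{Q})$, a ``fresh start''), or it lies in the first coordinate, giving $s \in G(\{t\},\vlist{\nada})$ for some $t \in F^i(\nada)$. This dichotomy is exactly what produces the two bullet conditions, and iterating it unwinds any membership $s \in F^k(\nada)$ into a finite sequence $t_1,\dots,t_k=s$ where each successive element is obtained from its predecessor via $G(\{t_i\},\vlist{\nada})$, seeded by a single application of $G(\nada,\vlist{Q})$.

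\textbf{Carrying out the two directions.}
For the right-to-left direction, given witnesses $t_1,\dots,t_k$ satisfying the two conditions, I would show by induction on $i$ that $t_i \in F^i(\nada)$: the base case is $t_1 \in G(\nada,\vlist{Q}) \subseteq G(\nada,\vlist{Y}) = F^1(\nada)$ by monotonicity (which follows from complete additivity), and the step uses $t_{i+1} \in G(\{t_i\},\vlist{\nada}) \subseteq G(F^i(\nada),\vlist{Y}) = F^{i+1}(\nada)$, again by monotonicity in both arguments. Hence $s = t_k \in \lfp(F)$. For the left-to-right direction, by constructivity $s \in F^k(\nada)$ for some finite $k$, and I would perform a downward extraction: apply the quasi-atom dichotomy at stage $k$ to peel off either a $G(\{t\},\vlist{\nada})$-witness with $t \in F^{k-1}(\nada)$ or a $G(\nada,\vlist{Q})$-witness that terminates the chain early, then recurse on $t$ at the lower stage. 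This produces the required sequence ending in $s$.

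\textbf{Anticipated main obstacle.}
The delicate point is the bookkeeping in the quasi-atom decomposition of $G(F^i(\nada),\vlist{Y})$: complete additivity tells us every output element depends on a single singleton in a single coordinate, but I must argue cleanly that this singleton either sits in the first coordinate (yielding the ``successor'' clause $G(\{t_i\},\vlist{\nada})$ with the other coordinates genuinely empty) or in a $\vlist{Y}$-coordinate (yielding the ``seed'' clause), and that in the successor case the dependence is on an element $t$ that \emph{itself} already lies in the previous approximant $F^i(\nada)$, allowing the induction to descend. Getting the empty-tuple arguments $\vlist{\nada}$ and the quasi-atom $\vlist{Q}$ to line up exactly as stated, rather than with superfluous nonempty coordinates, is where the complete-additivity-in-the-product hypothesis (as opposed to mere coordinatewise additivity) is essential, and this is the step I expect to require the most care.
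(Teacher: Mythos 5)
Your proposal is correct and follows essentially the same route as the paper's proof: the right-to-left direction is the same upward induction showing $t_i \in F^i(\nada)$ via monotonicity, and the left-to-right direction is the same downward extraction from $s \in F^{k}(\nada)$ (guaranteed by constructivity) using the quasi-atom dichotomy — singleton in the first coordinate versus singleton in a $\vlist{Y}$-coordinate — with termination forced by $F^0(\nada)=\nada$. The ``delicate point'' you flag is handled in the paper exactly as you anticipate, since a quasi-atom of $(F^i(\nada),\vlist{Y})$ with its singleton in the first coordinate by definition draws that element from $F^i(\nada)$.
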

\begin{proof}
	\fbox{$\Rightarrow$} As an abbreviation, define $F(X) := G(X,\vlist{Y})$.
	Let $s\in \lfp(F)$ and $k'\in\nat$ be the smallest $k'$ such that $s\in F^{k'}(\nada)$. Such $k'$ exists because of Fact~\ref{fact:constructivity}. We define elements $u_i \in F^i(\nada)$ 
	by downwards induction:
	\begin{itemize}
		\itemsep 0 pt
		\item Case $i = k'$: we set $u_i := s$, which belongs to $F^{k'}(\nada)$.
		\item Case $i < k'$: we want to define $u_i$ in terms of $u_{i+1} \in F^{i+1}(\nada)$. By definition we have that $u_{i+1} \in G(F^i(\nada),\vlist{Y})$. By complete additivity of $G$ there is a quasi-atom $(T,\vlist{Q}')$ of $(F^i(\nada),\vlist{Y})$ such that $u_{i+1} \in G(T,\vlist{Q}')$. We consider the shape of the quasi-atom:
		\begin{enumerate}[(1)]
			\itemsep 0 pt
			\item If $T = \{t\}$ and $\vlist{Q}' = \vlist{\nada}$ we set $u_i := t$ which satisfies $u_{i+1} \in G(\{u_i\}, \vlist{\nada})$.
			\item If $T = \nada$ and $\vlist{Q}'$ is a quasi-atom of $\vlist{Y}$ we set $\vlist{Q} := \vlist{Q}'$ and finish the process.
		\end{enumerate}
		Observe that case (2) will eventually occur. In the worst case this it will occur when $i=1$, because $F^0(\nada)$ is defined as $\nada$.
	\end{itemize}
	This process defines a series of elements $u_{k'},u_{k'-1},\dots,u_j$ where $j\geq 1$. To define the elements $t_j$ we just shift this sequence. That is, we set $k := k'-j+1$ and $t_i := u_{j+i-1}$ for $1 \leq i \leq k$.
	
	\medskip\noindent
	\fbox{$\Leftarrow$}
	This direction will easily follow from this claim:
	\begin{claimfirst}\label{claim:tiinFi}
		$t_{i} \in F^i(\nada)$ for all $1\leq i \leq k$.
	\end{claimfirst}
	\begin{pfclaim}
		We prove it by induction. For the base case, we have by hypothesis that $t_1 \in G(\nada,\vlist{Q})$ where $\vlist{Q}$ is a quasi-atom of $\vlist{Y}$. By monotonicity of $G$ we then have $t_1 \in G(\nada,\vlist{Y})$ which means, by definition of $F$, that $t_1 \in F(\nada)$. For the inductive case let $t_{i+1} \in G(\{t_i\}, \vlist{\nada})$. By inductive hypothesys $t_i \in F^i(\nada)$ therefore, by monotonicity of $G$, we have that $t_{i+1} \in G(F^i(\nada), \vlist{\nada})$. Again by monotonicity, we get that $t_{i+1} \in G(F^i(\nada), \vlist{Y})$. By definition of $F$ we can conclude that $t_{i+1} \in F^{i+1}(\nada)$.
	\end{pfclaim}
	In particular, $t_k = s \in F^k(\nada)$ and therefore we get $s \in \lfp_X.G(X,\vlist{Y})$.
\end{proof}

Note that the above lemma is not restricted to any particular logic, as it expresses a property about an arbitrary completely additive functional $G$. We can now prove our main theorem about completely additive functionals.

\begin{proofof}{Theorem~\ref{thm:propscafmap}(1)}
	Let $G(X,\vlist{Y})$ be a completely additive functional and define $H(\vlist{Y}) := \lfp_X.G(X,\vlist{Y})$. Suppose that $s \in H(\vlist{Y})$. Let $\vlist{Q}$ be the quasi-atom of $\vlist{Y}$ given by Lemma~\ref{lem:charcaffp}, we will prove that $s \in H(\vlist{Q}) = \lfp_X.G(X,\vlist{Q})$. Observe that, by the lemma, $t_1\in G(\nada,\vlist{Q})$. The key observation is that as $t_{i+1} \in G(\{t_i\},\nada)$, by monotonicity we get that $t_{i+1} \in G(\{t_i\},\vlist{Q})$. From this it can be easily seen that, as $s\in G(\{t_{k-1}\},\vlist{Q})$ and $G$ is monotone, we get $s\in \lfp_X.G(X,\vlist{Q})$.
\end{proofof}

\begin{proofof}{Theorem~\ref{thm:propscafmap}(2)}
	Let $F:\wp(\npmoddom)\to\wp(\npmoddom)$ be completely additive and let $s\in\npmoddom$; we prove that
	$s \in \lfp(F) \text{ iff } \text{there exists $Y$ such that } s \in \lfp(F_{\resto Y})$
	where $Y = \{t_1,\dots,t_k\}$ satisfies $t_{i+1} \in F_{\resto Y}^{i+1}(\nada)\setminus F_{\resto Y}^i(\nada)$.
	
	\medskip\noindent
	\fbox{$\Rightarrow$} Let $Y=\{t_1,\dots,t_k\}$ be the set obtained using Lemma~\ref{lem:charcaffp}. In the lemma we already proved that $t_i \in F^i(\nada)$ for all $i$. We now prove the following stronger version of the claim:
	\begin{claim}
		$t_i \in F_{\resto Y}^{i}(\nada)$ for all $i$.
	\end{claim}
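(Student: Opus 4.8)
The plan is to prove the claim by induction on $i$, leaning on the two defining properties of the sequence $t_1,\dots,t_k$ that Lemma~\ref{lem:charcaffp} already supplies. Since here $F$ carries no parameters (we apply the lemma with $n=0$, so that $G=F$ and the only quasi-atom of the empty parameter list is the empty tuple), these properties read $t_1 \in F(\nada)$ and $t_{i+1}\in F(\{t_i\})$ for all $1\leq i < k$. The two facts I will use repeatedly are that $F$ is monotone (complete additivity implies monotonicity, as already noted) and that every $t_i$ lies in $Y$ by the very definition $Y=\{t_1,\dots,t_k\}$. Note also that $F_{\resto Y}$ is monotone, being the composition of the monotone map $F$ with the monotone map $X\mapsto X\cap Y$, so its approximants $F_{\resto Y}^i(\nada)$ are well-defined subsets of $Y$.

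\textbf{Induction.} For the base case I would simply unfold $F_{\resto Y}^1(\nada) = F(\nada)\cap Y$; since $t_1\in F(\nada)$ and $t_1\in Y$, this gives $t_1\in F_{\resto Y}^1(\nada)$. For the inductive step, assuming $t_i\in F_{\resto Y}^i(\nada)$, I would unfold $F_{\resto Y}^{i+1}(\nada) = F\bigl(F_{\resto Y}^i(\nada)\bigr)\cap Y$. The induction hypothesis gives $\{t_i\}\subseteq F_{\resto Y}^i(\nada)$, so monotonicity of $F$ yields $t_{i+1}\in F(\{t_i\})\subseteq F\bigl(F_{\resto Y}^i(\nada)\bigr)$; combining this with $t_{i+1}\in Y$ produces $t_{i+1}\in F_{\resto Y}^{i+1}(\nada)$, which closes the induction.

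\textbf{Conclusion and main obstacle.} In particular $s = t_k\in F_{\resto Y}^k(\nada)\subseteq\lfp(F_{\resto Y})$, which is exactly the witness required by the $\Rightarrow$ direction of Theorem~\ref{thm:propscafmap}(2). The argument is essentially routine; the one point demanding care is the bookkeeping around the restriction operator, namely that $F_{\resto Y}^{i+1}(\nada)$ must be computed as $F\bigl(F_{\resto Y}^i(\nada)\bigr)\cap Y$ and \emph{not} as $F^{i+1}(\nada)\cap Y$. This is precisely why the claim genuinely strengthens the earlier Claim~\ref{claim:tiinFi}: one must verify that the witnesses driving the chain $t_1,t_2,\dots$ survive the restriction at every stage, which they do because the entire chain is contained in $Y$ by construction.
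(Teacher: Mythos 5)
Your proof is correct and follows essentially the same route as the paper's: induction on $i$, using the two properties $t_1\in F(\nada)$ and $t_{i+1}\in F(\{t_i\})$ supplied by Lemma~\ref{lem:charcaffp}, together with monotonicity of $F$ and the fact that each $t_i$ lies in $Y$ by construction. If anything, your inductive step is slightly more careful than the paper's, which passes from $t_{i+1}\in F(F^i(\nada))$ to $t_{i+1}\in F(F^i_{\resto Y}(\nada))$ without making explicit that this shrinking of the argument is justified via $\{t_i\}\subseteq F^i_{\resto Y}(\nada)$ and monotonicity, exactly as you spell out.
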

	\begin{pfclaim}
		For the base case, we know that $t_1 \in F(\nada)$ by Claim~\ref{claim:tiinFi}~(Lemma~\ref{lem:charcaffp}); moreover, by definition $t_1 \in Y$. Hence $t_1 \in F(\nada)\cap Y$ which, by definition of $F_{\resto Y}$, is equivalent to $t_1 \in F_{\resto Y}(\nada)$. For the inductive case let $t_{i+1} \in F^{i+1}(\nada)$. By definition of $F^{i+1}$ we have that $t_{i+1} \in F(F^{i}(\nada))$. Now we use the iductive hypothesis and get that $t_{i+1} \in F(F^{i}_{\resto Y}(\nada))$. As we did in the base case, because $t_{i+1}\in Y$, we know that $t_{i+1} \in F(F^{i}_{\resto Y}(\nada)) \cap Y$ which by definition of $F_{\resto Y}$ and regrouping we can conclude that $t_{i+1} \in F^{i+1}_{\resto Y}(\nada)$.
	\end{pfclaim}
	\noindent In particular $s \in F_{\resto Y}^k(\nada)$ and therefore $s\in \lfp(F_{\resto Y})$. 
	
	\medskip\noindent
	\fbox{$\Leftarrow$} This direction goes through using a monotonicity argument. That is, using that for all $X$ we have $F_{\resto Y}(X) \subseteq F(X)$, it is not difficult to prove that $F^\alpha_{\resto Y}(X) \subseteq F^\alpha(X)$ for all $\alpha$, which entails that $\lfp(F_{\resto Y}) \subseteq \lfp(F)$.
\end{proofof}

Of course, the functionals with which we will work are induced by formulas of $\muML$, $\mufoe$ and $\wcl$ which are completely additive. This leads us to analyze the completely additive fragment of $\mufoe$.
\subsection{Completely additive restriction of $\mufoe$}

Before stating the main definitions we need to introduce some useful notation. Given a non-pointed model $\npmodel = \tup{\npmoddom, R_{\aact\in\acts}, \tscolors}$, elements $\vlist{q} \in \props^n$ and $\vlist{X} = (X_1,\dots,X_n) \in \wp(\npmoddom)^n$, we introduce the following notation:
\begin{align*}
	\tsval(\vlist{q}) &:= \tsval(q_1),\dots,\tsval(q_n)\\
	\npmodel[\vlist{q}\mapsto \vlist{X}] &:= \npmodel[q_i \mapsto X_i \mid 1\leq i \leq n]\\
	\npmodel[\vlist{q}\resto \vlist{X}] &:= \npmodel[q_i \mapsto \tsval(q_i) \cap X_i \mid 1\leq i \leq n].
\end{align*}
We are now ready to state the main definition of this section.

\begin{definition}
We say that $\varphi\in\mufoe$ is
\emph{completely additive in $\{q_1,\dots,q_n\}=\qprops\subseteq\props$} if
for every model $\npmodel$ and assignment $\ass$ it satisfies
\[
\npmodel,\ass \models \varphi \quad\text{ iff }\quad \npmodel[\qprops\resto\vlist{Y}],\ass \models \varphi \text{ for some quasi-atom $\vlist{Y}$ of $\tsval(\qprops)$.}
\]
\end{definition}

\begin{proposition}\label{prop:caformcamap}
If $\varphi\in\mufoe$ is completely additive in $\qprops$ then
for every model $\npmodel$, assignment $\ass$ and variable $x\in\fovar$, the map $G_x:\wp(\npmoddom)^n\to\wp(\npmoddom)$ given by
\[
G_x(\vlist{Z}) := \{ t\in\npmoddom \mid \npmodel[\qprops\mapsto\vlist{Z}],\ass[x\mapsto t] \models \varphi\}
\]
is completely additive.
\end{proposition}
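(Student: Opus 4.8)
The plan is to bypass the product definition of complete additivity and instead verify directly the quasi-atom reformulation stated just before the proposition, namely that
$G_x(\vlist{Z}) = \bigcup\{G_x(\vlist{Q}) \mid \vlist{Q} \text{ is a quasi-atom of } \vlist{Z}\}$
for every $\vlist{Z}\in\wp(\npmoddom)^n$. Since the excerpt records that this ``restriction to quasi-atoms'' is equivalent to complete additivity in the product, establishing the identity for all $\vlist{Z}$ is enough. The whole argument is a translation of the \emph{formula-level} hypothesis into the \emph{map-level} conclusion, and it delivers both inclusions simultaneously, so no separate monotonicity or fixpoint reasoning will be needed.

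Concretely, I would fix $\vlist{Z}=(Z_1,\dots,Z_n)\in\wp(\npmoddom)^n$ and $t\in\npmoddom$, unfold $t\in G_x(\vlist{Z})$ as $\npmodel[\qprops\mapsto\vlist{Z}],\ass[x\mapsto t]\models\varphi$, and then apply the hypothesis that $\varphi$ is completely additive in $\qprops$ to the specific model $\npmodel'' := \npmodel[\qprops\mapsto\vlist{Z}]$ together with the assignment $\ass[x\mapsto t]$. In $\npmodel''$ the marking sends each $q_i$ to $Z_i$, so $\tsval''(\qprops)=\vlist{Z}$, and the hypothesis yields: $\npmodel'',\ass[x\mapsto t]\models\varphi$ iff $\npmodel''[\qprops\resto\vlist{Y}],\ass[x\mapsto t]\models\varphi$ for some quasi-atom $\vlist{Y}$ of $\vlist{Z}$.

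The computation that makes everything collapse is the identity $\npmodel''[\qprops\resto\vlist{Y}] = \npmodel[\qprops\mapsto\vlist{Y}]$ for every quasi-atom $\vlist{Y}$ of $\vlist{Z}$. Indeed, by definition $\npmodel''[\qprops\resto\vlist{Y}]$ sends $q_i$ to $\tsval''(q_i)\cap Y_i = Z_i\cap Y_i$; and since each component $Y_i$ of a quasi-atom of $\vlist{Z}$ is either $\nada$ or a singleton contained in $Z_i$, we have $Z_i\cap Y_i = Y_i$ (this also covers the empty quasi-atom $(\nada,\dots,\nada)$). Substituting back, the biconditional becomes exactly: $t\in G_x(\vlist{Z})$ iff $t\in G_x(\vlist{Y})$ for some quasi-atom $\vlist{Y}$ of $\vlist{Z}$, which is the desired set identity once ``for some quasi-atom'' is read as a union over quasi-atoms of $\vlist{Z}$.

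I expect the only real obstacle to be bookkeeping rather than mathematics: one must keep the two readings of ``quasi-atom'' aligned — the hypothesis quantifies over quasi-atoms of $\tsval(\qprops)$ in the \emph{re-marked} model $\npmodel''$, whereas the conclusion quantifies over quasi-atoms of the argument tuple $\vlist{Z}$ — and must check the substitution identity above so that the restriction operator $\resto$ genuinely collapses to a plain re-marking $\mapsto$. Once these two points are nailed down the proposition follows immediately, since the statement is purely about the one-step dependence of $\varphi$ on its $\qprops$-coordinates.
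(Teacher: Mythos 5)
Your proposal is correct and follows essentially the same route as the paper's own proof: unfold $t\in G_x(\vlist{Z})$, apply the formula-level complete additivity of $\varphi$ to the re-marked model $\npmodel[\qprops\mapsto\vlist{Z}]$, and read the result back through the quasi-atom characterization of completely additive maps. The only difference is that you make explicit the bookkeeping identity $\npmodel[\qprops\mapsto\vlist{Z}][\qprops\resto\vlist{Y}]=\npmodel[\qprops\mapsto\vlist{Y}]$, which the paper leaves implicit.
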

\begin{proof}
	Fix a model $\npmodel$, assignment $g$ and free variable $x\in FV(\varphi)$. We want to prove that $G_x(\vlist{Z})$ is completely additive. An element $t$ belongs to $G_x(\vlist{Z})$ iff $\npmodel[\qprops\mapsto\vlist{Z}],\ass[x\mapsto t] \models \varphi$. By complete additivity of $\varphi$, this occurs iff $\npmodel[\qprops\mapsto\vlist{Y}],\ass[x\mapsto t] \models \varphi$ for some quasi-atom $\vlist{Y}$ of $\vlist{Z}$. By definition of $G_x$, this is equivalent to saying that $t \in G_x(\vlist{Y})$. Therefore, the map $G_x$ is completely additive.
	%
\end{proof}

%
Next, we provide a definition of a fragment of $\mufoe$, and shortly after that we prove that every formula in this fragment is completely additive. 

\begin{definition}
Let $\qprops\subseteq \props$ be a set of monadic predicates. The fragment $\add{\mufoe}{\qprops}(\props,\acts)$ is defined by the following rules:
\[
\varphi ::= \psi \mid q(x) \mid \exists x.\varphi(x) \mid \varphi \lor \varphi \mid \varphi \land \psi \mid [\lfp_{p{:}x}.\xi(p,x)](z)
\]
where $q \in \qprops$, $\psi \in \mufoe(\props\setminus \qprops,\acts)$, $p \in \props\setminus \qprops$ and $\xi(p,x) \in \add{\mufoe}{\qprops \cup\{p\}}(\props,\acts)$.
\end{definition}

\noindent Observe that the atomic formulas given by equality and relations are taken into account by this definition in the $\psi$ clause.

\begin{proposition}
	Every $\varphi \in \add{\mufoe}{\qprops}$ is completely additive in $\qprops$.
\end{proposition}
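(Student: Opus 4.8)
The plan is to prove the statement by induction on the structure of $\varphi \in \add{\mufoe}{\qprops}$, establishing in each case the defining biconditional of complete additivity. First I would dispose of the right-to-left implication uniformly. Since no predicate of $\qprops$ ever occurs under a negation in this fragment (negations are confined to the $\qprops$-free subformulas $\psi$ and, recursively, to the body $\xi$, which lives in $\add{\mufoe}{\qprops\cup\{p\}}$), every $\varphi$ is monotone in $\qprops$. Because $\npmodel[\qprops\resto\vlist{Y}]$ interprets each $q_i$ by the subset $\tsval(q_i)\cap Y_i$ of $\tsval(q_i)$, monotonicity immediately gives that $\npmodel[\qprops\resto\vlist{Y}],\ass\models\varphi$ implies $\npmodel,\ass\models\varphi$ for any quasi-atom $\vlist{Y}$. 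Thus all the real work lies in the left-to-right direction: from $\npmodel,\ass\models\varphi$ I must produce a quasi-atom witness of $\tsval(\qprops)$.

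The non-fixpoint cases are routine and just propagate the witness through the connective. If $\varphi=\psi$ is $\qprops$-free, its truth is insensitive to $\tsval(\qprops)$, so the empty quasi-atom $(\nada,\dots,\nada)$ works. If $\varphi=q_i(x)$ and $\npmodel,\ass\models q_i(x)$, i.e.\ $\ass(x)\in\tsval(q_i)$, then the atom carrying $\{\ass(x)\}$ in coordinate $i$ is a legitimate quasi-atom of $\tsval(\qprops)$ and already witnesses $q_i(x)$ in $\npmodel[\qprops\resto\vlist{Y}]$. For disjunction and existential quantification I would simply relay the witness furnished by the induction hypothesis (for $\exists x$, after choosing a witnessing element for $x$; note that the witnessing quasi-atom, being of $\tsval(\qprops)$, does not depend on the value of $x$). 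For the conjunction $\varphi\land\psi$ the point is precisely that $\psi$ is $\qprops$-free: take the quasi-atom $\vlist{Y}$ supplied for $\varphi$ by the induction hypothesis and observe that $\psi$, being insensitive to $\qprops$, remains satisfied in $\npmodel[\qprops\resto\vlist{Y}]$, so the conjunction is too. This is exactly where the asymmetric shape of the conjunction clause is used, an unrestricted conjunction of two completely additive formulas failing to be completely additive (cf.\ $A\cap B$).

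The fixpoint case is the heart of the argument and is where the machinery of Section~\ref{sec:fixpoint-caf} enters. Let $\varphi=[\lfp_{p{:}x}.\xi(p,x)](z)$ with $\xi\in\add{\mufoe}{\qprops\cup\{p\}}$. By the induction hypothesis $\xi$ is completely additive in $\qprops\cup\{p\}$, so Proposition~\ref{prop:caformcamap} gives that the induced map $G(X,\vlist{Z})$ on $\wp(\npmoddom)^{n+1}$, where the extra coordinate $X$ tracks $p$, the tuple $\vlist{Z}$ tracks $\qprops$, and the distinguished variable is $x$, is completely additive. Theorem~\ref{thm:propscafmap}(1) then yields that $H(\vlist{Z}) := \lfp_X.G(X,\vlist{Z})$ is completely additive. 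Since $\npmodel,\ass\models\varphi$ iff $\ass(z)\in H(\tsval(\qprops))$, the quasi-atom form of complete additivity of $H$ provides a quasi-atom $\vlist{Q}$ of $\tsval(\qprops)$ with $\ass(z)\in H(\vlist{Q})$, that is, $\npmodel[\qprops\mapsto\vlist{Q}],\ass\models\varphi$. Finally I would observe that for a quasi-atom $\vlist{Q}$ of $\tsval(\qprops)$ the two updates coincide, $\npmodel[\qprops\mapsto\vlist{Q}]=\npmodel[\qprops\resto\vlist{Q}]$, since each singleton $\{s\}$ appearing in $\vlist{Q}$ already satisfies $s\in\tsval(q_i)$, whence $\tsval(q_i)\cap\{s\}=\{s\}$; this turns $\vlist{Q}$ into the required $\qprops\resto$-witness.

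The main obstacle is precisely this fixpoint case. Unlike the others it is not self-contained: it depends on correctly matching the formula-level notion of complete additivity with the map-level one via Proposition~\ref{prop:caformcamap}, and on the preservation of complete additivity under least fixpoints via Theorem~\ref{thm:propscafmap}(1). The only genuinely delicate bookkeeping is the identification of $\npmodel[\qprops\mapsto\vlist{Q}]$ with $\npmodel[\qprops\resto\vlist{Q}]$ on quasi-atoms, which is what lets the purely set-theoretic decomposition of $H$ be read back as the syntactic complete-additivity condition of the theorem.
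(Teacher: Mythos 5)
Your proposal is correct and follows essentially the same route as the paper: induction on the fragment with the routine cases handled as in the one-step lemma (Lemma~\ref{lem:caddofoiscadd}), and the fixpoint case discharged by passing to the induced map via Proposition~\ref{prop:caformcamap} and invoking the preservation of complete additivity under least fixpoints from Theorem~\ref{thm:propscafmap}(1). Your extra observations — that the right-to-left implication follows uniformly from monotonicity, and that $\npmodel[\qprops\mapsto\vlist{Q}]=\npmodel[\qprops\resto\vlist{Q}]$ for quasi-atoms of $\tsval(\qprops)$ — are correct details that the paper leaves implicit.
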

\begin{proof}
The proof goes by induction, most cases are solved similar to Lemma~\ref{lem:caddofoiscadd}. We focus on the inductive step of the fixpoint operator.
	%
	%
	%
	%
	Let $\varphi$ be $[\lfp_{p{:}x}.\psi(p,x)](z)$, $\npmodel$ be a model and $\ass$ an assigment,
	we have to prove that
	\[
		\npmodel,\ass \models \varphi \quad\text{ iff }\quad \npmodel[\qprops\resto\vlist{Y}],\ass \models \varphi \text{ for some quasi-atom $\vlist{Y}$ of $\tsval(\qprops)$.}
	\]
	By semantics of the fixpoint operator $\npmodel,\ass \models \varphi$ iff $\ass(z) \in \lfp(F_{p:x})$ where
	\[
		F_{p:x}(P) := \{ t\in\npmoddom \mid \npmodel[p\mapsto P],\ass[x\mapsto t] \models \psi\}.
	\]
	It will be useful to take a slightly more general definition: consider the map
	\[
	G_{p:x}^\psi(P,\vlist{Z}) := \{ t\in\npmoddom \mid \npmodel[p\mapsto P;\qprops\mapsto\vlist{Z}],\ass[x\mapsto t] \models \psi\}
	\]
	and observe that $F_{p:x}(P) = G_{p:x}^\psi(P,\tsval(\qprops))$ and therefore their least fixpoints will be the same. By inductive hypothesis and Proposition~\ref{prop:caformcamap}, we know that $G_{p:x}^\psi(P,\vlist{Z})$ is completely additive. Using Theorem~\ref{thm:propscafmap}(1) we get that $\lfp_P.G_{p:x}^\psi(P,\tsval(\qprops))$ is completely additive as well. That is, there is a quasi-atom $\vlist{Y}$ of $\tsval(\qprops)$ such that
	\[
	t \in \lfp_P.G_{p:x}^\psi(P,\tsval(\qprops))
	\quad\text{iff}\quad
	t \in \lfp_P.G_{p:x}^\psi(P,\vlist{Y})
	\]
	from which we can conclude that $\npmodel,\ass \models \varphi$ iff $\npmodel[\qprops\resto\vlist{Y}],\ass \models \varphi$.
\end{proof}

This proves that the above fragment is ``sound'' with respect to the property of complete additivity. We conjecture that the fragment is also ``complete'' with respect to this property, i.e., that every formula of $\mufoe$ which is completely additive in $\qprops$ is equivalent to a formula in $\add{\mufoe}{\qprops}$. We do not pursue this matter because it goes beyond the objectives of the current article.

\begin{conjecture}
	Every formula $\varphi\in\mufoe$ which is completely additive in $\qprops$ is equivalent to some formula $\varphi'\in \add{\mufoe}{\qprops}$.
\end{conjecture}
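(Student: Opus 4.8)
The plan is to read the semantic definition of complete additivity literally and turn it into a formula, exploiting the fact that $\mufoe$ — unlike the modal $\mu$-calculus — has genuine first-order quantification, so the clause ``there is a quasi-atom'' can be internalised by an $\exists$. By hypothesis $\varphi$ is completely additive in $\qprops=\{q_1,\dots,q_n\}$, i.e.\ for every $\npmodel$ and $\ass$,
\[
\npmodel,\ass \models \varphi \quad\text{iff}\quad \npmodel[\qprops\resto\vlist{Y}],\ass \models \varphi \text{ for some quasi-atom } \vlist{Y} \text{ of } \tsval(\qprops).
\]
Since a quasi-atom of $\tsval(\qprops)$ is either all-empty or a singleton $\{s\}$ with $s\in\tsval(q_i)$ sitting in a single coordinate $i$, the right-hand side unwinds into a finite disjunction, over the choice of coordinate $i$ together with the all-empty case, of conditions of the form ``$\varphi$ holds when $q_i$ is read as some singleton $\{s\}$ with $s\in\tsval(q_i)$ and every other $q_j$ as $\nada$.''

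First I would fix an individual variable $x$ not occurring in $\varphi$ and define three $\qprops$-free formulas by substitution on atoms: let $\varphi_\bot$ be obtained by replacing every atom $q(y)$ with $q\in\qprops$ by $\lnot(y\foeq y)$; and for each $i\le n$ let $\varphi_i(x)$ be obtained by replacing every $q_i(y)$ by $y\foeq x$ and every $q_j(y)$ with $q_j\in\qprops$, $j\neq i$, by $\lnot(y\foeq y)$. Since every occurrence of a $\qprops$-predicate (monadic, hence of the form $q(y)$) is deleted, we have $\varphi_\bot,\varphi_i(x)\in\mufoe(\props\setminus\qprops,\acts)$. I then set
\[
\varphi' := \varphi_\bot \lor \bigvee_{i=1}^{n} \exists x.\,\bigl(q_i(x) \land \varphi_i(x)\bigr),
\]
and claim $\varphi'\equiv\varphi$ with $\varphi'\in\add{\mufoe}{\qprops}$. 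Membership is immediate from the grammar: $\varphi_\bot$ is a legal $\psi$-clause, each $q_i(x)$ is a legal atom and each $\varphi_i(x)$ a legal $\psi$, so $q_i(x)\land\varphi_i(x)$ matches the $\varphi\land\psi$ clause, each $\exists x.(\cdots)$ the quantifier clause, and the whole formula a disjunction. The equivalence is exactly the unwound quasi-atom characterisation: the $\varphi_\bot$ disjunct realises the all-empty quasi-atom, while $\exists x.(q_i(x)\land\varphi_i(x))$ ranges $x$ over $\tsval(q_i)$ and evaluates $\varphi$ with $q_i$ reset to the singleton $\{\ass(x)\}$ and all other $q_j$ emptied, i.e.\ precisely the contribution of the singleton quasi-atoms in coordinate $i$. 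Both directions of the equivalence use complete additivity (the forward direction to produce a witnessing quasi-atom, the backward direction as the monotonicity passing from a quasi-atom restriction back to the full valuation), which is available by hypothesis.

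The one step that deserves care — and which I expect to be the main obstacle — is the substitution lemma underlying the last sentence: that reinterpreting the monadic predicate $q_i$ as the singleton $\{\ass(x)\}$ gives the same truth value as syntactically replacing every $q_i(\cdot)$ by $(\cdot\foeq x)$, and that this commutes with the $\lfp$ operator. This is routine by induction on $\varphi$, the delicate case being $[\lfp_{p{:}y}.\xi](z)$: here one uses that $x$ is free, so $\ass(x)$ is a fixed parameter independent of the fixpoint variable $p\neq q_i$, whence the substitution neither changes the polarity of $p$ nor alters the monotone map whose least fixpoint is taken, and the fixpoint is unchanged. I would also remark that the analogous completeness statement for $\mucaML$ needed disjunctive normal forms or automata precisely because modal logic cannot quantify over the single witnessing element; the availability of $\exists x$ in $\mufoe$ is what lets us name that element directly, so the present argument stays elementary and no normal-form theorem for $\mufoe$ is required.
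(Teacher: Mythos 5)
You should first be aware that the paper contains no proof of this statement: it is posed explicitly as a conjecture, and the author declines to pursue it ("we do not pursue this matter because it goes beyond the objectives of the current article"). So there is nothing to compare your argument against; it has to stand on its own. Having checked it, I believe it does, and for exactly the reason you identify. The paper's official definition of complete additivity for $\mufoe$-formulas is already the biconditional ``$\npmodel,\ass\models\varphi$ iff $\npmodel[\qprops\resto\vlist{Y}],\ass\models\varphi$ for some quasi-atom $\vlist{Y}$ of $\tsval(\qprops)$'', quantified over all models and assignments. Your $\varphi'$ is a literal first-order internalisation of the right-hand side; the two directions of $\varphi\equiv\varphi'$ are the two directions of that biconditional (the right-to-left one applied to the quasi-atom named by the witness of $\exists x$, respectively to the all-empty quasi-atom), and membership of $\varphi'$ in $\add{\mufoe}{\qprops}$ is a direct reading of the grammar, since $\varphi_\bot$ and each $\varphi_i(x)$ are $\qprops$-free and hence legal $\psi$-clauses.

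Two points deserve to be made fully explicit if you write this up. First, the substitution lemma --- that replacing every $q_i(y)$ by $y\foeq x$ computes the same truth value as reinterpreting $q_i$ as the singleton $\{\ass(x)\}$, uniformly through $\lfp$ --- needs the standing conventions that $x$ is fresh and that fixpoint-bound predicates are fresh (hence disjoint from $\qprops$); then the substitution touches only $\qprops$-atoms, cannot disturb the positivity of any fixpoint variable, and the two induced monotone maps coincide pointwise, so their least fixpoints agree. Second, your closing diagnosis is the right one: the analogous completeness statement for $\mucaML$, and the paper's own one-step characterisations, are hard precisely because those languages cannot name the single element a quasi-atom consists of, whereas $\exists x$ plus equality can. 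The only way your argument could fail is if the paper's formula-level definition of complete additivity were not the notion the author actually intends; as the conjecture is formally stated, it follows from your construction.
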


\noindent Finally, we define $\mucafoe$:

\begin{definition}
The fragment $\mucafoe$ of $\mufoe$ is given by the following restriction of the fixpoint operator to the completely additive fragment:
\[
\varphi ::= q(x) \mid R_{\aact}(x,y) \mid x \foeq y \mid \exists x.\varphi \mid \lnot\varphi \mid \varphi \lor \varphi \mid [\lfp_{p{:}x}.\xi(p,x)](z)
\]
where $p,q\in\props$, $\aact\in\acts$, $x,y\in\fovar$; and $\xi(p,x) \in \add{\mufoe}{\{p\}} \cap \mucafoe$.
\end{definition}

\subsection{Equivalence of $\binfotc$ and $\mucafoe$}

In this subsection we prove Theorem~\ref{thm:fotcmucafoe}. That is, we give effective translations that witness the equivalence $\binfotc \equiv \mucafoe$.

\paragraph{From $\binfotc$ to $\mucafoe$.}

In Remark~\ref{rem:tcinlfp} we observed that the reflexive-transitive closure of a formula can be expressed as a fixed point. That is,
\[[\tc_{x,y}.\varphi(x,y)](u,v) \equiv [\lfp_{p{:}y}.y\foeq u \lor (\exists x. p(x) \land \varphi(x,y))](v).\]
It is easy to see (syntactically) that the formula inside the fixpoint is completely additive in $p$ (which is a fresh variable); therefore it belongs to $\mucafoe$. Moreover, the equivalence holds for all models, in particular, for trees.


\paragraph{From $\mucafoe$ to $\binfotc$.}

An alternative way to read Theorem~\ref{thm:propscafmap}(2) is that an element $s$ belongs to the least fixed point of a map $F$ iff there is a sequence of elements from $F(\nada), F^2(\nada), \dots$ which eventually reaches~$s$. We will now introduce a new notation which is closer to this reading, and rephrase Theorem~\ref{thm:propscafmap}(2) in those terms. The connection between fixpoints of completely additive maps and transitive closure will become clear.

\begin{definition} 
	The relation ${\stl_F}$ is defined for every $t,t' \in \npmoddom$ as
	$t \stl_F t'$ iff $t'\in F(\{t\})$.
\end{definition}

\noindent With this notation, Theorem~\ref{thm:propscafmap}(2) can be reformulated as follows:

\begin{corollary}\label{cor:fpcafrel}
	Let $F:\wp(\npmoddom)\to\wp(\npmoddom)$ be completely additive.
	For every $s\in\npmoddom$ we have that $s \in \lfp(F)$ iff there exist $t_1 \in F(\nada)$ such that $t_1 \stl^*_F s$.
\end{corollary}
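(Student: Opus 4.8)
The plan is to read the statement off as the parameter-free ($n=0$) instance of Lemma~\ref{lem:charcaffp}, merely re-expressed through the relation $\stl_F$. First I would instantiate that lemma with $G := F$ and with an empty tuple of parameters $\vlist{Y}$. In $\wp(\npmoddom)^0$ the only tuple is the empty one, and since it has no coordinates it admits no atoms; hence its unique quasi-atom $\vlist{Q}$ is the empty tuple and the expression $G(\nada,\vlist{Q})$ collapses to $F(\nada)$. Likewise the step term $G(\{t_i\},\vlist{\nada})$ collapses to $F(\{t_i\})$. The lemma therefore specializes to: $s \in \lfp(F)$ iff there exist $t_1,\dots,t_k\in\npmoddom$ with $t_k = s$, $t_1 \in F(\nada)$, and $t_{i+1}\in F(\{t_i\})$ for all $1 \le i < k$.

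Next I would translate the step condition into the new notation. By definition $t \stl_F t'$ holds exactly when $t'\in F(\{t\})$, so the clause ``$t_{i+1}\in F(\{t_i\})$ for all $1\le i<k$'' says precisely that $t_1 \stl_F t_2 \stl_F \cdots \stl_F t_k = s$ is a $\stl_F$-path. A finite path of this shape from $t_1$ to $s$ exists if and only if $t_1 \stl^*_F s$, where the degenerate length-one case $k=1$ corresponds to the reflexive step $t_1 = s$. Combining this with the surviving base condition $t_1 \in F(\nada)$ gives exactly the right-hand side of the corollary. Concretely, for the forward direction I would take the chain $t_1,\dots,t_k$ produced by the ``$\Rightarrow$'' part of Lemma~\ref{lem:charcaffp} and repackage it as the $\stl^*_F$-path above; for the converse I would unfold any witness $t_1\in F(\nada)$ with $t_1\stl^*_F s$ into a finite sequence $t_1 \stl_F \cdots \stl_F t_k = s$ and feed it to the ``$\Leftarrow$'' part of the lemma. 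Complete additivity of $F$ enters only through the lemma.

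I do not anticipate a genuine obstacle here, since the content is a change of notation; the two points requiring care are (i) checking that the $n=0$ instance is legitimate, i.e.\ that the empty parameter list has the empty tuple as its unique quasi-atom so that the base clause is just $t_1 \in F(\nada)$, and (ii) correctly matching the length-one chain with the reflexive case of $\stl^*_F$, so that elements of $F(\nada)$ are themselves recorded as lying in $\lfp(F)$. If one preferred to route through Theorem~\ref{thm:propscafmap}(2) rather than the lemma, the same path could be recovered by reading the strictly increasing approximants $F_{\resto Y}^{i}(\nada)$ off the witnessing set $Y = \{t_1,\dots,t_k\}$; but the detour through the restricted map $F_{\resto Y}$ is unnecessary once Lemma~\ref{lem:charcaffp} is available.
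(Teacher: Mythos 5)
Your proposal is correct. It differs mildly from the paper's route: the paper presents Corollary~\ref{cor:fpcafrel} as a notational reformulation of Theorem~\ref{thm:propscafmap}(2), i.e.\ it extracts the $\stl_F$-chain from the witnessing set $Y=\{t_1,\dots,t_k\}$ and the approximants $F_{\resto Y}^{i}(\nada)$ of the restricted map, whereas you instantiate Lemma~\ref{lem:charcaffp} directly at $n=0$ and translate $t_{i+1}\in F(\{t_i\})$ into $t_i \stl_F t_{i+1}$. Both arguments ultimately rest on the same lemma, but your short-circuit is arguably the cleaner of the two: the lemma's conclusion is literally a linear $\stl_F$-path anchored at $F(\nada)$, while unpacking the $F_{\resto Y}$-formulation of Theorem~\ref{thm:propscafmap}(2) requires one to observe that $F_{\resto Y}^{i}(\nada)=\{t_1,\dots,t_i\}$ and then invoke complete additivity once more to restrict $F(\{t_1,\dots,t_i\})$ to a single predecessor, so it is not purely a change of notation. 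Your two points of care are exactly the right ones, and both check out: the empty tuple is the unique quasi-atom of the empty parameter list (it is vacuously of the form $(\nada,\dots,\nada)$), so the base clause degenerates to $t_1\in F(\nada)$, and the length-one chain $t_1=s$ is absorbed by the reflexive case of $\stl^*_F$.
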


Let $\varphi(p,z)$ belong to $\add{\mufoe}{p}$. It only remains to observe that the required relations can be defined in $\binfotc$, as follows:
\begin{itemize}
	\itemsep 0 pt
	\item $v \in F(\nada)$ is equivalent to $\varphi(\bot,v)$,
	\item $u \stl_F v$ is equivalent to $\varphi(p,v)[p(y) \mapsto u\foeq y]$.
\end{itemize}
To finish, we define
$[\lfp_{p{:}z}.\varphi(p,z)](v) := \exists t_1. \varphi(\bot,t_1) \land [\tc_{x,y}.{\stl_{F_z^\varphi}}(x,y)](t_1,v)$.

\section{One-step logics, normal forms and additivity}\label{sec:onestep}


In this section we define the one-steps logics that we use in the rest of the article, namely: one-step first-order logic with and without equality ($\ofoe$, $\ofo$). The main theorems of this section prove normal forms for these logics and give syntactical characterizations of the monote and completely additive fragments that we use in later sections.

\begin{definition}
The set $\ofoe(A,\sorts)$ of (multi-sorted) one-step first-order sentences (with equality) is given by the sentences formed by
\[
\varphi ::= a(x)
\mid x \foeq y
\mid \neg \varphi
\mid \varphi \lor \varphi
\mid \exists x{:}\asort.\varphi
\]
where $x,y\in \fovar$, $a \in A$ and $\asort\in\sorts$. The one-step logic $\ofo(A,\sorts)$ of multi-sorted first-order sentences without equality is defined similarly.
\end{definition}

Without loss of generality, from now on we always assume that every bound variable occurring in a sentence is bound by an unique quantifier.
%
%
Recall that given a one-step logic $\llang_1$ we write $\llang_1^+(A)$ to denote the fragment
where every predicate $a\in A$ occurs only positively. 

The multi-sorted semantics that we will use in this article is slightly non-standard: for example, the individual variables and names (predicates) do not have a fixed sort. We define the semantics formally to avoid confusions.

\begin{definition}
	Let $\varphi \in \ofoe(A,\sorts)$ be a formula, $\osmodel = (D_{\asort_1},\dots,D_{\asort_n},\val)$ be a one-step model on sorts $\sorts$ and $\ass:\fovar\to \wp(D)$ be an assignment. The semantics of $\ofoe(A,\sorts)$ is given as follows:
	\begin{align*}
	    \osmodel,\ass \models a(x) & \quad\text{iff}\quad \ass(x) \in \val(a),\\
	    \osmodel,\ass \models x \foeq y & \quad\text{iff}\quad \ass(x) = \ass(y),\\
	    \osmodel,\ass \models \exists x{:}\asort.\varphi & \quad\text{iff}\quad \osmodel,\ass[x\mapsto d] \models \varphi \text{ for some $d\in D_\asort$},
	\end{align*}
	where the Boolean connectives are defined as expected.
\end{definition}

In the following subsections we provide a detailed model theoretic analysis of the one-step logics that we use in this article, specifically, we give:
\begin{itemize}
	\itemsep 0 pt
	\item Normal forms for arbitrary formulas of multi-sorted $\ofo$ and $\ofoe$.
	\item Strong forms of syntactic characterizations for the monotone and completely additive fragments of $\ofo$ and $\ofoe$. Namely, for $\llang_1 \in \{\ofo,\ofoe\}$ we give:
		\begin{enumerate}[(a)]
			\item A fragment $\monot{\llang_1}{A'}$ and a translation $(-)^\tmono:\llang_1(A)\to\monot{\llang_1}{A'}(A)$ such that for every $\varphi \in\llang_1$ we have $\varphi\equiv\varphi^\tmono$ iff $\varphi$ is monotone in $A' \subseteq A$,
			%
		%
		%
			\item A fragment $\add{\llang_1}{A'}$ and a translation $(-)^\tadd:\llang_1(A)\to\add{\llang_1}{A'}(A)$ such that for every $\varphi \in\llang_1$ we have $\varphi\equiv\varphi^\tadd$ iff $\varphi$ is completely additive in $A' \subseteq A$.
		\end{enumerate}
		Moreover, we show that the latter translation also restricts to the fragment $\llang_1^+$, i.e.,
		\begin{enumerate}[(a)]
			\item[(c)] The restriction $(-)^\tadd_+:\llang_1^+(A)\to\add{\llang_1^+}{A'}(A)$ of $(-)^\tadd$ is such that for every $\varphi \in\llang_1^+$ we have $\varphi\equiv\varphi^\tadd_+$ iff $\varphi$ is completely additive in $A' \subseteq A$.
		\end{enumerate}
	\item Syntactic characterizations of the completely multiplicative fragments of $\ofo$ and $\ofoe$.
	\item Normal forms for the monotone and completely additive fragments.
\end{itemize}



\subsection{Normal forms}\label{subsec:normalforms}

Given a set of names $A$ and $S \subseteq A$, we introduce the notation
\[
  \tau_{S}(x) := \bigwedge_{a\in S}a(x) \land \bigwedge_{a\in A\setminus S}\lnot a(x).
\]
The formula $\tau_{S}(x)$ is called an \emph{$A$-type}. We usually blur the distinction between $\tau_{S}(x)$ and $S$ and call $S$ an $A$-type as well.
A \emph{positive} $A$-type is defined as $\tau_{S}^+(x) := \bigwedge_{a\in S}a(x)$.
We use the convention that if $S=\nada$, then $\tau_S^+(x)$ is $\top$ and we call it the \emph{empty} positive $A$-type.
Given a one-step model $\osmodel$ we use $|S|^\asort_\osmodel$ to denote the number of elements of sort $s\in\sorts$ that realize the $A$-type $\tau_S$ in $\osmodel$. Formally, it is defined as $|S|^\asort_\osmodel := |\{d\in D_\asort : \osmodel \models \tau_S(d) \}|$.

A \emph{partial isomorphism} between two multi-sorted one-step models $\osmodel = (D_{\asort_1},\dots,D_{\asort_n},\val)$ and $\osmodel' = (D'_{\asort_1},\dots,D'_{\asort_n},\val')$ is a \emph{partial} function $f: D \pto D'$ which is injective and for all $d\in \Dom(f)$ it satisfies the following conditions:
\begin{description}
	\itemsep 0pt
	\item[(sorts)] $d$ and $f(d)$ have the same sorts,
	\item[(atom)] $d \in \val(a) \Leftrightarrow f(d) \in \val'(a)$, for all $a\in A$.
\end{description}
Given two sequences $\vlist{d} \in D^k$ and $\vlist{d'} \in {D'}^k$ 
we use
$f:\vlist{d} \mapsto \vlist{d'}$ to denote the partial function $f:D\pto D'$ defined as $f(d_i) := d'_i$. If there exist $d_i,d_j$ such that $d_i = d_j$ but $d'_i \neq d'_j$ then the result is undefined.

\begin{definition}
The quantifier rank $qr(\varphi)$ of $\varphi \in \ofoe$ is defined as follows
\begin{itemize}
	\itemsep 0 pt
	\item If $\varphi$ is atomic $qr(\varphi) = 0$,
	\item If $\varphi = \lnot\psi$ then $qr(\varphi) = qr(\psi)$,
	\item If $\varphi = \psi_1 \land \psi_2$ or $\varphi = \psi_1 \lor \psi_2$ then $qr(\varphi) = \max\{qr(\psi_1),qr(\psi_2)\}$,
	\item If $\varphi = Qx{:}\asort.\psi$ for $Q \in \{\exists,\forall\}$ then $qr(\varphi) = 1+qr(\psi)$.
\end{itemize}
Given a one-step logic $\llang$ we write $\osmodel \equiv_k^{\llang} \osmodel'$ to indicate that the one-step models $\osmodel$ and $\osmodel'$ satisfy exactly the same formulas $\varphi \in \llang$ with $qr(\varphi) \leq k$. The logic $\llang$ will be omitted when it is clear from context.
\end{definition}

\subsubsection{Normal form for $\ofo$}

We start by stating a normal form for one-step first-order logic without equality. A formula in \emph{basic form} gives a complete description of the types that are satisfied in a one-step model.

\begin{definition}
A formula $\varphi \in \ofo(A,\sorts)$ is in \emph{basic form} if $\varphi = \bigvee \bigwedge_\asort \dgbnfofo{\Sigma}{\Pi}_\asort$ where in each conjunct
\[
\dgbnfofo{\Sigma}{\Pi}_\asort :=
\bigwedge_{S\in\Sigma} \exists x{:}\asort. \tau_S(x) \land \forall x{:}\asort. \bigvee_{S\in\Pi} \tau_S(x)
\]
for some set of types $\Sigma,\Pi \subseteq \wp(A)$.
\end{definition}

\begin{remark}\label{rem:ofostrict}
$\ofo$ cannot distinguish between arbitrary and strict one-step models. More formally, every arbitrary one-step model $(D_1,\dots,D_n,\val)$ is equivalent (for $\ofo$) to the model $(D_1\times\{1\},\dots,D_n\times\{n\},\val_\pi)$ where $\val_{\pi}(a) := \{(d,k) \mid d \in \val(a), k \in \{0,\dots,n\}\}$. Therefore, when proving results for $\ofo$, it is not difficult to see that we can restrict to the class of strict one-step models.
\end{remark}

It is not difficult to prove, using Ehrenfeucht-Fra\"iss\'e games, that every formula of monadic first-order logic without equality (i.e., unsorted $\ofo$) is equivalent to a formula in basic form over strict models. By Remark~\ref{rem:ofostrict} the normal form also holds over arbitrary models. Proof sketches for the unsorted case can be found in~\cite[Lemma~16.23]{ALG02} and~\cite[Proposition~4.14]{Venxx}. We omit a full proof for the sorted case because it is very similar to the case of $\ofoe$.

\begin{proposition}
Every formula of $\ofo(A,\sorts)$ is equivalent to a formula in basic form.
\end{proposition}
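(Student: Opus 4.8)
The plan is to show that, over \emph{strict} one-step models, an $\ofo(A,\sorts)$-sentence can detect only, for each sort $\asort$ and each $A$-type $S$, \emph{whether} $S$ is realized at sort $\asort$ (i.e.\ whether $|S|^\asort_\osmodel>0$), and nothing finer; the normal form then falls out by enumerating these ``profiles''. Since a sentence mentions only finitely many names, I first replace $A$ by the finite set $A_\varphi$, so that there are only finitely many $A$-types and finitely many profiles. By Remark~\ref{rem:ofostrict} it suffices to reason over strict models, where every element carries a unique sort.

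The heart of the argument is an Ehrenfeucht--Fra\"iss\'e analysis for first-order logic \emph{without} equality. I would set up the $k$-round sorted game on a pair $(\osmodel,\osmodel')$ in which each round Spoiler selects a sort $\asort$ together with an element of $D_\asort$ (resp.\ $D'_\asort$), and Duplicator answers with an element of $D'_\asort$ (resp.\ $D_\asort$); Duplicator wins if, position by position, the chosen elements realize the same $A$-type. The crucial point is that, because the language has no equality, Duplicator's map need not be injective and no (in)equalities among pebbles have to be preserved --- the only atomic formulas available about a pebbled element are the $a(x)$, whose truth depends solely on its type. The standard EF theorem then yields $\osmodel\equiv_k\osmodel'$ iff Duplicator wins the $k$-round game, and from this I read off the key invariant: if $\osmodel$ and $\osmodel'$ realize exactly the same types at each sort, Duplicator wins for every $k$ by always matching the sort and type of Spoiler's move, so $\osmodel$ and $\osmodel'$ agree on all $\ofo$-sentences.

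With invariance in hand the combinatorics are routine. For a profile $P=(R_\asort)_{\asort\in\sorts}$ with each $R_\asort\subseteq\wp(A_\varphi)$, the conjunction $\bigwedge_\asort \dgbnfofo{R_\asort}{R_\asort}_\asort$ holds in $\osmodel$ exactly when, for every sort $\asort$, the types realized at $\asort$ are precisely those in $R_\asort$: the existential conjuncts force each type of $R_\asort$ to appear, while $\forall x{:}\asort.\bigvee_{S\in R_\asort}\tau_S(x)$ rules out every other type (and, when $R_\asort=\nada$, forces $D_\asort$ empty). Since each strict model has a well-defined profile and $\varphi$ is constant on each profile, letting $\mathcal{P}$ denote the finite set of profiles satisfying $\varphi$ gives $\varphi\equiv\bigvee_{P\in\mathcal{P}}\bigwedge_\asort\dgbnfofo{R_\asort}{R_\asort}_\asort$, which is in basic form (taking $\Sigma=\Pi=R_\asort$ in each disjunct), with an unsatisfiable $\varphi$ corresponding to the empty disjunction. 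Finally, Remark~\ref{rem:ofostrict} transfers the equivalence from strict to arbitrary one-step models.

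I expect the main obstacle to be the careful and correct formulation of the EF game in the equality-free, multi-sorted setting --- in particular, making precise that dropping the injectivity/distinctness requirement on Duplicator is exactly what licenses collapsing ``how many'' to ``whether''. This is the very feature that makes the normal form possible here and that distinguishes it from the threshold-counting normal form needed for $\ofoe$.
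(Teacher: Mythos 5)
Your proof is correct and follows essentially the route the paper indicates for this proposition (which it only sketches): reduce to strict models via Remark~\ref{rem:ofostrict}, run an Ehrenfeucht--Fra\"iss\'e argument for equality-free sorted monadic first-order logic to show that only the set of realized types per sort is detectable, and write $\varphi$ as the finite disjunction of the profile-characterizing formulas with $\Sigma=\Pi$. The observation that dropping injectivity from Duplicator's winning condition is exactly what collapses counting to realization is the same point the paper exploits implicitly by contrast with the $\sim^=_k$ relation used for $\ofoe$.
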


\subsubsection{Normal form for $\ofoe$}

In this subsection we will have to pay particular attention to the kind of one-step models that we are working with. The case of $\ofoe$ is much more complicated, as this logic \emph{can} distinguish between strict and arbitrary one-step models. We first give a normal form for \emph{strict} models and afterwards generalize it to arbitrary models.

\paragraph{The strict case.}
We prove that every formula of \emph{multi-sorted} monadic first-order logic with equality (i.e., $\ofoe$) is equivalent to a formula in strict basic form over strict models.

\begin{definition}
A formula $\varphi \in \ofoe(A,\sorts)$ is in \emph{strict basic form} if $\varphi = \bigvee\bigwedge_\asort \dbnfofoe{\vlist{T}}{\Pi}_\asort$ where in each conjunct we have
\[
\dbnfofoe{\vlist{T}}{\Pi}_\asort := \exists \vlist{x}{:}\asort.\big(\arediff{\vlist{x}} \land \bigwedge_i \tau_{T_i}(x_i) \land \forall z{:}\asort.(\arediff{\vlist{x},z} \lthen \bigvee_{S\in \Pi} \tau_S(z))\big)
\]
such that $\vlist{T} \in \wp(A)^k$ for some $k$, $\asort \in \sorts$ and $\Pi \subseteq \vlist{T}$. The predicate $\arediff{\vlist{y}}$, which states that the elements $\vlist{y}$ are different, is given as $\arediff{y_1,\dots,y_n} := \bigwedge_{1\leq m < m^{\prime} <n} (y_m \not\approx y_{m^{\prime}})$.
\end{definition}

\noindent We start by defining the following relation between strict one-step models.

\begin{definition}
	Let $\osmodel$ and $\osmodel'$ be strict one-step models. For every $k \in \nat$ we define 
	\[
		\osmodel \sim^=_k \osmodel' \Longleftrightarrow
		\forall S\subseteq A, \asort\in\sorts.
		\big(|S|^\asort_\osmodel = |S|^\asort_{\osmodel'} < k \text{ or } |S|^\asort_\osmodel,|S|^\asort_{\osmodel'} \geq k \big)
	\]
\end{definition}

Intuitively, two models are related by $\sim^=_k$ when their type information coincides `modulo~$k$'. Later we will prove that this is the same as saying that they cannot be distinguished by a formula of $\ofoe$ with quantifier rank lower or equal to $k$. For the moment, we prove the following properties of $\sim^=_k$.

\begin{proposition}\label{prop:eqrelofoe} The following hold
	\begin{enumerate}[(i)]
		\itemsep 0 pt
		\item $\sim^=_k$ is an equivalence relation,
		\item $\sim^=_k$ has finite index,
		\item Every $E \in \sumods/{\sim^=_k}$ is characterized by a formula $\varphi^=_E \in \ofoe(A,\sorts)$ with $qr(\varphi^=_E) = k$.
	\end{enumerate}
\end{proposition}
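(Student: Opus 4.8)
The plan is to reduce all three claims to a single combinatorial invariant. For a strict one-step model $\osmodel$, call $\big(\min(|S|^\asort_\osmodel,k)\big)_{S\subseteq A,\ \asort\in\sorts}$ its \emph{truncated counting vector}. First I would record the reformulation
\[
\osmodel \sim^=_k \osmodel' \quad\Longleftrightarrow\quad \min(|S|^\asort_\osmodel, k) = \min(|S|^\asort_{\osmodel'}, k) \text{ for all } S \subseteq A,\ \asort \in \sorts,
\]
which is a routine unpacking of the two disjuncts in the definition of $\sim^=_k$: if the common truncated value is $<k$ the counts are equal and below $k$ (first disjunct), if it equals $k$ both counts are $\geq k$ (second disjunct), and conversely. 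With this in hand, (i) and (ii) are immediate. Part (i) holds because $\sim^=_k$ is exactly the kernel of the map $\osmodel \mapsto \big(\min(|S|^\asort_\osmodel,k)\big)_{S,\asort}$, and the kernel of any map is an equivalence relation; part (ii) holds because this map takes values in the finite set of functions $\wp(A)\times\sorts \to \{0,1,\dots,k\}$, so the number of classes is at most $(k+1)^{2^{|A|}\cdot|\sorts|}$.

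The substantial part is (iii). Fix a class $E$ with representative $\osmodel_E$, and for each pair $(S,\asort)$ abbreviate $m_{S,\asort} := \min(|S|^\asort_{\osmodel_E},k)$. I would assemble the characteristic formula from elementary counting sentences. For $0 \le m \le k$ set
\[
\lambda^{\geq m}_{S,\asort} := \exists x_1{:}\asort\cdots\exists x_m{:}\asort\Big(\arediff{x_1,\dots,x_m} \land \bigwedge_{i=1}^{m}\tau_S(x_i)\Big),
\]
which asserts that at least $m$ elements of sort $\asort$ realize the type $S$ and has quantifier rank $m$ (with $\lambda^{\geq 0}_{S,\asort}=\top$). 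Then define
\[
\chi_{S,\asort} := \begin{cases} \lambda^{\geq m_{S,\asort}}_{S,\asort} \land \lnot\,\lambda^{\geq m_{S,\asort}+1}_{S,\asort} & \text{if } m_{S,\asort} < k,\\ \lambda^{\geq k}_{S,\asort} & \text{if } m_{S,\asort} = k, \end{cases}
\]
so that $\chi_{S,\asort}$ expresses ``exactly $m_{S,\asort}$'' in the first case (rank $m_{S,\asort}+1\le k$) and ``at least $k$'' in the second (rank $k$), and set $\varphi^=_E := \bigwedge_{S,\asort}\chi_{S,\asort}$.

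To verify correctness I would check, using strictness so that the sorted quantifiers genuinely count within the disjoint sets $D_\asort$, that $\osmodel \models \chi_{S,\asort}$ holds exactly when $\min(|S|^\asort_\osmodel, k) = m_{S,\asort}$; conjoining over all pairs and applying the reformulation above gives $\osmodel \models \varphi^=_E$ iff $\osmodel \in E$. By construction $qr(\varphi^=_E) \le k$. The only delicate point — the main obstacle — is pinning the rank down to exactly $k$: when every threshold satisfies $m_{S,\asort} < k-1$, no conjunct attains rank $k$, so I would conjoin a valid rank-$k$ sentence such as $\forall x_1{:}\asort\cdots\forall x_k{:}\asort\,(x_1 \foeq x_1)$, which changes neither the meaning nor the class characterized, to force $qr(\varphi^=_E)=k$. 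One should also note the degenerate case $k=0$, where every threshold is $0$, each $\chi_{S,\asort}$ collapses to $\top$, and $\varphi^=_E=\top$ already has quantifier rank $0=k$.
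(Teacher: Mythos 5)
Your proof is correct and follows essentially the same route as the paper: parts (i) and (ii) are immediate from viewing $\sim^=_k$ as the kernel of the truncated counting map, and for (iii) the paper likewise builds $\varphi^=_E$ as a conjunction, over sorts and types, of ``exactly $n_i$'' sentences for small counts and ``at least $k$'' sentences for large ones (your $\lambda^{\geq m}\land\lnot\lambda^{\geq m+1}$ is just a rephrasing of its existential-plus-universal formulation). Your padding trick to force $qr(\varphi^=_E)=k$ exactly is a point the paper glosses over with ``it is easy to see''; only $qr(\varphi^=_E)\leq k$ is actually used downstream, so your extra care is harmless and slightly more honest.
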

\begin{proof}
	We only prove the last point. Let $E \in \sumods/{\sim^=_k}$ and let $\osmodel \in E$ be a representative. For every $\asort\in\sorts$ call $S_1,\dots,S_n \subseteq A$ to the types such that $|S_i|^\asort_\osmodel = n_i < k$ and $S'_1,\dots,S'_m \subseteq A$ to those satisfying $|S'_i|^\asort_\osmodel \geq k$. Now define
	\begin{align*}
	\varphi^=_{E,\asort} :=
		& \bigwedge_{i\leq n} \big(\exists x_1,\dots,x_{n_i}{:}\asort.\arediff{x_1,\dots,x_{n_i}}\ \land \\
		& \phantom{mm} \bigwedge_{j\leq n_i} \tau_{S_i}(x_j) \land \forall z{:}\asort. \arediff{x_1,\dots,x_{n_i},z} \lthen \lnot\tau_{S_i}(z)\big)\ \land \\
        & \bigwedge_{i\leq m} \big(\exists x_1,\dots,x_k{:}\asort.\arediff{x_1,\dots,x_k} \land \bigwedge_{j\leq k} \tau_{S'_i}(x_j) \big)
	\end{align*}
	Finally set $\varphi^=_{E} := \bigwedge_\asort \varphi^=_{E,\asort}$. It is easy to see that $qr(\varphi^=_{E}) = k$ and that $\osmodel' \models \varphi^=_{E}$ iff $\osmodel' \in E$. Observe that $\varphi^=_E$ gives a specification of $E$ ``sort by sort and type by type''.
\end{proof}

In the following definition we recall the notion of Ehrenfeucht-Fra\"iss\'e game for $\ofoe$, slightly adapted for the multi-sorted setting, which will be used to establish the connection between ${\sim^=_k}$ and $\equiv_k^\foe$.

\begin{definition}
	Let $\osmodel_0 = (D_0,\val_0)$ and $\osmodel_1 = (D_1,\val_1)$ be strict multi-sorted one-step models. We define the game $\efgame^=_k(\osmodel_0,\osmodel_1)$ between \abelard and \eloise. If $\osmodel_i$ is one of the models we use $\osmodel_{-i}$ to denote the other model, we do the same with elements and elements. Note that in this definition the index $i$ will never refer to a sort, but to one of the models. A position in this game is a pair of sequences $\vlist{s_0} \in D_0^n$ and $\vlist{s_1} \in D_1^n$ with $n \leq k$. The game consists of $k$ rounds where in round $n+1$ the following steps are made
	\begin{enumerate}[1.]
		\itemsep 0 pt
		\parsep 0 pt
		\item \abelard chooses an element $d_i$ in one of the $\osmodel_i$,
		\item \eloise responds with an element $d_{-i}$ in the model $\osmodel_{-i}$.
		\item Let $\vlist{s_i} \in D_i^n$ be the sequences of elements chosen up to round $n$, they are extended to ${\vlist{s_i}' := \vlist{s_i}\cdot d_i}$. Player \eloise survives the round iff she does not get stuck and the function $f_{n+1}: \vlist{s_0}' \mapsto \vlist{s_1}'$ is a partial isomorphism of one-step models.
	\end{enumerate}
	Player \eloise wins iff she can survive all $k$ rounds.
	%
	%
	Given $n\leq k$ and $\vlist{s_i} \in D_i^n$ such that $f_n:\vlist{s_0}\mapsto\vlist{s_1}$ is a partial isomorphism, we use $\efgame_{k}^=(\osmodel_0,\osmodel_1)@(\vlist{s_0},\vlist{s_1})$ to denote the (initialized) game where $n$ moves have been played and $k-n$ moves are left to be played.
\end{definition}

\begin{lemma}\label{lem:connofoe}
	The following are equivalent
	\begin{enumerate}
		\itemsep 0 pt
		\item\label{lem:connofoe:i} $\osmodel_0 \equiv_k^\foe \osmodel_1$,
		\item\label{lem:connofoe:ii} $\osmodel_0 \sim_k^= \osmodel_1$,
		\item\label{lem:connofoe:iii} \eloise has a winning strategy in $\efgame_k^=(\osmodel_0,\osmodel_1)$.
	\end{enumerate}
\end{lemma}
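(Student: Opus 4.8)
The plan is to establish the cycle of implications (\ref{lem:connofoe:i}) $\Rightarrow$ (\ref{lem:connofoe:ii}) $\Rightarrow$ (\ref{lem:connofoe:iii}) $\Rightarrow$ (\ref{lem:connofoe:i}), which is the classical Ehrenfeucht--Fra\"iss\'e pattern adapted to the multi-sorted, equality-enabled one-step setting. The implication (\ref{lem:connofoe:i}) $\Rightarrow$ (\ref{lem:connofoe:ii}) is immediate from Proposition~\ref{prop:eqrelofoe} by contraposition: if $\osmodel_0 \not\sim^=_k \osmodel_1$, the two models lie in distinct classes of $\sim^=_k$, so taking $E$ to be the class of $\osmodel_0$, the characteristic formula $\varphi^=_E$ of Proposition~\ref{prop:eqrelofoe}(iii) has quantifier rank $k$, holds in $\osmodel_0$ and fails in $\osmodel_1$; hence $\osmodel_0 \not\equiv_k^\foe \osmodel_1$.

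The technical heart is (\ref{lem:connofoe:ii}) $\Rightarrow$ (\ref{lem:connofoe:iii}), a back-and-forth argument driven by threshold counting. I would exhibit an explicit winning strategy for \eloise together with an invariant that she maintains. Write $c^\asort_S(\vlist{s_0})$ for $|S|^\asort_{\osmodel_0}$ minus the number of distinct elements of $\vlist{s_0}$ that have sort $\asort$ and realize $\tau_S$, and similarly $c^\asort_S(\vlist{s_1})$ for $\osmodel_1$. The invariant, after $n$ rounds have produced position $(\vlist{s_0},\vlist{s_1})$ with $k-n$ rounds left, is: (a) $f_n:\vlist{s_0}\mapsto\vlist{s_1}$ is a partial isomorphism, and (b) for every sort $\asort$ and type $S$ the residual counts satisfy the $\sim^=_{k-n}$-condition, namely $c^\asort_S(\vlist{s_0}) = c^\asort_S(\vlist{s_1}) < k-n$ or both are $\geq k-n$. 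The base case $n=0$ is exactly the hypothesis $\osmodel_0 \sim^=_k \osmodel_1$; note that because $f_n$ is a partial isomorphism the two sequences share the same type-and-sort multiset of chosen elements, so the subtracted amounts on the two sides always agree.

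For the inductive step assume by symmetry that \abelard picks $d \in D_0$, of sort $\asort$ realizing $\tau_S$. If $d$ equals a previously chosen $s_{0,j}$, then \eloise answers with the corresponding $s_{1,j}$; here $f_n$ is unchanged and the residual counts do not move, so only the trivial weakening of clause (b) from $k-n$ to $k-n-1$ needs checking. If $d$ is new, then $c^\asort_S(\vlist{s_0})\geq 1$, and clause (b) forces $c^\asort_S(\vlist{s_1})\geq 1$, so \eloise can pick a fresh $d' \in D_1$ of sort $\asort$ realizing $\tau_S$; extending by $d\mapsto d'$ preserves the partial isomorphism (matching sort and atom-type, both elements distinct from all earlier choices), and both residual counts drop by one, which preserves clause (b) in each of the two cases of the threshold condition. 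After $k$ rounds \eloise has survived, so she wins. The implication (\ref{lem:connofoe:iii}) $\Rightarrow$ (\ref{lem:connofoe:i}) is the routine game-to-logic induction: I would prove by induction on $j$ that whenever \eloise wins $\efgame_j^=(\osmodel_0,\osmodel_1)@(\vlist{s_0},\vlist{s_1})$ the two models agree on every $\ofoe$-formula of quantifier rank $\leq j$ under the assignments given by $\vlist{s_0}$ and $\vlist{s_1}$. The atomic cases $a(x)$ and $x\foeq y$ use precisely the (atom) and injectivity clauses of a partial isomorphism, the Boolean cases are immediate, and $\exists x{:}\asort.\psi$ is handled by answering a witness with \eloise's strategy for one round (respecting the sort) and invoking the hypothesis for the $(j-1)$-round game. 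Instantiating $j=k$, $n=0$ with empty sequences gives $\osmodel_0 \equiv_k^\foe \osmodel_1$.

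I expect the main obstacle to be the bookkeeping in (\ref{lem:connofoe:ii}) $\Rightarrow$ (\ref{lem:connofoe:iii}): getting the threshold-counting invariant to interact correctly with the decreasing round budget, and in particular the case where \abelard re-selects an already-named element, which consumes a round but no new element, so that \eloise's forced response keeps $f_n$ a \emph{well-defined} partial isomorphism (recall the game declares $f$ undefined when $d_i=d_j$ but $d'_i\neq d'_j$). The sorts add a layer of care but no genuine difficulty, since in a \emph{strict} model each element has a unique sort and the counting proceeds sort-by-sort and type-by-type, mirroring the shape of the characteristic formulas $\varphi^=_{E,\asort}$ already used in Proposition~\ref{prop:eqrelofoe}.
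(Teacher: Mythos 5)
Your proposal is correct and follows essentially the same route as the paper: (i)$\Rightarrow$(ii) via the characteristic formulas of Proposition~\ref{prop:eqrelofoe}, (ii)$\Rightarrow$(iii) by a round-by-round back-and-forth argument in which \eloise{} matches a repeated element or a fresh element of the same sort and type using the threshold-counting hypothesis, and (iii)$\Rightarrow$(i) by the standard game-to-logic induction on quantifier rank. Your explicit residual-count invariant with the decreasing budget $k-n$ is just a more formalized bookkeeping of what the paper argues directly from ``at most $n<k$ elements have been played'' together with injectivity of the partial isomorphism, so the two proofs coincide in substance.
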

\begin{proof}
	Step~(\ref{lem:connofoe:i}) to~(\ref{lem:connofoe:ii}) is direct by Proposition~\ref{prop:eqrelofoe}. For~(\ref{lem:connofoe:ii}) to~(\ref{lem:connofoe:iii}) we give a winning strategy for \eloise in $\efgame_k^=(\osmodel_0,\osmodel_1)$. We do it by showing the following claim
	\begin{claimfirst}
	Let $\osmodel_0 \sim_k^= \osmodel_1$ and $\vlist{s_i} \in D_i^n$ be such that $n<k$ and $f_n:\vlist{s_0}\mapsto\vlist{s_1}$ is a partial isomorphism; then \eloise can survive one more round in $\efgame_{k}^=(\osmodel_0,\osmodel_1)@(\vlist{s_0},\vlist{s_1})$.
	\end{claimfirst}
	\begin{pfclaim}
		Let \abelard pick $d_i\in D_i$ such that $d_i$ has type $T \subseteq A$ and sort $\asort\in\sorts$. If $d_i$ had already been played then \eloise picks the same element as before and $f_{n+1} = f_n$. If $d_i$ is new and $|T|^\asort_{\osmodel_i} \geq k$ then, as at most $n<k$ elements have been played, there is always some new $d_{-i} \in D_{-i}$ that \eloise can choose that matches $d_i$. If $|T|^\asort_{\osmodel_i} = m < k$ then we know that $|T|^\asort_{\osmodel_{-i}} = m$. Therefore, as $d_i$ is new and $f_n$ is injective, there must be a $d_{-i} \in D_{-i}$ of sort $\asort$ that \eloise can choose. 
	\end{pfclaim}
	
	Step~(\ref{lem:connofoe:iii}) to~(\ref{lem:connofoe:i}) is a standard result~\cite[Corollary 2.2.9]{fmt} in the unsorted setting, we prove it for the multi-sorted setting and for completeness sake.
	\begin{claim}
		Let $\vlist{s_i} \in D_i^n$ and $\varphi(z_1,\dots,z_n) \in \ofoe(A)$ be such that $qr(\varphi) \leq k-n$. If \eloise has a winning strategy in $\efgame_k^=(\osmodel_0,\osmodel_1)@(\vlist{s_0},\vlist{s_1})$ then $\osmodel_0 \models \varphi(\vlist{s_0})$ iff $\osmodel_1 \models \varphi(\vlist{s_1})$.
	\end{claim}
	\begin{pfclaim}
		If $\varphi$ is atomic the claim holds because of $f_n:\vlist{s_0}\mapsto \vlist{s_1}$ being a partial isomorphism (more specifically, the \emph{atom} condition). Boolean cases are straightforward.
		Let $\varphi(z_1,\dots,z_n) = \exists x{:}\asort. \psi(z_1,\dots,z_n,x)$ and suppose $\osmodel_0 \models \varphi(\vlist{s_0})$. Hence, there exists $d_0 \in D_0$ of sort $\asort$ such that $\osmodel_0 \models \psi(\vlist{s_0},d_0)$.
		By hypothesis we know that \eloise has a winning strategy for $\efgame_k^=(\osmodel_0,\osmodel_1)@(\vlist{s_0},\vlist{s_1})$. Therefore, if \abelard picks $d_0\in D_0$ she can respond with some $d_1\in D_1$ and has a winning strategy for $\efgame_{k}^=(\osmodel_0,\osmodel_1)@(\vlist{s_0}{\cdot}d_0,\vlist{s_1}{\cdot}d_1)$.
		First observe that, as \eloise survives the round, then $\vlist{s_0}{\cdot}d_0\mapsto\vlist{s_1}{\cdot}d_1$ is a partial isomorphism and hence (by the \emph{sorts} condition) the lements $d_0$ and $d_1$ will have the same sort.
		By induction hypothesis, because $qr(\psi) \leq k- (n+1)$, we have that $\osmodel_0 \models \psi(\vlist{s_0},d_0)$ iff $\osmodel_1 \models \psi(\vlist{s_1},d_1)$ and hence $\osmodel_1 \models \exists x{:}\asort.\psi(\vlist{s_1},x)$. The other direction is symmetric. 
		\end{pfclaim}
		Combining these claims finishes the proof of the lemma.
\end{proof}

\begin{theorem}\label{thm:sbnfofoe}
Over strict models, every formula $\varphi \in \ofoe(A,\sorts)$ is equivalent to a formula $\psi \in \ofoe(A,\sorts)$ in strict basic form.
\end{theorem}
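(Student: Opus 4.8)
The plan is to combine the three facts just established---that over strict models $\sim^=_k$ coincides with $\foe$-equivalence up to quantifier rank $k$ (Lemma~\ref{lem:connofoe}), that $\sim^=_k$ has finite index, and that each of its classes is definable (Proposition~\ref{prop:eqrelofoe})---into the standard ``sum over equivalence classes'' argument, and then to verify the one genuinely new point: that each $\sim^=_k$-class is in fact definable by a single conjunct of strict basic form.

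More concretely, let $k := qr(\varphi)$. By Lemma~\ref{lem:connofoe}, any two $\sim^=_k$-equivalent strict models satisfy exactly the same $\foe$-formulas of quantifier rank at most $k$; hence the class $\{\osmodel \in \sumods \mid \osmodel \models \varphi\}$ is a union of $\sim^=_k$-classes. Since $\sim^=_k$ has finite index, only finitely many classes $E_1,\dots,E_r$ occur in this union, and choosing a representative $\osmodel_j \in E_j$ we obtain
\[
  \varphi \;\equiv\; \bigvee_{j=1}^r \psi_{E_j},
\]
where $\psi_E$ is any formula characterizing the class $E$. Both inclusions are immediate: if $\osmodel \models \varphi$ then $\osmodel$ lies in some $E_j$, so $\osmodel \models \psi_{E_j}$; conversely $\osmodel \models \psi_{E_j}$ places $\osmodel$ in $E_j$, all of whose members satisfy $\varphi$ by invariance. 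It therefore suffices to show that each class $E$ can be characterized by a single conjunct $\bigwedge_\asort \dbnfofoe{\vlist{T}_\asort}{\Pi_\asort}_\asort$ of strict basic form; the displayed disjunction is then literally of the shape $\bigvee\bigwedge_\asort \dbnfofoe{\vlist{T}}{\Pi}_\asort$ required by the definition.

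The heart of the argument---and the step I expect to require the most care---is this reformulation. I would first unpack the semantics of a single $\dbnfofoe{\vlist{T}}{\Pi}_\asort$ over strict models. Writing $c_S := |\{i \mid T_i = S\}|$ for the multiplicity of the type $S$ in the list $\vlist{T}$, the existential block forces at least $c_S$ distinct sort-$\asort$ elements of type $S$ (witnesses of distinct types are automatically distinct, so the joint existence reduces exactly to the per-type lower bounds $c_S$), while the universal block, using $\Pi \subseteq \vlist{T}$, forces every non-witness sort-$\asort$ element to have a type in $\Pi$. Together these say: $|S|^\asort_\osmodel = c_S$ for every $S \notin \Pi$, and $|S|^\asort_\osmodel \geq c_S$ for every $S \in \Pi$. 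This is precisely a type-count profile with threshold, so the data for the class $E$ of a representative $\osmodel$ is read off as follows: for each type $S$ with $|S|^\asort_\osmodel < k$ put $S \notin \Pi_\asort$ and let $S$ occur $|S|^\asort_\osmodel$ times in $\vlist{T}_\asort$ (demanding \emph{exactly} that many), and for each type $S$ with $|S|^\asort_\osmodel \geq k$ put $S \in \Pi_\asort$ and let $S$ occur exactly $k$ times in $\vlist{T}_\asort$ (demanding \emph{at least} $k$). Comparing with the definition of $\sim^=_k$ type by type, one checks that $\osmodel' \models \bigwedge_\asort \dbnfofoe{\vlist{T}_\asort}{\Pi_\asort}_\asort$ holds exactly when $\osmodel' \sim^=_k \osmodel$, so this conjunct characterizes $E$.

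Assembling, the finite disjunction of these per-class conjuncts is a formula $\psi$ in strict basic form equivalent to $\varphi$ over strict models, as required. I would close with a remark that the construction does not preserve quantifier rank---the length of each $\vlist{T}_\asort$ is governed by $k$ and the number $2^{|A|}$ of $A$-types rather than by $qr(\varphi)$, and the $\forall z$ block adds one more---but this is harmless, since the statement asks only for equivalence. The only delicate points are thus the exact/at-least bookkeeping in the semantics of $\nabla$ and the sort-by-sort, type-by-type verification that the chosen $\vlist{T}_\asort,\Pi_\asort$ reproduce $\sim^=_k$; everything else (invariance, finite index, and the assembly of the disjunction) is routine from the results already in hand.
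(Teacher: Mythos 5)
Your proposal is correct and takes essentially the same route as the paper: decompose $\ext{\varphi}$ into finitely many $\sim^=_k$-classes using Lemma~\ref{lem:connofoe} and Proposition~\ref{prop:eqrelofoe}, and then rewrite each class-characterizing formula $\varphi^=_{E,\asort}$ as a single $\dbnfofoe{\vlist{T}}{\Pi}_\asort$ with exactly the same choice of data ($n_S$ occurrences of each type realized fewer than $k$ times, $k$ occurrences of each type realized at least $k$ times, and $\Pi$ the set of the latter). Your explicit unpacking of the exact/at-least semantics of the $\dbnf_{\foe}$ conjunct and the remark on quantifier rank merely spell out what the paper dismisses as ``straightforward to check.''
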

\begin{proof}
	Let $qr(\varphi) = k$ and let $\ext{\varphi}$ be the models satisfying $\varphi$. As $\sumods/{\equiv_k^\foe}$ is the same as $\sumods/{\sim_k^=}$ by Lemma~\ref{lem:connofoe}, it is easy to see that $\varphi \equiv \bigvee \{ \varphi^=_E \mid E \in \ext{\varphi}/{\sim_k^=} \}$. Remember that $\varphi^=_E$ is defined as $\bigwedge_\asort \varphi^=_{E,s}$. Therefore, it is enough to see that each $\varphi^=_{E,s}$ is equivalent to some $\dbnfofoe{\vlist{T}}{\Pi}_\asort$ where $T_i \subseteq A$ and $\Pi \subseteq \vlist{T}$. From this, we can conclude that $\varphi$ is equivalent to $\psi := \bigvee \{ \varphi^=_E \mid E \in \ext{\varphi}/{\sim_k^=} \}$.

	The crucial observation is that we will use $\vlist{T}$ and $\Pi$ to give a specification of the types ``element by element''. Let $\osmodel \in E$ be a representative. Call $S_1,\dots,S_n \subseteq A$ to the types such that $|S_i|^\asort_\osmodel = n_i < k$ and $S'_1,\dots,S'_m \subseteq A$ to those satisfying $|S'_i|^\asort_\osmodel \geq k$. The size of the sequence $\vlist{T}$ is defined to be $(\sum_{i=1}^n n_i) + k\times m$ where $\vlist{T}$ is contains exactly $n_i$ occurrences of type $S_i$ and $k$ occurrences of each $S'_j$. On the other hand $\Pi = \{S'_1,\dots,S'_m\}$. It is straightforward to check that $\varphi^=_{E,\asort}$ is equivalent to $\dbnfofoe{\vlist{T}}{\Pi}_\asort$, however, the quantifier rank of the latter is only bounded by $k\times 2^{|A|} + 1$.
\end{proof}

\paragraph{The arbitrary case.}
We now prove that we can also give a normal form for arbitrary models.
As an intuition on why the strict normal form ``lifts'' to arbitrary models observe that any one-step model on sorts $\sorts$ can be seen as a \emph{strict} one-step model on sorts $\wp(\sorts)$.

\begin{definition}
	For an arbitrary one-step model $\osmodel$ on sorts $\sorts$ we define $\osmodel^\uparrow$ to be the strict one-step model on sorts $\wp(\sorts)$ obtained by redefining the sorts of $\osmodel$ as follows: an element $d$ of $\osmodel^\uparrow$ belongs to the sort $\aSort \subseteq \sorts$ iff it belongs to all the sorts $\asort\in\aSort$ in $\osmodel$ and it does \emph{not} belong to any sort $\asort' \in \sorts\setminus\aSort$ in $\osmodel$.

	For every $\varphi \in \ofoe(A,\sorts)$ we define the translation $\varphi^\uparrow \in \ofoe(A,\wp(\sorts))$ inductively: it behaves homomorphically in every operator which is not the existential quantifier. For the existential quantifier, it is defined as follows:
	\[
		(\exists x{:}\asort.\varphi(x))^\uparrow := \bigvee\{ \exists x{:}\aSort.\varphi^\uparrow(x) \mid \{\asort\} \subseteq \aSort \subseteq \sorts\} .
	\]
\end{definition}

\noindent The following proposition states the expected relationship.

\begin{proposition}\label{prop:liftsorts}
	For every $\varphi \in \ofoe(A,\sorts)$ and arbitrary one-step model $\osmodel$ on sorts $\sorts$ we have that $\osmodel \models \varphi$ iff $\osmodel^\uparrow \models \varphi^\uparrow$.
\end{proposition}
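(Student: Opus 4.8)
The plan is to prove the statement by a routine structural induction on $\varphi$, strengthened to account for free variables. The key point to exploit is that $\osmodel$ and $\osmodel^\uparrow$ share the \emph{same} domain $D$ and the \emph{same} valuation $\val$; only the sort structure is reorganized. Hence a single assignment $\ass:\fovar\to D$ interprets the individual variables in both models, and I would prove the more general claim that $\osmodel,\ass\models\varphi$ iff $\osmodel^\uparrow,\ass\models\varphi^\uparrow$ for every $\ass$, reading off the proposition afterwards (for a sentence the assignment is irrelevant).

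First I would isolate the one bookkeeping fact that does all the work: for every $d\in D$ and every $\asort\in\sorts$, we have $d\in D_\asort$ in $\osmodel$ if and only if the unique $\osmodel^\uparrow$-sort of $d$, namely $\aSort_d := \{\asort'\in\sorts \mid d\in D_{\asort'}\}$, satisfies $\asort\in\aSort_d$. This is immediate from the definition of $\osmodel^\uparrow$, and it is exactly the correspondence that the disjunction in the definition of $(-)^\uparrow$ is designed to capture.

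With that in hand, the induction is short. The atomic cases $a(x)$ and $x\foeq y$ and the Boolean cases are trivial, since $(-)^\uparrow$ acts homomorphically there and the two models agree on $\val$ and on equality of elements of $D$. The only case needing attention is $\varphi = \exists x{:}\asort.\psi$, whose translation is $\bigvee\{\exists x{:}\aSort.\psi^\uparrow \mid \{\asort\}\subseteq\aSort\subseteq\sorts\}$. For the forward direction, a witness $d\in D_\asort$ of $\exists x{:}\asort.\psi$ in $\osmodel$ has $\osmodel^\uparrow$-sort $\aSort_d\ni\asort$ by the isolated fact, so $\aSort_d$ is one of the disjuncts, and the induction hypothesis turns $d$ into a witness of $\exists x{:}\aSort_d.\psi^\uparrow$ in $\osmodel^\uparrow$. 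Conversely, any witness $d$ of some disjunct $\exists x{:}\aSort.\psi^\uparrow$ with $\asort\in\aSort$ has $\osmodel^\uparrow$-sort $\aSort\ni\asort$, hence lies in $D_\asort$ by the same fact, and the induction hypothesis turns it into a witness of $\exists x{:}\asort.\psi$ in $\osmodel$.

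I do not expect any genuine obstacle: the whole content lies in matching ``belongs to sort $\asort$ in $\osmodel$'' with ``has some $\osmodel^\uparrow$-sort containing $\asort$''. The only point demanding mild care is that the disjunction ranges over all $\aSort\supseteq\{\asort\}$ whereas each element has a single actual $\osmodel^\uparrow$-sort; this is harmless, since for the forward direction a single correct disjunct (namely $\aSort_d$) suffices to satisfy the disjunction, and for the backward direction every disjunct already forces $\asort\in\aSort$ and hence $d\in D_\asort$.
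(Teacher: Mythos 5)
Your proof is correct, and it is exactly the routine structural induction that the paper leaves implicit (the paper states Proposition~\ref{prop:liftsorts} as "the expected relationship" without supplying a proof). The one fact you isolate — that $d\in D_\asort$ in $\osmodel$ iff the unique $\osmodel^\uparrow$-sort $\aSort_d$ of $d$ contains $\asort$ — is precisely the correspondence the disjunction in the definition of $(-)^\uparrow$ is built around, and your handling of both directions of the existential case is sound.
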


The next step is to use Theorem~\ref{thm:sbnfofoe} (over strict models) to get a strict normal form $\psi$ of $\varphi^\uparrow$. After that we want to transfer the normal form to arbitrary models, therefore we need something like a converse of Proposition~\ref{prop:liftsorts}.
With this in mind, we introduce the following abbreviation $\exists x{:}\aSort!$:
\begin{align*}
	\exists x{:}\aSort!.\varphi(x) := \
		& \exists x,x_1{:}\asort_1,\dots,x_n{:}\asort_n. \aresame{x,x_1,\dots,x_n} 
		\land \big(\bigwedge_{\asort \in \sorts \setminus \aSort}\forall z{:}\asort. x\neq z\big) \land \varphi(x)
\end{align*}
where $\aresame{y_1,\dots,y_n} := \bigwedge_{1\leq m < n} (y_m \approx y_{m+1})$. Intuitively speaking, the quantifier $\exists x{:}\aSort!$ says that there is an element $x$ which belongs \emph{exactly} to the sorts $\aSort \subseteq \wp(\sorts)$.

\begin{definition}
	For every $\psi \in \ofoe(A,\wp(\sorts))$ we define the translation $\psi^\downarrow \in \ofoe(A,\sorts)$ inductively: it behaves homomorphically in every operator which is not the existential quantifier. For the existential quantifier, it is defined as follows:
	\[
		(\exists x{:}\aSort.\psi(x))^\downarrow := \exists x{:}\aSort!.\psi^\downarrow(x)
	\]
	for $\aSort \in \wp(\sorts)$.
\end{definition}

\noindent The following proposition states the expected relationship.

\begin{proposition}\label{prop:downsorts}
	For every $\psi \in \ofoe(A,\wp(\sorts))$ and arbitrary one-step model $\osmodel$ on sorts $\sorts$ we have that $\osmodel \models \psi^\downarrow$ iff $\osmodel^\uparrow \models \psi$.
\end{proposition}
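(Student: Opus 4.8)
The plan is to prove the equivalence by induction on the structure of $\psi$, strengthened to hold for every subformula (possibly with free variables) under an arbitrary assignment $\ass:\fovar\to D$. The crucial enabling observation is that $\osmodel$ and $\osmodel^\uparrow$ share the \emph{same} underlying domain $D$ and the \emph{same} valuation $\val$: passing from $\osmodel$ to $\osmodel^\uparrow$ does not touch the elements or the interpretation of the names, it only reorganises the sort structure, partitioning $D$ via the map $d\mapsto\{\asort\in\sorts : d\in D_\asort\}$. Consequently the atomic cases $a(x)$ and $x\foeq y$ are immediate, since $(-)^\downarrow$ is the identity on them and both $\val$ and equality are interpreted identically in the two models; the Boolean cases follow directly from the homomorphic definition of $(-)^\downarrow$ and the induction hypothesis.

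The only case requiring work is the quantifier $\psi=\exists x{:}\aSort.\chi$ with $\aSort\in\wp(\sorts)$. Here I would first isolate a sub-lemma unfolding the abbreviation $\exists x{:}\aSort!$: for any formula $\theta$ and any assignment $\ass$, $\osmodel,\ass\models\exists x{:}\aSort!.\theta$ iff there exists $d\in D$ belonging to \emph{exactly} the sorts in $\aSort$ in $\osmodel$ (that is, $d\in D_\asort$ for all $\asort\in\aSort$ and $d\notin D_{\asort'}$ for all $\asort'\in\sorts\setminus\aSort$) such that $\osmodel,\ass[x\mapsto d]\models\theta$. This is a direct reading of the definition: writing $\aSort=\{\asort_1,\dots,\asort_n\}$, the block $\exists x,x_1{:}\asort_1,\dots,x_n{:}\asort_n.\,\aresame{x,x_1,\dots,x_n}$ pins $x$ to a common witness drawn from every sort of $\aSort$, hence forces $x\in D_\asort$ for all $\asort\in\aSort$, while the conjunct $\bigwedge_{\asort\in\sorts\setminus\aSort}\forall z{:}\asort.\,x\neq z$ forbids $x$ from lying in any sort outside $\aSort$.

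I would then match this against the definition of $\osmodel^\uparrow$: by construction the sort $\aSort$ of $\osmodel^\uparrow$ is precisely the set of elements of $D$ belonging to exactly the sorts $\aSort$ in $\osmodel$. Thus $\osmodel^\uparrow,\ass\models\exists x{:}\aSort.\chi$ holds iff there is such a $d$ with $\osmodel^\uparrow,\ass[x\mapsto d]\models\chi$; by the induction hypothesis (applied at the extended assignment $\ass[x\mapsto d]$) this is equivalent to the existence of such a $d$ with $\osmodel,\ass[x\mapsto d]\models\chi^\downarrow$, which by the sub-lemma is exactly $\osmodel,\ass\models\exists x{:}\aSort!.\chi^\downarrow=(\exists x{:}\aSort.\chi)^\downarrow$. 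Chaining these equivalences closes the quantifier step and completes the induction; note the whole argument is the natural dual of Proposition~\ref{prop:liftsorts}.

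I do not anticipate a genuine obstacle here, as the proof is essentially bookkeeping. The one point needing explicit care is the degenerate case $\aSort=\nada$: then $\osmodel^\uparrow$ has empty sort $\nada$ (since $\bigcup_\asort D_\asort=D$ means every element lies in some sort of $\osmodel$), and correspondingly $\exists x{:}\nada!.\theta$ is unsatisfiable in $\osmodel$ because its conjunct $\bigwedge_{\asort\in\sorts}\forall z{:}\asort.\,x\neq z$ can hold for no element of $D=\bigcup_\asort D_\asort$; I would verify this boundary explicitly so the quantifier step reads uniformly. A secondary routine detail is to keep the auxiliary variables $x_1,\dots,x_n,z$ introduced by the abbreviation fresh with respect to $\chi^\downarrow$, which is guaranteed by the standing convention that every bound variable is bound by a unique quantifier.
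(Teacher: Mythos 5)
Your proof is correct. Note that the paper states Proposition~\ref{prop:downsorts} without any proof (it is presented as ``the expected relationship''), so there is nothing to diverge from: your induction, with the sub-lemma unfolding $\exists x{:}\aSort!$ into ``there is a $d$ belonging to exactly the sorts $\aSort$ in $\osmodel$'' and matching that against the definition of the sorts of $\osmodel^\uparrow$, is precisely the routine argument the author is implicitly relying on. Your explicit treatment of the degenerate case $\aSort=\nada$ and of the freshness of the auxiliary variables is sound and goes slightly beyond what the paper records.
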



We are now ready to state the normal form of $\ofoe$ for arbitrary models and generalize Theorem~\ref{thm:sbnfofoe}.

\begin{definition}\label{def:ofobform}
A formula $\varphi \in \ofoe(A,\sorts)$ is in \emph{basic form} if $\varphi = \bigvee\bigwedge_\aSort \dbnfofoe{\vlist{T}}{\Pi}_\aSort$ where in each conjunct we have
\[
\dbnfofoe{\vlist{T}}{\Pi}_\aSort := \exists \vlist{x}{:}\aSort!.\big(\arediff{\vlist{x}} \land \bigwedge_i \tau_{T_i}(x_i) \land \forall z{:}\aSort!.(\arediff{\vlist{x},z} \lthen \bigvee_{S\in \Pi} \tau_S(z))\big)
\]
such that $\vlist{T} \in \wp(A)^k$ for some $k$, $\aSort \subseteq \sorts$ is non-empty and $\Pi \subseteq \vlist{T}$.
\end{definition}

\begin{theorem}\label{thm:bnfofoe}
Every $\varphi \in \ofoe(A,\sorts)$ is equivalent to a formula in basic form.
\end{theorem}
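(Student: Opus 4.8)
The plan is to reduce the arbitrary-model case to the strict case already settled in Theorem~\ref{thm:sbnfofoe}, using the two sort-relativisation translations $(-)^\uparrow$ and $(-)^\downarrow$ as a bridge. Concretely, given $\varphi \in \ofoe(A,\sorts)$ I would first pass to its lift $\varphi^\uparrow \in \ofoe(A,\wp(\sorts))$, then invoke Theorem~\ref{thm:sbnfofoe} to obtain a formula $\psi \in \ofoe(A,\wp(\sorts))$ in \emph{strict} basic form with $\psi \equiv \varphi^\uparrow$ over all strict one-step models on sorts $\wp(\sorts)$, and finally push $\psi$ back down to $\psi^\downarrow \in \ofoe(A,\sorts)$. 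The claim will be that $\psi^\downarrow$ is the desired basic-form formula equivalent to $\varphi$ over arbitrary models.

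The equivalence is a three-link chain, valid for every arbitrary one-step model $\osmodel$ on sorts $\sorts$. First, Proposition~\ref{prop:liftsorts} gives $\osmodel \models \varphi$ iff $\osmodel^\uparrow \models \varphi^\uparrow$. Second, since $\osmodel^\uparrow$ is by construction a \emph{strict} one-step model on sorts $\wp(\sorts)$, the strict equivalence $\psi \equiv \varphi^\uparrow$ applies and yields $\osmodel^\uparrow \models \varphi^\uparrow$ iff $\osmodel^\uparrow \models \psi$. Third, Proposition~\ref{prop:downsorts} gives $\osmodel^\uparrow \models \psi$ iff $\osmodel \models \psi^\downarrow$. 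Composing the three links, $\osmodel \models \varphi$ iff $\osmodel \models \psi^\downarrow$ for every such $\osmodel$, i.e.\ $\varphi \equiv \psi^\downarrow$ over all models.

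It then remains to check that $\psi^\downarrow$ really is in basic form in the sense of Definition~\ref{def:ofobform}. Here $\psi$ has the shape $\bigvee\bigwedge_\aSort \dbnfofoe{\vlist{T}}{\Pi}_\aSort$ where each conjunct is built from a block $\exists\vlist{x}{:}\aSort$ and a guard $\forall z{:}\aSort$, with $\aSort$ ranging over $\wp(\sorts)$. Since $(-)^\downarrow$ acts homomorphically on every operator except the quantifier, where it replaces $\exists x{:}\aSort$ by $\exists x{:}\aSort!$ (and, dualising $\forall = \lnot\exists\lnot$, sends $\forall z{:}\aSort$ to $\forall z{:}\aSort!$), each strict conjunct is mapped verbatim to the corresponding conjunct $\dbnfofoe{\vlist{T}}{\Pi}_\aSort$ of Definition~\ref{def:ofobform}, now read with the $\aSort!$-quantifiers; the type sequences $\vlist{T}$, the $\arediff{\vlist{x}}$ guards and the type conjunctions $\tau_{T_i}(x_i)$ are untouched. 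Thus $\psi^\downarrow$ is, up to the single point below, already of the required form.

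That single point, which I expect to be the only genuine obstacle, is the empty sort $\aSort = \nada \in \wp(\sorts)$: the strict normal form of $\varphi^\uparrow$ may mention it, whereas Definition~\ref{def:ofobform} requires $\aSort$ non-empty. The resolution uses that every one-step model on sorts $\sorts$ satisfies $\bigcup_\asort D_\asort = D$, so no element of any such model—hence none of $\osmodel^\uparrow$—has sort $\nada$; unwinding the abbreviation $\exists x{:}\aSort!$ then shows $\exists x{:}\nada!.\chi$ is unsatisfiable and $\forall z{:}\nada!.\chi$ valid. Consequently the $\nada$-indexed conjunct $\dbnfofoe{\vlist{T}}{\Pi}_\nada$ of $\psi^\downarrow$ collapses to $\bot$ when its type sequence $\vlist{T}$ is non-empty and to $\top$ when $\vlist{T}$ is empty; deleting the disjuncts that carry a $\bot$-conjunct and erasing the $\top$-conjuncts leaves a disjunction of conjunctions indexed only by non-empty $\aSort$, which is exactly a basic-form formula. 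I expect all remaining verifications to be purely mechanical.
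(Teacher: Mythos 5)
Your proposal follows exactly the paper's own argument: lift $\varphi$ to $\varphi^\uparrow$ via Proposition~\ref{prop:liftsorts}, apply the strict normal form Theorem~\ref{thm:sbnfofoe} over sorts $\wp(\sorts)$, and push back down with Proposition~\ref{prop:downsorts}, observing that $\psi^\downarrow$ has the required shape. Your additional handling of the empty sort $\aSort = \nada$ is a correct refinement of a detail the paper dismisses with ``by construction,'' but it does not change the route.
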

\begin{proof}
	We use the notation that we have developed in this subsection and proceed as follows:
	\begin{align*}
	\osmodel \models \varphi
		& \quad\text{iff}\quad \osmodel^\uparrow \models \varphi^\uparrow
		& \tag{Proposition~\ref{prop:liftsorts}}
	\\
		& \quad\text{iff}\quad \osmodel^\uparrow \models \psi
		& \tag{Theorem~\ref{thm:sbnfofoe}: strict normal form}
	\\
		& \quad\text{iff}\quad \osmodel \models \psi^\downarrow.
		& \tag{Proposition~\ref{prop:downsorts}}
	\end{align*}
	Observe that by construction $\psi^\downarrow$ is in basic normal form.
\end{proof}
\subsection{One-step monotonicity}

Given a one-step logic $\llang(A)$ and formula $\varphi \in \llang(A)$.
We say that $\varphi$ is \emph{monotone in $A' \subseteq A$} if for every one step model $(D,\val:A\to\wp D)$, $a\in A'$ and assignment $\ass:\fovar\to D$,
\[\text{If } (D,\val),\ass \models \varphi \text{ and } \val(a) \subseteq E \text{ then } (D,\val[a\mapsto E]),\ass \models \varphi.\]
We use $\llang^+(A)$ to denote the fragment of $\llang(A)$ composed of formulas monotone in all $a\in A$.

Monotonicity is usually tightly related to positivity. If the quantifiers are well-behaved 
then a formula $\varphi$ will usually be monotone in $a \in A$ iff $a$ has positive polarity in $\varphi$, that is, if all of its occurrences are under an even number of negations. This is the case for all one-step logics considered in this article. In this section we give a syntactic characterization of monotonicity for several one-step logics.

\medskip\noindent
\textit{Convention}.
Given $P,S \subseteq A$, we use $\tau^P_S$ to denote the \emph{$P$-positive} type given by $S$, defined as
\[
\tau^P_S(x) := \bigwedge \{a(x) \mid a\in S\} \land \bigwedge \{\lnot a(x) \mid a\in A \text{ such that } a\notin S \text{ and } a\notin P\}.
\]

\subsubsection{Monotone fragment of $\ofo$}

\begin{theorem}\label{thm:ofomonot}
A formula of $\ofo(A,\sorts)$ is monotone in ${A' \subseteq A}$ iff it is equivalent to a sentence given by the following grammar:
\[
\varphi ::= \psi \mid a(x) \mid \exists x{:}\asort.\varphi \mid \forall x{:}\asort.\varphi \mid \varphi \land \varphi \mid \varphi \lor \varphi
\]
where $a\in A'$, $s\in\sorts$ and $\psi \in \ofo(A\setminus A',\sorts)$. We denote this fragment as $\monot{\ofo}{A'}(A,\sorts)$.
\end{theorem}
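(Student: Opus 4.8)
The plan is to prove the two implications separately, leaning on the basic‑form normal form for $\ofo$ (every $\ofo(A,\sorts)$-formula is equivalent to some $\bigvee\bigwedge_\asort \dgbnfofo{\Sigma}{\Pi}_\asort$) together with the fact that $\ofo$ is \emph{count‑blind}: since the truth of a basic form depends only on which $A$-types are realized in each sort, a one‑step model and any ``blow‑up'' of it (replace every element by $\omega$ copies of the same type and sort) satisfy exactly the same $\ofo$-formulas. I will also use Remark~\ref{rem:ofostrict} to work over strict models, where sorts partition the domain, so that the whole argument can be carried out sort by sort.

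The soundness direction ($\Leftarrow$) is routine: every formula produced by the grammar is monotone in $A'$, by induction on its construction. The base cases are that $\psi\in\ofo(A\setminus A',\sorts)$ contains no predicate from $A'$ and so is trivially monotone in $A'$, and that $a(x)$ with $a\in A'$ is monotone in $a$; the inductive cases are immediate because $\exists$, $\forall$, $\land$ and $\lor$ all preserve monotonicity, and no negation is available except inside the $A'$-free clause $\psi$.

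For completeness ($\Rightarrow$) I start from a basic form $\varphi\equiv\bigvee\bigwedge_\asort\dgbnfofo{\Sigma}{\Pi}_\asort$, discarding unsatisfiable conjuncts so that $\Sigma\subseteq\Pi$ in each. Replacing every full type $\tau_S$ by the $A'$-positive type $\tau^{A'}_S$ turns each conjunct into $\mondgbnfofo{\Sigma}{\Pi}{A'}_\asort := \bigwedge_{S\in\Sigma}\exists x{:}\asort.\tau^{A'}_S(x)\land\forall x{:}\asort.\bigvee_{S\in\Pi}\tau^{A'}_S(x)$, and I let $\varphi^\tmono$ be the resulting disjunction of sort‑conjunctions. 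Writing $\tau^{A'}_S(x)=\bigwedge_{a\in S\cap A'}a(x)\land\psi_S(x)$ with $\psi_S\in\ofo(A\setminus A',\sorts)$ exhibits $\varphi^\tmono$ as a member of $\monot{\ofo}{A'}(A,\sorts)$, and since $\tau_S$ implies $\tau^{A'}_S$ we get $\varphi\models\varphi^\tmono$ for free. All the work is in the converse $\varphi^\tmono\models\varphi$, which is where monotonicity is used. Given $M\models\varphi^\tmono$ via a disjunct $\bigwedge_\asort\mondgbnfofo{\Sigma}{\Pi}{A'}_\asort$, pass to the $\omega$-fold blow‑up $M^\ast$, which realizes the same types and hence still satisfies that disjunct. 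Inside $M^\ast$ pick, for each $S\in\Sigma$, a \emph{distinct} witness $d_S$ of some type $\supseteq_{A'} S$ (meaning a type $T$ with $S\subseteq T$ and $T\setminus S\subseteq A'$), which is possible because the disjoint copies separate witnesses, and lower its $A'$-predicates down to $S\cap A'$ so that its full type becomes exactly $S\in\Pi$; lower every remaining element to some $\Pi$-type below its current type, available because the $\forall$-clause forces each realized type to dominate a $\Pi$-type. This yields a valuation below that of $M^\ast$ whose realized type‑set lies between $\Sigma$ and $\Pi$ in every sort, i.e.\ a model of the original conjunct $\bigwedge_\asort\dgbnfofo{\Sigma}{\Pi}_\asort$, hence of $\varphi$; monotonicity of $\varphi$ then lifts this to $M^\ast\models\varphi$, and count‑blindness transfers it to $M\models\varphi$. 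Combining the two inclusions gives $\varphi\equiv\varphi^\tmono\in\monot{\ofo}{A'}$.

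The main obstacle is exactly the witness step: the naive ``positivize each type'' translation admits models of $\varphi^\tmono$ in which two existential requirements $S,S'\in\Sigma$ that agree off $A'$ are met by a single element, so that \emph{no} valuation over that same domain realizes the full conjunct $\dgbnfofo{\Sigma}{\Pi}_\asort$ (this is precisely what happens when $\varphi$ is not monotone). The device that dissolves the difficulty is count‑blindness of $\ofo$: blowing $M$ up to $M^\ast$ supplies enough disjoint copies to separate the witnesses, after which the lowering construction and the appeal to monotonicity go through. I would be careful to note that the hypothesis ``$\varphi$ is monotone in $A'$'' is invoked only once, namely to lift the constructed small model back up to $M^\ast$, and that the blow‑up and the per‑sort bookkeeping are both legitimate because of Remark~\ref{rem:ofostrict} and the normal form.
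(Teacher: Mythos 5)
Your proof is correct and follows essentially the same route as the paper: the translation you define (positivizing the $A'$-part of each type in the basic form) coincides with the paper's $\varphi^\tmono := \varphi[\lnot a(x)\mapsto\top\mid a\in A']$ once $\varphi$ is put in basic form, and the paper simply delegates the verification to the cited references \cite{DBLP:journals/corr/CarreiroFVZ14,LICS14}. Your $\omega$-blow-up and lowering argument for the hard direction $\varphi^\tmono\models\varphi$ correctly supplies the details the paper omits, including the key observation that one must first secure \emph{distinct} witnesses for the types in $\Sigma$ before pruning their $A'$-predicates down to exact types and then invoking monotonicity to pass back to the original valuation.
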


\noindent The result follows from the following lemma.

\begin{lemma}
The following hold:
\begin{enumerate}
	\itemsep 0pt
	\item Every $\varphi \in \monot{\ofo}{A'}(A,\sorts)$ is monotone in $A'$.
	\item There exists a translation $(-)^\tmono:\ofo(A,\sorts) \to \monot{\ofo}{A'}(A,\sorts)$ such that
a formula ${\varphi \in \ofo(A,\sorts)}$ is monotone in $A'$ if and only if $\varphi\equiv \varphi^\tmono$.
\end{enumerate}
\end{lemma}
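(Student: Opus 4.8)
The plan is to treat the two items separately, using the basic normal form for $\ofo$ as the main tool.

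For item~1, I would argue by a routine induction on the grammar of $\monot{\ofo}{A'}(A,\sorts)$. The two base cases are the atoms: a formula $\psi\in\ofo(A\setminus A',\sorts)$ mentions no name from $A'$, so its truth is independent of $\val(a)$ for $a\in A'$ and it is trivially monotone in $A'$; and $a(x)$ with $a\in A'$ is monotone since enlarging $\val(a)$ can only add $\ass(x)$ to it. For the inductive steps it suffices to observe that the semantic clauses for $\land$, $\lor$, $\exists x{:}\asort$ and $\forall x{:}\asort$ are all monotone in the valuation, so they preserve monotonicity in $A'$.

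For item~2, I would define $\varphi^\tmono$ as follows: first rewrite $\varphi$ in basic form $\bigvee\bigwedge_\asort \dgbnfofo{\Sigma}{\Pi}_\asort$ (using the normal-form proposition for $\ofo$), discard the unsatisfiable disjuncts so that $\Sigma\subseteq\Pi$ in every surviving conjunct, and then positivise by replacing each type $\tau_S$ by the $A'$-positive type $\tau^{A'}_S$ throughout. The result lies in $\monot{\ofo}{A'}(A,\sorts)$: each $\tau^{A'}_S$ is a conjunction of atoms $a(x)$ with $a\in A'$ together with a formula of $\ofo(A\setminus A',\sorts)$ (the literals over $A\setminus A'$), and the surrounding $\exists,\forall,\land,\lor$ are exactly the constructors of the fragment. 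Hence, by item~1, $\varphi^\tmono$ is monotone, which already gives the easy direction of the equivalence: if $\varphi\equiv\varphi^\tmono$ then $\varphi$ is monotone in $A'$. One inclusion of the converse is also free: since $\tau_S\models\tau^{A'}_S$, positivising only weakens each existential conjunct and each universal disjunct, so $\varphi\models\varphi^\tmono$ holds for every $\varphi$.

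The real content is the reverse inclusion $\varphi^\tmono\models\varphi$ under the hypothesis that $\varphi$ is monotone in $A'$. Given $(D,\val),\ass\models\varphi^\tmono$, some positivised conjunct $\dgbnfofo{\Sigma}{\Pi}_\asort$ holds; I would like to produce a valuation $\val'\le_{A'}\val$ (meaning $\val'(a)\subseteq\val(a)$ for $a\in A'$ and $\val'=\val$ elsewhere) that satisfies the \emph{original} conjunct $\dgbnfofo{\Sigma}{\Pi}_\asort$, because then $(D,\val')\models\varphi$ and monotonicity of $\varphi$ yields $(D,\val)\models\varphi$. The construction is: for each $S\in\Sigma$ pick a witness $d_S$ of the positive existential (so $d_S$ can be shrunk on $A'$ to have type exactly $S$), define $\val'$ to shrink each $d_S$ to type $S$, and shrink every remaining element into $\Pi$ — the latter being possible precisely because the positivised universal part guarantees that every element lies in the upward closure of $\Pi$ under adding predicates from $A'$.

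The main obstacle is that the witnesses $d_S$ need not be distinct: two types $S\neq S'$ in $\Sigma$ that agree off $A'$ can be witnessed by the same element, and then no single $\val'$ realises both. This is exactly the gap between naive positivisation and the genuine monotone closure. I would resolve it by exploiting that $\ofo$ cannot count: the truth of an $\ofo$-formula depends only on, for each sort, the set of realised $A$-types, so duplicating elements preserves the truth of both $\varphi$ and $\varphi^\tmono$. Passing to an $\ofo$-equivalent model with sufficiently many (say, infinitely many) copies of each type in each sort, the witnesses $d_S$ can be chosen pairwise distinct, the conflicting shrinks become independent, and the construction above goes through sort by sort. Applying monotonicity of $\varphi$ on this enlarged model and then invoking duplication-invariance once more to return to the original model completes the argument.
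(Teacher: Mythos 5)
Your argument is correct, but it takes a more self-contained route than the paper. The paper's proof is essentially a citation: it defers the result to \cite{DBLP:journals/corr/CarreiroFVZ14,LICS14} and only records the translation as ``put $\varphi$ in negation normal form and substitute $\lnot a(x) \mapsto \top$ for $a\in A'$''. Your translation is the same substitution specialised to the basic form (replacing $\lnot a(x)$ inside each type turns $\tau_S$ into $\tau^{A'}_S$), but you additionally supply the correctness proof of the hard direction $\varphi^\tmono\models\varphi$ that the paper outsources: witness-shrinking to recover an exact-type valuation $\val'\le_{A'}\val$, with the witness-collision problem resolved by duplication-invariance of $\ofo$. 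This is exactly the kind of argument the cited sources use, and as a bonus it yields the monotone normal form of Corollary~\ref{cor:ofopositivenf} directly rather than as a separate corollary. Two minor points you should make explicit. First, before running the shrinking ``sort by sort'' you need the sorts to be pairwise disjoint, since in an arbitrary one-step model an element may lie in several $D_\asort$ and would then receive conflicting shrinks; this is harmless because of Remark~\ref{rem:ofostrict} (and your duplication step, if implemented by the tagging $(d,k)$ of that remark, already produces a strict model), but it should be said. Second, the claim that discarding conjuncts with $\Sigma\not\subseteq\Pi$ is sound deserves one line: a witness for $\exists x{:}\asort.\tau_S(x)$ with $S\notin\Pi$ realises a full $A$-type outside $\Pi$ and so falsifies the universal part, making the conjunct unsatisfiable. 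With these two sentences added, your proof is a complete replacement for the paper's citation.
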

\begin{proof}
	In~\cite{DBLP:journals/corr/CarreiroFVZ14,LICS14} this result is proved for \emph{unsorted} \ofo. It is not difficult to adapt the proof for multi-sorted $\ofoe$. Intuitively, if we assume that $\varphi$ is in negation normal form, the translation is defined as $\varphi^\tmono := \varphi[\lnot a(x) \mapsto \top \mid a\in A']$.
\end{proof}

Combining the normal form theorem for $\ofo$ and the above lemma, we obtain the following corollary providing a normal form for the monotone fragment of $\ofo$.

\begin{corollary}\label{cor:ofopositivenf} Let $\varphi \in \ofo(A,\sorts)$, the following hold:
	\begin{enumerate}[(i)]
		\item The formula $\varphi$ is monotone in $A'\subseteq A$ iff it is equivalent to a formula in the basic form $\bigvee \bigwedge_\asort\mondgbnfofo{\Sigma}{\Pi}{A'}_\asort$ for some types $\Sigma,\Pi \subseteq \wp A$, where
		\[
		\mondgbnfofo{\Sigma}{\Pi}{A'}_\asort := \bigwedge_{S\in\Sigma} \exists x{:}\asort. \tau^{A'}_S(x) \land \forall x{:}\asort. \bigvee_{S\in\Pi} \tau^{A'}_S(x) .
		\]		
		\item The formula $\varphi$ is monotone in all $a\in A$ (i.e., $\varphi\in\ofo^+(A,\sorts)$) iff $\varphi$ is equivalent to a formula in the basic form $\bigvee \bigwedge_\asort\posdgbnfofo{\Sigma}{\Pi}_\asort$ for some types $\Sigma,\Pi \subseteq \wp A$, where
		\[
		\posdgbnfofo{\Sigma}{\Pi}_\asort := \bigwedge_{S\in\Sigma} \exists x{:}\asort. \tau^+_S(x) \land \forall x{:}\asort. \bigvee_{S\in\Pi} \tau^+_S(x) .
		\]
	\end{enumerate}
\end{corollary}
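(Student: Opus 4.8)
The plan is to derive both items by gluing together two facts already established above: the basic-form normal form theorem for $\ofo$ (the Proposition asserting that every $\ofo(A,\sorts)$-formula is equivalent to one in basic form) and the preceding lemma characterizing monotonicity, especially its syntactic translation $(-)^\tmono$. The computation that makes everything click is the following: on a formula in negation normal form, $(-)^\tmono$ replaces each literal $\lnot a(x)$ with $a\in A'$ by $\top$, and applying this to a full $A$-type gives exactly an $A'$-positive type. Concretely, since $\tau_S(x)=\bigwedge_{a\in S}a(x)\land\bigwedge_{a\in A\setminus S}\lnot a(x)$, the translation deletes precisely the negative conjuncts indexed by $a\in A'\setminus S$ and leaves $\bigwedge_{a\in S}a(x)\land\bigwedge_{a\notin S,\,a\notin A'}\lnot a(x)=\tau^{A'}_S(x)$.

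For the forward direction of (i) I would take $\varphi$ monotone in $A'$ and first apply the normal form theorem to obtain an equivalent $\psi=\bigvee\bigwedge_\asort\dgbnfofo{\Sigma}{\Pi}_\asort$ in basic form. Monotonicity is a semantic property and $\psi\equiv\varphi$, so $\psi$ is itself monotone in $A'$, whence part~2 of the lemma gives $\psi\equiv\psi^\tmono$. As $\psi$ is already in negation normal form, the observation above shows that $(-)^\tmono$ sends every type $\tau_S$ occurring in $\psi$---both in the existential conjuncts and inside the universal conjunct $\forall x{:}\asort.\bigvee_{S\in\Pi}\tau_S(x)$---to $\tau^{A'}_S$, so $\psi^\tmono=\bigvee\bigwedge_\asort\mondgbnfofo{\Sigma}{\Pi}{A'}_\asort$. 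Chaining $\varphi\equiv\psi\equiv\psi^\tmono$ then exhibits the desired monotone basic form.

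The converse of (i) is the quick direction. I would observe that each conjunct $\mondgbnfofo{\Sigma}{\Pi}{A'}_\asort$ is generated by the grammar $\monot{\ofo}{A'}$: an $A'$-positive type $\tau^{A'}_S(x)$ is a conjunction of atoms $a(x)$ with $a\in A'$ and a formula of $\ofo(A\setminus A')$, carrying no negated occurrence of any letter in $A'$, and the surrounding structure uses only $\land,\lor,\exists,\forall$. Hence the full disjunction lies in $\monot{\ofo}{A'}$ and is monotone in $A'$ by part~1 of the lemma, and monotonicity transfers to any equivalent $\varphi$. Item (ii) is then just the instance $A'=A$: the negative conjunction of $\tau^A_S$ is empty, so $\tau^A_S(x)=\bigwedge_{a\in S}a(x)=\tau^+_S(x)$, which makes $\mondgbnfofo{\Sigma}{\Pi}{A}_\asort$ coincide with $\posdgbnfofo{\Sigma}{\Pi}_\asort$; together with $\ofo^+(A,\sorts)=\monot{\ofo}{A}(A,\sorts)$ this yields (ii) from (i).

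I do not expect a genuine obstacle here, as the corollary is bookkeeping once $(-)^\tmono$ is available. The one point I would double-check is that the basic form really is in negation normal form, so that $(-)^\tmono$ acts on exactly the negative conjuncts of the types and produces $\tau^{A'}_S$ uniformly; since the only negations in a basic form sit directly in front of atoms inside the types, this is immediate.
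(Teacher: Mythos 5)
Your proposal is correct and follows exactly the route the paper intends: the paper gives no explicit proof beyond the remark that the corollary is obtained by ``combining the normal form theorem for $\ofo$ and the above lemma,'' and your argument --- normalize, observe that the basic form is in negation normal form, apply $(-)^\tmono$ to turn each $\tau_S$ into $\tau^{A'}_S$, check membership in $\monot{\ofo}{A'}$ for the converse, and instantiate $A'=A$ for item~(ii) --- is precisely that combination spelled out. No gaps.
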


\noindent The following stronger normal form will be useful in the next section.

\begin{proposition}\label{prop:strongmonofo}
In the above normal form we can assume that every conjunct $\mondgbnfofo{\Sigma}{\Pi}{A'}_\asort$ is such that for every pair of distinct $S,S'\in \Sigma$ at least one of the following conditions hold:
\begin{itemize}
	\itemsep 0 pt
	\item $S \cap (A\setminus A') \neq S' \cap (A\setminus A')$, or
	\item $S \cap A' \not\subseteq S' \cap A'$ and $S' \cap A' \not\subseteq S \cap A'$.
\end{itemize}
\end{proposition}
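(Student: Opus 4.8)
The plan is to observe that the claimed condition rules out exactly those situations in which one of the existential conjuncts $\exists x{:}\asort.\tau^{A'}_S(x)$ occurring in $\mondgbnfofo{\Sigma}{\Pi}{A'}_\asort$ is logically implied by another, and then to delete such redundant conjuncts one at a time until none remain.

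First I would unpack the semantics of the $A'$-positive type. Fix a one-step model $(D,\val)$ and an element $d$ with actual type $R := \{a\in A \mid d\in\val(a)\}$. By the definition of $\tau^{A'}_S$, the formula $\tau^{A'}_S(d)$ holds iff $S\subseteq R$ and $R\cap(A\setminus A')\subseteq S$; equivalently, iff $R\cap(A\setminus A') = S\cap(A\setminus A')$ and $S\cap A' \subseteq R\cap A'$. In words, $\tau^{A'}_S$ pins down the non-$A'$ part of the type exactly and imposes only a lower bound on its $A'$ part.

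Next I would prove the key entailment. Suppose $S,S'\in\Sigma$ are distinct with $S\cap(A\setminus A') = S'\cap(A\setminus A')$ and, without loss of generality, $S\cap A'\subseteq S'\cap A'$. Then by the characterization above, any $d$ satisfying $\tau^{A'}_{S'}$ has $R\cap(A\setminus A') = S'\cap(A\setminus A') = S\cap(A\setminus A')$ and $S\cap A'\subseteq S'\cap A'\subseteq R\cap A'$, so $d$ also satisfies $\tau^{A'}_S$. Hence $\tau^{A'}_{S'}(x)\models\tau^{A'}_S(x)$, and therefore $\exists x{:}\asort.\tau^{A'}_{S'}(x)\models\exists x{:}\asort.\tau^{A'}_S(x)$. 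Finally I would conclude by a termination argument: the negation of the displayed condition for a pair $S,S'$ is precisely that $S\cap(A\setminus A') = S'\cap(A\setminus A')$ and the $A'$-parts $S\cap A'$, $S'\cap A'$ are $\subseteq$-comparable, i.e.\ a ``bad pair.'' Whenever $\Sigma$ contains a bad pair, the previous step shows that the conjunct $\exists x{:}\asort.\tau^{A'}_S(x)$ with the smaller $A'$-part is redundant inside $\bigwedge_{T\in\Sigma}\exists x{:}\asort.\tau^{A'}_T(x)$, so deleting that type from $\Sigma$ yields an equivalent conjunct still of the form $\mondgbnfofo{\Sigma'}{\Pi}{A'}_\asort$ (the universal part and $\Pi$ are untouched). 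Since $\Sigma$ is finite and strictly shrinks at each deletion, after finitely many steps no bad pair remains. Applying this independently to every conjunct $\bigwedge_\asort$ of the outer disjunction preserves the overall basic form of Corollary~\ref{cor:ofopositivenf}, giving the statement.

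There is no deep difficulty here; the only place demanding care is getting the semantics of the $A'$-positive type $\tau^{A'}_S$ exactly right, since it constrains the non-$A'$ coordinates completely but merely lower-bounds the $A'$ coordinates. This is what fixes the direction of the entailment, and hence determines which of the two types in a bad pair is the redundant one to delete.
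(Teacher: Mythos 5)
Your proof is correct and follows essentially the same route as the paper: observe that a ``bad pair'' $S,S'$ forces the entailment $\tau^{A'}_{S'}(x)\models\tau^{A'}_{S}(x)$ (for the type with the smaller $A'$-part), so the corresponding existential conjunct is redundant and can be deleted from $\Sigma$, leaving the universal part and the overall basic form intact. Your explicit semantic characterization of $\tau^{A'}_S$ and the termination argument just spell out details the paper leaves implicit.
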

\begin{proof}
	Assume that for some distinct $S,S'\in\Sigma$ neither of the conditions hold. That is,
	$S \cap (A\setminus A') = S' \cap (A\setminus A')$ and
	either (1) $S \cap A' \subseteq S' \cap A'$ or (2) $S' \cap A' \subseteq S \cap A'$.
	It is easy to observe that if (1) holds then $\tau^{A'}_{S'}(x) \models \tau^{A'}_{S}(x)$ and if (2) holds then $\tau^{A'}_{S}(x) \models \tau^{A'}_{S'}(x)$. Therefore we get that $\mondgbnfofo{\Sigma}{\Pi}{A'}_\asort\equiv \mondgbnfofo{\Sigma\setminus \{S\}}{\Pi}{A'}_\asort$ and $\mondgbnfofo{\Sigma}{\Pi}{A'}_\asort\equiv \mondgbnfofo{\Sigma\setminus \{S'\}}{\Pi}{A'}_\asort$ respectively. 
\end{proof}

\subsubsection{Monotone fragment of $\ofoe$}

\begin{theorem}
A formula of $\ofoe(A,\sorts)$ is monotone in ${A'\subseteq A}$ iff it is equivalent to a sentence given by
\[
\varphi ::= \psi \mid a(x) \mid \exists x{:}\asort.\varphi \mid \forall x.\varphi \mid \varphi \land \varphi \mid \varphi \lor \varphi
\]
where $a\in A'$, $\asort\in\sorts$ and $\psi \in \ofoe(A\setminus A',\sorts)$. We denote this fragment as $\monot{\ofoe}{A'}(A,\sorts)$.
\end{theorem}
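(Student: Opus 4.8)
The plan is to establish the biconditional by the same two-step pattern used for Theorem~\ref{thm:ofomonot}: I would prove separately that every formula generated by the grammar (the fragment $\monot{\ofoe}{A'}(A,\sorts)$) is monotone in $A'$, and conversely that every $\varphi\in\ofoe(A,\sorts)$ which is semantically monotone in $A'$ is equivalent to such a formula. The soundness half is a routine structural induction: an atom $a(x)$ with $a\in A'$ is monotone in $a$; a subformula $\psi\in\ofoe(A\setminus A')$ mentions no predicate of $A'$ and is therefore (vacuously) monotone in $A'$, since its truth is unaffected by any change of $\val$ on $A'$; and the collection of formulas monotone in $A'$ is closed under $\land$, $\lor$, $\exists x{:}\asort$ and $\forall x{:}\asort$. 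The only thing I need here is that the grammar never places an $A'$-predicate under a negation, which holds by construction.

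For the converse direction I would \emph{not} rely on the naive substitution $\varphi[\lnot a(x)\mapsto\top\mid a\in A']$ that is suggested informally for $\ofo$: on its own this does not preserve equivalence even for monotone formulas, since a contradictory subformula such as $a(x)\land\lnot a(x)$, harmless inside a monotone $\varphi$, is turned into the satisfiable $a(x)$. The correct route is to normalize first. Using the basic normal form of Theorem~\ref{thm:bnfofoe} I would write $\varphi\equiv\bigvee\bigwedge_\aSort\dbnfofoe{\vlist{T}}{\Pi}_\aSort$, whose disjuncts are consistent class descriptions, and then positivize every type occurring in it, replacing each full type $\tau_{T_i}(x_i)$ by its $A'$-positive version $\tau^{A'}_{T_i}(x_i)$ and likewise each $\tau_S(z)$ in the universal part. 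Writing $\varphi^+$ for the result, the positive $A'$-atoms $a(x)$ enter the grammar directly, while the remaining literals $\lnot a(x)$ with $a\notin A'$, the equalities $\arediff{\vlist{x}}$, the abbreviations $\exists\vlist{x}{:}\aSort!$ and all the counting and cotype machinery are $\ofoe(A\setminus A')$-material combined by $\land,\lor,\exists,\forall$; hence $\varphi^+\in\monot{\ofoe}{A'}(A,\sorts)$, exactly as in Corollary~\ref{cor:ofopositivenf}.

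It then remains to show $\varphi\equiv\varphi^+$. One inclusion, $\varphi\models\varphi^+$, is immediate because positivizing a type only drops conjuncts. For the substantive inclusion $\varphi^+\models\varphi$ I would use monotonicity at the level of a single disjunct: given a one-step model $(D,\val)$ and assignment $\ass$ satisfying one positivized block, I would build a valuation $\val'$ with $\val'(a)\subseteq\val(a)$ for all $a\in A'$ and $\val'=\val$ off $A'$, obtained by switching off, on each relevant element, exactly those $A'$-predicates that the corresponding full type $\tau_{T_i}$ (resp.\ $\tau_S$) does not demand; at $(D,\val')$ the original, non-positivized block then holds, so $(D,\val'),\ass\models\varphi$, and monotonicity of $\varphi$ in $A'$ lifts this to $(D,\val),\ass\models\varphi$.

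The main obstacle, and the one genuinely new point over the equality-free case, is checking that this switch-off preserves the per-sort type counts and the universal cotype constraint that the $\ofoe$-blocks encode through $\arediff{\vlist{x}}$ and $\exists\vlist{x}{:}\aSort!$; this is precisely the bookkeeping that equality forces upon us, and it is absent in the $\ofo$ argument. I would discharge it with the $\sim^=_k$ / Ehrenfeucht-Fra\"iss\'e analysis of Lemma~\ref{lem:connofoe}, which shows that the realized profile of exact types (modulo $k$) is all that matters, using the redundancy elimination of Proposition~\ref{prop:strongmonofo} to keep the positivized disjuncts non-overlapping so that the count-matching construction of $\val'$ goes through cleanly.
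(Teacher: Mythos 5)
Your proposal is correct, and at bottom it is the same proof the paper delegates to its citation: part (1) by structural induction, part (2) by passing to the basic normal form of Theorem~\ref{thm:bnfofoe}, replacing every type $\tau_T$ by its $A'$-positive version $\tau^{A'}_T$, and using semantic monotonicity to recover the original formula from the positivized one --- which is exactly the content the paper records afterwards as Corollary~\ref{cor:ofoepositivenf}. Your refusal to apply the bare substitution $\varphi[\lnot a(x)\mapsto\top\mid a\in A']$ directly is well taken: on $\exists x.(a(x)\land\lnot a(x))\equiv\bot$, which is vacuously monotone, the substitution yields the inequivalent $\exists x.\,a(x)$, so normalization (or at least elimination of inconsistent disjuncts) must come first; the paper only offers the substitution as the ``intuitive'' description of the translation, and your normalize-then-positivize version is the honest one. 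The one place where you overcomplicate is the final paragraph: no Ehrenfeucht--Fra\"iss\'e or count-matching argument is needed for $\varphi^+\models\varphi$. Given witnesses for a positivized disjunct, switching off on each element exactly the $A'$-predicates not demanded by its chosen type $T_i$ (resp.\ some $S\in\Pi$) is well defined element by element, since the exact-sort blocks $\aSort!$ partition the domain and equality and sorts are untouched by changing the valuation; the resulting $\val'$ satisfies the unpositivized disjunct outright, and monotonicity lifts this back to $\val$. Also note that Proposition~\ref{prop:strongmonofo} as stated concerns $\ofo$ only, so it is not available (nor needed) here.
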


\noindent The result follows from the following lemma.

\begin{lemma}
The following hold:
\begin{enumerate}
	\itemsep 0pt
	\item Every $\varphi \in \monot{\ofoe}{A'}(A,\sorts)$ is monotone in $A'$.
	\item There exists a translation $(-)^\tmono:\ofoe(A,\sorts) \to \monot{\ofoe}{A'}(A,\sorts)$ such that
a formula ${\varphi \in \ofoe(A,\sorts)}$ is monotone in $A'$ if and only if $\varphi\equiv \varphi^\tmono$.
\end{enumerate}
\end{lemma}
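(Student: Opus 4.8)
The plan is to follow exactly the pattern of the corresponding lemma for $\ofo$, observing that the passage from $\ofo$ to $\ofoe$ introduces only two genuinely new features---equality literals and the multi-sorted quantifiers---both of which are harmless for monotonicity in $A'$. The key point is that neither $x\foeq y$ nor the sort annotation $\asort$ mentions any name $a\in A'$: an equality atom carries no information about the valuation of $A'$-predicates, and the sort of a quantifier constrains only the range of the bound variable, not $\val$. Consequently equality literals (and any negative literals over $A\setminus A'$) can be absorbed into the base-case component $\psi\in\ofoe(A\setminus A',\sorts)$ of the grammar, and the whole argument is insensitive to which sort each witness is drawn from.

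For item~(1) I would argue by a routine structural induction on the grammar of $\monot{\ofoe}{A'}(A,\sorts)$. The base case $\psi\in\ofoe(A\setminus A',\sorts)$ is immediate, since no $a\in A'$ occurs in $\psi$, so enlarging $\val(a)$ for $a\in A'$ cannot change its truth value; the base case $a(x)$ with $a\in A'$ is monotone because $\ass(x)\in\val(a)$ is preserved when $\val(a)$ grows. For the inductive cases it suffices to note that $\land,\lor$ and the quantifiers are all monotone operations on truth values, so they preserve monotonicity in $A'$ (the sort annotations playing no role).

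For item~(2) I would take the translation indicated for $\ofo$: put $\varphi$ into negation normal form and set $\varphi^\tmono := \varphi[\lnot a(x)\mapsto\top\mid a\in A']$. Since this replacement deletes exactly the negative occurrences of $A'$-names and leaves every equality literal and every $(A\setminus A')$-literal untouched, the result contains $A'$ only positively; a routine induction (taking maximal $A'$-free subformulas as the $\psi$'s, and positive $A'$-atoms via the $a(x)$ clause) then shows $\varphi^\tmono\in\monot{\ofoe}{A'}$. The equivalence ``$\varphi$ monotone in $A'$ iff $\varphi\equiv\varphi^\tmono$'' splits as usual. Direction ($\Leftarrow$) is free: if $\varphi\equiv\varphi^\tmono$ then, $\varphi^\tmono$ being in the fragment, item~(1) gives that $\varphi$ is monotone in $A'$. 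For ($\Rightarrow$) one entailment is automatic---$\varphi\models\varphi^\tmono$, because replacing $\lnot a(x)$ by the weaker $\top$ under monotone connectives only enlarges the set of models---so the substance is the reverse entailment $\varphi^\tmono\models\varphi$ under the hypothesis that $\varphi$ is monotone in $A'$.

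The main obstacle is precisely this reverse entailment $\varphi^\tmono\models\varphi$: it is the genuinely model-theoretic core, borrowed from the unsorted $\ofo$ argument of \cite{DBLP:journals/corr/CarreiroFVZ14,LICS14} rather than obtained by a one-line syntactic manipulation, and one must check that it is undisturbed by equality atoms (which never mention $A'$) and by the per-sort quantifiers (which only restrict witness ranges). A robust way to discharge it in the present multi-sorted-with-equality setting---and the route I would actually take, so as to be safe about degenerate cases such as unsatisfiable conjuncts where the naive replacement could weaken the formula---is to go through the normal form of Theorem~\ref{thm:bnfofoe}: convert $\varphi$ to basic form and show that monotonicity in $A'$ forces each conjunct to be equivalent to one built from the positive types $\tau^{A'}_S$, exactly as in Corollary~\ref{cor:ofopositivenf} for $\ofo$. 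Because the basic form for $\ofoe$ already packages the equality content into the $\arediff{\vlist{x}}$ and $\forall z{:}\aSort!$ machinery while the predicate-type information lives in the $\tau_{T_i}$, the positivity check reduces, type by type, to the $\ofo$ situation, and the multi-sorted bookkeeping is handled by the $\exists x{:}\aSort!$ quantifiers. This yields both the equivalence of item~(2) and, as a by-product, the $\ofoe$ analogue of the monotone normal form.
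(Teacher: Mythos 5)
Your proposal is correct, and it is in fact more self-contained than what the paper does: the paper's proof of this lemma is essentially a citation to \cite{DBLP:journals/corr/CarreiroFVZ14,LICS14} (where the result is proved for unsorted $\ofoe$ with generalized quantifiers), plus the one-line remark that ``intuitively'' $\varphi^\tmono := \varphi[\lnot a(x) \mapsto \top \mid a\in A']$. You instead discharge the hard entailment $\varphi^\tmono\models\varphi$ by first passing to the basic form of Theorem~\ref{thm:bnfofoe} and arguing type by type; the paper only assembles that normal-form statement afterwards, as Corollary~\ref{cor:ofoepositivenf}, \emph{from} the lemma, whereas you use the (independently proved) basic-form theorem \emph{to get} the lemma. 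There is no circularity in doing so, and your route buys something real: as you note, the bare substitution on an arbitrary NNF formula does not give the biconditional (e.g.\ $\exists x.(a(x)\land\lnot a(x))$ is trivially monotone in $a$ but its naive translation $\exists x.\,a(x)$ is not equivalent to it), so the translation must be understood as ``normalize, then drop the negative $A'$-literals from the types,'' which is exactly what replacing $\tau_S$ by $\tau^{A'}_S$ in the basic form amounts to. The only small thing to tidy is that your item~(2) first defines $(-)^\tmono$ on the NNF and only later switches to the normalized formula; the final definition of the translation should be the normal-form-based one (or the NNF substitution preceded by normalization), since that is the version for which ``monotone iff $\varphi\equiv\varphi^\tmono$'' actually holds. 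Your observations that equality atoms and sort annotations never mention $A'$-names, and hence do not interfere, are exactly the points that make the adaptation from $\ofo$ go through.
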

\begin{proof}
	In~\cite{DBLP:journals/corr/CarreiroFVZ14,LICS14} this result is proved for \emph{unsorted} \ofoe extended with generalized quantifiers. It is not difficult to adapt the proof for multi-sorted $\ofoe$. Intuitively, the translation is defined as $\varphi^\tmono := \varphi[\lnot a(x) \mapsto \top \mid a\in A']$.
\end{proof}

Combining the normal form theorem for $\ofoe$ and the above lemma, we obtain the following corollary providing a normal form for the monotone fragment of $\ofoe$.

\begin{corollary}\label{cor:ofoepositivenf}
	Given $\varphi \in \ofoe(A,\sorts)$, the following hold:
	\begin{enumerate}[(i)]
		\item The formula $\varphi$ is monotone in $A'\subseteq A$ iff it is equivalent to a formula in the basic form $\bigvee \bigwedge_\aSort \mondbnfofoe{\vlist{T}}{\Pi}{A'}_\aSort$
		where
		\[
			\mondbnfofoe{\vlist{T}}{\Pi}{A'}_\aSort := \exists \vlist{x}{:}\aSort!.\big(\arediff{\vlist{x}} \land \bigwedge_i \tau^{A'}_{T_i}(x_i) \land \forall z{:}\aSort!.(\arediff{\vlist{x},z} \lthen \bigvee_{S\in \Pi} \tau^{A'}_S(z))\big) .
		\]
		and for each conjunct there are $\vlist{T}$ and $\Pi$ satisfying $T_i \in \wp(A)$ and $\Pi\subseteq\vlist{T}$,
		\item The formula $\varphi$ is monotone in all $a\in A$ (i.e., $\varphi\in \ofoe^+(A)$) iff it is equivalent to a formula in the basic form $\bigvee \bigwedge_\aSort \posdbnfofoe{\vlist{T}}{\Pi}_\aSort$
		where
		\[
			\posdbnfofoe{\vlist{T}}{\Pi}_\aSort := \exists \vlist{x}{:}\aSort!.\big(\arediff{\vlist{x}} \land \bigwedge_i \tau^{+}_{T_i}(x_i) \land \forall z{:}\aSort!.(\arediff{\vlist{x},z} \lthen \bigvee_{S\in \Pi} \tau^{+}_S(z))\big) .
		\]
		and for each conjunct there are $\vlist{T}$ and $\Pi$ satisfying $T_i \in \wp(A)$ and $\Pi\subseteq\vlist{T}$.
		\item Over strict one-step models the above normal forms hold with $\aSort$ replaced by $\asort$.
	\end{enumerate}
\end{corollary}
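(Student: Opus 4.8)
The plan is to derive all three items directly from two results already established: the basic normal form for $\ofoe$ (Theorem~\ref{thm:bnfofoe}) and the syntactic characterization of monotonicity in the preceding lemma, whose engine is the translation $(-)^\tmono$ that replaces every negative literal $\lnot a(x)$ with $a\in A'$ by $\top$. Everything hinges on a single computation, namely that $(-)^\tmono$ sends a full type to the corresponding $A'$-positive type:
\[
\big(\tau_S(x)\big)^\tmono = \tau^{A'}_S(x) \qquad\text{for every } S\subseteq A.
\]
Writing $\tau_S(x) = \bigwedge_{a\in S} a(x) \land \bigwedge_{a\in A\setminus S}\lnot a(x)$, the translation turns each $\lnot a(x)$ with $a\in A'\setminus S$ into $\top$ and leaves $\bigwedge_{a\in S}a(x) \land \bigwedge_{a\in A\setminus(S\cup A')}\lnot a(x)$, which is exactly $\tau^{A'}_S(x)$. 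Since the remaining skeleton of a basic-form conjunct — the guarded quantifiers $\exists\vlist{x}{:}\aSort!$ and $\forall z{:}\aSort!$ together with the (in)equality formulas $\arediff{\cdot}$ — uses no predicate from $A$ at all, $(-)^\tmono$ acts on $\dbnfofoe{\vlist{T}}{\Pi}_\aSort$ merely by replacing every $\tau_{T_i}$ and $\tau_S$ by its $A'$-positive counterpart, so that $\big(\dbnfofoe{\vlist{T}}{\Pi}_\aSort\big)^\tmono = \mondbnfofoe{\vlist{T}}{\Pi}{A'}_\aSort$.

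I would then prove item (i) in two directions. From left to right, assuming $\varphi$ monotone in $A'$, I fix by Theorem~\ref{thm:bnfofoe} an equivalent basic form $\psi = \bigvee\bigwedge_\aSort \dbnfofoe{\vlist{T}}{\Pi}_\aSort$; as $\psi\equiv\varphi$ it too is monotone in $A'$, so part~2 of the preceding lemma yields $\psi\equiv\psi^\tmono$, and the computation above identifies $\psi^\tmono$ with $\bigvee\bigwedge_\aSort \mondbnfofoe{\vlist{T}}{\Pi}{A'}_\aSort$, giving the required shape. Conversely, any formula in this monotone basic form belongs to $\monot{\ofoe}{A'}(A,\sorts)$ — each $A'$-positive type is a conjunction of atoms $a(x)$ with $a\in A'$ and a subformula of $\ofoe(A\setminus A',\sorts)$, so the predicates of $A'$ occur only positively — whence it is monotone in $A'$ by part~1 of the lemma, and so is every $\varphi$ equivalent to it.

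Item (ii) is the instance $A'=A$ of item (i): then $A\setminus(S\cup A')=\nada$, so $\tau^{A}_S(x)$ collapses to the purely positive type $\tau^+_S(x)$ and $\mondbnfofoe{\vlist{T}}{\Pi}{A}_\aSort$ becomes $\posdbnfofoe{\vlist{T}}{\Pi}_\aSort$, while being monotone in all of $A$ is by definition membership in $\ofoe^+(A)$. For item (iii) I would rerun the argument over strict one-step models, starting from the strict normal form of Theorem~\ref{thm:sbnfofoe} rather than Theorem~\ref{thm:bnfofoe}: there each conjunct is $\dbnfofoe{\vlist{T}}{\Pi}_\asort$ carrying an ordinary sort-restricted quantifier $\exists\vlist{x}{:}\asort$, and the same type computation replaces $\aSort!$ by $\asort$ throughout.

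I expect no real obstacle, since this is genuinely a corollary of the two cited theorems; the only point requiring care is the bookkeeping behind the identity $\big(\tau_S\big)^\tmono=\tau^{A'}_S$ and the observation that the quantifier-and-equality skeleton of a basic conjunct contains no $A$-predicate, so that the monotonicity translation affects only the types. Once those are checked, items (i)–(iii) follow mechanically.
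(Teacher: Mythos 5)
Your proof is correct and follows exactly the route the paper intends: the paper derives this corollary by ``combining the normal form theorem for $\ofoe$ and the above lemma,'' and your computation $(\tau_S)^\tmono=\tau^{A'}_S$ together with the observation that the quantifier-and-equality skeleton of a basic conjunct is $A$-free is precisely the missing bookkeeping. The converse direction via membership in $\monot{\ofoe}{A'}(A,\sorts)$, the specialization $A'=A$ for item (ii), and the appeal to Theorem~\ref{thm:sbnfofoe} for item (iii) all match the intended argument.
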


\subsection{One-step additivity}\label{subsec:one-stepcont}

Before stating the main definitions we need to introduce some useful notations. Given a valuation $\val:A\to\wp D$, elements $\vlist{a} \in A^n$ and $\vlist{X} = (X_1,\dots,X_n) \in \wp(D)^n$, we introduce the following notation:
\begin{align*}
	\val(\vlist{a}) &:= \val(a_1),\dots,\val(a_n)\\
	\val[\vlist{a}\mapsto \vlist{X}] &:= \val[a_i \mapsto X_i \mid 1\leq i \leq n]\\
	\val[\vlist{a}\resto \vlist{X}] &:= \val[a_i \mapsto \val(a_i) \cap X_i \mid 1\leq i \leq n].
\end{align*}
%
We are now ready to state the main definition of this section.

\medskip
Consider a one-step logic $\llang(A)$ and formula $\varphi \in \llang(A)$.
%
We say that $\varphi$ is \emph{completely additive in $\{a_1,\dots,a_n\}\subseteq A$} if $\varphi$ is monotone in every $a_i$ and, for every one-step model $(D,\val)$ and assignment $\ass:\fovar\to D$,
\[
\text{If } (D,\val),\ass \models \varphi \text{ then } (D, \val[\vlist{a} \resto \vlist{Q}]),\ass \models \varphi \text{ for some quasi-atom $\vlist{Q}$ of $\val(\vlist{a})$},
\]
where $\vlist{a} := a_1,\dots,a_n$.
%
It will be useful to give a syntactic characterization of additivity for several one-step logics.

\subsubsection{Completely additive fragment of $\ofo$}

\begin{definition}\label{def:ofocadd}
The fragment of $\ofo(A,\sorts)$ completely additive in ${A'\subseteq A}$ is given by the sentences generated by the following grammar:
\[
\varphi ::= \psi \mid a(x) \mid \exists x{:}\asort.\varphi \mid \varphi \lor \varphi \mid \varphi \land \psi
\]
where $a\in A'$, $\asort\in\sorts$ and $\psi \in \ofo(A\setminus A',\sorts)$. We denote this fragment as $\add{\ofo}{A'}(A,\sorts)$.
\end{definition}

\begin{theorem}\label{thm:ofocadd}
A formula of $\ofo(A,\sorts)$ is completely additive in ${A'\subseteq A}$ iff it is equivalent to a sentence of $\add{\ofo}{A'}(A,\sorts)$.
\end{theorem}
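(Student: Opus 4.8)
The plan is to prove the two implications separately; the ``if'' (soundness) direction is a routine induction, while the ``only if'' (completeness) direction carries all the weight. For soundness I would show by induction on the grammar of $\add{\ofo}{A'}(A,\sorts)$ that every sentence it generates is completely additive in $A'$, using the ``restricts to a quasi-atom'' formulation of Section~\ref{subsec:one-stepcont}. Monotonicity in $A'$ holds throughout because the predicates of $A'$ occur only positively (as atoms $a(x)$, never negated, and never inside the $A'$-free part). For the quasi-atom property the base cases are immediate: an $A'$-free $\psi\in\ofo(A\setminus A',\sorts)$ is additive via the empty quasi-atom $\vlist{Q}=(\nada,\dots,\nada)$, since its truth is independent of the interpretation of $A'$; and an atom $a(x)$ with $a\in A'$ is handled by the quasi-atom carrying $\{\ass(x)\}$ at coordinate $a$, which is legitimate precisely because $\ass(x)\in\val(a)$ whenever $a(x)$ is true. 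For $\varphi\lor\varphi'$ and $\exists x{:}\asort.\varphi$ one transports the quasi-atom supplied by the inductive hypothesis. For the guarded conjunction $\varphi\land\psi$ one uses that $\psi$, being $A'$-free, retains its truth value under any restriction $\val[\vlist{a}\resto\vlist{Q}]$, so the quasi-atom obtained for $\varphi$ works for the conjunction as well; this is exactly why the grammar forces one conjunct to be $A'$-free.

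For completeness I would first note that complete additivity in $A'$ includes monotonicity in $A'$ (by definition, Section~\ref{subsec:one-stepcont}), so by Corollary~\ref{cor:ofopositivenf}(i), refined by Proposition~\ref{prop:strongmonofo}, I may assume $\varphi$ is in monotone basic form $\bigvee_j\bigwedge_\asort\mondgbnfofo{\Sigma_{j,\asort}}{\Pi_{j,\asort}}{A'}_\asort$. Writing $A'=\{a_1,\dots,a_n\}=\vlist{a}$ and invoking the quasi-atom characterization, $\varphi$ is equivalent to the disjunction, over all quasi-atoms $\vlist{Q}$ of $\val(\vlist{a})$, of ``$\varphi$ evaluated at $\val[\vlist{a}\resto\vlist{Q}]$''. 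The empty quasi-atom contributes the $A'$-free sentence $\varphi_\nada$ obtained by replacing every atom $a_i(x)$ with $a_i\in A'$ by $\bot$; this lies in $\ofo(A\setminus A',\sorts)$ and is a legal base case $\psi$. Each singleton quasi-atom, carrying $\{d\}$ at coordinate $a_i$ with $d\in\val(a_i)$, contributes the condition ``$\varphi$ holds when $a_i$ is interpreted as $\{d\}$ and every other predicate of $A'$ as $\nada$''. This decomposition is correct in both directions: monotonicity gives the backward implication (every restricted valuation lies below $\val$), and complete additivity gives the forward one.

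It remains to express each singleton contribution inside the fragment, and this is where the main difficulty lies, caused by the absence of equality in $\ofo$: I cannot simply substitute $a_i(x)\mapsto(x\foeq y)$ to localise the unique $A'$-witness $y$. Instead I would exploit that, over such a restricted model, there are only two kinds of points as far as $A'$ is concerned — the single point $y$, whose $A'$-type is exactly $\{a_i\}$, and all other points, whose $A'$-type is empty — and that $\ofo$ perceives a model only through the $(A\setminus A')$-types that are realised (which the restriction leaves untouched). Reading the basic form off at this restricted model, its truth is a Boolean condition on the $(A\setminus A')$-type $S_0$ of $y$ and on which $(A\setminus A')$-types occur, and is therefore equivalent to a disjunction, over sorts $\asort$ and candidate types $S_0\subseteq A\setminus A'$, of formulas $\exists y{:}\asort.\big(a_i(y)\land\sigma_{S_0}(y)\land\chi_{S_0}\big)$, where $\sigma_{S_0}(y):=\bigwedge_{a\in S_0}a(y)\land\bigwedge_{a\in(A\setminus A')\setminus S_0}\lnot a(y)$ pins down the $(A\setminus A')$-type of $y$ and $\chi_{S_0}\in\ofo(A\setminus A',\sorts)$ describes the realised $(A\setminus A')$-types, taking $y$'s own contribution into account. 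Both $\sigma_{S_0}(y)$ and $\chi_{S_0}$ are $A'$-free, so $a_i(y)\land(\sigma_{S_0}(y)\land\chi_{S_0})$ matches the guarded-conjunction clause and its existential closure matches the $\exists$-clause; hence each such formula lies in $\add{\ofo}{A'}(A,\sorts)$.

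Assembling the $\nada$-disjunct $\varphi_\nada$ with the finitely many singleton disjuncts then yields a sentence of $\add{\ofo}{A'}(A,\sorts)$ equivalent to $\varphi$, completing the completeness direction. The genuinely delicate step, and the one I expect to require the most care, is the equality-free localisation of the single $A'$-witness described above; the remainder is bookkeeping with the monotone normal form of Corollary~\ref{cor:ofopositivenf} and should transfer almost verbatim to the analogous theorem for $\ofoe$, where the localisation becomes easier since $a_i(x)\mapsto(x\foeq y)$ is then available.
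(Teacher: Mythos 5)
Your soundness direction is fine and coincides with the paper's Lemma~\ref{lem:caddofoiscadd}, and your completeness direction starts exactly where the paper's Lemma~\ref{lem:ofotrans} does (monotone basic form via Corollary~\ref{cor:ofopositivenf} and Proposition~\ref{prop:strongmonofo}, then the quasi-atom decomposition). The gap is in the step you yourself flag as delicate. You claim that the truth of a conjunct $\mondgbnfofo{\Sigma}{\Pi}{A'}_\asort$ at the model restricted to the singleton quasi-atom $(a_i,\{d\})$ is ``a Boolean condition on the $(A\setminus A')$-type $S_0$ of $y$ and on which $(A\setminus A')$-types occur.'' This is false. In the restricted model the universal conjunct $\forall x{:}\asort.\bigvee_{S\in\Pi}\tau^{A'}_S(x)$ demands that every element \emph{other than} $d$ realize a type in $\Pi^{\times}_{A'}=\{S\in\Pi\mid S\cap A'=\nada\}$, while $d$ itself is only required to realize some $S\in\Pi$ with $S\cap A'\subseteq\{a_i\}$. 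Whether this holds is not determined by $S_0$ together with the set of realized $(A\setminus A')$-types: if $S_0\notin\Pi^{\times}_{A'}$ (as an $(A\setminus A')$-type), the condition holds when $d$ is the unique realizer of $S_0$ in its sort and fails when it is not, and these two situations produce the same type data. ``Taking $y$'s own contribution into account'' inside $\chi_{S_0}$ cannot repair this, because the problem concerns \emph{other} elements sharing $y$'s type, and equality-free $\ofo$ cannot express $x\neq y$. So the pointwise condition $C(d)$ you are trying to encode is simply not definable by any formula $a_i(y)\land\sigma_{S_0}(y)\land\chi_{S_0}(y)$.

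What rescues the argument — and what the paper's proof of Lemma~\ref{lem:ofotrans} supplies — is that for a \emph{completely additive} $\varphi$ the exception for $d$ in the universal part can be dropped altogether, i.e.\ $\Pi$ may be replaced by $\Pi^{\times}_{A'}$ uniformly. This is not automatic: it is proved by passing to the duplicated model $(D\times\{0,1\},\val_\pi)$, which is $\ofo$-indistinguishable from $(D,\val)$, applying complete additivity \emph{there}, and observing that the uncolored twin $(d,1)$ of the unique $A'$-witness $(d,0)$ must itself satisfy some type in $\Pi^{\times}_{A'}$ with the same $(A\setminus A')$-part, whence $(d,0)$ does too. Your proposal contains no counterpart of this step, so the equivalence between $\varphi$ and your assembled disjunction is not established in the direction $\varphi\Rightarrow\varphi^\tadd$. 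Once you add the duplication argument (or prove by some other means that the ``strong'' localisation with $\Pi^{\times}_{A'}$ in place of $\Pi$ already follows from complete additivity), your construction collapses to essentially the paper's translation $\mondgbnfofo{\Sigma}{\Pi}{A'}_\asort\mapsto\mondgbnfofo{\Sigma}{\Pi^{\times}_{A'}}{A'}_\asort$ together with the $\bot$-elimination of conjuncts requiring two $A'$-witnesses, and the rest of your bookkeeping goes through.
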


\noindent The theorem will follow from the next two lemmas.

\begin{lemma}\label{lem:caddofoiscadd}
If $\varphi \in \add{\ofo}{A'}(A,\sorts)$ then $\varphi$ is completely additive in $A'$.
\end{lemma}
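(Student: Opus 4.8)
The plan is to prove the statement by structural induction on the grammar defining $\add{\ofo}{A'}(A,\sorts)$, verifying both components of complete additivity in $A'$: monotonicity in each $a\in A'$, and the quasi-atom restriction clause. For the monotonicity component I would first note, by a routine comparison of the two grammars, that $\add{\ofo}{A'}(A,\sorts)\subseteq\monot{\ofo}{A'}(A,\sorts)$: the restricted conjunction $\varphi\land\psi$ is a special case of $\varphi\land\varphi$, no $\forall$ is used, and no predicate of $A'$ ever occurs inside a subformula $\psi\in\ofo(A\setminus A',\sorts)$. Monotonicity in $A'$ is then inherited directly from Theorem~\ref{thm:ofomonot}. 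What remains is the restriction clause: writing $\vlist{a}=a_1,\dots,a_n$ for an enumeration of $A'$, I must show that if $(D,\val),\ass\models\varphi$ then $(D,\val[\vlist{a}\resto\vlist{Q}]),\ass\models\varphi$ for some quasi-atom $\vlist{Q}$ of $\val(\vlist{a})$.

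The induction on this clause runs through the five productions. If $\varphi=\psi$ with $\psi\in\ofo(A\setminus A',\sorts)$, then the truth of $\varphi$ is independent of the valuation of any $A'$-predicate, so the empty quasi-atom $\vlist{Q}=(\nada,\dots,\nada)$ works. If $\varphi=a(x)$ with $a\in A'$, then $(D,\val),\ass\models\varphi$ gives $\ass(x)\in\val(a)$, and I would take the atom placing the singleton $\{\ass(x)\}$ at the coordinate of $a$ and $\nada$ elsewhere, so that $\val[\vlist{a}\resto\vlist{Q}](a)=\val(a)\cap\{\ass(x)\}=\{\ass(x)\}$ still contains $\ass(x)$. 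The disjunction case $\varphi=\varphi_1\lor\varphi_2$ is handled by applying the induction hypothesis to whichever disjunct holds; the existential case $\varphi=\exists x{:}\asort.\varphi_1$ by fixing a witness $d\in D_\asort$ with $(D,\val),\ass[x\mapsto d]\models\varphi_1$ and applying the induction hypothesis to $\varphi_1$ under $\ass[x\mapsto d]$. In each of these cases a single quasi-atom is produced and re-used unchanged.

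The one case that carries the real content is the conjunction $\varphi=\varphi_1\land\psi$ with $\psi\in\ofo(A\setminus A',\sorts)$. Here I would apply the induction hypothesis to $\varphi_1$ to obtain a quasi-atom $\vlist{Q}$ with $(D,\val[\vlist{a}\resto\vlist{Q}]),\ass\models\varphi_1$, and then observe that, since $\val[\vlist{a}\resto\vlist{Q}]$ differs from $\val$ only on predicates of $A'$ and $\psi$ mentions none of these, $\psi$ is preserved, so the full conjunction survives under the same $\vlist{Q}$. I expect this to be the main obstacle, or rather the point where the argument would break without the syntactic restriction: a quasi-atom can spend at most one element in one coordinate, so an unrestricted conjunction $\varphi_1\land\varphi_2$ with both conjuncts depending on $A'$ (for instance $a(x)\land a(y)$) would force two separate singletons and could not be accommodated by a single quasi-atom. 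The asymmetry built into the grammar, allowing only one side of a conjunction to touch $A'$, is exactly what makes the single-quasi-atom induction go through, and the proof should flag this explicitly.
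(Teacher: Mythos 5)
Your proof is correct and follows essentially the same route as the paper's: monotonicity is imported from Theorem~\ref{thm:ofomonot}, and the quasi-atom clause is verified by the same structural induction with the same case analysis (empty quasi-atom for $A'$-free $\psi$, a singleton atom for $a(x)$, and reuse of the induction hypothesis's quasi-atom in the disjunction, restricted conjunction, and existential cases). Your closing observation about why the asymmetric conjunction $\varphi\land\psi$ is what keeps a single quasi-atom sufficient is a useful remark the paper leaves implicit, but the argument itself is the same.
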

\begin{proof}
First observe that $\varphi$ is monotone in every $a\in A'$ by Theorem~\ref{thm:ofomonot}.
We show, by induction, that any one-step formula $\varphi$ in the fragment (which may not be a sentence) satisfies, for every one-step model $(D,\val:A\to\wp D)$, assignment ${\ass:\fovar\to D}$,
\[
\text{If } (D,\val),\ass \models \varphi \text{ then } (D, \val[A' \resto \vlist{Q}]),\ass \models \varphi \text{ for some quasi-atom $\vlist{Q}$ of $\val(A')$.}
\]
The cases are as follows:
\begin{itemize}
\item If ${\varphi = \psi \in \ofo(A\setminus A')}$ changes in the $A'$-part of the valuation will make no difference and hence the condition is trivial. 

\item Case $\varphi = a_i(x)$ with $a_i\in A'$: if $(D,\val),\ass \models a_i(x)$ then $\ass(x)\in \val(a_i)$. 
If we take $\vlist{Q}$ to be an atom of $\val(A')$ such that $Q_i := \{\ass(x)\}$ it is clear that
$\ass(x) \in V[A'\resto\vlist{Q}](a_i)$ and hence $(D, \val[A'\resto\vlist{Q}]),\ass \models a_i(x)$.

\item Case $\varphi = \varphi_1 \lor \varphi_2$: simply apply the inductive hypothesis to one of the disjuncts.

\item Case $\varphi = \varphi_1 \land \psi$: assume $(D,\val),\ass \models \varphi$. By induction hypothesis we have that $(D,\val[A'\resto\vlist{Q}]),\ass \models \varphi_1$ for some $\vlist{Q}$. Observe that $(D,\val),\ass \models \psi$ and as $\psi$ is $A'$-free we also have $(D,\val[A'\resto \vlist{Q}]),\ass \models \psi$. Therefore we can conclude that $(D,\val[a'\resto \vlist{Q}]),\ass \models \varphi$. 

\item Case $\varphi = \exists x{:}\asort.\varphi'$: assume $(D,\val),\ass \models \varphi$. By definition there exists $d\in D$ such that $(D,\val),\ass[x\mapsto d] \models \varphi'$. By induction hypothesis  $(D,\val[A'\resto\vlist{Q}]),\ass[x\mapsto d] \models \varphi'$ for some $\vlist{Q}$. Therefore we can conclude that $(D,\val[A'\resto\vlist{Q}]),\ass \models \exists x{:}\asort.\varphi'$. \qedhere
\end{itemize}
%
\end{proof}

\begin{lemma}\label{lem:ofotrans}
There exists a translation $(-)^\tadd:\monot{\ofo}{A'}(A,\sorts) \to \add{\ofo}{A'}(A,\sorts)$ such that
a formula ${\varphi \in \monot{\ofo}{A'}(A,\sorts)}$ is completely additive in $A'$ if and only if $\varphi\equiv \varphi^\tadd$.
\end{lemma}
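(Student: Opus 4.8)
The plan is to reduce to the monotone basic normal form and then define the translation $(-)^\tadd$ disjunct by disjunct. First I would invoke Corollary~\ref{cor:ofopositivenf}(i) together with the sharpening of Proposition~\ref{prop:strongmonofo} to write any $\varphi\in\monot{\ofo}{A'}(A,\sorts)$ as $\varphi\equiv\bigvee_k\bigwedge_\asort\mondgbnfofo{\Sigma^k_\asort}{\Pi^k_\asort}{A'}_\asort$, where each existential conjunct demands a witness of some $A'$-positive type $\tau^{A'}_S$ ($S\in\Sigma^k_\asort$) and the universal conjunct bounds every element by the types of $\Pi^k_\asort$. By Remark~\ref{rem:ofostrict} I may assume strict one-step models, so the analysis proceeds sort by sort. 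The guiding semantic picture is that, by the definition of complete additivity in $A'$, $\varphi$ is additive iff it is equivalent to the assertion ``some quasi-atom restriction $\val[\vlist{a}\resto\vlist{Q}]$ of the $A'$-part already satisfies $\varphi$''; since a quasi-atom puts at most one element into at most one predicate of $A'$, the translation should be the syntactic rendering of this assertion on the normal form.

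Concretely, for a single disjunct $D_k=\bigwedge_\asort\mondgbnfofo{\Sigma^k_\asort}{\Pi^k_\asort}{A'}_\asort$ I would define $D_k^\tadd$ by three moves: (i) weaken the universal part, replacing each $\tau^{A'}_S$ occurring under $\forall$ by its $A'$-free reduct $\tau^{A'}_{S\setminus A'}$, yielding an $A'$-free formula $\psi^\flat$; (ii) keep unchanged every existential conjunct $\exists x{:}\asort.\tau^{A'}_S(x)$ with $S\cap A'=\emptyset$, which is already $A'$-free; and (iii) replace the conjunction of the remaining, $A'$-touching existentials by the single disjunction $\bigvee\{\exists x{:}\asort.\tau^{A'}_{(S\setminus A')\cup\{a\}}(x)\mid S\in\Sigma^k_\asort,\ S\cap A'\neq\emptyset,\ a\in S\cap A'\}$ (read as $\top$ if there is no such $S$). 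Setting $\varphi^\tadd:=\bigvee_k D_k^\tadd$, each disjunct then has the shape ``$A'$-free conjuncts $\land$ a disjunction of formulas $\exists x{:}\asort.(a(x)\land\chi)$ with $a\in A'$ and $\chi$ $A'$-free'', so $\varphi^\tadd\in\add{\ofo}{A'}(A,\sorts)$ by construction. The always-valid implication $\varphi\models\varphi^\tadd$ is immediate: from a model of $D_k$ the weaker universal $\psi^\flat$ follows since $\tau^{A'}_S\models\tau^{A'}_{S\setminus A'}$, the $A'$-free existentials are retained verbatim, and any single $A'$-touching existential $\exists x.\tau^{A'}_S$ of $D_k$ entails the corresponding disjunct $\exists x.\tau^{A'}_{(S\setminus A')\cup\{a\}}$.

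The substance of the lemma is the converse: if $\varphi$ is completely additive in $A'$, then $\varphi^\tadd\models\varphi$, so that $\varphi\equiv\varphi^\tadd$. Here I would use additivity to prune the normal form. Given $\val\models D_k$, additivity yields a quasi-atom $\vlist{Q}$ with $\val[\vlist{a}\resto\vlist{Q}]\models\varphi$; but a quasi-atomic valuation can satisfy a disjunct $D_{k'}$ only if $D_{k'}$ demands at most one $A'$-touching existential, that existential involves a single $A'$-predicate, and its universal bound is already met by the $A'$-free reducts of its $\Pi$-types (otherwise some element would require $A'$-membership beyond the single quasi-atom). By monotonicity this forces any disjunct not of this ``additive shape'' to be semantically absorbed by the others, hence removable, while for the surviving additive-shaped disjuncts the three moves leave the formula unchanged up to equivalence: move (i) is harmless because additivity makes the universal equivalent to its $A'$-free reduct, and move (iii) collapses to a single conjunct. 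On the pruned normal form one checks $D_k^\tadd\equiv D_k$, giving $\varphi^\tadd\equiv\varphi$. The stated equivalence then follows: the forward direction is this pruning argument, and the backward direction is immediate from Lemma~\ref{lem:caddofoiscadd}, since $\varphi^\tadd\in\add{\ofo}{A'}$ is completely additive and equivalence transfers additivity back to $\varphi$.

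I expect the main obstacle to be exactly this pruning step — showing that complete additivity of the whole disjunction forces each individual normal-form disjunct into additive shape (equivalently, that the non-additive disjuncts are redundant) and that the universal part collapses to its $A'$-free reduct. The delicacy is compounded by the absence of equality in $\ofo$, which prevents us from ``naming'' the distinguished quasi-atom witness inside the universal quantifier; the decision to weaken the universal to its $A'$-free reduct in move (i) is precisely what sidesteps this, at the cost of having to justify, via additivity, that no information is thereby lost.
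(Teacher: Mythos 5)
Your overall architecture (normalize via Corollary~\ref{cor:ofopositivenf} and Proposition~\ref{prop:strongmonofo}, translate disjunct by disjunct, get one direction of the biconditional from Lemma~\ref{lem:caddofoiscadd}) matches the paper, but your translation is oriented the wrong way and this creates a genuine gap: your moves (i) and (iii) \emph{weaken} the formula, so that $\varphi\models\varphi^\tadd$ is trivial and the burden falls on $\varphi^\tadd\models\varphi$ — and that implication is simply false for some completely additive $\varphi$. Concretely, take $A=\{a,b,c\}$, $A'=\{a\}$, one sort, and set $\alpha(x):=a(x)\land b(x)\land\lnot c(x)$, $\beta(x):=a(x)\land\lnot b(x)\land c(x)$ (the $A'$-positive types of $\{a,b\}$ and $\{a,c\}$). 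Let $D_1:=\exists x.\alpha(x)\land\exists x.\beta(x)\land\forall x.(\alpha(x)\lor\beta(x))$ and $D_2:=\exists x.\alpha(x)\land\forall x.\bigl((b(x)\land\lnot c(x))\lor(\lnot b(x)\land c(x))\bigr)$. Then $\varphi:=D_1\lor D_2$ is completely additive in $\{a\}$: restricting $\val(a)$ to the singleton containing an $\alpha$-witness turns any model of $D_1$ or $D_2$ into a model of $D_2$. Your translation sends $D_1$ to $(\exists x.\alpha(x)\lor\exists x.\beta(x))\land\forall x.\bigl((b(x)\land\lnot c(x))\lor(\lnot b(x)\land c(x))\bigr)$, which is satisfied by the one-element model whose single point is coloured $\{a,c\}$; that model satisfies neither $D_1$ nor $D_2$. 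So $\varphi\not\equiv\varphi^\tadd$ although $\varphi$ is completely additive, refuting the ``only if'' direction of your lemma. The root cause is that turning the conjunction of $A'$-touching existentials into a disjunction, and weakening the $\Pi$-types to their $A'$-free reducts, lets in models that additivity of $\varphi$ says nothing about.

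The paper's translation goes in the opposite direction: it \emph{strengthens}. A conjunct whose existential part mentions $A'$ in two distinct types (or via two distinct $A'$-predicates) is replaced outright by $\bot$; a disjunct whose existential parts mention $A'$ in two different sorts is deleted; and $\Pi$ is replaced by $\Pi^{\times}_{A'}=\{S\in\Pi\mid S\cap A'=\emptyset\}$, i.e.\ the offending universal types are \emph{removed}, not weakened. Since $\Pi^{\times}_{A'}\subseteq\Pi$, the implication $\varphi^\tadd\models\varphi$ is then immediate, and all the work goes into $\varphi\models\varphi^\tadd$, where complete additivity is applied after duplicating the domain to $D\times\{0,1\}$ so that the quasi-atom restriction is genuinely proper; Proposition~\ref{prop:strongmonofo} then rules out the $\bot$-cases and the claim about $\Pi^{\times}_{A'}$ follows because the unique $A'$-coloured element's type must already occur $A'$-freely on its duplicate. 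On your example this correctly yields $D_1^\tadd=\bot$ and $\varphi^\tadd\equiv D_2\equiv\varphi$. To repair your argument you would need to adopt this deletion/strengthening orientation; the weakening strategy cannot be patched, because the extra models it admits are unconstrained by the additivity hypothesis.
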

\begin{proof}
We assume that $\varphi$ is in basic normal form, i.e., $\varphi = \bigvee \bigwedge_\asort \mondgbnfofo{\Sigma}{\Pi}{A'}_\asort$ where $\Sigma\subseteq\Pi$. 
First, we intuitively consider some conditions on subformulas of $\varphi$ that would force $\val(A')$ to have more than one element and, in particular, \emph{not} be a quasi-atom. Clearly, any formula that forces this, goes against the spirit of complete additivity.
\begin{enumerate}[(i)]
	\itemsep 0pt
	\item\label{ofo:it:sorts} $\bigwedge_\asort\mondgbnfofo{\Sigma}{\Pi}{A'}_\asort$ with $\Sigma_{\asort_1}$ and $\Sigma_{\asort_2}$ such that $a\in \Sigma_{\asort_1}$ and $b\in \Sigma_{\asort_2}$ for $a,b\in A'$.
	\item\label{ofo:it:exist} $\mondgbnfofo{\Sigma}{\Pi}{A'}_\asort$ with $a,b \in S\cap S'$ for distinct $a,b\in A'$ or distinct $S,S'$.
	\item\label{ofo:it:univ} $\mondgbnfofo{\Sigma}{\Pi}{A'}_\asort$ with $\Pi\cap A' \neq \nada$.
\end{enumerate}
Now, we give a translation which eliminates (replaces with $\bot$) the subformulas satisfying any of the above cases. We first take care of case~\ref{ofo:it:sorts} with the following definition
\[
(\bigvee \bigwedge_\asort \mondgbnfofo{\Sigma}{\Pi}{A'}_\asort)^\tadd :=
\bigvee \{\bigwedge_\asort \mondgbnfofo{\Sigma}{\Pi}{A'}_\asort^\tadd \mid \text{(i) is not the case}\}
\]
and we take care of the remaining cases as follows
\[
\mondgbnfofo{\Sigma}{\Pi}{A'}_\asort^\tadd :=
\begin{cases}
\bot & \text{if~\ref{ofo:it:exist} holds,}\\
\mondgbnfofo{\Sigma}{\Pi^{\times}_{A'}}{A'}_\asort & \text{otherwise,}
\end{cases}
\]
where 
$\Pi^{\times}_{A'} := \{S\in \Pi \mid A' \cap S = \nada\}$. 

From the construction it is clear that $\varphi^\tadd \in \add{\ofo}{A'}(A,\sorts)$ and therefore the right-to-left direction of the lemma is immediate by Lemma~\ref{lem:caddofoiscadd}. For the left-to-right direction assume that $\varphi$ is completely additive in $A'$, we have to prove that $(D,\val) \models \varphi$ iff $(D,\val) \models \varphi^\tadd$, for every one-step model $(D,\val)$.
Moreover, using Remark~\ref{rem:ofostrict} we will assume that $(D,\val)$ is a \emph{strict} one-step model.


\bigskip
\noindent \fbox{$\Leftarrow$}
Let $(D,\val) \models \varphi^\tadd$.
It is enough to show that for every conjunct, if $(D,\val) \models \mondgbnfofo{\Sigma}{\Pi^{\times}_{A'}}{A'}$ then $(D,\val) \models \mondgbnfofo{\Sigma}{\Pi}{A'}$. The key observation is that $\Pi^{\times}_{A'} \subseteq \Pi$.

\bigskip
\noindent \fbox{$\Rightarrow$}
Let $(D,\val) \models \varphi$.
To prove this direction we make a slight detour: first, it is not difficult to prove that $(D,\val) \equiv_{\fo} (D\times\{0,1\},\val_\pi)$ where $D\times\{0,1\} := D_{\asort_1}\times\{0,1\},\dots,D_{\asort_n}\times\{0,1\}$ and $\val_{\pi}(a) := \{ (d,k) \mid d \in \val(a), k \in \{0,1\}\}$. Therefore, to prove this direction it is enough to show that $(D\times\{0,1\},\val_\pi) \models \varphi^\tadd$.

By complete additivity in $A'$
we have that $(D\times\{0,1\},\val_\pi[A'\resto\vlist{Q}]) \models \varphi$ for some quasi-atom $\vlist{Q}$ of $\val_\pi(A')$. To improve readability we define $\val_\pi' := \val_\pi[A'\resto\vlist{Q}]$ and $D_{01} := D\times\{0,1\}$.
We now work with $(D_{01},\val_\pi')$ because (by monotonicity) it will be enough to prove $(D_{01},\val_\pi') \models \varphi^\tadd$ to obtain $(D_{01},\val_\pi) \models \varphi^\tadd$. 

As $(D_{01},\val_\pi') \models \varphi$, we know there is some disjunct $\bigwedge_\asort \mondgbnfofo{\Sigma}{\Pi}{A'}_\asort$ of $\varphi$ witnessing the satisfaction. First, we prove that this disjunct is preserved by the translation.
\begin{claimfirst}\label{lem:ofotrans:c0}
	The disjunct $\bigwedge_\asort \mondgbnfofo{\Sigma}{\Pi}{A'}_\asort$ does not satisfy case~\ref{ofo:it:sorts}.
\end{claimfirst}
\begin{pfclaim}
	Suppose that for this disjunct there are two conjuncts corresponding to sorts $\asort_1$ and $\asort_2$ such that $a\in \Sigma_{\asort_1}$ and $b\in \Sigma_{\asort_2}$ for $a,b\in A'$. As the sorts are disjoint, this implies that there should be two \emph{distinct} elements colored by elements of $A'$. However, as $\val_\pi'(A')$ is a quasi-atom, this cannot be the case.
\end{pfclaim}

From the above claim it follows that, for the previously fixed disjunct, there is at most one sort (i.e, one conjunct) which can possibly $A'$ in the existential part (that is, in $\Sigma$). Hence, the disjunct is (so far) preserved by the translation. We still have to check that every conjunct is preserved, that is, we now focus on cases~\ref{ofo:it:exist} and~\ref{ofo:it:univ}. We fix an arbitrary conjunct $\mondgbnfofo{\Sigma}{\Pi}{A'}_\asort$ and prove the following claim.
\begin{claim}\label{lem:ofotrans:c1}
	For every $b,b'\in A'$ and $S,S'\in\Sigma$, if $b \in S$ and $b'\in S'$ then $b=b'$ and $i=j$.
\end{claim}
\begin{pfclaim}
	Assume, towards a contradiction, that one of the following cases hold:
	\begin{enumerate}[(1)]
		\itemsep 0 pt
		\item There are \emph{distinct} $b,b'\in A'$ such that $b,b' \in S\cup S'$ for some $S,S' \in \Sigma$.
		\item There is $b\in A'$ such that $b \in S \cap S'$ for some \emph{distinct} $S,S' \in \Sigma$.
	\end{enumerate}
	Case (1) would require both $\val'_\pi(b)$ and $\val'_\pi(b')$ to be nonempty, which does not hold, as $\val'_\pi(A')$ is a quasi-atom. For case (2) observe that, by Proposition~\ref{prop:strongmonofo}, we have to consider the following two subcases:
	\begin{enumerate}[({2}a)]
		\itemsep 0 pt
		\item $S \cap (A\setminus A') \neq S' \cap (A\setminus A')$.
		\item $S \cap A' \not\subseteq S' \cap A'$ and $S' \cap A' \not\subseteq S \cap A'$.
	\end{enumerate}
	For the subcase (2a) note that if elements $d_S, d_{S'} \in D_{01}$ satisfy $\tau^{A'}_S(d_S)$ and $\tau^{A'}_{S'}(d_{S'})$ then $d_S \neq d_{S'}$ must hold. This would be absurd since at most one element is colored with $b$ (because $\val'_\pi(A')$ is a quasi-atom) and hence the existentials wouldn't be satisfied, because they require at least two \emph{distinct} elements to satisfy $b(x)$.

	The subcase (2b) is slightly more subtle: for this case to hold, we must have \emph{distinct} $b_1,b_2 \in A'$ such that $b_1\in S$ and $b_2\in S'$. The situation is now like case (1), which we already worked out. Therefore this finishes the proof of the claim.
\end{pfclaim}
To finish, we show that condition~\ref{ofo:it:univ} is taken care of.
\begin{claim}
	If $(D_{01},\val_\pi') \models \mondgbnfofo{\Sigma}{\Pi}{A'}_\asort$ then $(D_{01},\val_\pi') \models \mondgbnfofo{\Sigma}{\Pi^{\times}_{A'}}{A'}_\asort$.
\end{claim}
\begin{pfclaim}
	The existential part is trivial. For the universal part suppose, without loss of generality, that some element $(d,0)$ has type $S_0\in\Pi$. If $S\cap A'=\nada$ we are done, since in that case $S_0\in\Pi^{\times}_{A'}$. Suppose now that $S_0 \cap A'\neq\nada$. The two key observations are (1)~because of how $\val_\pi$ was defined, and because $\val_\pi'(A')$ is a quasi-atom, the type $S_1\in\Pi$ of $(d,1)$ has to be exactly $S_1 = S_0 \cap (A\setminus A')$; and (2)~as we are considering $A'$-positive types, element $(d,0)$ also satisfies $S_1$. From this two observations we can conclude that every element $(d,i)$ satisfies some type in $\Pi^{\times}_{A'}$.
\end{pfclaim}
This finishes the proof.
\end{proof}

Putting together the above lemmas we obtain Theorem~\ref{thm:ofocadd}. Moreover, a careful analysis of the translation gives us normal forms for the completely additive fragment of $\ofo$. 

\begin{corollary}\label{cor:ofoadditivenf}
	Let $\varphi \in \ofo(A,\sorts)$, the following hold:
	\begin{enumerate}[(i)]
		\item The formula $\varphi$ is completely additive in $A' \subseteq A$ iff it is equivalent to a formula in the basic form $\bigvee \bigwedge_\asort \mondgbnfofo{\Sigma}{\Pi}{A'}_\asort$ where $\Sigma,\Pi \subseteq \wp (A)$ and for every disjunct $\bigwedge_\asort \mondgbnfofo{\Sigma}{\Pi}{A'}_\asort$,
		\begin{enumerate}[1.]
			\itemsep 0pt
			\item At most one sort $\underline{\asort}\in\sorts$ may use elements from $A'$ in $\mondgbnfofo{\Sigma}{\Pi}{A'}_{\underline{\asort}}$,
			\item For $\mondgbnfofo{\Sigma}{\Pi}{A'}_{\underline{\asort}}$ we have that $\Pi$ is $A'$-free and, if $L_\Sigma \in A^*$ is the list with repetitions of elements of $A$ in $\Sigma$, then there is at most one element of $A'$ in $L_\Sigma$.
		\end{enumerate}
		\item If $\varphi$ is monotone in $A$ (i.e., $\varphi\in\ofo^+(A,\sorts)$) then $\varphi$ is completely additive in $A' \subseteq A$ iff it is equivalent to a formula of the form $\bigvee \bigwedge_\asort \posdgbnfofo{\Sigma}{\Pi}_\asort$ where $\Sigma,\Pi \subseteq \wp (A)$ and for every disjunct $\bigwedge_\asort \posdgbnfofo{\Sigma}{\Pi}_\asort$,
		\begin{enumerate}[1.]
			\itemsep 0pt
			\item At most one sort $\underline{\asort}\in\sorts$ may use elements from $A'$ in $\posdgbnfofo{\Sigma}{\Pi}_{\underline{\asort}}$,
			\item For $\posdgbnfofo{\Sigma}{\Pi}_{\underline{\asort}}$ we have that $\Pi$ is $A'$-free and, if $L_\Sigma \in A^*$ is the list with repetitions of elements of $A$ in $\Sigma$, then there is at most one element of $A'$ in $L_\Sigma$.
		\end{enumerate}
	\end{enumerate}
\end{corollary}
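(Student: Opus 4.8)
The plan is to read both normal forms directly off the translation $(-)^\tadd$ constructed in the proof of Lemma~\ref{lem:ofotrans}, rather than proving anything afresh. Since complete additivity in $A'$ includes monotonicity in $A'$ by definition (Section~\ref{subsec:one-stepcont}), I would first invoke Corollary~\ref{cor:ofopositivenf}(i) to assume, without loss of generality, that $\varphi$ is presented in the monotone basic form $\bigvee\bigwedge_\asort \mondgbnfofo{\Sigma}{\Pi}{A'}_\asort$, and Proposition~\ref{prop:strongmonofo} to assume its strong variant. This is exactly the presentation on which the translation of Lemma~\ref{lem:ofotrans} operates, so the corollary becomes a matter of unwinding that translation.

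For the forward (left-to-right) implication of part (i), I would apply Lemma~\ref{lem:ofotrans}: a monotone-basic-form $\psi$ equivalent to $\varphi$ is completely additive in $A'$ iff $\psi \equiv \psi^\tadd$, whence $\varphi \equiv \psi^\tadd$. It then remains only to observe that $\psi^\tadd$ has the claimed shape, which is precisely what the three claims inside that proof record. Concretely, the outer clause of the translation discards every disjunct in which two distinct sorts both use a letter of $A'$ in their existential part (Claim~\ref{lem:ofotrans:c0}), so each surviving disjunct has at most one sort $\underline{\asort}$ touching $A'$ — this is condition~1. Within a surviving conjunct, the translation sends to $\bot$ any $\mondgbnfofo{\Sigma}{\Pi}{A'}_\asort$ violating the existential constraint (Claim~\ref{lem:ofotrans:c1}), forcing the list $L_\Sigma$ of $A$-letters occurring with repetition in $\Sigma$ to contain at most one element of $A'$; and it replaces $\Pi$ by the $A'$-free set $\Pi^{\times}_{A'} = \{S\in\Pi \mid A'\cap S = \nada\}$, so the universal part becomes $A'$-free. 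Together these yield conditions~1 and~2.

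For the backward (right-to-left) implication I would check that any formula in the claimed restricted basic form already lies in $\add{\ofo}{A'}(A,\sorts)$, so that complete additivity is immediate from Lemma~\ref{lem:caddofoiscadd}. This verification is routine: since $\Pi$ is $A'$-free, the universal conjunct $\forall x{:}\asort.\bigvee_{S\in\Pi}\tau^{A'}_S(x)$ is a sentence of $\ofo(A\setminus A',\sorts)$ and fits the $\psi$-clause of the grammar; the single $A'$-occurrence permitted in $L_\Sigma$ yields one existential conjunct of the form $\exists x{:}\underline{\asort}.(a(x)\land \chi(x))$ with $a\in A'$ and $\chi$ being $A'$-free, matching the $\varphi\land\psi$ and $\exists x{:}\asort.\varphi$ productions; all remaining existential conjuncts are $A'$-free; and condition~1 ensures the surrounding conjunction over sorts pairs the single $A'$-active conjunct with $A'$-free ones, as the grammar demands.

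Part (ii) I would obtain by the same argument, substituting the positive basic form of Corollary~\ref{cor:ofopositivenf}(ii) — with positive types $\tau^+_S$ in place of the $A'$-positive types $\tau^{A'}_S$ — for Corollary~\ref{cor:ofopositivenf}(i); the translation and all three claims apply verbatim, producing $\bigvee\bigwedge_\asort \posdgbnfofo{\Sigma}{\Pi}_\asort$ subject to the identical conditions~1 and~2. I expect no genuine obstacle: the entire content is bookkeeping, matching the explicit output of $(-)^\tadd$ against the stated syntactic constraints and, for the converse, confirming membership in $\add{\ofo}{A'}$. The one point deserving care is that the conditions must be read off \emph{per disjunct and per conjunct}, so I would state them quantified in exactly that way and trace each through the corresponding claim.
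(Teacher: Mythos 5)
Your proposal is correct and follows exactly the route the paper intends: the paper gives no explicit proof of this corollary, stating only that it follows from ``a careful analysis of the translation,'' and your unwinding of $(-)^\tadd$ from Lemma~\ref{lem:ofotrans} (reading conditions~1 and~2 off cases~(i)--(iii) and the replacement of $\Pi$ by $\Pi^{\times}_{A'}$, plus the converse via membership in $\add{\ofo}{A'}$ and Lemma~\ref{lem:caddofoiscadd}) is precisely that analysis.
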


\subsubsection{Completely additive fragment of $\ofoe$}

\begin{definition}\label{def:ofoeadd}
Given ${A'\subseteq A}$, the fragment $\add{\ofoe}{A'}(A,\sorts)$ of $\ofoe(A,\sorts)$ is given by the sentences generated by the following grammar:
\[
\varphi ::= \psi \mid a(x) \mid \exists x{:}\asort.\varphi \mid \varphi \lor \varphi \mid \varphi \land \psi
\]
where
$a\in A'$, $\asort\in\sorts$ and $\psi \in \ofoe(A\setminus A',\sorts)$. Observe that the equality is included in $\psi$. 
\end{definition}

\begin{theorem}\label{thm:ofoeadd}
A formula of $\ofoe(A,\sorts)$ is completely additive in ${A'\subseteq A}$ iff it is equivalent to a sentence in $\add{\ofoe}{A'}(A,\sorts)$.
\end{theorem}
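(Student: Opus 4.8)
The plan is to mirror the proof of Theorem~\ref{thm:ofocadd} for $\ofo$, splitting the equivalence into a soundness lemma and a completeness lemma, the only essential novelty being that equality can \emph{count} elements, so the ``doubling'' device used for $\ofo$ is unavailable. Throughout, note that in the grammar of $\add{\ofoe}{A'}$ the equality symbol occurs only inside the $A'$-free subformulas $\psi \in \ofoe(A\setminus A',\sorts)$, so all the inductive clauses that manipulate $A'$ are syntactically identical to the $\ofo$ case.

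First I would prove the soundness direction, i.e.\ the analogue of Lemma~\ref{lem:caddofoiscadd}: every $\varphi \in \add{\ofoe}{A'}(A,\sorts)$ is completely additive in $A'$. Monotonicity in $A'$ comes from the $\ofoe$ monotonicity theorem, and the quasi-atom property is established by the same induction as for $\ofo$: the base case $\varphi = \psi$ is trivial since $\psi$ is $A'$-free, the case $\varphi = a(x)$ uses the atom with $Q_i = \{\ass(x)\}$, and the cases for $\lor$, $\land\,\psi$ and $\exists x{:}\asort$ push the quasi-atom through unchanged. Equality never interferes because it lives inside the $A'$-free $\psi$, so this proof is essentially verbatim that of Lemma~\ref{lem:caddofoiscadd}.

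For completeness I would establish the analogue of Lemma~\ref{lem:ofotrans}: a translation $(-)^\tadd : \monot{\ofoe}{A'}(A,\sorts) \to \add{\ofoe}{A'}(A,\sorts)$ with $\varphi \equiv \varphi^\tadd$ iff $\varphi$ is completely additive in $A'$. Putting $\varphi$ into the monotone basic normal form $\bigvee\bigwedge_\aSort \mondbnfofoe{\vlist{T}}{\Pi}{A'}_\aSort$ of Corollary~\ref{cor:ofoepositivenf}, the translation deletes every disjunct and replaces by $\bot$ every conjunct exhibiting one of two configurations that force $\val(A')$ to contain more than one point: (i) two sorts whose witness-sequences $\vlist{T}$ both mention a predicate of $A'$; and (ii) a single sort whose sequence $\vlist{T}$ contains more than one occurrence of an $A'$-predicate (the witnesses $\vlist{x}$ being pairwise distinct via $\arediff{\vlist{x}}$). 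In the surviving conjuncts it replaces $\Pi$ by $\Pi^{\times}_{A'} := \{S \in \Pi \mid S \cap A' = \nada\}$. Since $\Pi^{\times}_{A'} \subseteq \Pi$ and we only drop disjuncts, the implication $\varphi^\tadd \models \varphi$ is immediate, and since $\varphi^\tadd \in \add{\ofoe}{A'}$ is completely additive by the soundness lemma, this already gives the right-to-left implication. Combining with the $\ofoe$ monotonicity theorem (a completely additive formula is in particular monotone in $A'$) then yields Theorem~\ref{thm:ofoeadd}, and a bookkeeping of the surviving shapes gives the $\ofoe$ analogue of the normal-form Corollary~\ref{cor:ofoadditivenf}.

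The hard part, and the place where the $\ofo$ argument genuinely breaks, is the forward implication $\varphi \models \varphi^\tadd$ under the assumption that $\varphi$ is completely additive. For $\ofo$ this is proved by passing to the doubled model $D\times\{0,1\}$, which is $\ofo$-equivalent to $(D,\val)$; but $\ofoe$ distinguishes a model from its doubling, so this step must be replaced. My plan is to argue directly: given $(D,\val)\models\varphi$, complete additivity yields a quasi-atom $\vlist{Q}$ of $\val(A')$ with $(D,\val[A'\resto\vlist{Q}])\models\varphi$, witnessed by some disjunct $\delta$. Since $\val[A'\resto\vlist{Q}]$ colours at most one element by at most one $A'$-predicate, configurations (i) and (ii) cannot hold for $\delta$, so $\delta$ survives the translation; moreover every non-witness element of $\delta$ in the restricted model that is $A'$-free realises an $A'$-free type of $\Pi$, i.e.\ a type of $\Pi^{\times}_{A'}$, so that $(D,\val[A'\resto\vlist{Q}]) \models \delta^\tadd$. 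As $\delta^\tadd \in \add{\ofoe}{A'}$ is monotone in $A'$ and $\val[A'\resto\vlist{Q}]$ is pointwise below $\val$ on $A'$, monotonicity lifts this to $(D,\val)\models\delta^\tadd$, hence $(D,\val)\models\varphi^\tadd$. The one delicate point to discharge is the unique $A'$-coloured element when it is \emph{not} forced to be a witness of $\delta$: there one must show, using that $\delta$ itself is completely additive, that the $A'$-free shadow of its type already lies in $\Pi$ (this is exactly the redundancy the doubling made visible for $\ofo$), and I expect establishing this redundancy without a doubling trick to be the main technical obstacle of the whole proof.
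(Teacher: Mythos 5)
Your decomposition, translation, and the reduction via quasi-atoms and monotonicity all match the paper's proof of Lemma~\ref{lem:ofoetrans} exactly, and your soundness half is indeed verbatim the $\ofo$ argument. But there is a genuine gap at precisely the point you flag as ``the main technical obstacle'': the preservation of the universal part, i.e.\ the claim that $(D,\val') \models \mondbnfofoe{\vlist{T}}{\Pi}{A'}_\aSort$ implies $(D,\val') \models \mondbnfofoe{\vlist{T}}{\Pi^{\times}_{A'}}{A'}_\aSort$ when $\val'(A')$ is a quasi-atom. You propose to handle the case of a non-witness $A'$-coloured element by showing that ``the $A'$-free shadow of its type already lies in $\Pi$'', using complete additivity of the disjunct itself; you do not carry this out, and it is not the right move --- it is an attempt to reconstruct semantically what the doubling trick gave you for $\ofo$, and nothing in the hypotheses obviously forces $S_d \setminus A'$ to be a member of $\Pi$.

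The paper closes this case by a purely structural observation that you have not used: in the $\ofoe$ basic form one has $\Pi \subseteq \vlist{T}$. Suppose a non-witness $d$ realises a type $S_d \in \Pi$ with $S_d \cap A' \neq \nada$, say $a \in S_d \cap A'$. Then $S_d = T_i$ for some $i$, so the (distinct) witness $d_i$ for $T_i$ also satisfies $a(x)$; two distinct elements in $\val'(a)$ contradict the quasi-atom property of $\val'(A')$. Hence the problematic case simply cannot arise: every type in $\Pi$ realised by a non-witness is automatically $A'$-free, so it already lies in $\Pi^{\times}_{A'}$, and no ``shadow type'' needs to be manufactured. With this one observation inserted, the rest of your argument goes through and coincides with the paper's proof.
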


The theorem will follow from the next two lemmas.

\begin{lemma}\label{lem:cofoeisadd}
Every $\varphi \in \add{\ofoe}{A'}(A,\sorts)$ is completely additive in $A'$.
\end{lemma}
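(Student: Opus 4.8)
The plan is to follow the same two-step strategy used for $\ofo$ in Lemma~\ref{lem:caddofoiscadd}. First I would dispose of the monotonicity requirement in the definition of complete additivity: the grammar of Definition~\ref{def:ofoeadd} is a syntactic restriction of the monotone fragment $\monot{\ofoe}{A'}(A,\sorts)$ (it forbids universal quantification and forces one conjunct of every $\land$ to be the $A'$-free subformula $\psi$), so a straightforward induction shows $\add{\ofoe}{A'}(A,\sorts)\subseteq\monot{\ofoe}{A'}(A,\sorts)$. Hence every $\varphi \in \add{\ofoe}{A'}(A,\sorts)$ is monotone in each $a \in A'$ by the characterization of the monotone fragment of $\ofoe$.

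Second, I would prove by structural induction on $\varphi$ (stated for formulas with free variables, so that the quantifier step goes through) the restriction property
\[
\text{if } (D,\val),\ass \models \varphi \text{ then } (D,\val[A'\resto\vlist{Q}]),\ass \models \varphi \text{ for some quasi-atom } \vlist{Q} \text{ of } \val(A').
\]
The cases mirror those of Lemma~\ref{lem:caddofoiscadd} exactly. For $\varphi = \psi \in \ofoe(A\setminus A',\sorts)$ the $A'$-part of the valuation is irrelevant, so the empty quasi-atom $(\nada,\dots,\nada)$ works. For $\varphi = a_i(x)$ with $a_i \in A'$ we have $\ass(x) \in \val(a_i)$, and taking the atom $\vlist{Q}$ with $Q_i := \{\ass(x)\}$ and every other coordinate empty keeps the formula true. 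The disjunction case applies the induction hypothesis to the satisfied disjunct; the conjunction $\varphi_1 \land \psi$ applies it to $\varphi_1$ to obtain $\vlist{Q}$ and observes that the $A'$-free conjunct $\psi$ is unaffected by passing to $\val[A'\resto\vlist{Q}]$; and $\exists x{:}\asort.\varphi'$ applies it to $\varphi'$ under $\ass[x\mapsto d]$ for a witness $d\in D_\asort$, after which the witness still exists in the restricted model.

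The only ingredient absent from the equality-free setting is the equality atom $x \foeq y$, and I expect this to be the easiest point rather than the main obstacle: equality is $A'$-free, so it can occur only inside a subformula $\psi \in \ofoe(A\setminus A',\sorts)$, which is absorbed wholesale by the base case and never interacts with the restriction of the $A'$-predicates. Consequently there is no genuine difficulty here; the argument is a routine transfer of the $\ofo$ proof. Combining the subfragment inclusion (for monotonicity) with the restriction property yields complete additivity of $\varphi$ in $A'$ directly from the definition, which completes the proof.
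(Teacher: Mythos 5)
Your proposal is correct and follows exactly the route the paper takes: the paper's own proof of this lemma is a one-line reference back to Lemma~\ref{lem:caddofoiscadd}, and your argument is precisely that induction (monotonicity from the fragment inclusion, then the quasi-atom restriction property case by case), with the correct observation that equality only ever occurs inside the $A'$-free subformula $\psi$ and is absorbed by the base case.
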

\begin{proof}
The proof is the same as for Lemma~\ref{lem:caddofoiscadd} (complete additivity for \ofo).
\end{proof}

\begin{lemma}\label{lem:ofoetrans}
	There exists a translation $(-)^\tadd:\monot{\ofoe}{A'}(A,\sorts) \to \add{\ofoe}{A'}(A,\sorts)$ such that
a formula $\varphi \in \monot{\ofoe}{A'}(A,\sorts)$ is completely additive in $A'$ if and only if $\varphi\equiv \varphi^\tadd$.
\end{lemma}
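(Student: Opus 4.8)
The plan is to follow the template of the proof of Lemma~\ref{lem:ofotrans} (the $\ofo$ case), substituting the monotone normal form for $\ofoe$ for the one used there, and to isolate precisely the step where equality forces a genuinely new argument. First I would put $\varphi\in\monot{\ofoe}{A'}(A,\sorts)$ into the monotone basic normal form of Corollary~\ref{cor:ofoepositivenf}(i), so that $\varphi=\bigvee\bigwedge_\aSort\mondbnfofoe{\vlist{T}}{\Pi}{A'}_\aSort$. Guided by the $\ofo$ case, I would flag as \emph{bad} the syntactic patterns that force a satisfying valuation to colour more than one element by $A'$, or to colour a single element by two names of $A'$, and are therefore incompatible with restriction to a quasi-atom: (i) a disjunct in which two distinct conjuncts (sorts $\aSort$) use names of $A'$ in their sequence $\vlist{T}$; (ii) a conjunct whose sequence $\vlist{T}$ contains two or more occurrences of names of $A'$ (either one $T_i$ meeting $A'$ in two names, or distinct $T_i,T_j$ each meeting $A'$), recalling that the witnesses are forced distinct by $\arediff{\vlist{x}}$; and (iii) a type $S\in\Pi$ with $S\cap A'\neq\nada$ in the universal part. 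I would then define $(-)^\tadd$ to replace every disjunct/conjunct matching (i) or (ii) by $\bot$, and to replace $\Pi$ by $\Pi^\times_{A'}:=\{S\in\Pi\mid S\cap A'=\nada\}$ in the surviving conjuncts, exactly as for $\ofo$. A routine check against Definition~\ref{def:ofoeadd} shows $\varphi^\tadd\in\add{\ofoe}{A'}$: with $\Pi^\times_{A'}$ the universal part is $A'$-free (a legal $\psi$), and each surviving conjunct contributes at most one $A'$-atom $a(x_i)$, combined with $A'$-free material by $\land$ and $\exists$.

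The right-to-left implication is then immediate from Lemma~\ref{lem:cofoeisadd}: since $\varphi^\tadd\in\add{\ofoe}{A'}$ is completely additive, $\varphi\equiv\varphi^\tadd$ forces $\varphi$ to be completely additive as well. For the converse implication I would prove the two model-theoretic inclusions separately. The inclusion $\varphi^\tadd\models\varphi$ holds unconditionally, since the translation only deletes disjuncts and shrinks the universal disjunction, and $\Pi^\times_{A'}\subseteq\Pi$ makes $\mondbnfofoe{\vlist{T}}{\Pi^\times_{A'}}{A'}_\aSort$ logically stronger. The converse $\varphi\models\varphi^\tadd$ is where additivity enters: given $(D,\val)\models\varphi$, complete additivity in $A'$ yields a quasi-atom $\vlist{Q}$ of $\val(A')$ with $(D,\val')\models\varphi$ for $\val':=\val[A'\resto\vlist{Q}]$, a model in which at most one element $d^*$ is $A'$-coloured, and by at most one name $a^*\in A'$. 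Fixing a disjunct $\mc{D}$ witnessing $(D,\val')\models\varphi$, the quasi-atom bound rules out patterns (i) and (ii) for $\mc{D}$ (both would demand two distinct $A'$-coloured witnesses), by the reasoning of Claims~\ref{lem:ofotrans:c0} and~\ref{lem:ofotrans:c1}; hence $\mc{D}$ survives the translation. Since $\varphi^\tadd$ is monotone in $A'$ and $\val'(a)\subseteq\val(a)$ for all $a\in A'$, it then suffices to show $(D,\val')\models\mc{D}^\tadd$, after which $(D,\val)\models\varphi^\tadd$ follows by monotonicity.

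The main obstacle is the soundness of the universal-part simplification, i.e. showing $(D,\val')$ still satisfies the universal conjunct once $\Pi$ is cut down to $\Pi^\times_{A'}$. In the $\ofo$ proof this is exactly where the doubling model $(D\times\{0,1\},\val_\pi)$ is used: because $\ofo$ cannot count, $(D,\val)\equiv_\fo(D\times\{0,1\},\val_\pi)$, each $A'$-coloured element of type $S$ acquires an uncoloured twin of type $S\cap(A\setminus A')$, this twin must realise a type of $\Pi$ and, being $A'$-free, a type of $\Pi^\times_{A'}$, and $A'$-positivity transfers this back to the coloured element. This device is \emph{not} available for $\ofoe$: equality distinguishes $(D,\val)$ from any blow-up of its domain, so $(D,\val)\not\equiv_\foe(D\times\{0,1\},\val_\pi)$ in general. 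The difficulty concentrates in the case where the unique coloured element $d^*$ is a \emph{non-witness} of $\mc{D}$ realising some $S\in\Pi$ with $a^*\in S$: then $d^*$ must be covered by $\Pi^\times_{A'}$, which requires $S\cap(A\setminus A')\in\Pi$.

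I therefore expect the crux of the argument to be a structural closure property — that for any disjunct satisfiable under a quasi-atom restriction, $\Pi$ already contains $S\cap(A\setminus A')$ whenever it contains such an $S$ — which I would establish by a dedicated, equality-respecting model surgery rather than by the $\ofo$ doubling: starting from a model of $\mc{D}$, add copies of an element of type $S\cap(A\setminus A')$ \emph{above} the counting thresholds appearing in $\mc{D}$ (so as not to perturb the rank-bounded $\ofoe$-type), and reapply additivity to read off that the $A'$-free projection lies in the relevant $\Pi$. The delicate point, which I would budget the most care for, is that additivity may return a \emph{different} disjunct than the one I started from, so the surgery must be set up to pin down the projection in $\mc{D}$ itself (e.g. by arranging that only $\mc{D}$ can be satisfied in the perturbed model) rather than in some unrelated disjunct.
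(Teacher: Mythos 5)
Your setup matches the paper's proof almost exactly: the monotone basic form, the three bad patterns, the definition of $(-)^\tadd$ via deletion and $\Pi\mapsto\Pi^{\times}_{A'}$, the right-to-left direction via Lemma~\ref{lem:cofoeisadd}, the observation $\Pi^{\times}_{A'}\subseteq\Pi$ for one inclusion, and the use of a quasi-atom restriction $\val'=\val[A'\resto\vlist{Q}]$ to rule out patterns (i) and (ii) for the witnessing disjunct. The gap is precisely at the step you flag as the crux — the soundness of cutting $\Pi$ down to $\Pi^{\times}_{A'}$ — where your proposal stops being a proof and becomes a plan for a model surgery that is never carried out. Moreover, the ``structural closure property'' you aim for (that $S\in\Pi$ with $S\cap A'\neq\nada$ forces $S\cap(A\setminus A')\in\Pi$) is not what is needed and fails in degenerate cases: take $\vlist{T}=(\{a\})$ and $\Pi=\{\{a\}\}$ with $a\in A'$ on a one-element domain; the conjunct is satisfiable under a quasi-atom, yet $\nada\notin\Pi$. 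Your own caveat about additivity possibly landing in a different disjunct after the perturbation is also a real obstruction, not a technicality.

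What you are missing is that the $\ofoe$ normal form of Corollary~\ref{cor:ofoepositivenf} carries the side condition $\Pi\subseteq\vlist{T}$, and this turns the universal part into a one-line combinatorial argument requiring no surgery and no analogue of the $\ofo$ doubling. Suppose a non-witness $d$ realizes a type $S_d\in\Pi$ with some $a\in S_d\cap A'$; then $d\in\val'(a)$. But $S_d\in\vlist{T}$, so some existential witness $d'$ realizes the positive type $\tau^{A'}_{S_d}$, hence $d'\in\val'(a)$ as well, and $d'\neq d$ because $d$ is a non-witness. That gives two distinct elements coloured by $A'$, contradicting that $\val'(A')$ is a quasi-atom. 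So the case $S_d\cap A'\neq\nada$ cannot occur at all: every non-witness already realizes a type in $\Pi^{\times}_{A'}$, and $(D,\val')\models\mondbnfofoe{\vlist{T}}{\Pi^{\times}_{A'}}{A'}_\aSort$ follows directly. With that replacement your argument closes and coincides with the paper's.
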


\begin{proof}
We assume that $\varphi$ is in basic form, i.e., $\varphi = \bigvee \bigwedge_\aSort \mondbnfofoe{\vlist{T}}{\Pi}{A'}_\aSort$ with $\Pi\subseteq\vlist{T}$. 
First, we intuitively consider some conditions on subformulas of $\varphi$ that would force the existence of at least two elements colored with $A'$. Clearly, any formula that forces this, goes against the spirit of complete additivity:
\begin{enumerate}[(i)]
	\itemsep 0pt
	\item\label{ofoe:it:sorts} Some $\bigwedge_\aSort \mondbnfofoe{\vlist{T}}{\Pi}{A'}_\aSort$ has $\vlist{T}_{\aSort_1}$ and $\vlist{T}_{\aSort_2}$ with $a\in \vlist{T}_{\aSort_1}$ and $b\in \vlist{T}_{\aSort_2}$ for $a,b\in A'$, $\aSort_1\neq\aSort_2$.
	\item\label{ofoe:it:exist} For any $\mondbnfofoe{\vlist{T}}{\Pi}{A'}_\aSort$ there are $a,b \in T_i\cap T_j$ for distinct $a,b\in A'$ or distinct $i,j$.
	\item\label{ofoe:it:univ} For any $\mondbnfofoe{\vlist{T}}{\Pi}{A'}_\aSort$ we have $\Pi\cap A' \neq \nada$.
\end{enumerate}
Now, we give a translation which eliminates (replaces with $\bot$) the subformulas satisfying any of the above cases. We first take care of case~\ref{ofoe:it:sorts} with the following definition
\[
(\bigvee \bigwedge_\aSort \mondbnfofoe{\vlist{T}}{\Pi}{A'}_\aSort)^\tadd :=
\bigvee \{\bigwedge_\aSort \mondbnfofoe{\vlist{T}}{\Pi}{A'}_\aSort^\tadd \mid \text{(i) is not the case}\}
\]
and we take care of the remaining cases as follows
\[
\mondbnfofoe{\vlist{T}}{\Pi}{A'}_\aSort^\tadd :=
\begin{cases}
	\bot &\text{ if~\ref{ofoe:it:exist} holds,} \\
	\mondbnfofoe{\vlist{T}}{\Pi^{\times}_{A'}}{A'}_\aSort &\text{ otherwise,}
\end{cases}
\]
where $\Pi^{\times}_{A'} := \{S\in\Pi \mid A'\cap S=\nada\}$. 
%

First we prove the right-to-left direction of the lemma. Inspecting the syntactic form of $\varphi^\tadd$ and using Lemma~\ref{lem:cofoeisadd} it is not difficult to see that $\varphi^\tadd \in \add{\ofoe}{A'}(A,\sorts)$.
For the left-to-right direction of the lemma we assume $\varphi$ to be completely additive in $A'$ and have to prove $\varphi \equiv \varphi^\tadd$.

\bigskip
\noindent\fbox{$\Leftarrow$}
Assume $(D,\val) \models \varphi^\tadd$. It is enough to show that $(D,\val) \models \mondbnfofoe{\vlist{T}}{\Pi^{\times}_{A'}}{A'}_\aSort$ implies $(D,\val) \models \mondbnfofoe{\vlist{T}}{\Pi}{A'}_\aSort$ for every conjunct. The key observation is that $\Pi^{\times}_{A'} \subseteq \Pi$.

\bigskip
\noindent\fbox{$\Rightarrow$}
Let $(D,\val) \models \varphi$. By complete additivity in $A'$
we have that $(D,\val[A'\resto\vlist{Q}]) \models \varphi$ for some quasi-atom $\vlist{Q}$ of $\val(A')$. To improve readability we define $\val' := \val[A'\resto\vlist{Q}]$.
We now work with $(D,\val')$ because (by monotonicity, which is implied by complete additivity) it will be enough to prove that $(D,\val') \models \varphi^\tadd$. 

As $(D,\val') \models \varphi$, we know there is some disjunct $\bigwedge_\aSort \mondbnfofoe{\vlist{T}}{\Pi}{A'}_\aSort$ of $\varphi$ witnessing the satisfaction. First, we prove that this disjunct is preserved by the translation.
\begin{claimfirst}\label{lem:ofoeatrans:c0}
	The disjunct $\bigwedge_\aSort \mondbnfofoe{\vlist{T}}{\Pi}{A'}_\aSort$ does not satisfy case~\ref{ofoe:it:sorts}.
\end{claimfirst}
\begin{pfclaim}
	Same as in Claim~\ref{lem:ofotrans:c0} of Lemma~\ref{lem:ofotrans}.
\end{pfclaim}
From the above claim it follows that, for the previously fixed disjunct, there is at most one sort (i.e, one conjunct) which can possibly use $A'$ in the existential part (that is, in $\vlist{T}$). Hence, the disjunct is (so far) preserved by the translation. We still have to check that every conjunct is preserved, that is, we now focus on cases~\ref{ofoe:it:exist} and~\ref{ofoe:it:univ}. We fix an arbitrary conjunct $\mondbnfofoe{\vlist{T}}{\Pi}{A'}_\aSort$ and prove the following claim.
\begin{claim}\label{lem:ofoeatrans:c1}
	For every $b,b'\in A'$, if $b \in T_i$ and $b'\in T_j$ then $b=b'$ and $i=j$.
\end{claim}
\begin{pfclaim}
	Suppose that there are distinct $T_i,T_j \in \vlist{T}$ such that $b\in T_i\cap T_j$. This would require at least two \emph{distinct} elements to satisfy $b(x)$. However, this cannot occur because $\val'(A')$ is a quasi-atom. The case where $b\neq b'$ is handled in a similar way: suppose that $b \in T_i$, $b'\in T_j$ and $b\neq b'$. Using what we just proved, let us assume that $i=j$. Therefore, this requires the existance of an element which is colored with both $b$ and $b'$. However, as $\val'(A')$ is a quasi-atom, this cannot occur if $b\neq b'$.
\end{pfclaim}
%
%
To finish, we show that condition~\ref{ofoe:it:univ} is taken care of.
\begin{claim}
	If $(D,\val') \models \mondbnfofoe{\vlist{T}}{\Pi}{A'}_\aSort$ then $(D,\val') \models \mondbnfofoe{\vlist{T}}{\Pi^{\times}_{A'}}{A'}_\aSort$.
\end{claim}
\begin{pfclaim}
Assume $(D,\val') \models \mondbnfofoe{\vlist{T}}{\Pi}{A'}_\aSort$ and that $d\in D$ is one of the elements which is not a witness for $\vlist{T}$; therefore, $d$ has to satisfy some type $S_d \in \Pi$. If $S_d \cap A' = \nada$ we are done, because in that case $S_d \in \Pi^{\times}_{A'}$. Suppose that $S_d \cap A' \neq \nada$, this means that $d$ is colored with some $a\in A'$. As $\val'(A')$ is a quasi-atom, this means that no other element can be colored with $A'$. The final observation is that, as $\Pi\subseteq \vlist{T}$, then $S_d\in \vlist{T}$. This means that there should exist an element $d'\neq d$ which is colored with the same $a\in A'$ but we have just observed that this cannot occur. We conclude that every $d\in D$ has to satisfy some type $S\in\Pi$ with $S\cap A' = \nada$.
\end{pfclaim}

%
\noindent This finishes the proof.
\end{proof}

Putting together the above lemmas we obtain Theorem~\ref{thm:ofoeadd}. Moreover, a careful analysis of the translation gives us the following corollary, providing normal forms for the completely additive fragment of $\ofoe$.

\begin{corollary}\label{cor:ofoeadditivenf}
	Let $\varphi \in \ofoe(A)$, the following hold:
	\begin{enumerate}[(i)]
		\item The formula $\varphi$ is completely additive in $A' \subseteq A$ iff it is equivalent to a formula in the basic form $\bigvee \bigwedge_\aSort \mondbnfofoe{\vlist{T}}{\Pi}{A'}_\aSort$ where $\vlist{T}\in\wp(A)^k$, $\Pi\subseteq \vlist{T}$ and for every disjunct $\bigwedge_\aSort \mondbnfofoe{\vlist{T}}{\Pi}{A'}_\aSort$,
		\begin{enumerate}[1.]
			\itemsep 0pt
			\item At most one sort $\underline{\aSort}\in\sorts$ may use elements from $A'$ in $\mondbnfofoe{\vlist{T}}{\Pi}{A'}_{\underline{\aSort}}$,
			\item For $\mondbnfofoe{\vlist{T}}{\Pi}{A'}_{\underline{\aSort}}$ we have that $\Pi$ is $A'$-free and there is at most one element of $A'$ in the concatenation of the lists $T_1{\cdot} T_2 {\cdots} T_k$.
		\end{enumerate}
		\item If $\varphi$ is monotone in $A$ (i.e., $\varphi\in\ofoe^+(A)$) then $\varphi$ is completely additive in $A'\subseteq A$ iff it is equivalent to a formula in the basic form $\bigvee \bigwedge_\aSort \posdbnfofoe{\vlist{T}}{\Pi}_\aSort$ where $\vlist{T}\in\wp(A)^k$, $\Pi\subseteq \vlist{T}$ and for every disjunct $\bigwedge_\aSort \posdbnfofoe{\vlist{T}}{\Pi}_\aSort$,
		\begin{enumerate}[1.]
			\itemsep 0pt
			\item At most one sort $\underline{\aSort}\in\sorts$ may use elements from $A'$ in $\posdbnfofoe{\vlist{T}}{\Pi}_{\underline{\aSort}}$,
			\item For $\posdbnfofoe{\vlist{T}}{\Pi}_{\underline{\aSort}}$ we have that $\Pi$ is $A'$-free and there is at most one element of $A'$ in the concatenation of the lists $T_1{\cdot} T_2 {\cdots} T_k$.
		\end{enumerate}
		\item Over strict one-step models the above normal forms hold with $\aSort$ replaced by $\asort$.
	\end{enumerate}
\end{corollary}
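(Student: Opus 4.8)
The plan is to read the corollary off directly from the translation $(-)^\tadd$ constructed in the proof of Lemma~\ref{lem:ofoetrans}, by checking that conditions~1 and~2 in each clause are exactly the syntactic invariants this translation guarantees. Throughout I would use that complete additivity in $A'$ entails monotonicity in $A'$---this is part of the one-step definition of complete additivity---so that the monotone normal form of Corollary~\ref{cor:ofoepositivenf} is available as a starting point.

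For the forward direction of~(i) I would start from a $\varphi$ that is completely additive in $A'$. Being monotone in $A'$, it is by Corollary~\ref{cor:ofoepositivenf}(i) equivalent to some $\psi$ in basic form $\bigvee\bigwedge_\aSort\mondbnfofoe{\vlist{T}}{\Pi}{A'}_\aSort$; as $\psi\equiv\varphi$ it is itself completely additive, so Lemma~\ref{lem:ofoetrans} yields $\psi\equiv\psi^\tadd$. It then remains only to inspect $\psi^\tadd$ and confirm conditions~1 and~2, which is a matter of reading off the three cases the translation eliminates: deleting the disjuncts in which two distinct sorts carry an $A'$-predicate in their existential part leaves at most one sort using $A'$ (condition~1); sending to $\bot$ any conjunct with two $A'$-occurrences in the existential block---be they two distinct $A'$-predicates or the same predicate in two different $T_i$---leaves at most one occurrence of an $A'$-element in the concatenation $T_1\cdots T_k$; and replacing $\Pi$ by $\Pi^{\times}_{A'}$ makes the universal part $A'$-free (the last two together giving condition~2). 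Disjuncts all of whose conjuncts collapse to $\bot$ are simply discarded.

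For the converse of~(i) I would observe that a basic-form $\psi$ satisfying conditions~1 and~2 is a \emph{fixed point} of the translation: condition~1 excludes the sort-clash case, the multiplicity reading of condition~2 excludes the repetition case, and the $A'$-freeness of $\Pi$ gives $\Pi^{\times}_{A'}=\Pi$, so that $\psi^\tadd=\psi$. Since $(-)^\tadd$ always lands in $\add{\ofoe}{A'}(A,\sorts)$, this places $\psi$ in that fragment, whence Lemma~\ref{lem:cofoeisadd} gives complete additivity of $\psi$, and hence of $\varphi$. I would then treat~(ii) in exactly the same way, replacing the $A'$-positive types $\tau^{A'}_S$ by the fully positive types $\tau^+_S$, starting from Corollary~\ref{cor:ofoepositivenf}(ii) and using the restriction of $(-)^\tadd$ to $\ofoe^+$; and~(iii) identically over strict models, where Theorem~\ref{thm:sbnfofoe} lets me replace the composite sorts $\aSort\subseteq\sorts$ by single sorts $\asort$.

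The step I expect to demand the most care is the precise matching between the informal conditions of the statement and the cases of the translation---in particular reading ``at most one element of $A'$ in the concatenation $T_1\cdots T_k$'' \emph{with multiplicity}, so that this single clause simultaneously forbids two distinct $A'$-predicates inside one existential block and the same $A'$-predicate recurring in two different $T_i$. This is exactly what the $\bot$-substitution case of the translation removes, as made explicit by the relevant claim in the proof of Lemma~\ref{lem:ofoetrans}. The only genuinely substantive underlying fact---that a completely additive formula can never essentially demand two $A'$-coloured witnesses in different sorts, justifying the sort-clash deletion---I would not re-prove, since it is already established inside that lemma.
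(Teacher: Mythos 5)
Your proposal is correct and follows exactly the route the paper intends: the corollary is obtained there by ``a careful analysis of the translation'' $(-)^\tadd$ of Lemma~\ref{lem:ofoetrans}, which is precisely what you do --- forward direction by normalizing via Corollary~\ref{cor:ofoepositivenf} and reading conditions 1--2 (with the multiplicity reading of the concatenation $T_1\cdots T_k$) off the three elimination cases of the translation, converse by noting that a formula satisfying the conditions is a fixed point of $(-)^\tadd$, hence lies in $\add{\ofoe}{A'}(A,\sorts)$ and is completely additive by Lemma~\ref{lem:cofoeisadd}. The only cosmetic slip is that a disjunct must be discarded as soon as \emph{any} (not all) of its conjuncts becomes $\bot$, which changes nothing.
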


\subsection{One-step multiplicativity and Boolean duals}\label{subsec:one-stepcocont}

Consider a one-step logic $\llang(A)$ and formula $\varphi \in \llang(A)$.
We say that $\varphi$ is \emph{completely multiplicative in $\{a_1,\dots,a_n\}\subseteq A$} if $\varphi$ is monotone in all $a_i$ and for all one-step models $(D,\val)$ and assignments $\ass:\fovar\to D$,
\[
\text{If } (D,\val),\ass \not\models \varphi \text{ then } (D, \val^c[\vlist{a} \resto\vlist{Q}]),\ass \not\models \varphi \text{ for some quasi-atom $\vlist{Q}$ of $\val^c(\vlist{a})$}
\]
where $\vlist{a} := a_1,\dots,a_n$ and $\val^c(b) := D\setminus \val(b)$ for all $b\in A$.
Observe that, already with the abstract definition of Boolean dual given in Definition~\ref{d:bdual1} we can prove the expected relationship between the notions of additivity and multiplicativity.

\begin{proposition}\label{prop:adddualmult}
	A formula $\varphi \in \llang(A)$ is completely additive in $A'\subseteq A$ if and only if $\varphi^\delta$ is completely multiplicative in $A'$.
\end{proposition}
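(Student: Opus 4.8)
The plan is to reduce everything to the defining equivalence of the Boolean dual and to handle the two clauses of complete additivity and complete multiplicativity — the monotonicity requirement and the quasi-atom condition — separately. From Definition~\ref{d:bdual1} I would first record the two facts used repeatedly: for every one-step model and assignment, $(D,\val),\ass\models\varphi$ iff $(D,\val^{c}),\ass\not\models\dual\varphi$, and, by the symmetry of ``being each other's Boolean dual'', also $(D,\val),\ass\models\dual\varphi$ iff $(D,\val^{c}),\ass\not\models\varphi$; composing these yields the semantic involution $\dual{(\dual\varphi)}\equiv\varphi$. Because of this involution it suffices to prove a single implication, say that complete additivity of $\varphi$ in $A'$ implies complete multiplicativity of $\dual\varphi$ in $A'$: applying that implication to $\dual\varphi$ in place of $\varphi$ and invoking $\dual{(\dual\varphi)}\equiv\varphi$ immediately gives the converse.

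For the monotonicity clauses I would argue that $\varphi$ is monotone in $a$ iff $\dual\varphi$ is monotone in $a$, by a short order-reversal argument: if $\val(a)\subseteq E$ then $\val^{c}(a)\supseteq(\val[a\mapsto E])^{c}(a)$, and since passing to the dual swaps truth for falsity at the complemented valuation, upward monotonicity of $\varphi$ evaluated at the larger complemented set is exactly upward monotonicity of $\dual\varphi$ evaluated at the smaller one. Hence the monotonicity requirement transfers freely, and I may assume it on whichever side is convenient.

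For the quasi-atom clause I would start from an arbitrary $(D,\val),\ass$ with $(D,\val),\ass\not\models\dual\varphi$. By the dual equivalence this says $(D,\val^{c}),\ass\models\varphi$, so complete additivity of $\varphi$, applied at the model $(D,\val^{c})$, produces a quasi-atom $\vlist Q$ of $\val^{c}(\vlist a)$ with $(D,\val^{c}[\vlist a\resto\vlist Q]),\ass\models\varphi$. Translating this satisfaction back through the Boolean dual is what should deliver exactly the non-satisfaction $(D,\val^{c}[\vlist a\resto\vlist Q]),\ass\not\models\dual\varphi$ required by complete multiplicativity of $\dual\varphi$, and the witnessing quasi-atom is already taken with respect to $\val^{c}(\vlist a)$, precisely as that definition demands. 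The main obstacle — and the step I would treat most carefully — is the valuation bookkeeping at this final translation: I must reconcile the global complementation $\val\mapsto\val^{c}$ built into the Boolean dual with the coordinatewise restriction $\vlist a\resto\vlist Q$ to a quasi-atom, verifying that complementation and restriction interact as intended on the $A'$-predicates and that the monotonicity established in the previous paragraph bridges any residual discrepancy on the remaining predicates. Once the valuation $\val^{c}[\vlist a\resto\vlist Q]$ is matched against the one produced by additivity, the two conditions line up and the implication follows, with the symmetric implication obtained by the involution argument of the first paragraph.
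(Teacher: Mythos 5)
Your plan follows the same route as the paper's own proof, which consists of exactly the chain you describe: $(D,\val)\not\models\varphi^{\delta}$ iff $(D,\val^{c})\models\varphi$ by Definition~\ref{d:bdual1}, iff $(D,\val^{c}[A'\resto\vlist{Q}])\models\varphi$ for some quasi-atom $\vlist{Q}$ of $\val^{c}(A')$ by complete additivity of $\varphi$ (the converse of that last step being supplied by monotonicity), with the other direction declared analogous. Your two additions --- the explicit transfer of monotonicity across the dual, and the involution $\dual{(\dual{\varphi})}\equiv\varphi$ to obtain the converse implication --- are both correct and are left implicit in the paper.

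However, the step you defer is the only step with any content, and you cannot leave it as ``to be verified.'' Write $W:=\val^{c}[\vlist{a}\resto\vlist{Q}]$. Applying Definition~\ref{d:bdual1} to $(D,W)\models\varphi$ yields $(D,W^{c})\not\models\varphi^{\delta}$, where $W^{c}$ sends each $a_{i}\in A'$ to $D\setminus Q_{i}$ and each $b\notin A'$ to $\val(b)$: it is the extension of $\val$ on the $A'$-coordinates to the ``co-quasi-atom'' determined by $\vlist{Q}$, and the identity $(\val^{c}[\vlist{a}\resto\vlist{Q}])^{c}=\val[a_{i}\mapsto D\setminus Q_{i}]$ is all the bookkeeping there is; monotonicity is not needed to bridge anything. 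The catch is that the definition of complete multiplicativity as printed in Section~\ref{subsec:one-stepcocont} demands non-satisfaction of $\varphi^{\delta}$ at $W$ itself, not at $W^{c}$, and under that literal reading the chain does not close: take $D=\{d\}$, $\val(a)=\nada$ and $\varphi=\exists x.a(x)$, so that additivity hands you the quasi-atom $Q=\{d\}$ of $\val^{c}(a)$ and $W(a)=\{d\}$, whence $(D,W)\models\forall x.a(x)=\varphi^{\delta}$ even though $(D,W)\models\varphi$. So your instinct that the complementation/restriction interaction is the delicate point is exactly right, but as it stands you have flagged the gap rather than closed it; the argument goes through only once the complementation in the multiplicativity condition is placed on the outside (i.e.\ the witnessing valuation is $W^{c}$), a point on which the paper's two-line proof is equally silent.
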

\begin{proof}
We prove the left to right direction:
\begin{align}
& (D,\val) \not\models \varphi^\delta\\
\text{ iff } & (D,\val^c) \models \varphi
\tag{Definition~\ref{d:bdual1}} \\
\text{ iff } & (D,\val^c[A'\resto\vlist{Q}]) \models \varphi 
,\text{ for some quasi-atom $\vlist{Q}$ of $\val^c(A')$.}
\tag{$\varphi$ completely additive} 
%
\end{align}
%
The proof of the other direction is analogous.
\end{proof}

To define a syntactic notion of multiplicativity we first give a concrete definition of the dualization operator of Definition~\ref{d:bdual1} and then show that the one-step language $\ofoe$ is closed under Boolean duals.

\begin{definition}\label{DEF_dual} 
Let $\varphi \in \ofoe(A,\sorts)$. 
The \emph{dual} $\varphi^{\delta} \in \ofoe(A,\sorts)$ of $\varphi$ is defined 
as follows.
\begin{align*}
 (a(x))^{\delta} & :=  a(x) 
\\ (\top)^{\delta} & :=  \bot 
  & (\bot)^{\delta} & :=  \top 
\\  (x \foeq y)^{\delta} & :=  x \not\foeq y 
  & (x \not\foeq y)^{\delta}& :=  x \foeq y 
\\ (\varphi \land \psi)^{\delta} &:=  \varphi^{\delta} \lor \psi^{\delta} 
  &(\varphi \lor \psi)^{\delta}& :=  \varphi^{\delta} \land \psi^{\delta}
\\ (\exists x{:}\asort.\psi)^{\delta} &:=  \forall x{:}\asort.\psi^{\delta} 
  &(\forall x{:}\asort.\psi)^{\delta} &:=  \exists x{:}\asort.\psi^{\delta} 
\end{align*}
\end{definition}

\begin{remark}
	Observe that if $\varphi \in \ofo(A,\sorts)$ then $\varphi^{\delta} \in \ofo(A,\sorts)$ and that the operator preserves positivity of the predicates. That is, if $\varphi \in \ofoe^+(A,\sorts)$ then $\varphi^{\delta} \in \ofoe^+(A,\sorts)$ and the same occurs with $\ofo^+(A,\sorts)$.
\end{remark}

\noindent The proof of the following Proposition is a routine check.

\begin{proposition}\label{prop:duals}
The sentences $\varphi$ and $\varphi^{\delta}$ are Boolean duals, for every $\varphi \in \ofoe(A,\sorts)$.
\end{proposition}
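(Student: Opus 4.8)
The plan is to prove the statement by structural induction on $\varphi$, after two preliminary reductions. First, since the dualization operator of Definition~\ref{DEF_dual} is only specified on formulas built from literals, conjunction, disjunction and both quantifiers, I would note that every $\varphi \in \ofoe(A,\sorts)$ is equivalent to a formula in negation normal form (pushing negations inward to the atoms), so it suffices to treat NNF formulas, adding the literal clause $(\lnot a(x))^\delta := \lnot a(x)$ to match the clause $(a(x))^\delta := a(x)$. Second, since the quantifier steps pass through open subformulas, I would strengthen the statement from sentences to arbitrary formulas under an assignment: for every NNF $\varphi$, one-step model $(D,\val)$ and assignment $\ass$,
\[
(D,\val),\ass \models \varphi \quad\text{iff}\quad (D,\val^{c}),\ass \not\models \dual{\varphi}.
\]
The original claim for sentences is the special case in which $\varphi$ has no free variables.

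The base cases carry the one genuinely conceptual point. For a predicate literal $a(x)$ one computes $(D,\val^{c}),\ass \not\models a(x)$ iff $\ass(x)\notin \val^{c}(a)=D\setminus\val(a)$ iff $\ass(x)\in\val(a)$ iff $(D,\val),\ass\models a(x)$; the same computation works for $\lnot a(x)$. Thus predicate literals are left \emph{unchanged} by $\dual{(-)}$, because complementing the valuation already absorbs the flip between $\models$ and $\not\models$. By contrast, an equality atom does not depend on the valuation at all, so $(D,\val^{c}),\ass$ and $(D,\val),\ass$ agree on it; to account for the flip from $\models$ to $\not\models$ the dual must therefore syntactically negate it, which is exactly why $(x\foeq y)^\delta = x\not\foeq y$ and vice versa. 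The cases $\top^\delta=\bot$ and $\bot^\delta=\top$ are checked directly.

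For the inductive step, the Boolean cases use the De Morgan shape of the dual: assuming the claim for $\varphi_1,\varphi_2$, we get $(D,\val),\ass\models \varphi_1\land\varphi_2$ iff $(D,\val^{c}),\ass\not\models\dual{\varphi_1}$ and $(D,\val^{c}),\ass\not\models\dual{\varphi_2}$, i.e.\ iff $(D,\val^{c}),\ass\not\models \dual{\varphi_1}\lor\dual{\varphi_2}=(\varphi_1\land\varphi_2)^\delta$; disjunction is symmetric. For the quantifier case, using that $\dual{(-)}$ swaps $\exists$ and $\forall$, we have $(D,\val),\ass\models\exists x{:}\asort.\psi$ iff some $d\in D_\asort$ gives $(D,\val),\ass[x\mapsto d]\models\psi$, iff (by the induction hypothesis applied to the open formula $\psi$) some $d\in D_\asort$ gives $(D,\val^{c}),\ass[x\mapsto d]\not\models\dual{\psi}$, iff it is not the case that every $d\in D_\asort$ satisfies $\dual{\psi}$, iff $(D,\val^{c}),\ass\not\models\forall x{:}\asort.\dual{\psi}=(\exists x{:}\asort.\psi)^\delta$; the universal case is symmetric. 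I expect no real obstacle: the argument is a routine structural induction, and the only points requiring care are the reduction to NNF (so that $\dual{(-)}$ is defined on every subformula, including negated predicate literals) and the strengthening to open formulas under an assignment (so that the quantifier step has a well-formed induction hypothesis). Finally, the remark that $\varphi\in\ofo$ implies $\dual{\varphi}\in\ofo$ is immediate, since the dualization never introduces an equality when none was present; hence the same induction establishes Proposition~\ref{prop:duals} for $\ofo$ as well.
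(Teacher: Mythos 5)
Your proof is correct and is exactly the routine structural induction that the paper omits (it states only that ``the proof is a routine check''); the two points you flag — reducing to negation normal form so that $\dual{(-)}$ is defined on every subformula, and strengthening the claim to open formulas under an assignment so the quantifier step has a usable induction hypothesis — are precisely the care points of that routine check, and your base-case analysis correctly isolates why predicate literals are fixed by dualization while equality atoms must be negated.
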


We are now ready to give the syntactic definition of a completely multiplicative fragment for the one-step logics into consideration.

\begin{definition}\label{def:multfrag}
	\fcerror{Ugly!}Let $A$ be a set of names. The syntactic fragments of $\ofoe(A,\sorts)$ and $\ofo(A,\sorts)$ which are \emph{completely multiplicative} in $A'\subseteq A$ are given by
	\begin{align*}
		\mult{\ofo}{A'}(A,\sorts) &:= \{\varphi \mid \varphi^\delta \in \add{\ofo}{A'}(A,\sorts)\} \\
		\mult{\ofoe}{A'}(A,\sorts) &:= \{\varphi \mid \varphi^\delta \in \add{\ofoe}{A'}(A,\sorts)\} .
	\end{align*}
\end{definition}

\begin{proposition}
	A formula $\varphi \in \ofoe(A,\sorts)$ is completely multiplicative in $A'\subseteq A$ if and only if it is equivalent to some ${\varphi' \in \mult{\ofoe}{A'}(A,\sorts)}$.
\end{proposition}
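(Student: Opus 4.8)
The plan is to derive this statement as the Boolean dual of Theorem~\ref{thm:ofoeadd}, transporting the already-established additive characterization across the dualization operator of Definition~\ref{DEF_dual}. Before touching the two directions I would isolate two auxiliary facts. Fact (a): the dual operator is an involution, $(\varphi^{\delta})^{\delta} = \varphi$ for every $\varphi \in \ofoe(A,\sorts)$ taken in negation normal form; this is a routine induction on the structure of $\varphi$, using that each literal is self-dual while $\top/\bot$, $x\foeq y / x\not\foeq y$, $\land/\lor$ and $\exists/\forall$ are swapped by $(-)^{\delta}$ and swapped back by a second application. Fact (b): dualization preserves equivalence, i.e. $\varphi \equiv \psi$ iff $\varphi^{\delta}\equiv\psi^{\delta}$; this follows from Proposition~\ref{prop:duals} together with the observation that $\val \mapsto \val^{c}$ is a bijection on valuations, so that $(D,\val)\models\varphi^{\delta}$ iff $(D,\val^{c})\not\models\varphi$ transfers equivalences verbatim. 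Combining (a) with Definition~\ref{def:multfrag} yields the convenient reformulation $\mult{\ofoe}{A'}(A,\sorts) = \{\chi^{\delta}\mid \chi\in\add{\ofoe}{A'}(A,\sorts)\}$.

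For the left-to-right direction, suppose $\varphi$ is completely multiplicative in $A'$. Applying Proposition~\ref{prop:adddualmult} to the formula $\varphi^{\delta}$ and using involution (a), the complete multiplicativity of $\varphi = (\varphi^{\delta})^{\delta}$ is equivalent to the complete additivity of $\varphi^{\delta}$; hence $\varphi^{\delta}$ is completely additive in $A'$. By Theorem~\ref{thm:ofoeadd} there is a $\chi\in\add{\ofoe}{A'}(A,\sorts)$ with $\varphi^{\delta}\equiv\chi$, and then fact (b) gives $\varphi = (\varphi^{\delta})^{\delta}\equiv\chi^{\delta}$. Setting $\varphi' := \chi^{\delta}$, fact (a) yields $(\varphi')^{\delta} = \chi \in \add{\ofoe}{A'}(A,\sorts)$, so $\varphi'\in\mult{\ofoe}{A'}(A,\sorts)$ by Definition~\ref{def:multfrag}, and $\varphi\equiv\varphi'$ as required.

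For the right-to-left direction, suppose $\varphi\equiv\varphi'$ for some $\varphi'\in\mult{\ofoe}{A'}(A,\sorts)$. By Definition~\ref{def:multfrag} we have $(\varphi')^{\delta}\in\add{\ofoe}{A'}(A,\sorts)$, so $(\varphi')^{\delta}$ is completely additive in $A'$ by Lemma~\ref{lem:cofoeisadd}. Proposition~\ref{prop:adddualmult} then makes $((\varphi')^{\delta})^{\delta}$ completely multiplicative in $A'$, and by involution (a) this formula is exactly $\varphi'$. Finally, since complete multiplicativity is a semantic property and is therefore preserved under logical equivalence, from $\varphi\equiv\varphi'$ we conclude that $\varphi$ is completely multiplicative in $A'$.

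I expect the only genuinely delicate point to be the bookkeeping around facts (a) and (b). Definition~\ref{DEF_dual} is phrased for formulas in negation normal form, so I would first pass to an NNF representative and verify that the dual clauses really compose to the identity on the nose (in particular that positive and negated literals are both self-dual and that the quantifier and connective cases cancel), since the reformulation $\mult{\ofoe}{A'} = \{\chi^{\delta}\mid\chi\in\add{\ofoe}{A'}\}$ relies on \emph{syntactic} involution rather than mere equivalence. Everything beyond this is a purely mechanical transfer through the additive characterization; no new combinatorial analysis of types is needed, as all of that was already carried out in Lemma~\ref{lem:ofoetrans} and Theorem~\ref{thm:ofoeadd}.
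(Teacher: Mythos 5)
Your proof is correct and follows exactly the route the paper takes: the paper's own proof is a one-line citation of Proposition~\ref{prop:adddualmult}, Theorem~\ref{thm:ofoeadd} and Definition~\ref{def:multfrag}, which are precisely the three ingredients you combine. Your auxiliary facts (a) and (b) about the involutivity of $(-)^{\delta}$ and preservation of equivalence under dualization are the routine details the paper leaves implicit, and your handling of them is sound.
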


\begin{proof} This is a consequence of Proposition~\ref{prop:adddualmult}, Theorem~\ref{thm:ofoeadd} and Definition~\ref{def:multfrag}.
\end{proof}

\subsection{Effectiveness of the normal forms}

In the following corollary, we briefly show that for any formula of $\ofo$ and $\ofoe$ we can effectively compute its normal form. Moreover, if the formula is monotone we can also compute its monotone normal form (\emph{cf.}~Corollary~\ref{cor:ofopositivenf}). The same holds for complete additivity and complete multiplicativity.

\begin{corollary}\label{cor:osnormalize}
	For every $\psi\in \ofo(A,\sorts)$ and $\varphi\in\ofoe(A,\sorts)$ we can effectively calculate its normal form, monotone normal form, completely additive normal form and completely multiplicative normal form.
\end{corollary}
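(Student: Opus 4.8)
For every $\psi\in \ofo(A,\sorts)$ and $\varphi\in\ofoe(A,\sorts)$ we can effectively calculate its normal form, monotone normal form, completely additive normal form and completely multiplicative normal form.

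Let me think about this. The statement is an *effectiveness* corollary. All the hard work has been done in the preceding subsections — the normal form theorems (Theorem for $\ofo$, Theorem~\ref{thm:sbnfofoe} and Theorem~\ref{thm:bnfofoe} for $\ofoe$), the monotone fragment characterizations (Corollaries~\ref{cor:ofopositivenf}, \ref{cor:ofoepositivenf}), the additive fragment characterizations (Corollaries~\ref{cor:ofoadditivenf}, \ref{cor:ofoeadditivenf}), and the multiplicative fragment via Boolean duals (Definition~\ref{def:multfrag}, Proposition).

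So the whole content of this corollary is: "the translations used to prove all those existence results are actually algorithms." I need to verify that each of those translations is effective (computable).

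Let me trace through what's needed:

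1. **Basic normal form.** For $\ofoe$, the proof of Theorem~\ref{thm:bnfofoe} goes: $\varphi \mapsto \varphi^\uparrow \mapsto$ (strict basic form $\psi$) $\mapsto \psi^\downarrow$. The translations $(-)^\uparrow$ and $(-)^\downarrow$ are defined by explicit structural recursion, so clearly computable. The middle step — getting the strict basic form of $\varphi^\uparrow$ — goes through Theorem~\ref{thm:sbnfofoe}, whose proof constructs $\psi := \bigvee\{\varphi^=_E \mid E\in\ext{\varphi}/\sim^=_k\}$. Here $k = qr(\varphi)$, the equivalence $\sim^=_k$ has finite index (Prop~\ref{prop:eqrelofoe}(ii)), and each $\varphi^=_E$ is explicitly constructed (Prop~\ref{prop:eqrelofoe}(iii)). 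The one thing to check: can I *decide* which classes $E$ lie in $\ext{\varphi}$? Since each class $E$ is characterized by the formula $\varphi^=_E$, I can pick a representative model for $E$ (which is determined up to $\sim^=_k$ by finitely many type-counts capped at $k$) and evaluate $\varphi$ on it. This is decidable. For $\ofo$, similarly effective.

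2. **Monotone normal form.** Given $\varphi$, apply the translation $(-)^\tmono := \varphi[\lnot a(x)\mapsto \top \mid a\in A']$ (from the monotone-fragment lemmas). This is a syntactic substitution — obviously computable — and then push into basic form via step 1. To produce the *monotone* basic form of Corollaries~\ref{cor:ofopositivenf}/\ref{cor:ofoepositivenf}, I combine the two.

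3. **Completely additive normal form.** The translation $(-)^\tadd$ in Lemmas~\ref{lem:ofotrans} and \ref{lem:ofoetrans} operates on a formula already in (monotone) basic form: it inspects each disjunct and conjunct and deletes those satisfying conditions (i)–(iii) (replacing with $\bot$). Checking conditions (i)–(iii) amounts to checking set-membership and inclusion among finitely many finite sets $\Sigma,\Pi,\vlist{T}\subseteq\wp(A)$ — decidable. So $(-)^\tadd$ is computable, composed after steps 1–2.

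4. **Completely multiplicative normal form.** By Definition~\ref{def:multfrag}, $\varphi$ is completely multiplicative iff $\varphi^\delta\in\add{\llang}{A'}$. The dual $(-)^\delta$ of Definition~\ref{DEF_dual} is an explicit structural recursion, hence computable. So I compute $\varphi^\delta$, its additive normal form via step 3, and dualize back.

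**The plan.** I would prove the corollary by simply observing that every translation invoked in the preceding proofs is effective, and assembling them by composition. Concretely: (a) note $(-)^\uparrow$, $(-)^\downarrow$, $(-)^\delta$, $(-)^\tmono$, $(-)^\tadd$ are all defined by structural recursion on the formula and thus compute in finite time; (b) note that the only non-recursive step — producing the strict basic form in Theorem~\ref{thm:sbnfofoe} — is effective because $\sim^=_k$ has finite index (Prop~\ref{prop:eqrelofoe}(ii)), each class is finitely describable by capped type-counts, each $\varphi^=_E$ is explicitly given (Prop~\ref{prop:eqrelofoe}(iii)), and membership of a class in $\ext{\varphi}$ is decided by evaluating $\varphi$ on a canonical representative; (c) note that the side-conditions (i)–(iii) governing the deletions in $(-)^\tadd$ are decidable Boolean combinations of inclusions among finite subsets of $\wp(A)$. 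Then each of the four required normal forms is obtained by composing finitely many of these effective procedures, which establishes the corollary.

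**The main obstacle.** The one genuinely non-syntactic ingredient is step (b): deciding which $\sim^=_k$-classes satisfy $\varphi$. I expect this to be the only point needing care, and the argument is that an $\sim^=_k$-class is completely specified by the function sending each pair $(S,\asort)\in\wp(A)\times\sorts$ to $\min(|S|^\asort_\osmodel,\,k)$; there are finitely many such functions, each is realized by a concrete finite one-step model, and truth of $\varphi$ on a one-step model is decidable by direct unfolding of the semantics. Everything else is bookkeeping: the substitutions, dualizations, and sort-lifting/lowering translations are manifestly primitive recursive on the syntax tree, and the deletion conditions in the additive translation are decidable because they quantify only over the finitely many elements of $\wp(A)$ and over the finitely many entries of $\vlist{T},\Sigma,\Pi$. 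Hence I would keep the proof short, pointing to each earlier construction and remarking on its effectiveness rather than re-deriving anything.
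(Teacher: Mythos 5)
Your proposal is correct, but it takes a genuinely different route from the paper. The paper's proof is a ``generate and test'' argument: it enumerates candidate formulas in the target normal form (non-deterministically guessing the number of disjuncts and the parameters $k$, $\vlist{T}$, $\Pi$ of each conjunct, and, for the additive case, the set $A'$), and for each candidate checks equivalence with $\varphi$ using the \emph{decidability of monadic first-order logic with equality} (reducing the multi-sorted case to the unsorted one by adding a predicate per sort); the additive normal form is then obtained by applying $(-)^\tadd$ and keeping the formulas with $\varphi\equiv\varphi^\tadd$, and the multiplicative one by additional dualization steps. You instead extract an explicit algorithm from the existence proofs themselves: the substitutions $(-)^\uparrow$, $(-)^\downarrow$, $(-)^\tmono$, $(-)^\tadd$, $(-)^\delta$ are primitive recursive on syntax, and the single non-syntactic step --- deciding which $\sim^=_k$-classes lie in $\ext{\varphi}$ --- is handled by model-checking $\varphi$ on a canonical finite representative of each class (legitimate, since by Lemma~\ref{lem:connofoe} the truth value of a formula of quantifier rank $\le k$ is constant on each $\sim^=_k$-class, and each class is realized by a concrete finite strict one-step model with type-counts capped at $k$). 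Both arguments are sound; the paper's is shorter and fully modular (it only needs to know that the normal form \emph{exists} and that $\ofoe$ is decidable), whereas yours avoids the blind search, localizes the computational content, and would give better complexity bounds. Two small bookkeeping points in your write-up: to land literally in the monotone basic form of Corollary~\ref{cor:ofoepositivenf} you should normalize first and \emph{then} apply $(-)^\tmono$ (so that the full types $\tau_S$ become the positive types $\tau^{A'}_S$), not the other way around; and if the set $A'$ is not part of the input, you still need one equivalence test $\varphi\equiv\varphi^\tadd$ (or a search over the finitely many $A'\subseteq A$, as the paper does) to certify additivity before returning the additive normal form.
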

\begin{proof}
	We only show the corollary for the normal form of arbitrary formulas of $\ofoe$ and, as an example, the normal form for completely additive formulas of $\ofoe$. The other cases are similar left to the reader.

	According to Theorem~\ref{thm:bnfofoe}, every $\varphi \in \ofoe(A,\sorts)$ is equivalent to a formula of the form $\bigvee\bigwedge_\aSort \dbnfofoe{\vlist{T}}{\Pi}_\aSort$ where for each conjunct $\vlist{T} \in \wp(A)^k$ for some $k$ and $\Pi \subseteq \vlist{T}$. We non-deterministically guess the number of disjuncts and parameters $k$, $\Pi$ and $\vlist{T}$ for each conjunct and repeatedly check wether the formulas $\varphi$ and $\bigvee\bigwedge_\aSort \dbnfofoe{\vlist{T}}{\Pi}_\aSort$ are equivalent. This check can be done because $\ofoe$ is decidable: in~\cite{monofoe1,monofoe2} it is proved that \emph{unsorted} $\ofoe$ is decidable. Multi-sorted $\ofoe$ can be reduced to unsorted $\ofoe$ by introducing new predicates for the sorts (a standard trick). 

	Next, we also want the formulas which are completely additive to be in their corresponding normal form given by Corollary~\ref{cor:ofoeadditivenf}. What we do now is to apply the translation $(-)^\tadd$ of Lemma~\ref{lem:ofoetrans} to every formula and keep only those that satisfy that $\varphi \equiv \varphi^\tadd$. The set $A'\subseteq A$ in which the formula $\varphi$ should be completely additive is guessed non-deterministically as well, we keep the biggest set.

	The same procedure can also be performed to get normal forms for completely multiplicative formulas, with additional dualization steps.
\end{proof}

\section{Additive-weak parity automata}\label{sec:aut}

In this section we formally define the additive-weak automata discussed in the introduction (Section~\ref{sec:intro}). We start by introducing the concept of weak automata, together with some intuitions, and then move towards the definition of additive-weak automata.

\begin{definition}
\label{def:weak}
Let $\llang$ be a one-step language, and let $\aut = \tup{A,\tmap,\pmap,a_I}$
be in $\Aut(\llang,\props)$.
Given $a,b\in A$,
we say that there is a transition from $a$ to $b$ (notation: $a \leadsto b$)
if $b$ occurs in $\tmap(a,c)$ for some $c \in \wp(\props)$.
We let the \emph{reachability} relation $\ord$ denote the reflexive-transitive
closure of the relation $\leadsto$.


We say that $\pmap$ is a \emph{weak} parity condition, and $\aut$ is a
\emph{weak} parity automaton if we have
\begin{description}
\item[(weakness)] if $a \ord b$ and $b \ord a$ then $\pmap(a) = \pmap(b)$.
\end{description}
\end{definition}

The intuition is that every run of a weak automaton $\aut$ stabilizes on some strongly connected component $\mccomp$ after finitely many steps, and therefore the only parity seen infinitely often after that point will be the parity of $\mccomp$. Moreover, as only \emph{one} parity can be repeated infinitely often, the precise number does not matter; only the parity does:

\begin{fact}
Any weak parity automaton $\aut$ is equivalent to a weak parity automaton
$\aut'$ with $\pmap: A' \to \{0,1\}$. From now on we assume such a map for weak parity automata.
\end{fact}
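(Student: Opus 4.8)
The plan is to keep the state set, transition map and initial state untouched and only collapse the priorities modulo $2$. Concretely, I would set $\aut' := \tup{A,\tmap,\pmap',a_I}$ where $\pmap'(a) := 0$ if $\pmap(a)$ is even and $\pmap'(a) := 1$ otherwise. Since $\tmap$ is unchanged, the relation $\leadsto$, the reachability order $\ord$, and hence the maximal strongly connected components of $\aut$, are exactly those of $\aut'$. Because $\pmap$ is constant on each such component (weakness of $\aut$), so is $\pmap'$; therefore $\aut'$ is again a weak parity automaton, now with $\pmap' : A \to \{0,1\}$. It then remains to prove that $\aut$ and $\aut'$ accept the same transition systems.

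Fix a transition system $\model$. The crucial observation is that the acceptance games $\agame(\aut,\model)$ and $\agame(\aut',\model)$ have literally the same board and the same admissible-move relation; they differ only in the priority map. Hence it suffices to show that every infinite match is won by the same player under $\pmap$ and under $\pmap'$ (finite matches are won by whoever forces the opponent to get stuck, which does not depend on priorities). So I would analyse the priority sequence of an arbitrary infinite match.

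For this I would first argue that we may restrict attention to matches that follow the reachability graph, i.e.\ in which the sequence of basic positions $(a_0,s_0),(a_1,s_1),\dots$ satisfies $a_i \leadsto a_{i+1}$. This is where the only genuine subtlety lies: in principle $\eloise$ could choose a valuation $\val$ assigning a non-empty set to a name $b$ not occurring in $\tmap(a_i,\tscolors(s_i))$, allowing $\abelard$ to jump to a non-reachable state. However, since $\tmap(a_i,\tscolors(s_i)) \in \llang^+(A_{\tmap(a_i,\tscolors(s_i))})$, its truth depends only on the occurring names, so $\eloise$ may always shrink $\val$ to a \emph{tight} valuation, setting $\val(b) := \nada$ for every non-occurring $b$, without affecting satisfaction of the one-step formula while only removing moves of $\abelard$. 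Starting from a positional winning strategy (which exists by positional determinacy of parity games) and replacing it by its tight restriction thus preserves the winning property, and every match guided by a tight strategy does follow $\leadsto$. Along such a match the basic states visited infinitely often lie in a single maximal strongly connected component $\mccomp$, since there are no cycles between distinct components; so by weakness they all carry one and the same priority $p$.

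Finally I would compute the minimum priority occurring infinitely often. The valuation positions, which recur infinitely often in every infinite match, always carry the \emph{maximal} priority $\max(\pmap[A])$, so they never lower the minimum; the minimum priority occurring infinitely often is therefore exactly $p$, and $\eloise$ wins the match under $\pmap$ iff $p$ is even. Running the same computation for $\aut'$, the states visited infinitely often now carry priority $p \bmod 2$ while the valuation positions carry $\max(\pmap'[A]) \ge p \bmod 2$, so the minimum priority occurring infinitely often is $p \bmod 2$, and $\eloise$ wins iff $p \bmod 2 = 0$, i.e.\ iff $p$ is even. Thus the two winning conditions agree on every (tight-guided) match, the games have the same winner, and in particular $\aut$ accepts $\model$ iff $\aut'$ does. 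The main obstacle is the tight-valuation step needed to guarantee that infinite plays genuinely track the reachability relation; once that is in place, the priority bookkeeping is routine.
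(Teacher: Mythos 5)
Your proposal is correct and follows exactly the route the paper has in mind: the paper states this as a Fact without proof, justified only by the preceding intuition that every infinite run of a weak automaton stabilizes in one strongly connected component whose single priority is all that matters, which is precisely what you formalize. Your extra care with tight valuations (so that infinite matches genuinely track the reachability relation $\leadsto$) is a legitimate and correctly handled subtlety that the paper glosses over.
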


If we think about trees, the leading intuition is that weak parity automata are those unable to register non-trivial properties concerning the `vertical dimension' of input trees. In some sense, they can only describe properties of well-founded (and co-well-founded) subsets of trees. Indeed, in~\cite{MullerSaoudiSchupp92} it is shown that on trees \emph{with bounded branching} weak automata characterize weak \mso (WMSO). However, if the branching of the tree is not bounded, the story is quite different, since an extra `horizontal' constraint is required to capture WMSO. We refer to~\cite{DBLP:conf/lics/FacchiniVZ13,Zanasi:Thesis:2012,LICS14,DBLP:journals/corr/CarreiroFVZ14} for more details.

\medskip
We now turn to the second condition that we will be interested in, viz., additivity. Intuitively, this property expresses a constraint on how much of the `horizontal dimension' of an input tree the automaton is allowed to process.
First we formulate our additivity condition abstractly in the setting of $\Aut(\llang)$.
Given the semantics of the one-step language $\llang$, the (semantic) notion
of additivity/multiplicativity applies to one-step formulas (see for instance
Section~\ref{subsec:one-stepcont}). We can then formulate the following requirement on automata from $\Aut(\llang)$:

\begin{description}
\item[(additivity)] for every \emph{maximal} strongly connected component $\mccomp \subseteq A$, states $a,b \in \mccomp$ and color $c\in \wp(\props)$:
    if $\pmap(b)=1$ then $\tmap(a,c)$ is completely additive in $\mccomp$.
    If $\pmap(b)=0$, then $\tmap(a,c)$ is completely multiplicative in $\mccomp$.
\end{description}

Intuitively, the additivity restriction has the following effect: while a run of an additive automaton stays inside a connected component with parity $1$, we can assume without loss of generality that the nodes of the tree coloured with some state of $\mccomp$ form a \emph{path} in the tree. The reason being that at each step --because of complete additivity-- player \eloise can play a valuation where at most one node is colored with $\mccomp$. Therefore, if \abelard chooses the element coloured by $\mccomp$, a repetition of this step will define a path.

\medskip
For the automata used in this article we need to combine the constraints for the horizontal and vertical dimensions, yielding automata with both the weakness and additivity constraints. The intuition is that we want to define a class of automata that works with \emph{finite paths}.

\begin{definition}
An \emph{additive-weak parity automaton} is an automaton $\aut \in \Aut(\llang)$ additionally satisfying both the \textbf{(weakness)} and \textbf{(additivity)} conditions.
We let $\AutWA(\llang)$ denote the class of such automata.
\end{definition}

Observe that, so far, the additivity condition has been given semantically.
However, given that the one-step languages that we are interested in have
a \emph{syntactic characterization} of complete  additivity (see for example Theorem~\ref{thm:ofoeadd})
we will give concrete definitions of these automata that take advantage of the mentioned characterizations.

\begin{definition}
The class $\AutWA(\ofoe)$ of automata is concretely given by the automata
$\aut = \tup{A,\tmap,\pmap,a_I}$ from $\Aut(\ofoe)$ such that for every \emph{maximal} strongly connected component $\mccomp \subseteq A$ and states $a,b \in \mccomp$ 
the following conditions hold:
\begin{description}
	\itemsep 0 pt
	\item[(weakness)] $\pmap(a)=\pmap(b)$,
	\item[(additivity)] for every color $c\in\wp(\props)$:\\
	If $\pmap(a)$ is odd then $\tmap(a,c) \in \add{\ofoe^+}{\mccomp}(A)$, otherwise\\
	if $\pmap(a)$ is even then $\tmap(a,c) \in \mult{\ofoe^+}{\mccomp}(A)$.
\end{description}
\end{definition}

In the following sections we analyze certain closure properties of $\AutWA(\ofoe)$ in the class of \emph{strict} trees. Namely, closure under Boolean operations and under (weak chain) projection. We start with the latter. As usual, to prove the closure under projection we first prove a simulation theorem.

\subsection{Simulation theorem}

One of the main technical results for parity automata is the so-called ``Simulation Theorem'':

\begin{theorem}[\cite{Walukiewicz96,Walukiewicz02}]\label{thm:origsimulation}
	Every automaton $\aut\in\Aut(\ofoe)$ is equivalent (over all models) to a \emph{non-deterministic} automaton $\aut'\in\Aut(\ofoe)$.
\end{theorem}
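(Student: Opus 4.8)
The plan is to follow the classical powerset-with-bookkeeping route for turning an alternating automaton into a non-deterministic one, adapted to the one-step language $\ofoe$ and the multi-sorted normal forms of Section~\ref{sec:onestep}. Recall that a \emph{non-deterministic} automaton is one whose transitions $\tmap(a,c)$ are \emph{disjunctive}: a disjunction of basic one-step formulas $\bigvee\bigwedge_\aSort \dbnfofoe{\vlist{T}}{\Pi}_\aSort$ with the property that, once \eloise commits to a disjunct in the acceptance game, the valuation she must play marks each element of the successor domain with \emph{at most one} state. Under such transitions a run over a tree amounts to a single state-labelling of the nodes, so that \abelard's only remaining freedom is the choice of which successor to descend into; this is exactly the non-alternating behaviour we want.

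The first step is to bring every transition $\tmap(a,c)$ of $\aut$ into basic form using Theorem~\ref{thm:bnfofoe}. Next I would define the \emph{macrostate} automaton $\aut'$: its states are subsets $R \subseteq A$ of the original state space, decorated with the extra priority bookkeeping (an index-appearance record, in the style of a Safra annotation) needed to recover the parity condition. The transition of $\aut'$ at a pair $(R,c)$ simulates the family $\{\tmap(a,c)\mid a\in R\}$ in parallel: \eloise resolves all the one-step disjuncts for the states of $R$ simultaneously, producing a finite collection of requirements on the successors, and these are merged into a single disjunctive one-step formula that assigns to each successor node one macrostate. Theorem~\ref{thm:bnfofoe} is used here to guarantee that the merged transition stays inside $\ofoe$. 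The priority map $\pmap'$ of $\aut'$ is read off the annotation so that the minimal priority occurring infinitely often along a run of $\aut'$ reflects the worst behaviour over all \emph{threads} of the simulated alternating run.

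The equivalence is then proved in two directions. For ``$\aut$ accepts $\implies \aut'$ accepts'', I would take a positional winning strategy for \eloise in $\agame(\aut,\tmodel)$, which exists by positional determinacy of parity games; at each node of $\tmodel$ the set of states kept active by this strategy is a macrostate, and tracing these macrostates down the tree yields a winning strategy in $\agame(\aut',\tmodel)$, the annotation certifying the parity of every thread. For the converse, a winning run of $\aut'$ unfolds into a winning \eloise strategy in the alternating game $\agame(\aut,\tmodel)$: the disjunctive transitions determine a single marking of $\tmodel$, and \abelard's choice of a state-node pair corresponds precisely to selecting a thread, whose parity is governed by the recorded index.

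The main obstacle is transferring the \emph{acceptance} condition. The bare powerset construction tracks the reachable set of states faithfully but discards the information needed to evaluate the parity condition: in the alternating game \abelard picks a single infinite thread through the macrostates, so $\aut'$ must reject exactly when \emph{some} such thread violates the parity condition. This is a universal condition over threads, which a plain parity condition on the sequence of macrostates cannot express without additional memory. Designing the annotation so that $\pmap'$ is a genuine parity map that correctly measures the minimal priority along every thread is the delicate part, and it is here that the construction of~\cite{Walukiewicz96,Walukiewicz02} is invoked.
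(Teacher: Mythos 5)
First, note that the paper does not actually prove this statement: it is imported verbatim from Walukiewicz, with the reader referred to Definition~12 and Lemmas~19--20 of \cite{Walukiewicz96} for the definition of non-determinism and the construction. So there is no in-paper proof to match your argument against; the only question is whether your sketch would stand on its own.

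It would not, as written, and you essentially concede this yourself: the place where you write that the annotation making $\pmap'$ a genuine parity map ``is the delicate part, and it is here that the construction of \cite{Walukiewicz96,Walukiewicz02} is invoked'' is precisely the content of the theorem, so the proposal reduces the statement to itself rather than proving it. Two concrete issues would have to be repaired to turn the outline into a proof. (1) The state space $\wp(A)$ is too coarse: to evaluate the parity condition \emph{per thread} you must remember not just which states are active at a node but how each active state descends from an active state at the parent, i.e.\ you need (at least) binary relations in $\wp(A\times A)$ as macrostates --- the paper itself makes exactly this point in its historical remarks on the two-part construction, crediting \cite{ArnoldN01}; it only gets away with $\wp(A)$ there because its non-deterministic part carries the constant priority $1$, a luxury you do not have for arbitrary automata in $\Aut(\ofoe)$. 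On top of the relational macrostates one then composes a deterministic parity word automaton (Safra or a latest-appearance record) for the universal-over-threads condition; your ``index-appearance record'' names this device but never defines it or proves it correct. (2) In the direction ``$\aut$ accepts $\Rightarrow$ $\aut'$ accepts'' the claim that a positional winning strategy for $\eloise$ yields a single macrostate per node glosses over the fact that the strategy may propose different valuations at the same tree node depending on the automaton state, so the parallel one-step formulas must be merged coherently and one must verify that the merged formula is still satisfiable by a valuation assigning at most one macrostate per successor --- this is where the basic form of Theorem~\ref{thm:bnfofoe} does real work and deserves an explicit argument rather than a one-line appeal.
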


Very informally, an automaton $\aut$ is called non-deterministic when in every acceptance game $\agame(\aut,\model)$, if \abelard can choose to play both $(a,s)$ and $(b,s)$ at a given moment, then $a=b$. That is, \abelard's power boils down to being a \emph{pathfinder} in $\model$. He chooses the elements of $\model$ whereas the state of $\aut$ is `fixed' by the valuation played by \eloise, for every given $s$. For a formal definition we refer the reader to Definition~12 and Lemma~19--20 in~\cite{Walukiewicz96}.

To get a better picture of what non-determinism means, it is good to do the following: first observe that if we fix a strategy $f$ for \eloise for the game $\agame(\aut,\tmodel)$ then the whole game can be represented by a tree, whose nodes are the different admissible moves for \abelard. We assume that the automaton is clear from context and denote such a tree by $\tmodel_{\!f}$. Figure~\ref{fig:strategies1} shows the move-tree for \abelard for some fixed strategy $f$ for \eloise. Each branch of $\tmodel_{\!f}$ represents a possible $f$-guided match.


\begin{figure}[h]
\centering
\begin{tikzpicture}
\begin{scope}[
->,>=latex,
level/.style={level distance=1.5cm, sibling distance=2cm/#1}
]
\node {$s_I$}
    child {
        node {$s_0$}
            child {
                node {$s_{00}$}
                edge from parent
            }
            child {
                node {$s_{01}$}
                edge from parent
            }
            edge from parent 
    }
    child {
        node {$s_{1}$}
        child {
                node {$s_{10}$}
                edge from parent
            }
            child {
                node {$s_{11}$}
                edge from parent
            }
        edge from parent         
    };
\end{scope}

\begin{scope}[
xshift=7cm,
level/.style={level distance=1.5cm, sibling distance=3cm/#1},
level 1/.style={sibling distance=2cm}
]
\node {$(a_I,s_I)$}
    child {
        node {$(a_3,\underline{s_0})$}
            child {
                node {$(a_5,\underline{s_{00}})$}
                edge from parent
            }
            child {
                node {$(a_1,\underline{s_{00}})$}
                edge from parent
            }
            edge from parent 
    }
    child {
        node {$(a_7,\underline{s_0})$}
        edge from parent 
    }
    child {
        node {$(a_1,s_{1})$}
            child {
                node {$(a_1,\underline{s_{10}})$}
                edge from parent
            }
            child {
                node {$(a_3,\underline{s_{10}})$}
                edge from parent
            }
            child {
                node {$(a_1,s_{11})$}
                edge from parent
            }
        edge from parent         
    };
\end{scope}
\end{tikzpicture}
\caption{A tree $\tmodel$ and $\tmodel_{\!f}$ for a fixed strategy for \eloise.}
\label{fig:strategies1}
\end{figure}

Observe that in this figure the chosen strategy for \eloise is \emph{not} non-deterministic. The admissible moves which violate this condition are underlined. On trees, the notion of non-deterministic strategy can be rephrased as ``every element $s\in\tmoddom$ occurs at most once as an admissible move for \abelard.''

\begin{remark}
	The terminology ``non-deterministic'' may seem confusing, given that $\aut'$ is certainly more ``deterministic''  than $\aut$ (from the point of view of \abelard.) However, the terminology is sensible when seen from the following perspective: we say that a finite state automaton (on words) is \emph{deterministic} when the next state is uniquely determined by the current state (and the input); on the other hand, they are called \emph{non-deterministic} when \eloise can choose between different transitions, leading to the next state; finally, \emph{alternating} finite state automata are a generalization where the next state is chosen by a complex interaction of \eloise and \abelard.
	Going back to parity automata, the above theorem then says that every alternating parity automaton is equivalent to a non-deterministic automaton. In light of our brief discussion, it should be clear that non-deterministic automata are ``more deterministic'' than alternating automata.
\end{remark}

Unfortunately, the transformation performed by Theorem~\ref{thm:origsimulation} does not preserve the weakness and additivity conditions (see~\cite[Remark~3.5]{Zanasi:Thesis:2012}), and therefore does not give us non-deterministic automata for the class $\AutWA(\ofoe)$.

In this section we provide, for every automaton $\aut\in\AutWA(\ofoe)$ and $p\in\props$, an automaton $\fwa{\aut}{p}\in\AutWA(\ofoe)$ which, although not being fully non-deterministic, is specially tailored to prove the closure of $\AutWA(\ofoe)$ under finite chain projection.
%
The state space of the construction $\fwa{\aut}{p}\in\AutWA(\ofoe)$ will be the disjoint union of two parts:
\begin{itemize}
	\itemsep 0 pt 
	\item A \emph{non-deterministic} part based on $\wp(A)$; and
	\item An \emph{alternating} part based on $A$.
\end{itemize}
The non-deterministic part will basically be a powerset construction of $\aut$, and contain the initial state of the automaton. It will have very nice properties enforced by construction:
\begin{itemize}
 	\itemsep 0 pt
 	\item It will behave non-deterministically,
 	\item The parity of every element will be $1$ (trivially satisfying the weakness condition); and 
 	\item The transition map of every element will be completely additive in $\wp(A)$.
\end{itemize} 
%
The alternating part will be a copy of $\aut$ modified such that it cannot be used to read nodes colored with the propositional variable $p$. The automaton $\fwa{\aut}{p}$ will therefore be based both on states from $A$ and on ``macro-states'' from $\wp(A)$. Moreover, the transition map of $\fwa{\aut}{p}$ will be defined such that once a match goes from the non-deterministic part to the alternating part, then it cannot come back (see Fig.~\ref{fig:twopart} for an illustration). Successful runs of $\fwa{\aut}{p}$ will have the property of processing only a \emph{finite} amount of the input being in a macro-state and all the rest behaving exactly as $\aut$ (but without reading any~$p$).


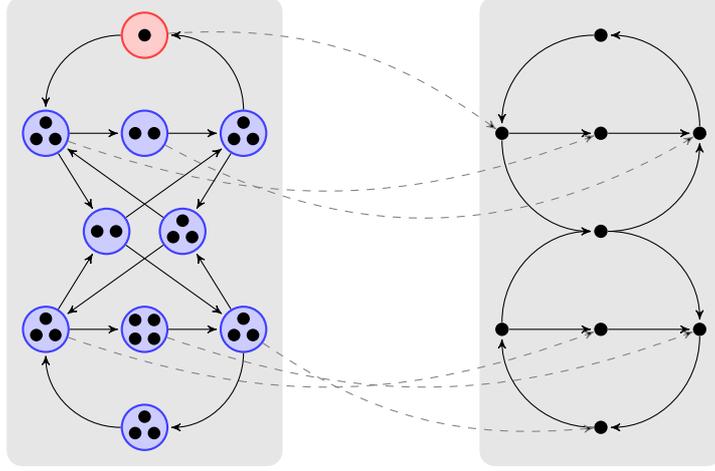
\begin{figure}[h]
\centering
\begin{tikzpicture}[node distance=1.3cm,>=stealth',bend angle=45,auto]
  \tikzstyle{node}=[circle,fill=black,minimum size=5pt,inner sep=1pt]
  \tikzstyle{place}=[circle,thick,draw=blue!75,fill=blue!20,minimum size=6mm]
  \tikzstyle{iplace}=[place,draw=red!75,fill=red!20]
  \tikzstyle{inter}=[dashed,draw=black!50,->]

  \begin{scope}
    \node [node] (w1)                                    {};
    \node [node] (c1) [below of=w1]                      {};
    \node [node] (s)  [below of=c1] {};
    \node [node] (c2) [below of=s]                       {};
    \node [node] (w2) [below of=c2]                      {};

    \node [node] (e1) [left of=c1] {}
      edge [pre,bend left]                  (w1)
      edge [post,bend right]                (s)
      edge [post]                           (c1);

    \node [node] (e2) [left of=c2] {}
      edge [pre,bend right]                 (w2)
      edge [post,bend left]                 (s)
      edge [post]                           (c2);

    \node [node] (l1) [right of=c1] {}
      edge [pre]                            (c1)
      edge [pre,bend left]                  (s)
      edge [post,bend right] (w1);

    \node [node] (l2) [right of=c2] {}
      edge [pre]                            (c2)
      edge [pre,bend right]                 (s)
      edge [post,bend left]         (w2);
  \end{scope}

  \begin{scope}[xshift=-6cm]
    \node [iplace,tokens=1] (w1')   {}
      edge [inter,bend angle=20, bend left] (e1);

    \node [place,tokens=2] (c1') [below of=w1'] {}
    	edge [inter,bend angle=30, bend right] (l1);

    \node [place,tokens=2] (s1') [below of=c1',xshift=-5mm]      {};
    \node [place,tokens=3]
                      (s2') [below of=c1',xshift=5mm] {};
    \node [place,tokens=4]     (c2') [below of=s1',xshift=5mm]                      {}
    	edge [inter,bend angle=20, bend right] (l2);

    \node [place,tokens=3]
                      (w2') [below of=c2']                                 {};

    \node [place,tokens=3] (e1') [left of=c1'] {}
      edge [pre,bend left]                  (w1')
      edge [post]                           (s1')
      edge [pre]                            (s2')
      edge [post]                           (c1')
      edge [inter,bend angle=20, bend right] (c1);

    \node [place,tokens=3] (e2') [left of=c2'] {}
      edge [pre,bend right]                 (w2')
      edge [post]                           (s1')
      edge [pre]                            (s2')
      edge [post]                           (c2')
      edge [inter,bend angle=20, bend right] (c2);

    \node [place,tokens=3] (l1') [right of=c1'] {}
      edge [pre]                            (c1')
      edge [pre]                            (s1')
      edge [post]                           (s2')
      edge [post,bend right]  (w1');

    \node [place,tokens=3] (l2') [right of=c2'] {}
      edge [pre]                            (c2')
      edge [pre]                            (s1')
      edge [post]                           (s2')
      edge [post,bend left]        (w2')
      edge [inter,bend angle=20, bend right] (w2);
  \end{scope}

  \begin{pgfonlayer}{background}
    \filldraw [line width=4mm,join=round,black!10]
      (w1'.north  -| l1.east) rectangle (w2'.south  -| e1.west)
      (w1'.north -| l1'.east) rectangle (w2'.south -| e1'.west);
  \end{pgfonlayer}
\end{tikzpicture}
\caption{Two-part construction, initial state in red (illustrative)}
\label{fig:twopart}
\end{figure}

The key property of $\fwa{\aut}{p}$, which we will use to prove the closure under finite chain projection, is that for every tree $\tmodel$ and proposition $p\in\props$, the following holds:
\[
\tmodel \models \fwa{\aut}{p} \quad\text{iff}\quad \tmodel[p\resto X_p] \models \aut \text{ for some finite chain $X_p \subseteq \tmoddom$.}
\]

This finishes the intuitive explanations and we now turn to the necessary definitions to prove the results. The first step is to define a translation on the sentences associated with the
transition map of the original additive-weak automaton, which will aid us to define the transition map of the non-deterministic part. Henceforth, we use the notation $\shA$ to denote the set $\wp(A)$.

\begin{definition}\label{def:nbl}
Let $\alpha\in \ofoe^+(A,\sorts)$ be of the shape $\posdbnfofoe{\vlist{T}}{\Pi}_\aSort$ for some $\vlist{T} \in \wp(A)^k$ and $\Pi \subseteq \vlist{T}$. We say that $\alpha'\in\ofoe^+(\shA \cup A,\sorts)$ is a \emph{non-branching lifting} of $\alpha$ if
\begin{enumerate}[(i)]
	\itemsep 0 pt
	\item $\alpha' = \posdbnfofoe{\vlist{R}}{\Pi}_\aSort$ for some $\vlist{R} \in \wp(\shA \cup A)^k$,
	\item For every $i$, either:
			(a) $R_i = T_i$, or 
			(b) $T_i \neq \nada$ and $R_i = \{T_i\}$.
	\item Case~(ii.b) occurs at most once.
\end{enumerate}
%
Consider now $\psi\in \ofoe^+(A,\sorts)$ of the shape $\bigwedge_\aSort \alpha_\aSort$. We say that $\psi'\in\ofoe^+(\shA \cup A,\sorts)$ is a \emph{non-branching lifting} of $\psi$ if $\psi' = \bigwedge_\aSort \alpha'_\aSort$ and 
\begin{enumerate}[(i)]
	\itemsep 0 pt
	\item For every $\aSort \subseteq \sorts$, either:
			(a) $\alpha'_\aSort = \alpha_\aSort$, or 
			(b) $\alpha'_\aSort$ is a non-branching lifting of $\alpha_\aSort$.
	\item Case~(i.b) occurs at most once.
\end{enumerate}
Observe that every such $\alpha'$ is completely additive in $\shA$.
\end{definition}

\begin{definition}\label{def:bigpsi}
Let $\aut = \tup{A,\tmap,\pmap,a_I} \in \AutWA(\ofoe)$. Let $c \in \wp(\props)$ be a color and $Q \in \shA$ be a macro-state. First consider the formula
$
	\bigwedge_{a \in Q} \tmap(a,c). 
$
By Corollary~\ref{cor:ofoepositivenf} there is a formula $\Phi_{Q,c} \in \ofoe^+(A)$ such that $\Phi_{Q,c}\equiv\bigwedge_{a \in Q} \tmap(a,c)$ and $\Phi_{Q,c}$ is in the basic form $\bigvee_j \varphi_j$ where
$\varphi_j = \bigwedge_\aSort \posdbnfofoe{\vlist{T}}{\Pi}_\aSort$.
%
We define
\[
\Psi_{Q,c} := \bigvee_j\bigvee\{ \psi \mid \psi \text{ is a non-branching lifting of } \varphi_j\}.
\]
Observe that $\Psi_{Q,c} \in \ofoe^+(\shA\cup A)$ and $\Psi_{Q,c}$ is completely additive in $\shA$. The latter is because the non-branching liftings have this property, which is preserved by disjunction.
\end{definition}

\noindent
We are finally ready to define the two-part construct.

\begin{definition}\label{def:fc2}
Let $\aut = \tup{A,\tmap,\pmap,a_I}$ belong to $\AutWA(\ofoe,\oprops)$ and let $p$ be a propositional variable. We define the \emph{two-part construct of $\aut$ with respect to $p$} as the automaton $\fwa{\aut}{p} = \tup{A^{\fconst},\tmap^{\fconst},\pmap^{\fconst},a_I^{\fconst}} \in \AutWA(\ofoe,\oprops)$ given by:
\[
\begin{array}{lcl}
	A^{\fconst} &:=& A \cup \shA \\
	a_I^{\fconst} &:=& \{a_I\}\\ 
	\phantom{m}\\
	\pmap^{\fconst}(Q) &:=& 1 \\
	\pmap^{\fconst}(a) &:=& \pmap(a)
\end{array}
\hspace*{1cm}
\begin{array}{lcl}
	\tmap^{\fconst}(Q,c) &:=& \Psi_{Q,c}\\
	\phantom{m}\\
	\tmap^{\fconst}(a,c) &:=& 
	\begin{cases}
	\bot & \text{if $p\in c$,}\\
	\tmap(a,c) & \text{otherwise.}
	\end{cases}
\end{array}
\]
\end{definition}

The next definition introduces some notions of strategies that are closely related to our desiderata on the two-part construct.

\begin{definition}\label{def:StratfunctionalFinitary}
Given an automaton $\aut\in \Aut(\llang)$, a subset $B\subseteq A$ of the states of $\aut$, and a tree~$\tmodel$; a strategy $f$ for \eloise in $\agame(\aut,\tmodel)$ is called:
\begin{itemize}
	\itemsep 0pt
	\item \emph{Functional in $B$} (denoted `F in $B$', for short) if for each node $s\in\tmodel$ there is at most one $b \in B$ such that $(b,s)$ belongs to $\tmodel_{\!f}$.
	\item \emph{Non-branching in $B$} (denoted `NB in $B$', for short) if all the nodes of $\tmodel_{\!f}$ with a state from $B$ belong to the same branch of $\tmodel_{\!f}$.
	\item \emph{Well-founded in $B$} (denoted `WF in $B$', for short) if the set of nodes of $\tmodel_{\!f}$ with a state from $B$ are all contained in a well-founded subtree of $\tmodel_{\!f}$.
\end{itemize}
\end{definition}



\begin{figure}
\centering
\begin{tikzpicture}
\begin{scope}[
level/.style={level distance=1.5cm, sibling distance=3cm/#1},
level 1/.style={sibling distance=2.5cm}
]
\node {$(a_I,s_I)$}
    child {
        node {$(a_3,s_0)$}
            child {
                node {$(a_5,s_{00})$}
                edge from parent
            }
            edge from parent 
    }
    child {
        node {$(a_1,s_{1})$}
            child {
                node {$(a_3,s_{10})$}
                edge from parent
            }
            child {
                node {$(a_1,s_{11})$}
                edge from parent
            }
        edge from parent         
    };
\end{scope}


\begin{scope}[
xshift=5.5cm,
level/.style={level distance=1.5cm, sibling distance=3cm/#1},
level 1/.style={sibling distance=2.5cm}
]
\node {$(a_I,s_I)$}
    child {
        node {$(a_1,s_{1})$}
            child {
                node {$(a_3,s_{10})$}
                edge from parent
            }
        edge from parent         
    };
\end{scope}
\end{tikzpicture}
\caption{Functional and functional+non-branching strategies.}
\end{figure}

Before proving that the two-part construct satisfies nice properties we need a few propositions. The following two lemmas show how to go from admissible moves in the two-part construct to the original automaton and vice-versa.

\begin{lemma}[functional to alternating]\label{lem:os-nd-to-alt}
	Given an automaton $\aut\in\AutWA(\ofoe)$, a macro-state $Q\in\shA$ and a color ${c\in \wp(\props)}$ such that ${(D,\val_{Q,c}) \models \Psi_{Q,c}}$ for some $\val_{Q,c}:\shA\cup A\to\wp(D)$; there is $\uval:A\to\wp(D)$ such that
		\begin{enumerate}[(i)]
			\itemsep 0 pt
			\item $(D,\uval) \models \tmap(a,c)$ for all $a\in Q$,
			\item If $d \in \uval(b)$ then
				$d \in \val_{Q,c}(b)$, or
				$d \in \val_{Q,c}(Q')$ for some $Q'\in \shA$ such that $b \in Q'$.
		\end{enumerate}
\end{lemma}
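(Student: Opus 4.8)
The plan is to read off a witness valuation $\uval$ from $\val_{Q,c}$ by \emph{distributing} the elements that $\val_{Q,c}$ assigns to a macro-state among the individual names that this macro-state contains. First I would unpack the hypothesis $(D,\val_{Q,c})\models\Psi_{Q,c}$. By Definition~\ref{def:bigpsi} this means that some disjunct is satisfied, i.e.\ there is an index $j$ and a non-branching lifting $\psi$ of $\varphi_j=\bigwedge_\aSort\posdbnfofoe{\vlist{T}}{\Pi}_\aSort$ with $(D,\val_{Q,c})\models\psi$. Since $\Phi_{Q,c}\equiv\bigwedge_{a\in Q}\tmap(a,c)$ and $\varphi_j$ is one of its basic-form disjuncts, it suffices to produce a $\uval$ with $(D,\uval)\models\varphi_j$: this at once yields $(D,\uval)\models\tmap(a,c)$ for every $a\in Q$, which is clause~(i).

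The valuation I would use is
\[
\uval(b) := \val_{Q,c}(b) \cup \bigcup\{\val_{Q,c}(Q') \mid Q'\in\shA,\ b\in Q'\} \qquad (b\in A).
\]
Clause~(ii) is then immediate by inspection: any $d\in\uval(b)$ either lies in $\val_{Q,c}(b)$ or in $\val_{Q,c}(Q')$ for some macro-state $Q'\in\shA$ with $b\in Q'$. Note also that $\uval(b)\supseteq\val_{Q,c}(b)$ for every $b\in A$, so $\uval$ extends the $A$-part of $\val_{Q,c}$; this monotonicity will be used repeatedly below.

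For clause~(i) I would verify $(D,\uval)\models\posdbnfofoe{\vlist{T}}{\Pi}_\aSort$ one sort at a time, reusing the witnesses that satisfy $\psi$. Fix $\aSort$; by hypothesis $(D,\val_{Q,c})$ satisfies the $\aSort$-conjunct $\alpha'_\aSort$ of $\psi$, which (Definition~\ref{def:nbl}) is either $\posdbnfofoe{\vlist{T}}{\Pi}_\aSort$ itself or a non-branching lifting $\posdbnfofoe{\vlist{R}}{\Pi}_\aSort$ in which at most one coordinate $\underline{i}$ has $R_{\underline{i}}=\{T_{\underline{i}}\}$ (a single macro-state name) and $R_i=T_i$ otherwise. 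Let $\vlist{x}$ be the witnessing elements, all of sort exactly $\aSort$. Because sorts and the distinctness predicate $\arediff{\vlist{x}}$ depend only on $D$ and not on the valuation, the same tuple $\vlist{x}$ remains admissible against $\uval$. For an unlifted coordinate $i$ we have $x_i\in\val_{Q,c}(a)\subseteq\uval(a)$ for every $a\in T_i$. For the lifted coordinate, $x_{\underline{i}}\in\val_{Q,c}(T_{\underline{i}})$ where $T_{\underline{i}}\in\shA$ is a macro-state name, and since $\uval(a)\supseteq\val_{Q,c}(T_{\underline{i}})$ for every $a\in T_{\underline{i}}$ (as $T_{\underline{i}}$ is then one of the macro-states $Q'\ni a$ in the union defining $\uval$), we obtain $x_{\underline{i}}\in\uval(a)$ for all $a\in T_{\underline{i}}$; this is exactly where the distribution does its work. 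Finally, the universal part $\forall z{:}\aSort!.(\arediff{\vlist{x},z}\lthen\bigvee_{S\in\Pi}\tau^+_S(z))$ has $\Pi\subseteq\wp(A)$ and is positive in the $A$-names, while its antecedent is valuation-free; hence it transfers from $\val_{Q,c}$ to the larger $\uval$ by monotonicity. Thus $(D,\uval)\models\posdbnfofoe{\vlist{T}}{\Pi}_\aSort$, and ranging over all sorts gives $(D,\uval)\models\varphi_j$, establishing~(i).

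The only genuinely substantive step is the lifted-coordinate case, and the reason the argument closes there is the deliberate choice of $\uval$: a macro-state $Q'\in\shA$ \emph{is} a set of $A$-names, so an element coloured by $Q'$ may be handed simultaneously to every name it contains. Everything else is bookkeeping — checking that sorts and distinctness are valuation-independent and that the positive universal part is monotone. I do not expect to need the ``at most once'' clause of the non-branching lifting for this direction: the distribution argument goes through uniformly no matter how many coordinates are lifted, so that constraint is only relevant for the converse (alternating-to-functional) passage and for the finiteness of the resulting chain.
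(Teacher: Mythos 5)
Your proposal is correct and follows essentially the same route as the paper: the same valuation $\uval(b) := \val_{Q,c}(b) \cup \bigcup_{b\in Q'}\val_{Q,c}(Q')$, and the same key step of showing $(D,\uval)\models\varphi_j$ by distributing each macro-state's elements to the names it contains. You simply spell out the witness-by-witness and universal-part checks that the paper leaves implicit, and your closing observation that the ``at most once'' clause is not needed for this direction is accurate.
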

\begin{proof}
	Define $\uval:A\to\wp(D)$ as
	\[
	\uval(b) := V_{Q,c}(b) \cup \bigcup_{b\in Q'} V_{Q,c}(Q').
	\]
	Recall that
	\[
		\Psi_{Q,c} := \bigvee_j\bigvee\{ \psi \mid \psi \text{ is a non-branching lifting of } \varphi_j\}.
	\]
	%
	As a first step, let $(D,\val_{Q,c}) \models \psi$ where $\psi$ is a non-branching lifting of some $\varphi_j$.

	\begin{claimfirst}
		$(D,\uval) \models \bigwedge_{a\in Q}\tmap(a,c)$.
	\end{claimfirst}
	\begin{pfclaim}
		We show that $(D,\uval) \models \varphi_j$. This is enough because $\bigwedge_{a\in Q}\tmap(a,c) \equiv \bigvee_j \varphi_j$. The only interesting case is that of the predicates in $\shA$. Observe that in $\psi$ there can be at most one $Q'\in\shA$. If there are none then $\psi\equiv\varphi_j$ and we are done. Suppose that $Q'(x)$ occurs in $\psi$, then $\bigwedge\{b(x) \mid b\in Q'\}$ occurs in $\varphi_j$ in the same place. It is clear from the definition of $\val$ that if $d\in \val_{Q,c}(Q')$ then $d\in \uval(b)$ for every $b\in Q'$.
	\end{pfclaim}
	It is direct from the claim that $(D,\uval) \models \tmap(a,c)$ for all $a\in Q$.
\end{proof}

\begin{lemma}[alternating to functional]\label{lem:os-alt-to-nd}
	Let $\aut$ belong to $\AutWA(\ofoe)$, $Q\in\shA$ be a macro-state, and $c\in \wp(\props)$ be a color. Let $\{\val_{a,c}:A\to\wp(D) \mid a\in Q\}$ be a family of valuations such that ${(D,\val_{a,c}) \models \tmap(a,c)}$ for each $a\in Q$.
	Then, for every $P\subseteq D$ with $|P| \leq 1$ there is a valuation $\val_{Q,c}:A\cup\shA\to\wp(D)$ such that
		\begin{enumerate}[(i)]
			\itemsep 0 pt
			\item $(D,\val_{Q,c}) \models \Psi_{Q,c}$,
			\item If $d \in \val_{Q,c}(b)$ then
				$d \in \val_{a,c}(b)$ for some $a \in Q$.
			\item If $d \in \val_{Q,c}(Q')$ then
				$d \in \val_{a,c}(b)$ for some $a\in Q$, $b\in Q'$.
			\item For every $(a,s) \in Z_{\val_{Q,c}}$ we have that $a\in\shA$ iff $s\in P$.
		\end{enumerate}
\end{lemma}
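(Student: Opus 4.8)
The plan is to reduce satisfaction of $\Psi_{Q,c}$ to satisfaction of $\Phi_{Q,c}\equiv\bigwedge_{a\in Q}\tmap(a,c)$ and then to realise a single macro-state witness at the designated point of $P$. First I would merge the given family into one valuation $\uval\colon A\to\wp(D)$ by $\uval(b):=\bigcup_{a\in Q}\val_{a,c}(b)$. Every transition formula is positive, hence monotone in all of $A$ (Definition~\ref{def:parity_aut}), and $\val_{a,c}(b)\subseteq\uval(b)$; so monotonicity yields $(D,\uval)\models\tmap(a,c)$ for all $a\in Q$, whence $(D,\uval)\models\Phi_{Q,c}=\bigvee_j\varphi_j$ (Definition~\ref{def:bigpsi}). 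Fix a disjunct $\varphi_j=\bigwedge_\aSort\posdbnfofoe{\vlist{T}}{\Pi}_\aSort$ satisfied by $\uval$, together with a tuple of distinct witnesses for each sort-conjunct. When $P=\nada$ I take $\val_{Q,c}$ equal to $\uval$ on $A$ and empty on $\shA$: then $(D,\val_{Q,c})\models\varphi_j$, which is the trivial non-branching lifting of itself and so a disjunct of $\Psi_{Q,c}$; condition (ii) is immediate, (iii) is vacuous, and (iv) holds since no macro-state is used and $P=\nada$.

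For $P=\{s\}$ the shape of the construction is forced by (iv): $s$ must carry macro-colours only, and macro-colours may touch $s$ only. I would therefore put $\val_{Q,c}(b):=\uval(b)\setminus\{s\}$ for $b\in A$ (this alters types at $s$ alone) and then decide the macro-colour of $s$. Write $\aSort_0$ for the exact sort of $s$ and $S_s:=\{b\in A\mid s\in\uval(b)\}$ for its $\uval$-type. The two structural facts I rely on are that $\Pi\subseteq\vlist{T}$ in every conjunct and that, in the $\aSort_0$-conjunct, $s$ is either one of the witnesses or a non-witness realising some $\tau^+_S$ with $S\in\Pi$. This gives a dichotomy. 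If $s$ can occupy a witness position $i_0$ with $T_{i_0}\neq\nada$ and $T_{i_0}\subseteq S_s$, I colour $s$ by the single macro-state $T_{i_0}\in\shA$ (all other macro-states empty) and satisfy the non-branching lifting of $\varphi_j$ that replaces $\tau^+_{T_{i_0}}(x_{i_0})$ by the macro-predicate $T_{i_0}(x_{i_0})$ (Definition~\ref{def:nbl}). Otherwise $s$ is, or can be turned into, an empty-type element, and I keep it uncoloured and use the trivial lifting $\varphi_j$.

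The main obstacle is to verify that in each branch the chosen lifting is genuinely satisfied after $s$ is recoloured, i.e.\ to carry out the witness reassignment in the $\aSort_0$-conjunct; the other sort-conjuncts are untouched, since they quantify only over elements of sort $\aSort\neq\aSort_0$ and deleting $s$ leaves their witnesses and universal parts intact. For sort $\aSort_0$ I would split on the role of $s$ among the fixed witnesses: if $s$ is already a witness of a non-empty type, keep the witness set and lift that position; if $s$ is a witness of empty type but $T_{i_1}\subseteq S_s$ for some nonempty $T_{i_1}$, swap the roles of $s$ and the old witness of position $i_1$, the displaced element still meeting $\top=\tau^+_{T_{i_0}}$; and if $s$ is a non-witness realising $S^*=T_{i^*}\in\Pi$ with $S^*\neq\nada$, swap $s$ in for the old witness of position $i^*$, the displaced element now realising $S^*\in\Pi$ so that the universal clause survives. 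In each case the witness set remains a set of distinct elements, the universal quantifier ranges over the same non-witnesses (whose $A$-types are unchanged by deleting $s$), and the only position not met by $A$-colours is the lifted one, now met by the macro-predicate. The fallback to the trivial lifting is used exactly when no nonempty $T\subseteq S_s$ is available, in which case $s$ fits an empty-type slot (via an empty-type witness or via $\nada\in\Pi$) and $\varphi_j$ stays satisfied with $s$ uncoloured.

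Finally I would read off (ii)--(iv). Condition (ii) holds because $\val_{Q,c}(b)\subseteq\uval(b)=\bigcup_{a\in Q}\val_{a,c}(b)$. Condition (iii) holds because the unique macro-coloured point $s$ lies in $\val_{Q,c}(T_{i_0})$ with $T_{i_0}\subseteq S_s$, so for every $b\in T_{i_0}$ some $a\in Q$ has $s\in\val_{a,c}(b)$. Condition (iv) holds by construction: $s$ carries only the macro-colour $T_{i_0}$ and no other element carries any macro-colour. I expect the witness-reassignment step to be the delicate point; it uses essentially only $\Pi\subseteq\vlist{T}$ and the fact, guaranteed by Definition~\ref{def:nbl}, that the liftings introduce at most one macro-state.
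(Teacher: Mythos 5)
Your proposal is correct and follows essentially the same route as the paper's proof: merge the family into $\uval(b):=\bigcup_{a\in Q}\val_{a,c}(b)$, use monotonicity to get a satisfied disjunct $\varphi_j$ of $\Phi_{Q,c}$, treat $P=\nada$ trivially, and for $P=\{s\}$ satisfy a non-branching lifting by a witness swap that exploits $\Pi\subseteq\vlist{T}$ to reduce the non-witness case to the witness case. The only divergence is bookkeeping: you colour $s$ with the position's type $T_{i_0}$ (and handle the empty-type/fallback slots explicitly) where the paper colours $s$ with its full $\uval$-type $\{\val^\natural_t(s)\}$ --- your choice is, if anything, the more careful one for making the macro-predicate of the lifted position literally true.
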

\begin{proof}
	We use an auxiliary valuation $\val_t:A\to\wp(D)$ defined as $\val_t(b) := \bigcup_{a\in Q} \val_{a,c}(b)$.

	\begin{claimfirst}
		$(D,\val_t) \models \bigwedge_{a\in Q}\tmap(a,c)$.
	\end{claimfirst}
	\begin{pfclaim}
		Observe that for every $a\in Q$, $b\in A$ we have $V_{a,c}(b) \subseteq V_t(b)$ then by monotonicity we get that $(D,\val_t) \models \tmap(a,c)$ for every $a\in Q$.
	\end{pfclaim}
	Define the valuation $\val_{Q,c}:A\cup\shA\to\wp(D)$, using the alternative marking representation $\val^\natural_{Q,c}:D\to\wp(A\cup\shA)$, as follows:
	\[
	\val^\natural_{Q,c}(d) :=
	\begin{cases}
		\val^\natural_t(d) & \text{if $d\notin P$},\\
		\{\val^\natural_t(d)\} & \text{if $d\in P$}.
	\end{cases}
	\]
	and recall that $\bigwedge_{a\in Q}\tmap(a,c) \equiv \bigvee_i \varphi_i$ and
	\[
		\Psi_{Q,c} := \bigvee_j\bigvee\{ \psi \mid \psi \text{ is a non-branching lifting of } \varphi_j\}.
	\]
	Assume that $(D,\val_t) \models \varphi_j$, we show that $(D,\val_{Q,c}) \models \psi$ for some non-branching lifting of $\varphi_j$. If $P$ is empty then $\val_{Q,c} = V_t$ and as $\varphi_j$ is itself a non-branching lifting of $\varphi_j$ and $(D,\val_t) \models \varphi_j$, we can conclude that $(D,\val_{Q,c}) \models \varphi_j$ and we are done.

	If $P = \{d\}$ we proceed as follows: first recall that the shape of $\varphi_j$ is $\varphi_j = \bigwedge_\aSort \posdbnfofoe{\vlist{T}}{\Pi}_\aSort$. Let $\aSort_d\subseteq \sorts$ be the set of sorts to which $d$ belongs. We show that $(D,\val_{Q,c}) \models \psi_{\aSort_d}$ for some non-branching lifting $\psi_{\aSort_d}$ of $\posdbnfofoe{\vlist{T}}{\Pi}_{\aSort_d}$. This will be enough, since it is easy to show that in that case $\psi := \psi_{\aSort_d}\land \bigwedge_{\aSort\neq \aSort_d} \posdbnfofoe{\vlist{T}}{\Pi}_\aSort$ is a non-branching lifting of $\varphi_j$ (by Definition~\ref{def:nbl}) and $(D,\val_{Q,c}) \models \psi$.

	Our hypothesis is that $(D,\val_t) \models \posdbnfofoe{\vlist{T}}{\Pi}_{\aSort_d}$, which gives a full description of the elements of $D$ with sorts $\aSort_d$. Namely, if we restrict to the elements of sorts $\aSort_d$, then:
	\begin{itemize}
		\itemsep 0pt
		\item There are $d_1,\dots,d_k\in D$ such that $d_i$ has type $T_i$,
		\item Every $d'\in D$ which is not among $d_1,\dots,d_k$ satisfies some type in $\Pi$.
	\end{itemize}
	We consider the following two cases:
	\begin{enumerate}[(1)]
		\item Suppose that $d=d_i$ for some $i$; without loss of generality assume that $i=1$. In this case it is easy to see that $(D,\val_{Q,c}) \models \posdbnfofoe{\{T_1\}{\cdot}T_2\cdots T_k}{\Pi}_{\aSort_d}$, which is a non-branching lifting of $\posdbnfofoe{\vlist{T}}{\Pi}_{\aSort_d}$.
		\item Suppose that $d \neq d_i$ for all $i$. Then, $d$ must have some type $S_d\in \Pi$. The key observation is that $\Pi \subseteq \vlist{T}$. Hence, there is some $T_i$ such that $T_i = S_d$. Observe now that if we `switch' the elements $d$ and $d_i$ we end up in case (1).\qedhere
	\end{enumerate}
\end{proof}

\begin{remark}
	Observe that, without loss of generality, \eloise can always choose to play \emph{minimal} valuations. That is, in a basic position $(a,s)$ she plays a valuation $\val:A\to\wp(R[s])$ such that for every $t\in R[s]$ and $a\in A$, the element $t$ belongs to $\val(b)$ only if it is strictly needed to make $\tmap(a,\tscolors(s))$ true. That is, she plays valuations $\val$ such that
	\[
		\text{If}\quad (D,\val) \models \tmap(a,\tscolors(s)) \quad\text{then}\quad (D,\val[b \mapsto \val(b)\setminus\{t\}]) \not\models \tmap(a,\tscolors(s))
	\]
	for all $a,b\in A$ and $t\in \val(b)$.
	In what follows we assume that \eloise plays minimal valuations and we call such strategies minimal. For more detail we refer the reader to~\cite[Proposition~2.13]{Zanasi:Thesis:2012}.
\end{remark}

\noindent
Finally we can state and prove the properties of the two-part construct.

\begin{theorem}\label{thm:fc2}
Let $\aut\in\AutWA(\ofoe,\oprops)$, $p\in\props$ be a propositional variable, and $\tmodel$ be a $\oprops$-tree. The following holds:
\begin{enumerate}
  \itemsep 0 pt
  \item\label{point:fc2funConstrAut}
  $\fwa{\aut}{p}\in\AutWA(\ofoe,\oprops)$.
  \item\label{point:fc2funConstrStrategy}
  Every winning strategy for $\eloise$ in
  the game $\agame(\fwa{\aut}{p},\tmodel)@(a_I^{\fconst},s_I)$ can be assumed to be functional, non-branching and well-founded in $\shA$.
  %
  \item\label{point:fc2funConstrEquiv}
  $\fwa{\aut}{p}$ accepts $\tmodel$ iff 
  $\aut$ accepts $\tmodel[p\resto X_p]$ for some finite chain $X_p\subseteq \tmoddom$.
\end{enumerate}
\end{theorem}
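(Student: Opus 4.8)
The plan is to prove the three claims in order, exploiting the two-part structure of $\fwa{\aut}{p}$ together with the two translation lemmas, Lemma~\ref{lem:os-nd-to-alt} and Lemma~\ref{lem:os-alt-to-nd}.

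For claim~\ref{point:fc2funConstrAut}, the key structural observation is that every maximal strongly connected component of $\fwa{\aut}{p}$ lies entirely inside $\shA$ or entirely inside $A$. Indeed, for $a\in A$ the transition $\tmap^{\fconst}(a,c)$ is either $\bot$ or $\tmap(a,c)\in\ofoe^+(A)$, so only names from $A$ occur in it; hence $a\ord b$ with $a\in A$ forces $b\in A$, and no cycle can mix the two parts. For a component $\mccomp\subseteq\shA$, weakness is immediate since $\pmap^{\fconst}(Q)=1$ for all macro-states, and additivity holds because each $\Psi_{Q,c}$ is completely additive in $\shA$ (Definition~\ref{def:bigpsi}), hence in the subset $\mccomp$. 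For a component $\mccomp\subseteq A$, deleting edges via the $\bot$-guard can only refine components, so $\mccomp$ is contained in a maximal component of $\aut$; as $\pmap^{\fconst}$ agrees with $\pmap$ on $A$, weakness is inherited, and additivity follows because $\bot$ is trivially completely additive and multiplicative, while each surviving $\tmap(a,c)$, being additive/multiplicative in the larger $\aut$-component, remains so in the subset $\mccomp$.

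For claim~\ref{point:fc2funConstrStrategy}, I would take an arbitrary winning strategy and, by the minimality remark preceding the theorem, assume $\eloise$ plays minimal valuations. Since $\tmap^{\fconst}(Q,c)=\Psi_{Q,c}$ is completely additive in $\shA$, a minimal satisfying valuation colours at most one successor with a macro-state; propagating this along the run shows that the macro-state positions lie on a single branch (non-branching) and that each node carries at most one macro-state (functional). Well-foundedness then comes from the parity condition: every macro-state has parity $1$, so a run confined to $\shA$ forever would see minimum parity $1$ infinitely often and be lost by $\eloise$; thus in a winning strategy the non-branching family of macro-state positions is finite, in particular well-founded.

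For claim~\ref{point:fc2funConstrEquiv} I would transfer winning strategies through the two lemmas. From $\fwa{\aut}{p}$ to $\aut$: fix a strategy that is functional, non-branching and well-founded in $\shA$ (claim~\ref{point:fc2funConstrStrategy}) and let $X_p$ be the set of nodes carrying a macro-state, which is then a finite chain. On macro-state positions Lemma~\ref{lem:os-nd-to-alt} turns $\eloise$'s valuation for $\Psi_{Q,c}$ into a single $\uval$ satisfying $\tmap(a,c)$ for all $a\in Q$; on alternating positions the strategy already plays genuine $\aut$-moves, and since any position with $p\in c$ yields $\bot$, every alternating node is $p$-free, so its colour coincides with the one in $\tmodel[p\resto X_p]$. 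As the parity-$1$ macro-states occur only finitely often, every infinite match has its tail in the $A$-part where the parities of the two automata agree, so the resulting $\aut$-strategy is winning. Conversely, from a winning $\aut$-strategy $g$ on $\tmodel[p\resto X_p]$, let the macro-state component track the reachable set of $\aut$-states along $X_p$ and apply Lemma~\ref{lem:os-alt-to-nd} with $P$ the singleton next chain node to push the macro-state one step along $X_p$; once $X_p$ is exhausted the run drops into the alternating copy of $\aut$, which is sound precisely on the part where $p$ is switched off.

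The delicate point, and the step I expect to be the main obstacle, is the interaction between the $\bot$-guard of the alternating part and the $p$-restricted colouring: one must verify that the nodes handled by the alternating copy of $\aut$ are exactly those where $p$ is inactive, so that reading $\tscolors(s)$ in $\fwa{\aut}{p}$ matches reading the restricted colour in $\aut$, while all $p$-active nodes stay confined to the finite chain $X_p$ processed by the macro-states. Checking that this bookkeeping is consistent in \emph{both} directions—simultaneously matching colours against $\tmodel[p\resto X_p]$ and preserving the parity winning condition across the one-way passage from the parity-$1$ macro-states into the alternating part—is where the real work lies; the two lemmas supply only the one-step translations, and assembling them into global, colour-consistent strategies is the crux.
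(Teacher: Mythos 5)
Your proposal is correct and follows essentially the same route as the paper's proof: part (1) via the complete additivity of $\Psi_{Q,c}$ in $\shA$, part (2) via minimal valuations plus the parity-$1$ condition on macro-states, and part (3) by transferring strategies through Lemmas~\ref{lem:os-nd-to-alt} and~\ref{lem:os-alt-to-nd}, with the macro-state tracking the bundle of reachable $\aut$-states and $X_p$ read off from the functional, non-branching, well-founded strategy. The only slight imprecision is in the $\aut$-to-$\fwa{\aut}{p}$ direction, where the set $P$ fed to Lemma~\ref{lem:os-alt-to-nd} should be the (at most one) successor whose subtree is not $p$-free, rather than literally the next node of $X_p$; otherwise the bookkeeping you flag as the crux is exactly what the paper's invariant $(\ddag)$ handles.
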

\begin{proof}
\eqref{point:fc2funConstrAut} The key observation is that $\Psi_{Q,c}$ is completely additive in $\shA$.

\noindent
\eqref{point:fc2funConstrStrategy} We treat the properties separately:
\begin{itemize}
	\itemsep 0 pt
	\item \textit{Functional in $\shA$}: Suppose that $(a,s)$ is a position of an $f$-guided match where the proposed valuation $\val:A\to\wp(R[s])$ is such that $t \in \val(Q)$ and $t \in \val(Q')$ for distinct $Q,Q'\in\shA$ and some $t\in R[s]$. Let $\psi$ 
	be a disjunct of $\Psi_{Q,c}$ witnessing $(R_{\aact_1}[s],\dots,R_{\aact_n}[s],\val) \models \Psi_{Q,c}$. As $\psi$ is a non-branching lifting, the element $t$ has to be witness for exactly one type $T_i = \{Q''\}$ with $Q''\in\shA$. As we assume that \eloise plays minimal strategies then we can assume that $t\in\val(Q'')$ only, among $\shA$. Therefore, $t$ cannot be required to be a witness for both $Q$ and $Q'$ at the same time.
	\item \textit{Non-branching in $\shA$}: This is direct from the syntactical form of $\Psi_{Q,c}$. Observe that in each disjunct, at most one element of $\shA$ can occur. Assuming that $\eloise$ plays minimal strategies then she always proposes a valuation $\val$ where $\val(\shA)$ is a quasi-atom.
	\item \textit{Well-founded in $\shA$}: The game starts in $\shA$ and, as the parity of $\shA$ is $1$, it can only stay there for finitely many rounds. This means that, as $f$ is winning, every branch of $\tmodel_{\!f}$ has to leave $\shA$ at some finite stage. 
\end{itemize}

\noindent
\eqref{point:fc2funConstrEquiv}
\fbox{$\Leftarrow$}
Let $\tmodel' := \tmodel[p\resto X_p]$ and therefore $\tscolors' := \tscolors[p\resto X_p]$. Given a winning strategy $f$ for \eloise in $\game = \agame(\aut,\tmodel')@(a_I,s_I)$ we construct a winning strategy $f^\fconst$ for \eloise in $\game^\fconst = \agame(\fwa{\aut}{p},\tmodel)@(a_I^{\fconst},s_I)$. We define it inductively for a match $\match^\fconst$ of $\game^\fconst$. While playing $\match^\fconst$ we maintain a bundle (set) $\matches$ of $f$-guided shadow matches. We use $\matches_i$ to denote the bundle at round $i$. We maintain the following condition ($\ddag$) for every round along the play:
\begin{enumerate}[$\ddag$1.]
  \item
  If the current basic position in $\match^\fconst$ is of the form $(Q,s) \in \shA \times \tmoddom$, then (a) for every $a\in Q$ there is an $f$-guided shadow match $\match_a \in \matches$ such that the current basic position is $(a,s) \in A\times \tmoddom$; moreover, (b) $\tmodel'.s$ is not $p$-free.
  \item
  Otherwise, (a) $\matches = \{\match\}$ and the position in both $\match^\fconst$ and $\match$ is of the form $(a,s) \in A \times \tmoddom$; and moreover, (b) $\tmodel'.s$ is $p$-free.
\end{enumerate}
Intuitively, in order to simulate $\aut$ with $\fwa{\aut}{p}$ we have to keep two things in mind: (1) $\fwa{\aut}{p}$ can only read $p$ while it is in the non-deterministic part; and (2) every choice of \abelard in $\game$ corresponds to a match that has to be won by \eloise; these parallel matches are kept track of in $\matches$ and represented as a macro state in $\fwa{\aut}{p}$. Therefore, we want the simulation to stay in the non-deterministic part while we could potentially read some $p$ in $\tmodel'$ (condition $\ddag$1). Whenever there are no more $p$'s to be read, we can relax and behave exactly as $\aut$ (condition $\ddag$2).

We only consider the case where $\tmodel'.s_I$ is not $p$-free, and hence, $X_p$ is non-empty. Otherwise it can be easily seen that \eloise can win $\game^\fconst$ using the same strategy $f$. Assume then that $\tmodel'.s_I$ is not $p$-free. At round $0$ we initialize the bundle $\matches = \{\match_{a_I}\}$ with the $f$-guided match $\match_{a_I}$ at basic position $(a_I,s_I)$. It is clear that~($\ddag$1) holds. For the inductive step 
we divide in cases:

\begin{itemize}
	\item If ($\ddag$2) holds we are given a bundle $\matches=\{\match\}$ such that both $\match^\fconst$ and $\match$ are in position $(a,s) \in A\times \tmoddom$. We define $f^\fconst$ as $f$ for this position. To see that this is an admissible move in $\match^\fconst$ observe that $\tmodel'.s$ is $p$-free and therefore $\tmap^\fconst(a,\tscolors(s)) = \tmap(a,\tscolors'(s))$. Now it is \abelard's turn to make a move in $\match^\fconst$. By definition of $\tmap^\fconst$, the formula $\tmap^\fconst(a,\tscolors'(s))$ belongs to $\ofoe^+(A)$ and hence the next position in $\match^\fconst$ will be of the form $(a',s')\in A \times \tmoddom$ with $\tmodel'.s'$ $p$-free. We replicate the move in the shadow match $\match$ and hence ($\ddag$2) is preserved.
	\item If ($\ddag$1) holds at round $i$ of $\match^\fconst$ we are given $f$-guided matches $\matches_i = \{\match_{a_1},\dots,\match_{a_k}\}$ such that for the current position $(Q,s) \in \shA\times \tmoddom$ and for each $a\in Q$ we have $\match_a \in \matches_i$. 
	For every match $\match_a$, the strategy $f$ provides a valuation $\val_a$ which is an admissible move in this match. Define $P := \{t \in R[s] \mid \tmodel'.t \text{ is not $p$-free}\}$ and observe that, as $X_p$ is a chain, $P$ will be either a singleton or empty. Using Lemma~\ref{lem:os-alt-to-nd} with $P$ and $\{V_a\}_{a\in Q}$ we can combine these valuations into an admissible move $\val^\fconst$ in $\match^\fconst$.

	To prove that~($\ddag$) is preserved we distinguish cases as to \abelard's move: first suppose that \abelard chooses a position of the form $(b,t) \in A\times\tmoddom$. Because of Lemma~\ref{lem:os-alt-to-nd}(ii) we know that $t \in \val_{a_i}(b)$ for some $a_j \in Q$. That is, we can replicate this move in one of the shadow matches $\match_{a_j}$. We do that and set $\matches=\{\match_{a_j}\}$ hence validating ($\ddag$2a). To see that ($\ddag$2b) is also satisfied observe that Lemma~\ref{lem:os-alt-to-nd}(iv) implies that $\tmodel'.t$ is $p$-free.

	For the other case, suppose that \abelard chooses a position of the form $(Q',t) \in \shA\times\tmoddom$. Similar to the last case, this time using Lemma~\ref{lem:os-alt-to-nd}(iii), we can trace every $b\in Q'$ back to some match $\match_{b} \in \matches_i$. We define $\match_{b}{\cdot}b$ as the match $\match_{b}$ extended with \abelard's move $(b,t)$. Finally we let $\matches_{i+1}:=\{\match_{b}{\cdot}b \mid b\in Q'\}$, which validates ($\ddag$1a). To see that ($\ddag$1b) is also satisfied observe that Lemma~\ref{lem:os-alt-to-nd}(iv) implies that $\tmodel'.t$ is not $p$-free.
\end{itemize}
Now we prove that $f^\fconst$ is actually winning. It is clear that \eloise wins every finite full $f^\fconst$-guided match (because the moves are admissible). Now suppose that an $f^\fconst$-guided match is infinite. By hypothesis the extension of $p$ in $\tmodel'$ is a finite chain, so after a finite amount of rounds we arrive to an element $s$ such that $\tmodel'.s$ is $p$-free. This means --because of ($\ddag$)-- that the automaton stays in $\shA$ only for a finite amount of steps and then moves to $A$, at a position $(a,s)$ which is \emph{winning} for \eloise. From there on the match $\match^\fconst$ and $\match$ are exactly the same and, as \eloise wins $\match$ (which is $f$-guided for a winning strategy $f$), she also wins $\match^\fconst$.


\begin{figure}[h]
\centering
\begin{tikzpicture}[
level/.style={sibling distance=70mm/#1},
level 1/.style={sibling distance=40mm},
level 2/.style={sibling distance=25mm},
level 3/.style={sibling distance=20mm},
every node/.style={draw=black!50,circle,fill=black!50,inner sep=1pt},
every edge/.style={draw=black!50},
aedge/.style={draw=black!50,thick},
ndnode/.style={draw=blue!75,circle,fill=blue!75,inner sep=2pt},
ndedge/.style={draw=blue!75,very thick},
end/.style={isosceles triangle,fill=black!20,rotate=+90,minimum width=1cm,line width=0pt}
]
\node [ndnode] {}
  child[ndedge] {
    node[ndnode,fill=yellow] {}
    child {
      node[ndnode] {}
      child[aedge] {
        node {}
      	child {node[end] {}}
      } 
      child {
        node[ndnode,fill=yellow] {}
        child[draw=red!75] {
            node[end,fill=red!50,draw=red!75] {}
        }
      }
    }
    child[aedge] {
      node[fill=yellow] {}
      child {
        node {}
          child {
            node[end] {}
          }
      }
    }
  }
	child[aedge] {
	  node {}
	  child {
      node {}
		  child {node[end] {}}
	  } 
	  child {
      node[fill=yellow] {}
	  	child {node[end] {}}
	  }
	};
\end{tikzpicture}
\caption{F+NB mode (blue), alternating mode (red) and $p$ (yellow).}
\label{fig:xp}
\end{figure}

\medskip
\noindent
\fbox{$\Rightarrow$}
Given a winning strategy $f^\fconst$ for \eloise in $\game^\fconst = \agame(\fwa{\aut}{p},\tmodel)@(a_I^{\fconst},s_I)$ we construct a winning strategy $f$ for \eloise in $\game = \agame(\aut,\tmodel')@(a_I,s_I)$ where $\tmodel' := \tmodel[p\resto X_p]$ for some finite chain $X_p\subseteq \tmoddom$, which we promptly define.
Using Theorem~\ref{thm:fc2}(\ref{point:fc2funConstrStrategy}) we assume that $f^\fconst$ is functional, non-branching and well-founded in $\shA$ and define the set $X_p$ as follows:
\[
	X_p := \{s\in\tmoddom \mid (Q,s) \in \tmodel_{\!f^\fconst} \text{ for some $Q\in\shA$}\}.
\]
The fact that $f^\fconst$ is non-branching (in $\shA$) makes $X_p$ a chain, and well-foundedness makes it finite. As an illustration, Fig.~\ref{fig:xp} represents a possible tree $\tmodel_{\!f^\fconst}$ where the path induced by positions of the form $(Q,s)$ is drawn with a thicker stoke.

Next, we define the strategy inductively for a match $\match$ of $\game$. While playing $\match$ we maintain an $f^\fconst$-guided shadow match $\match^\fconst$. We maintain the following condition ($\ddag$) for every round along the play: let $(a,s) \in A \times \tmoddom$ be the current position in $\match$, then one of the following conditions holds:
\begin{enumerate}[$\ddag$1.]
	\itemsep 0pt
	\item The current basic position in $\match^\fconst$ is of the form $(Q,s) \in \shA \times \tmoddom$ with $a\in Q$,
	\item The current basic position in $\match^\fconst$ is also $(a,s) \in A \times \tmoddom$. 
\end{enumerate}
At round $0$ the matches $\match$ and $\match^\fconst$ are in position $(a_I,s_I)$ and $(\{a_I\},s_I)$ respectively, therefore ($\ddag$1) holds. For the inductive step we divide in cases:

\begin{itemize}
	\item If ($\ddag$2) holds, the match $\match$ is in position $(a,s)$. For this position, we let $f$ be defined as $f^\fconst$. Observe that it must be the case that $p\notin\tscolors(s)$, otherwise \eloise wouldn't have an admissible move $\val^\fconst$ in $\match^\fconst$. Given this, and assuming that \eloise plays minimal strategies, \eloise can use the same $\val^\fconst$ in $\match$. It is easy to see that we can replicate \abelard's next move in the shadow match.
	\item If ($\ddag$1) holds, the matches $\match$ and $\match^\fconst$ are respectively in position $(a,s)$ and $(Q,s)$ with $a\in Q$. The strategy $f^\fconst$ provides a valuation $\val^\fconst$ which is admissible in $\match^\fconst$. Using Lemma~\ref{lem:os-nd-to-alt} we can get a valuation $\uval$ which is admissible in $\match$ --see item (i). Suppose now that \abelard chooses $(b,t)$ as a next position in $\match$. Using Lemma~\ref{lem:os-nd-to-alt}(ii) we know that either (a) $t\in \val^\fconst(b)$ or, (b) there is some $Q' \in \shA$ with $b\in Q'$ and $t \in \val^\fconst(Q')$. In both cases we have a way to replicate \abelard's move in $\match^\fconst$ and preserve ($\ddag$).
\end{itemize}
To see that $f$ is winning we proceed similar to the other direction.
\end{proof}

\paragraph{Historical remarks and related results.}
The idea of a Simulation Theorem goes back to (at least) Safra~\cite{Safra:1988} and Muller and Schupp~\cite{MullerSimulation}. In the first case, Safra used an augmented state space to convert non-deterministic B\"uchi automata into deterministic automata. In the latter, Muller and Schupp also use an augmented state space to convert alternating tree automata to non-deterministic tree automata.

The idea to use a two-part automata to preserve the weakness condition was introduced in~\cite{Zanasi:Thesis:2012,DBLP:conf/lics/FacchiniVZ13}, although the authors claim that some concepts were already present in~\cite{MullerSaoudiSchupp92}. In~\cite{Zanasi:Thesis:2012,DBLP:conf/lics/FacchiniVZ13} the authors use an automaton based on $\wp(A\times A)$ and $A$ with a non-parity acceptance condition. This automaton is then converted to a parity automaton with a standard trick. Using $\wp(A\times A)$ as the state space instead of $\wp(A)$ is necessary to correctly keep track of \emph{infinite} runs of the automata. The first explicit use of $\wp(A\times A)$ as the state space of such automata seems to be in~\cite[Section~9.6.2]{ArnoldN01}.

Observe, however, that in the non-deterministic part of our constructions the parity is uniformly $1$ and therefore therefore any infinite run which stays in that part will be a rejecting run. Using this observation, we give a slightly simpler construction based on $\wp(A)$ and $A$. In this respect, the proofs are cleaner and we avoid a non-parity acceptance condition.

The second critical element of this section is the enforcing of the additivity condition on the non-deterministic component. The core of this idea was developed in~\cite{LICS14,DBLP:journals/corr/CarreiroFVZ14} where it is applied for another notion called ``continuity.''

\subsection{Closure properties}

Given an automaton $\aut\in\Aut(\llang,\props)$, we define the \emph{tree language recognized by $\aut$} as the set of $\props$-labelled trees $\trees(\aut)$ given by
\[
	\trees(\aut) := \{\tmodel \mid \aut \text{ accepts } \tmodel\}.
\]

In this subsection we prove that the collection of tree languages recognized by the automata of $\AutWA(\ofoe)$ is closed under the operations corresponding to the connectives of $\wcl$, that is: union, complementation and projection with respect to finite chains. We start with the latter.

\subsubsection{Closure under projection}

In the following definition we give, for every automaton $\aut\in\AutWA(\ofoe,\oprops \dcup \{p\})$, the weak chain projection over $p$, denoted ${\existswc p}.\aut$ and belonging to $\AutWA(\ofoe,\oprops)$. The domain and transition function of the automaton ${\existswc p}.\aut$ will be based on $\fwa{\aut}{p}$.

\begin{definition}\label{def:autproj}
Let $\aut$ belong to $\AutWA(\ofoe,\oprops \dcup \{p\})$. 
We define the \emph{chain projection of $\aut$ over $p$} as the automaton ${\existswc p}.\aut := \tup{A\cup\shA, \tmapProj, \pmapProj, \{a_I\}}\in\AutWA(\ofoe,\oprops)$ given as follows, for every $c\in \wp(\oprops)$:
%
\[
\begin{array}{lcl}
  \pmapProj(a) &:=& \pmap(a) \\
  \pmapProj(Q) &:=& 1
\end{array}
\hspace*{1cm}
\begin{array}{lcl}
  \tmapProj(a,c) &:=& \tmap(a,c) \\
  \tmapProj(Q,c) &:=& \Psi_{Q,c} \lor \Psi_{Q,c\cup\{p\}}
\end{array}
\]
where $\Psi_{Q,c}$ is as in Definition~\ref{def:bigpsi}.
\end{definition}

The key observation to be made about the above definition is that ${\existswc p}.\aut$ is actually defined based on the two-part construction $\fwa{\aut}{p}$ (see Definition~\ref{def:fc2}). The main change is that the non-deterministic part ($\shA$) has been projected with respect to $p$. This can be observed in the definition of $\tmapProj(Q,c)$.

\begin{definition} Let $p\notin\oprops$ and $L$ be a tree language of $(\oprops\dcup\{p\})$-labeled trees. The \emph{finite chain projection} of $L$ over $p$ is the language $\existswc p.L$ of $\oprops$-labeled trees defined as
\[
  \existswc p.L := \{\tmodel \mid \tmodel[p\mapsto X_p] \in L \text{ for some finite chain } X_p\subseteq\tmoddom \}.
\]
\end{definition}

\begin{lemma}\label{lem:projection}
For each $\aut\in\AutWA(\ofoe,\oprops \dcup \{p\})$
we have 
  $\trees({\existswc p}.\aut) = {\existswc p}.\trees(\aut)$.
\end{lemma}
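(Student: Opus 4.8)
The statement to prove is that $\trees({\existswc p}.\aut) = {\existswc p}.\trees(\aut)$, i.e. that the automaton-level chain projection computes exactly the language-level finite-chain projection. Unfolding both sides, I need to show for an arbitrary $\oprops$-tree $\tmodel$:
\[
{\existswc p}.\aut \text{ accepts } \tmodel \quad\text{iff}\quad \aut \text{ accepts } \tmodel[p\mapsto X_p] \text{ for some finite chain } X_p\subseteq\tmoddom.
\]
The whole strategy is to reduce this to Theorem~\ref{thm:fc2}(\ref{point:fc2funConstrEquiv}), which already gives a clean characterization of acceptance for the two-part construct $\fwa{\aut}{p}$, namely $\fwa{\aut}{p}$ accepts $\tmodel$ iff $\aut$ accepts $\tmodel[p\resto X_p]$ for some finite chain $X_p$. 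So the core of the argument is a bridge between the projected automaton ${\existswc p}.\aut$ and the two-part construct $\fwa{\aut}{p}$, whose state space, parity map, and alternating-part transitions coincide by construction; the only difference is in the non-deterministic transitions $\tmapProj(Q,c) = \Psi_{Q,c}\lor\Psi_{Q,c\cup\{p\}}$ versus $\tmap^{\fconst}(Q,c)=\Psi_{Q,c}$.

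First I would make precise the relationship between $\tmodel[p\mapsto X_p]$ and $\tmodel[p\resto X_p]$. Since $\tmodel$ is a $\oprops$-tree with $p\notin\oprops$, the tree is $p$-free, so $\tmodel[p\resto X_p]$ (restricting the already-empty $p$-extension) coincides with $\tmodel$ itself when we read it as a $(\oprops\dcup\{p\})$-tree with $p$ interpreted as $\nada$. The genuinely relevant operation is $\tmodel[p\mapsto X_p]$, which sets $p$ to hold exactly on $X_p$. The point of the disjunction $\Psi_{Q,c}\lor\Psi_{Q,c\cup\{p\}}$ is precisely that, while running ${\existswc p}.\aut$ on a $p$-free tree $\tmodel$, player \eloise is allowed to pretend at each node in the non-deterministic part whether that node carries $p$ or not; choosing the disjunct $\Psi_{Q,c\cup\{p\}}$ at node $s$ amounts to placing $s$ into $X_p$. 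Thus I would argue that a winning strategy for \eloise in $\agame({\existswc p}.\aut,\tmodel)$ is essentially a winning strategy in $\agame(\fwa{\aut}{p},\tmodel')$ where $\tmodel' := \tmodel[p\mapsto X_p]$ and $X_p$ is the set of nodes at which \eloise chose the $+p$ disjunct in the non-deterministic part. Conversely, given a winning strategy for $\fwa{\aut}{p}$ on some $\tmodel[p\mapsto X_p]$, I recover a strategy for ${\existswc p}.\aut$ on $\tmodel$ by selecting the disjunct $\Psi_{Q,c\cup\{p\}}$ exactly at the nodes of $X_p$ and $\Psi_{Q,c}$ elsewhere.

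To carry this out cleanly I would invoke the structural facts from Theorem~\ref{thm:fc2}(\ref{point:fc2funConstrStrategy}): any winning strategy for ${\existswc p}.\aut$ (whose non-deterministic part is identical to that of $\fwa{\aut}{p}$, with uniform parity $1$ and $\shA$-additive transitions) can be assumed functional, non-branching, and well-founded in $\shA$. Non-branching and well-foundedness guarantee that the set $X_p := \{s \mid (Q,s)\in\tmodel_{\!f} \text{ for some } Q\in\shA\}$ of nodes where \eloise may have activated $p$ forms a \emph{finite chain}, which is exactly the object required by the definition of ${\existswc p}.\trees(\aut)$. With this, the equivalence chains together: ${\existswc p}.\aut$ accepts $\tmodel$ iff $\fwa{\aut}{p}$ accepts $\tmodel[p\mapsto X_p]$ for the finite chain $X_p$ read off from the $\shA$-part of \eloise's strategy, iff (by Theorem~\ref{thm:fc2}(\ref{point:fc2funConstrEquiv})) $\aut$ accepts $\tmodel[p\mapsto X_p]$ for some finite chain, iff $\tmodel \in {\existswc p}.\trees(\aut)$.

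**The main obstacle.** The delicate point is the exact bookkeeping relating the two transition maps on macro-states: $\Psi_{Q,c}\lor\Psi_{Q,c\cup\{p\}}$ in ${\existswc p}.\aut$ versus the single $\Psi_{Q,c}$ applied to the modified coloring in $\fwa{\aut}{p}$. I must verify that committing to the $+p$ disjunct at a node $s$ in the ${\existswc p}.\aut$-game corresponds faithfully to $s\in X_p$ in the $\fwa{\aut}{p}$-game over $\tmodel[p\mapsto X_p]$, including the interaction with the alternating-part transition $\tmap^{\fconst}(a,c)$ which is forced to $\bot$ when $p\in c$ (so that once the run leaves $\shA$ it may never again see a $p$-node). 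This is where the non-branching property is essential: it forces the $p$-labelled nodes into a single branch, ensuring that the "fresh" $p$ introduced by the disjunction is only ever read while the run is still in the non-deterministic part, matching the constraint baked into $\fwa{\aut}{p}$. Rather than re-run the full strategy-transfer argument of Theorem~\ref{thm:fc2}, I expect the cleanest presentation is to observe that the acceptance games $\agame({\existswc p}.\aut,\tmodel)$ and $\agame(\fwa{\aut}{p},\tmodel[p\mapsto X_p])$ become literally the same game once $X_p$ is fixed as the set of $+p$-choices, and then cite Theorem~\ref{thm:fc2}(\ref{point:fc2funConstrEquiv}) directly.
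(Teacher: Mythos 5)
Your proposal is correct and follows essentially the same route as the paper: reduce the statement to the two-part construct $\fwa{\aut}{p}$ via Theorem~\ref{thm:fc2}(\ref{point:fc2funConstrEquiv}), then compare the acceptance games of ${\existswc p}.\aut$ on $\tmodel$ and of $\fwa{\aut}{p}$ on $\tmodel[p\mapsto X_p]$, reading $X_p$ off from the nodes where \eloise's (functional, non-branching, well-founded in $\shA$) strategy selects the $\Psi_{Q,c\cup\{p\}}$ disjunct. The only point where the paper is slightly more careful is in bridging $\tmodel[p\resto X_p]$ with $\tmodel[p\mapsto Y_p]$, where one direction uses that the intersection of two finite chains is again a finite chain.
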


\begin{proof}
What we need to show is that for any tree $\tmodel$ over $\wp (\oprops)$:
\begin{eqnarray*}
  {\existswc p}.\aut \text{ accepts } \tmodel & \text{ iff }& \text{there is a finite chain $Y_p \subseteq \tmoddom$ } \\
   & & \text{such that } \aut \text{ accepts }\tmodel[p\mapsto Y_p].
\end{eqnarray*}
However, we will show that the following statement ($\S$) holds:
\begin{eqnarray*}
  {\existswc p}.\aut \text{ accepts } \tmodel & \text{ iff }& \text{there is a finite chain $X_p \subseteq \tmoddom$ }\label{proj:2} \\
   & & \text{such that } \fwa{\aut}{p} \text{ accepts }\tmodel[p\mapsto X_p]
\end{eqnarray*}
and then recover the result we want using the following claim.
\begin{claimfirst}
  There is a finite chain $Y_p \subseteq \tmoddom$ such that $\aut$ accepts $\tmodel[p\mapsto Y_p]$ iff there is a finite chain $X_p \subseteq \tmoddom$ such that $\fwa{\aut}{p}$ accepts $\tmodel[p\mapsto X_p]$.
\end{claimfirst}
\begin{pfclaim}
  Recall that Theorem~\ref{thm:fc2}(\ref{point:fc2funConstrEquiv}) states that
  \[
  \aut \text{ accepts } \tmodel[p\resto Z] \text{ for some finite chain } Z\subseteq \tmoddom
  \quad\text{iff}\quad
  \fwa{\aut}{p} \text{ accepts } \tmodel.\tag{$*$}\label{eq:recall:th3}
  \]
  \fbox{$\Rightarrow$}
  Suppose that there is a finite chain $Y_p \subseteq \tmoddom$ such that $\aut$ accepts $\tmodel[p\mapsto Y_p]$ and define $\tmodel' := \tmodel[p\mapsto Y_p]$. Observe now that $\tmodel'[p\resto Y_p] = \tmodel'$. Using the left-to-right direction of~\eqref{eq:recall:th3} with $\tmodel = \tmodel'$ and $Z = Y_p$, we get that $\fwa{\aut}{p}$ accepts $\tmodel' = \tmodel[p\mapsto Y_p]$.
  \fbox{$\Leftarrow$} 
  Suppose that there is a finite chain $X_p \subseteq \tmoddom$ such that $\fwa{\aut}{p}$ accepts $\tmodel[p\mapsto X_p]$ and define $\tmodel' := \tmodel[p\mapsto X_p]$. Using the right-to-left direction of~\eqref{eq:recall:th3} with $\tmodel = \tmodel'$, we get that $\aut$ accepts $\tmodel'[p\resto Z]$ for some finite chain $Z\subseteq\tmoddom$. That is, there exists a finite chain $Z\subseteq\tmoddom$ such that $\aut$ accepts $\tmodel[p\mapsto X_p \cap Z]$. As the intersection of two finite chains is again a finite chain, we can take $Y_p := X_p \cap Z$ and conclude that there exists a finite chain $Y_p$ such that $\aut$ accepts $\tmodel[p\mapsto Y_p]$.
\end{pfclaim}

\noindent
We now turn to the proof of~($\S$).

\medskip\noindent
\fbox{$\Rightarrow$}
It is not difficult to prove that properties~(\ref{point:fc2funConstrAut},\ref{point:fc2funConstrStrategy}) in Theorem~\ref{thm:fc2} hold for ${\existswc p}.\aut$ as well, since the latter is defined in terms of $\fwa{\aut}{p}$. Therefore we can assume that the given winning strategy $f_\exists$ for \eloise in $\game_\exists = \agame({\existswc p}.\aut,\tmodel)@(a_I^{\fconst},s_I)$ is functional, non-branching and well-founded in~$\shA$. Functionality allows us to associate with each node $s$ either none or a unique state $Q_s \in \shA$ (\emph{cf.} \cite[Prop.~3.12]{Zanasi:Thesis:2012}). We now want to isolate the nodes that $f_\exists$ treats ``as if they were labeled with~$p$''. For this purpose, let $\val_{s}$ be the valuation suggested by $f_\exists$ at a position $(Q_s,s) \in \shA \times \tmoddom$. As $f_\exists$ is winning, $\val_{s}$ makes $\tmapProj(Q_s,\tscolors(s)) = \Psi_{\tscolors(s)} \lor \Psi_{\tscolors(s)\cup\{p\}}$ true in $R[s]$. We define
\[
  X_p\ :=\ \{s \in \tmoddom\mid Q_s \text{ is defined and } (R[s],\val_{s}) \models \Psi_{\tscolors(s)\cup\{p\}}\}.
\]
The fact that $f_\exists$ is functional in $\shA$ guarantees that $X_p$ is well-defined; as the strategy is non-branching in $\shA$ we get that $X_p$ is a chain; finally, well-foundedness makes it finite. Let ${\tmodel' := \tmodel[p\mapsto X_p]}$, we show that we can give a winning strategy $f_\div$ for \eloise in the game $\game_\div = \agame(\fwa{\aut}{p},\tmodel')@(a_I^{\fconst},s_I)$. Actually, we show that $f_\div :=f_\exists$ works, we do it by induction for a match $\match_\div$ of $\game_\div$. We keep a shadow match $\match_\exists$ in $\game_\exists$ such that the following condition holds at each round:
\begin{equation}
\text{Both matches $\match_\div$ and $\match_\exists$ are in the same position $(q,s)\in A\cup\shA\times \tmoddom$.}\tag{\ddag}
\end{equation}
This condition obviously holds at the beginning of the games. For the inductive step let $\tscolors' = \tscolors[p\mapsto X_p]$ be an abbreviation for the coloring of $\tmodel'$ and consider the following cases:
\begin{itemize}
  \item If the current basic position in $\match_\div$ is of the form $(a,s) \in A\times \tmoddom$: by definition of $X_p$ we know that $s\notin X_p$, so $p\notin \tscolors'(s)$ and hence $\tscolors'(s) = \tscolors(s)$. As $f_\exists$ is winning in $\game_\exists$ we know that the suggested valuation $\val_{a,s}$ is admissible in $\match_\exists$, that is, $(R[s],\val_{a,s}) \models \tmap(a,\tscolors(s))$. As $\tscolors'(s) = \tscolors(s)$, we can conclude that $(R[s],\val_{a,s}) \models \tmap(a,\tscolors'(s))$ and thus is also an admissible move in $\match_\div$. 
  \item If the current basic position in $\match_\div$ is of the form $(Q,s) \in \shA\times \tmoddom$ we let $\val_{Q,s}$ be the valuation suggested by $f_\exists$ and consider the following cases:
    \begin{enumerate}
      \item If $p \in \tscolors'(s)$: then by definition of $X_p$ we have that
      $(R[s],\val_{Q,s}) \models \Psi_{\tscolors(s)\cup\{p\}}$. As $\tscolors'(s) = \tscolors(s) \dcup \{p\}$ we have that $(R[s],\val_{Q,s}) \models \Psi_{\tscolors'(s)}$. This is, by definition of $\aut_\div$, equivalent to $(R[s],\val_{Q,s}) \models \tmap^F(Q,\tscolors'(s))$ and therefore $\val_{Q,s}$ is admissible in $\match_\div$.
      \item If $p \notin \tscolors'(s)$: then $(R[s],\val_{Q,s}) \models \Psi_{\tscolors(s)} \lor \Psi_{\tscolors(s)\cup\{p\}}$ but $(R[s],\val_{Q,s}) \not\models \Psi_{\tscolors(s)\cup\{p\}}$ hence it must be the case that $(R[s],\val_{Q,s})\models \Psi_{\tscolors(s)}$. As $\tscolors'(s) = \tscolors(s)$, then $(R[s],\val_{Q,s})\models \Psi_{\tscolors'(s)} = \tmap^F(Q,\tscolors'(s))$ and therefore $\val_{Q,s}$ is admissible in $\match_\div$.
    \end{enumerate}
\end{itemize}
As the move by \eloise is the same in both matches it is clear that we can mimic in the shadow match $\match_\exists$ the choice of \abelard in $\match_\div$, therefore preserving ($\ddag$).

It is only left to show that this strategy is winning for \eloise. It is enough to observe that $\match_\div$ and $\match_\exists$ go through the same basic positions and, as \eloise wins $\match_\eloise$, she also wins $\match_\div$.

\medskip\noindent
\fbox{$\Leftarrow$}
Given a winning strategy $f_\div$ for \eloise in $\game_\div:=\agame(\fwa{\aut}{p},\tmodel')@(a_I^{\fconst},s_I)$ it is not difficult to see that the same strategy is winning for \eloise in $\game_\exists:=\agame({\existswc p}.\aut,\tmodel)@(a_I^{\fconst},s_I)$. As before, we can maintain the following invariant between a match $\match_\exists$ of $\game_\exists$ and a shadow match $\match_\div$ of $\game_\div$:
\begin{equation*}
\text{The matches $\match_\div$ and $\match_\exists$ are in the same position $(q,s)\in A\cup\shA\times \tmoddom$.}
\end{equation*}
The key observation in this case is that whenever the match $\match_\div$ is in a position $(a,s)$ then $p \notin \tscolors'(s)$. This is because $\tmap^\fconst(a,c) = \bot$ if $p\in c$ and that would contradict that $f_\div$ is winning. As a consequence, $\tmap^\exists(a,\tscolors(s)) = \tmap^\fconst(a,\tscolors'(s))$ and therefore the move suggested by $f_\div$ in $\game_\div$ will also be admissible in $\game_\exists$.
\end{proof}
\subsubsection{Closure under Boolean operations}

In this section we show that the class of tree languages recognized by the automata of $\AutWA(\ofoe)$ is closed under the Boolean operations.
Starting with the closure under union, we just mention the following result, without
providing the (completely routine) proof.

\begin{theorem}
\label{t:cl-dis}
Let $\aut$ and $\aut'$ belong to $\AutWA(\ofoe)$.
There is $\mathbb{U}\in\AutWA(\ofoe)$ such that $\trees(\mathbb{U}) = \trees(\aut) \cup \trees(\aut')$.
\end{theorem}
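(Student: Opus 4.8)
The plan is to use the standard ``disjoint union with a fresh initial state'' construction and then check that the weakness and additivity constraints survive it. Writing $\aut = \tup{A,\tmap,\pmap,a_I}$ and $\aut' = \tup{A',\tmap',\pmap',a'_I}$, and assuming without loss of generality that $A \cap A' = \nada$, I would take a fresh state $u_I \notin A \uplus A'$ and set $\mathbb{U} := \tup{A \uplus A' \uplus \{u_I\}, \tmap^{\mathbb{U}}, \pmap^{\mathbb{U}}, u_I}$. The transition map agrees with $\tmap$ on $A$ and with $\tmap'$ on $A'$, while on the new root it offers \eloise the choice of which automaton to run: $\tmap^{\mathbb{U}}(u_I,c) := \tmap(a_I,c) \lor \tmap'(a'_I,c)$, read as a formula of $\ofoe^+(A\uplus A',\acts)$. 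The parity is inherited, $\pmap^{\mathbb{U}}\resto A = \pmap$ and $\pmap^{\mathbb{U}}\resto A' = \pmap'$, with $\pmap^{\mathbb{U}}(u_I) := 1$ (any value works). Since $\tmap(a_I,c)$ and $\tmap'(a'_I,c)$ are positive, so is their disjunction, so $\tmap^{\mathbb{U}}$ indeed lands in $\ofoe^+$.

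Next I would verify $\mathbb{U}\in\AutWA(\ofoe)$. The crucial observation is that no transition formula of $\aut$ mentions a state of $\aut'$ and vice versa, and that nothing transitions into $u_I$; consequently the reachability relation $\ord$ of $\mathbb{U}$ never connects $A$ to $A'$, and the maximal strongly connected components of $\mathbb{U}$ are exactly those of $\aut$, those of $\aut'$, and the singleton $\{u_I\}$. For an MSCC $\mccomp \subseteq A$ the transition maps and parities are literally unchanged, so weakness holds; and complete additivity (resp. multiplicativity) in $\mccomp$ concerns only the valuation of the names in $\mccomp$, so it is unaffected by enlarging the ambient name set from $A$ to $A\uplus A'$, and the old witnesses transfer verbatim. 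The same applies to MSCCs in $A'$. Finally $\{u_I\}$ is a trivial component on which weakness is vacuous and on which additivity holds because $u_I$ does not occur in $\tmap^{\mathbb{U}}(u_I,c)$, making that formula trivially both completely additive and completely multiplicative in $\{u_I\}$.

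For correctness I would argue at the level of the acceptance game. Consider $\agame(\mathbb{U},\tmodel)@(u_I,s_I)$. At the opening position \eloise must supply a valuation $V : A\uplus A' \to \wp(R[s_I])$ with $(R_{\aact_1}[s_I],\dots,R_{\aact_n}[s_I],V) \models \tmap(a_I,c)\lor\tmap'(a'_I,c)$, where $c=\tscolors(s_I)$. Because the two disjuncts use disjoint names, she may assume $V$ satisfies one disjunct while assigning $\nada$ to every name of the other automaton; say it satisfies $\tmap(a_I,c)$ and empties $A'$. Then \abelard can only move to some $(b,t)$ with $b\in A$, and since no $\aut$-transition ever leaves $A$, the remainder of the match is literally a match of $\agame(\aut,\tmodel)@(a_I,s_I)$ with identical parities. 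Hence \eloise wins from $(u_I,s_I)$ by first selecting $\aut$ iff she wins $\agame(\aut,\tmodel)@(a_I,s_I)$, i.e. iff $\tmodel\in\trees(\aut)$; symmetrically, selecting $\aut'$ corresponds to $\tmodel\in\trees(\aut')$. This yields $\trees(\mathbb{U}) = \trees(\aut)\cup\trees(\aut')$.

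The only genuinely non-routine point, and it is still mild, is the preservation of the additivity constraint under the disjoint union; the key is precisely that the MSCCs do not merge, so that both the combinatorial (weakness) and the semantic (additivity) conditions are checked on components over which $\mathbb{U}$ behaves exactly like one of the two original automata.
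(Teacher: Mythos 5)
Your proof is correct and is exactly the standard construction the paper has in mind: the paper states this closure property without proof, explicitly calling it ``completely routine,'' and your disjoint-union-with-fresh-root automaton, together with the observation that the maximal strongly connected components do not merge (so weakness and additivity/multiplicativity are inherited componentwise) and the game-theoretic normalization of \eloise's opening valuation to one disjunct, supplies precisely the missing routine argument. No gaps.
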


In order to prove closure under complementation, we crucially use that the 
one-step language $\ofoe$ is closed under Boolean duals 
(cf.~Proposition~\ref{prop:duals}).

\begin{theorem}
\label{t:cl-cmp}
If $\aut$ belongs to $\AutWA(\ofoe)$ then the automaton $\dual{\aut}$ defined in Definition~\ref{d:caut} also belongs to $\AutWA(\ofoe)$ and recognizes the complement of $\trees(\aut)$.
\end{theorem}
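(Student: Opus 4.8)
The plan is to split the statement into its two halves—``$\dual{\aut}$ recognizes the complement of $\trees(\aut)$'' and ``$\dual{\aut}\in\AutWA(\ofoe)$''—and dispatch them separately, relying on the one-step machinery already developed.

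For the complementation claim I would simply invoke Proposition~\ref{PROP_complementation}, which states that $\dual{\aut}$ accepts a structure $\model$ exactly when $\aut$ rejects it. Specialising $\model$ to an arbitrary tree $\tmodel$ and using that the acceptance game is a parity game—hence determined, by positional determinacy of parity games cited earlier—the event ``$\aut$ rejects $\tmodel$'' coincides with ``$\tmodel\notin\trees(\aut)$''. Thus $\trees(\dual{\aut})$ is precisely the complement of $\trees(\aut)$ within the class of trees. Nothing in this half uses the weakness or additivity constraints, so it applies verbatim once the structural half is settled.

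The structural half is to verify that $\dual{\aut}=\tup{A,\dual{\tmap},\dual{\pmap},a_I}$ again satisfies the three defining clauses of $\AutWA(\ofoe)$. I would first record two preservation properties of the dualization operator of Definition~\ref{DEF_dual}: it leaves each atom $a(x)$ untouched, so the set of names occurring in $\dual{\tmap}(a,c)=(\tmap(a,c))^{\delta}$ equals that of $\tmap(a,c)$; and it maps $\ofoe^{+}$ into $\ofoe^{+}$ (the remark following Definition~\ref{DEF_dual}). The first property shows that the induced transition graph, and therefore the maximal strongly connected components $\mccomp$ and the reachability order, are identical for $\aut$ and $\dual{\aut}$; the second keeps the transition map inside the positive one-step language. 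Since $\dual{\pmap}(a)=1+\pmap(a)$ is a constant, order-preserving shift, weakness for $\aut$ (equal parities inside each $\mccomp$) transfers immediately to weakness for $\dual{\aut}$.

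The main point—and the step I expect to need the most care—is the additivity clause, because the $+1$ shift flips parities and must be matched against the additive/multiplicative swap. Fix a maximal SCC $\mccomp$, a state $a\in\mccomp$ and a colour $c\in\wp(\props)$. Here I would use that $\dual{\pmap}(a)$ is odd iff $\pmap(a)$ is even, together with Proposition~\ref{prop:adddualmult}, the fact that dualization is a syntactic involution on the negation-normal one-step fragment (so $(\varphi^{\delta})^{\delta}=\varphi$), and that $\mult{\ofoe^+}{\mccomp}$ is defined in Definition~\ref{def:multfrag} precisely as the $\delta$-preimage of $\add{\ofoe^+}{\mccomp}$. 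Concretely: if $\dual{\pmap}(a)$ is odd then $\pmap(a)$ is even, so $\tmap(a,c)\in\mult{\ofoe^+}{\mccomp}(A)$, which by definition means $(\tmap(a,c))^{\delta}=\dual{\tmap}(a,c)\in\add{\ofoe^+}{\mccomp}(A)$, as required; dually, if $\dual{\pmap}(a)$ is even then $\pmap(a)$ is odd, so $\tmap(a,c)\in\add{\ofoe^+}{\mccomp}(A)$, and since $(\dual{\tmap}(a,c))^{\delta}=\tmap(a,c)$ lies in $\add{\ofoe^+}{\mccomp}(A)$ we conclude $\dual{\tmap}(a,c)\in\mult{\ofoe^+}{\mccomp}(A)$. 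This establishes $\dual{\aut}\in\AutWA(\ofoe)$, and combined with the first half completes the proof. The only real hazard is bookkeeping the direction of the $\add$/$\mult$ correspondence so that it lines up with the parity flip; beyond Proposition~\ref{prop:adddualmult} and the involution property no genuinely new argument is needed.
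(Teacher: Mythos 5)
Your proposal is correct and follows exactly the same route as the paper: complementation is delegated to Proposition~\ref{PROP_complementation}, and membership of $\dual{\aut}$ in $\AutWA(\ofoe)$ is checked via the duality of additivity and multiplicativity (Proposition~\ref{prop:adddualmult} and Definition~\ref{def:multfrag}). The paper states this in two sentences; you merely spell out the bookkeeping (preservation of the transition graph, the parity shift, and the involution $(\varphi^{\delta})^{\delta}=\varphi$) that the paper calls ``straightforward''.
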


\begin{proof}
Since we already know that $\dual{\aut}$ accepts exactly the transition
systems that are rejected by $\aut$, we only need to check that 
$\dual{\aut}$ indeed belongs to $\AutWA(\ofoe)$.
But this is straightforward: for instance, the additivity and multiplicativity
constraints can be checked by observing the dual nature of these properties,
as shown in Proposition~\ref{prop:adddualmult}.
\end{proof}

\section{Logical characterizations of $\AutWA(\ofoe)$ on trees}\label{sec:soaut-logic}

In this section we show that, on trees, the following formalisms are equivalent:
\begin{enumerate}[(i)]
	\itemsep 0 pt
	\item $\wcl$: Weak chain logic,
	\item $\AutWA(\ofoe)$: Additive-weak automata based on $\ofoe$,
	\item $\mucaffoe$: Forward-looking fragment of $\mucafoe$,
	\item $\mucafoe$: Completely additive restriction of $\mufoe$.
\end{enumerate}
Together with the fact that $\mucafoe \equiv \binfotc$ which was proved in Theorem~\ref{thm:fotcmucafoe}, this implies Theorem~\ref{thm:allthesame}.
The spirit of this section (but not the techniques) is similar to~\cite{Walukiewicz96,Walukiewicz02} where Walukiewicz shows that $\Aut(\ofoe) \equiv \mso \equiv \mufoe$ on trees.

\medskip
Recall that in Remark~\ref{rem:parameters} we observed that, in general, the parameters (free variables) of a fixpoint cannot be avoided unless we go up in the arity hierarchy. From the equivalence of (iii) and (iv) we get, in particular, the following corollary which says that, on trees, it is actually possible to get rid of the parameters without increasing the arity of the fixpoints.

\begin{corollary}
	On trees, every $\varphi\in\mucafoe$ is equivalent to a parameter-free $\varphi'\in\mucafoe$.
\end{corollary}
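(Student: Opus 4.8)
The plan is to derive the corollary purely from the equivalence between the forward-looking fragment $\mucaffoe$ (item (iii)) and $\mucafoe$ (item (iv)) established in this section, rather than by any direct syntactic manipulation of parameters. First I would take an arbitrary $\varphi\in\mucafoe$ and apply the direction (iv)$\Rightarrow$(iii): the effective translation underlying Theorem~\ref{thm:allthesame} produces a formula $\varphi'\in\mucaffoe$ that is equivalent to $\varphi$ over all trees. Since $\mucaffoe$ is by construction a syntactic fragment of $\mucafoe$, this $\varphi'$ already lives in $\mucafoe$; so it only remains to argue that $\varphi'$ is parameter-free.

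The heart of the argument is therefore the observation that every formula of the forward-looking fragment $\mucaffoe$ is parameter-free. Intuitively, the forward-looking restriction (the $\ff$ decoration) forces each fixpoint $[\lfp_{p{:}x}.\xi(p,x)](z)$ to recurse only \emph{downward} along the accessibility relation, in the spirit of the modal/\pdl translation: the body $\xi$ refers to the current evaluation point and its $R$-successors, but carries no free individual variable other than the designated~$x$ that the fixpoint binds. Consequently the set of parameters of each fixpoint (in the sense of Remark~\ref{rem:parameters}) is empty, and $\varphi'$ is parameter-free. Combining the two steps, $\varphi'$ is a parameter-free formula of $\mucafoe$ equivalent to $\varphi$ over trees, which is exactly the assertion of the corollary.

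The main obstacle is not the corollary statement itself but ensuring that the back-translation into (iii) genuinely lands in the parameter-free shape. On arbitrary models Remark~\ref{rem:parameters} warns that parameters cannot be discarded without raising the arity of the fixpoint; what rescues us on trees is that the chain of equivalences of Theorem~\ref{thm:allthesame} routes every completely-additive fixpoint formula through the automaton model $\AutWA(\ofoe)$, whose runs are inherently local and well-founded in the relevant component. Reading $\varphi'$ off such an automaton yields precisely the forward-looking, parameter-free form. Thus the only thing I would need to verify carefully is that the construction witnessing (iv)$\Rightarrow$(iii) indeed outputs $\mucaffoe$ formulas with empty parameter sets; granting this (which is immediate from the definition of the forward-looking fragment), the corollary follows with no further computation.
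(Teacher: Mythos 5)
Your proposal is correct and follows essentially the same route as the paper: the corollary is obtained directly from the equivalence of $\mucaffoe$ and $\mucafoe$ on trees together with the observation (stated as a remark in the paper) that fixpoints of $\muffoe$, and hence of $\mucaffoe$, are parameter-free by definition of the fragment. Your additional commentary on why the translation into $\mucaffoe$ lands in parameter-free form merely unpacks what the paper takes as immediate.
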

\begin{proof}
	Direct from the equivalence of (iii) and (iv) and that the fixpoints of $\mucaffoe$ do not have parameters.
\end{proof}

In order to develop the results we first perform an analysis of the fixpoints of a special class of maps that `restrict to descendants.' The intuition behind these maps, which we will introduce shortly, is that they are induced by formulas which are invariant under generated submodels. That is, formulas which when evaluated at a certain point, can only talk about the descendants of that point.

\subsection{Fixpoint theory of maps that restrict to descendants}

\begin{definition}\label{def:restodes}
	A map $G:\wp(\npmoddom)^n\to\wp(\npmoddom)$ on a model $\npmodel$ is said to \emph{restrict to descendants} if for every $s \in \npmoddom$ and $\vlist{X} \in \wp(\npmoddom)^n$ we have that $s \in G(\vlist{X})$ iff $s \in G(\vlist{X} \cap R^*[s])$.
\end{definition}

\noindent Our main interest in this subsection is to prove the following theorem.

\begin{theorem}\label{thm:propssucmap}~
		If $G(X,\vlist{Y})$ is monotone and restricts to descendants then
		\[H(\vlist{Y}) := \lfp_X.G(X,\vlist{Y})\]
		also restricts to descendants.
		%
\end{theorem}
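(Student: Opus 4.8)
The plan is to reduce the statement to a comparison of approximants and then argue by transfinite induction. Fix $s\in\npmoddom$ and a tuple $\vlist{Y}$, and abbreviate $F(X):=G(X,\vlist{Y})$ and $\tilde{F}(X):=G(X,\vlist{Y}\cap R^*[s])$. Since $G$ is monotone in its first coordinate, both $F$ and $\tilde{F}$ are monotone, so by the Knaster--Tarski theorem and the approximant characterisation (\textit{cf.} Fact~\ref{fact:constructivity} and the surrounding discussion) we have $\lfp(F)=\bigcup_\alpha F^\alpha(\nada)$ and $\lfp(\tilde{F})=\bigcup_\alpha \tilde{F}^\alpha(\nada)$, the unions ranging over ordinals. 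As $R^*$ is reflexive we have $s\in R^*[s]$, so it suffices to prove the following key claim: for every ordinal $\alpha$ and every $t\in R^*[s]$, we have $t\in F^\alpha(\nada)$ iff $t\in\tilde{F}^\alpha(\nada)$. Taking the union over $\alpha$ and instantiating $t:=s$ then yields $s\in\lfp(F)$ iff $s\in\lfp(\tilde{F})$, i.e. $s\in H(\vlist{Y})$ iff $s\in H(\vlist{Y}\cap R^*[s])$, which is exactly the desired conclusion.

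First I would prove the key claim by transfinite induction on $\alpha$. The base case $F^0(\nada)=\nada=\tilde{F}^0(\nada)$ and the limit case (where both approximants are unions of smaller ones, so the biconditional is inherited pointwise on $R^*[s]$) are immediate. The successor case is the heart of the argument. Fix $t\in R^*[s]$; I want $t\in G(F^\alpha(\nada),\vlist{Y})$ iff $t\in G(\tilde{F}^\alpha(\nada),\vlist{Y}\cap R^*[s])$. Applying that $G$ restricts to descendants at the point $t$ to each side, the left side is equivalent to $t\in G(F^\alpha(\nada)\cap R^*[t],\,\vlist{Y}\cap R^*[t])$, and the right side to $t\in G\bigl(\tilde{F}^\alpha(\nada)\cap R^*[t],\,(\vlist{Y}\cap R^*[s])\cap R^*[t]\bigr)$. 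Here I would use that $t\in R^*[s]$ gives $R^*[t]\subseteq R^*[s]$ by transitivity of $R^*$, whence $R^*[s]\cap R^*[t]=R^*[t]$ and the outer restriction by $R^*[s]$ disappears: the right side simplifies to $t\in G(\tilde{F}^\alpha(\nada)\cap R^*[t],\,\vlist{Y}\cap R^*[t])$. Finally, since $R^*[t]\subseteq R^*[s]$, the induction hypothesis applies to every $u\in R^*[t]$, so $F^\alpha(\nada)$ and $\tilde{F}^\alpha(\nada)$ coincide on $R^*[t]$; that is, $F^\alpha(\nada)\cap R^*[t]=\tilde{F}^\alpha(\nada)\cap R^*[t]$. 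Hence the two applications of $G$ are literally the same set, and in particular $t$ lies in one iff it lies in the other, closing the induction.

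The main obstacle, and the only genuinely delicate point, is the bookkeeping in the successor step: one must apply the restricts-to-descendants property \emph{to both approximant maps at the moving point $t$} and then exploit the nesting $R^*[t]\subseteq R^*[s]$ to see that the extra cut by $R^*[s]$ is invisible below $t$. This is what lets the induction hypothesis (stated only for descendants of $s$) feed back into the comparison of the first arguments of $G$. Monotonicity enters only to guarantee the existence of least fixpoints and their presentation as unions of approximants; the structural work is carried entirely by the descendant-closure property together with the transitivity and reflexivity of $R^*$.
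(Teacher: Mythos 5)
Your proof is correct and follows essentially the same route as the paper's: both compare the approximants of $F(X)=G(X,\vlist{Y})$ with those of the $R^*[s]$-restricted map by transfinite induction, applying the restricts-to-descendants property at the moving point $t$ and using $R^*[t]\subseteq R^*[s]$ to make the two sides coincide. The only (cosmetic) difference is that the paper's auxiliary map additionally cuts the \emph{output} by $R^*[s]$ and then handles the two directions of the final equivalence separately (one via monotonicity), whereas your single biconditional claim on approximants delivers both directions at once.
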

\begin{proof}
	Define the abbreviations $F(X) := G(X, \vlist{Y})$ and $F_s(X) := G(X, \vlist{Y}\cap R^*[s]) \cap R^*[s]$. We first prove the following claim linking $F$ and $F_s$.
	\begin{claimfirst}
		For every $t\in R^*[s]$ we have that $t\in F(X)$ iff $t \in F_s(X)$.
	\end{claimfirst}
	\begin{pfclaim}
		Direct using restriction to descendants and monotonicity, together with the observation that $R^*[t] \subseteq R^*[s]$.
	\end{pfclaim}
	Moreover, this connection lifts to the approximants of the least fixpoints of $F$ and $F_s$.
	\begin{claim}
		For every $t\in R^*[s]$ we have that $t\in F^\alpha(\nada)$ iff $t \in F_s^\alpha(\nada)$.
	\end{claim}
	\begin{pfclaim}
		We prove it by transfinite induction. It is clear for $F^0(\nada) = \nada = F^0_s(\nada)$. For the inductive case of a successor ordinal $\alpha+1$ let $t$ belong to $R^*[s]$. We have
		\begin{align*}
			t\in F^{\alpha+1}(\nada)
			& \quad\text{iff}\quad t\in F(F^{\alpha}(\nada))
			\tag{by definition}\\
			& \quad\text{iff}\quad t\in F_s(F^{\alpha}(\nada))
			\tag{by Claim~1}\\
			& \quad\text{iff}\quad t\in F_s(F_s^{\alpha}(\nada))
			\tag{by IH}\\
			& \quad\text{iff}\quad t \in F_s^{\alpha+1}(\nada).
			\tag{by definition}
		\end{align*}
		The case of limit ordinals is left to the reader.
	\end{pfclaim}
	The following claim is direct by the definition of $F_s$ as $G(X, \vlist{Y}\cap R^*[s]) \cap R^*[s]$.
	\begin{claim}
		$\lfp_X.F_s(X) \subseteq \lfp_X.G(X,\vlist{Y} \cap R^*[s])$.
	\end{claim}
	Finally, we use the claims and prove that $s\in \lfp_X.G(X,\vlist{Y})$ iff $s\in \lfp_X.G(X,\vlist{Y} \cap R^*[s])$ which means that $H(\vlist{Y})$ restricts to descendants.

	\medskip\noindent
	\fbox{$\Leftarrow$}
	The key observation for this direction is that $G(X,\vlist{Y} \cap R^*[s]) \subseteq G(X,\vlist{Y})$ by monotonicity of $G$. Therefore $\lfp_X.G(X,\vlist{Y} \cap R^*[s]) \subseteq \lfp_X.G(X,\vlist{Y})$.

	\medskip\noindent
	\fbox{$\Rightarrow$}
	If $s\in \lfp_X.G(X,\vlist{Y})$ then there is an ordinal $\beta$ such that $s\in F^\beta(\nada)$. By Claim~2, we then have that $s\in F_s^\beta(\nada)$ and hence $s\in \lfp(F_s)$. Using Claim~3 we can conclude that $s\in \lfp_X.G(X,\vlist{Y} \cap R^*[s])$.
\end{proof}

\subsection{Forward-looking fragment of $\mufoe$}

It is easy to see that parity automata `restrict to descendants.' That is, whenever the game $\agame(\aut,\tmodel)$ is at some basic position $(a,s)$, the match can only continue to positions of the form $(b,t)$ where $t \in R^*[s]$. Moreover, the game can never go back towards the root of the tree. Therefore, it is to be expected that formulas that correspond to parity automata also `restrict to descendants.' This concept is formalized as follows.

\begin{definition}
The \emph{forward-looking fragment} $\muffoe$ of $\mufoe$ is the smallest collection of formulas containing all atomic formulas, closed under Boolean connectives, and such that:
\begin{itemize}
	\itemsep 0pt
	\item If $\vlist{x}=(x_1, \dots, x_m)$ are individual variables and $\varphi(\vlist{x},y)$ is a $\muffoe$-formula whose free variables are among $\vlist{x},y$ then the formulas
	\[
	\exists y. (R_{d}(x_j,y) \land \varphi(\vlist{x},y))
	\quad \text{and} \quad
	\forall y. (R_{d}(x_j,y) \to \varphi(\vlist{x},y))
	\]
	are in $\muffoe$ for all $1 \leq j \leq m$ and $\aact\in\acts$.
	\item If $\varphi(q,y)$ is a $\muffoe$-formula which is positive in $q$ and whose only free individual variable is $y$ then $[\lfp_{q{:}y}.\varphi(q,y)](x)$ is in $\muffoe$ for all $q\in\props$.
\end{itemize}
The forward-looking fragment of $\mucafoe$ is defined as $\mucaffoe := \mucafoe \cap \muffoe$.
\end{definition}

\begin{remark}
	Fixpoints of $\muffoe$ are parameter-free.
\end{remark}

\begin{definition}
	Let $\varphi\in\muffoe$ be such that $FV(\varphi) \subseteq \{\vlist{z}\}$. We say that $\varphi$ \emph{restricts} to descendants if for every model $\npmodel$, assignment $\ass$ and $p\in\props$ the following holds:
	\[
	\npmodel,\ass \models \varphi 
	\quad\text{iff}\quad
	\npmodel[p\resto R^*[\vlist{z}]],\ass \models \varphi 
	\]
	where $R^*[\vlist{z}] := \bigcup_i R^*[\ass(z_i)]$.
\end{definition}

\begin{remark}
	The reader might have expected an alternative definition which requires that 
	$
	\npmodel,\ass \models \varphi
	\text{ iff }
	\npmodel[\props\resto R^*[\vlist{z}]],\ass \models \varphi
	$
	or even that 
	$
	\npmodel,\ass \models \varphi
	\text{ iff }
	\npmodel[\props\resto R^*[FV(\varphi)]],\ass \models \varphi
	$. All these definitions can be proved to be equivalent, and we keep the above version because it will simplify our inductive proofs.
\end{remark}

\begin{remark}
	Restriction to descendants is a weak kind of invariance under generated submodels. Suppose that for formulas $\varphi\in\mufoe$ whose free variables are among $\vlist{x}$ we say that $\varphi$ is \emph{invariant under generated submodels} if for every model $\npmodel$ and assignment $\ass$ we have:
	\[
	\npmodel,\ass \models \varphi
	\quad\text{iff}\quad
	\npmodel\tup{\vlist{x}},\ass \models \varphi
	\]
	where $\npmodel\tup{\vlist{x}}$ is the submodel of $\npmodel$ generated by $\ass(x_1),\dots,\ass(x_m)$.
	As an example, the formula $\varphi(x) := \exists y. \lnot R(x,y)$ is not invariant under generated submodels but, as no $p$ occurs in it, it trivially restricts to descendants of $x$. The fragment $\muffoe$ can be proved to be invariant under generated submodels, but we don't do it in this paper because we will not need it. 
\end{remark}

\begin{proposition}\label{prop:rsformrsmap}
Let $\varphi\in\mufoe$ restrict to descendants and be such that $FV(\varphi) \subseteq \{x\}$.
For every model $\npmodel$, assignment $\ass$, predicates $\qprops\subseteq\props$ and variable $x\in\fovar$, the map $G_{x}:\wp(\npmoddom)^n\to\wp(\npmoddom)$ given by
\[
G_{x}(\vlist{Z}) := \{ t\in\npmoddom \mid \npmodel[\qprops\mapsto\vlist{Z}],\ass[x\mapsto t] \models \varphi\}
\]
restricts to descendants.
\end{proposition}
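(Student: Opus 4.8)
The plan is to unfold the map-level notion of restriction to descendants (Definition~\ref{def:restodes}) and reduce it directly to the hypothesis that the formula $\varphi$ restricts to descendants. Fix a model $\npmodel$, an assignment $\ass$, a set $\qprops = \{q_1,\dots,q_n\} \subseteq \props$ and the variable $x \in \fovar$. By Definition~\ref{def:restodes}, proving that $G_x$ restricts to descendants amounts to showing, for every $s \in \npmoddom$ and every $\vlist{Z} \in \wp(\npmoddom)^n$, that
\[
s \in G_x(\vlist{Z}) \quad\text{iff}\quad s \in G_x(\vlist{Z} \cap R^*[s]),
\]
where $\vlist{Z} \cap R^*[s]$ denotes the componentwise intersection $(Z_1 \cap R^*[s], \dots, Z_n \cap R^*[s])$. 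Spelling out the definition of $G_x$, this is exactly the statement that
\[
\npmodel[\qprops \mapsto \vlist{Z}], \ass[x\mapsto s] \models \varphi
\quad\text{iff}\quad
\npmodel[\qprops \mapsto \vlist{Z} \cap R^*[s]], \ass[x\mapsto s] \models \varphi.
\]

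First I would observe that, since $FV(\varphi) \subseteq \{x\}$, evaluating $\varphi$ under the assignment $\ass[x\mapsto s]$ makes the relevant free-variable tuple reduce to the single point $s$, so that the descendant set appearing in the formula-level definition is $R^*[\vlist{z}] = R^*[\ass[x\mapsto s](x)] = R^*[s]$. This is the crucial alignment: the ``$R^*[s]$'' of the map-level definition coincides with the ``$R^*[\vlist{z}]$'' of the formula-level one. Then I would apply the hypothesis that $\varphi$ restricts to descendants successively, once for each proposition $q_i \in \qprops$, starting from the model $\npmodel[\qprops \mapsto \vlist{Z}]$. Each application restricts the valuation of a single $q_i$ to $R^*[s]$ while leaving $\varphi$, the assignment, and hence $R^*[\vlist{z}] = R^*[s]$ unchanged; the restrictions of distinct $q_i$ do not interfere because $(-)\resto R^*[s]$ only touches the predicate to which it is applied. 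After $n$ applications the valuation of each $q_i$ has become $Z_i \cap R^*[s]$, i.e.\ the resulting model is precisely $\npmodel[\qprops \mapsto \vlist{Z} \cap R^*[s]]$, which yields the displayed equivalence.

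The routine bookkeeping is the only thing that needs care here; I do not expect a genuine obstacle, since the statement is essentially a reformulation of the hypothesis at the level of the induced map. The two points worth stating explicitly are (i) the reduction $R^*[\vlist{z}] = R^*[s]$ enabled by $FV(\varphi) \subseteq \{x\}$, and (ii) the compatibility of the single-proposition restriction operator with the componentwise assignment $\npmodel[\qprops \mapsto -]$, which lets me turn the one-proposition-at-a-time hypothesis into the simultaneous restriction of all of $\qprops$. With these two observations the equivalence above follows immediately, and hence $G_x$ restricts to descendants.
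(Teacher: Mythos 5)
Your proposal is correct and follows essentially the same route as the paper's own (very short) proof: unfold the definition of $G_x$, note that with $FV(\varphi)\subseteq\{x\}$ the formula-level descendant set $R^*[\vlist{z}]$ collapses to $R^*[s]$, and apply the hypothesis that $\varphi$ restricts to descendants. The only difference is that you make explicit the bookkeeping step of iterating the single-predicate hypothesis over all $q_i\in\qprops$ to obtain the simultaneous restriction $\vlist{Z}\cap R^*[s]$, which the paper applies in one stroke; this is a harmless (indeed slightly more careful) elaboration, not a different argument.
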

\begin{proof}
	An element $t$ belongs to $G_x(\vlist{Z})$ iff $\npmodel[\qprops\mapsto\vlist{Z}],\ass[x\mapsto t] \models \varphi$. As $\varphi$ restricts to descendants, this occurs iff $\npmodel[\qprops\mapsto\vlist{Z}\cap R^*[t]],\ass[x\mapsto t] \models \varphi$. By definition of $G_x$, this is equivalent to saying that $t \in G_x(\vlist{Z}\cap R^*[t])$. That is, the map $G_x$ restricts to descendants.
\end{proof}

\begin{lemma}\label{lem:restosuc}
	Every $\varphi\in\muffoe$ restricts to descendants.
\end{lemma}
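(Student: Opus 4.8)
The plan is to prove, by induction on the structure of $\varphi \in \muffoe$, the following slightly more flexible reformulation which is better suited to the induction: for every model $\npmodel$, assignment $\ass$, proposition $p \in \props$, and every set $W \subseteq \npmoddom$ with $\bigcup_{z \in FV(\varphi)} R^*[\ass(z)] \subseteq W$, we have $\npmodel,\ass \models \varphi$ iff $\npmodel[p\resto W],\ass \models \varphi$. Proving this suffices, since taking $W = R^*[\vlist{z}]$ for $\vlist{z} = FV(\varphi)$ recovers the statement of the lemma; in fact the two are equivalent, because restricting $p$ first to $W$ and then (by the stated definition) to $R^*[\vlist{z}]$ amounts to restricting to $W \cap R^*[\vlist{z}] = R^*[\vlist{z}]$. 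The advantage of carrying an arbitrary $W$ is that the induction hypothesis can then be applied to subformulas whose free-variable sets are strictly smaller than that of $\varphi$ while still using one fixed $W$.

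For the atomic case, $R_\aact(x,y)$ and $x \foeq y$ mention no proposition and are preserved trivially; for $q(x)$ the only interesting case is $p = q$, where the claim holds because $\ass(x) \in R^*[\ass(x)] \subseteq W$ forces $\ass(x) \in \tsval(q)$ iff $\ass(x) \in \tsval(q) \cap W$. The Boolean cases are immediate from the induction hypothesis applied with the same $W$ (legitimate precisely because $W$ is allowed to be any superset of the descendant-closure of the free variables). For the modal step I would treat $\exists y.(R_\aact(x_j,y)\land\varphi)$, the universal case being dual: any witness $t$ for $y$ satisfies $(\ass(x_j),t)\in R_\aact$, hence $t \in R[\ass(x_j)]$ and $R^*[t]\subseteq R^*[\ass(x_j)]\subseteq W$; therefore the descendant-closure of the free variables of $\varphi$ under $\ass[y\mapsto t]$ is still contained in $W$, so the induction hypothesis applies to $\varphi$ with this same $W$. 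Since the guard $R_\aact(x_j,y)$ is unaffected by the $p$-restriction and the two existentials range over the same witness set, the equivalence follows.

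The crux is the fixpoint case $\Phi = [\lfp_{q{:}y}.\psi(q,y)](x)$, where $FV(\Phi)=\{x\}$ and we may take $W \supseteq R^*[\ass(x)] =: D$. If $p = q$ the claim is immediate, as $q$ is bound and its external value is overwritten by the semantics of the fixpoint; so assume $p \neq q$. Writing $\npmodel' := \npmodel[p\resto W]$, I would consider the two monotone maps abstracting over $q$ only, $G(Y) := \{t : \npmodel[q\mapsto Y], \ass[y\mapsto t]\models\psi\}$ and $G'(Y) := \{t : \npmodel'[q\mapsto Y], \ass[y\mapsto t]\models\psi\}$, so that $\npmodel,\ass\models\Phi$ iff $\ass(x)\in\lfp(G)$ and $\npmodel',\ass\models\Phi$ iff $\ass(x)\in\lfp(G')$. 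By the induction hypothesis $\psi$ restricts to descendants, so Proposition~\ref{prop:rsformrsmap} (with $\qprops=\{q\}$ and free variable $y$) yields that both $G$ and $G'$ restrict to descendants in the sense of Definition~\ref{def:restodes}. Applying the induction hypothesis twice, once in each model to restrict $p$ to $R^*[t]$, shows that $G$ and $G'$ agree pointwise on $D$: for every $t \in D$ and every $Y$, both $t \in G(Y)$ and $t \in G'(Y)$ reduce to satisfaction of $\psi$ in the one and the same model interpreting $q$ as $Y$ and $p$ as $\tsval(p)\cap R^*[t]$, using that $R^*[t]\subseteq D \subseteq W$ for $t\in D$.

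Finally I would run a transfinite induction on approximants to show $G^\alpha(\nada)\cap D = (G')^\alpha(\nada)\cap D$ for every ordinal $\alpha$: the successor step rewrites $t \in G(G^\alpha(\nada))$ as $t \in G(G^\alpha(\nada)\cap R^*[t])$ by restriction-to-descendants of $G$, replaces $G^\alpha(\nada)\cap R^*[t]$ by $(G')^\alpha(\nada)\cap R^*[t]$ using the inductive hypothesis (since $R^*[t]\subseteq D$), switches $G$ to $G'$ by the pointwise agreement, and undoes the restriction via restriction-to-descendants of $G'$; the limit step is a union. Evaluating at $\ass(x)\in D$ then gives $\ass(x)\in\lfp(G)$ iff $\ass(x)\in\lfp(G')$, which is exactly $\npmodel,\ass\models\Phi$ iff $\npmodel',\ass\models\Phi$. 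The main obstacle is precisely this fixpoint case: one cannot simply feed $p$ as a parameter coordinate into Theorem~\ref{thm:propssucmap}, since that theorem requires the map to be monotone in the parameter whereas $p$ may occur negatively in $\psi$; the remedy is to abstract only over the fixpoint variable $q$ and obtain the two-model comparison by the approximant argument above, which relies solely on restriction-to-descendants (Proposition~\ref{prop:rsformrsmap}) and not on monotonicity in $p$.
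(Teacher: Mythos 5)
Your proof is correct, and while it follows the paper's induction step-for-step in the atomic, Boolean and guarded-quantifier cases, it takes a genuinely different route at the crux, the fixpoint case. The paper abstracts $\psi$ into a two-argument map $G^{\psi}_{q:x}(Q,P)$, with $P$ the interpretation of $p$, and then invokes Theorem~\ref{thm:propssucmap} to conclude that $\lfp_Q.G^{\psi}_{q:x}(Q,\tsval(p))$ restricts to descendants in the parameter $P$. As you observe, that theorem is stated for \emph{monotone} maps, and its proof uses monotonicity in the parameter coordinate for one of the two directions ($G(X,\vlist{Y}\cap R^*[s])\subseteq G(X,\vlist{Y})$); since $p$ may occur negatively in $\psi$, this hypothesis is not available, so the paper's appeal to the theorem is not literally justified as written. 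Your replacement — abstracting only over the (positive) fixpoint variable $q$, forming the two maps $G$ and $G'$ in $\npmodel$ and $\npmodel[p\resto W]$, establishing pointwise agreement on $D=R^*[\ass(x)]$ via the induction hypothesis, and then running a transfinite induction on approximants that uses only restriction-to-descendants of $G$ and $G'$ (via Proposition~\ref{prop:rsformrsmap}) together with monotonicity in $q$ — proves the same conclusion without ever needing monotonicity in $p$. In effect you have supplied the more careful argument that the missing direction of Theorem~\ref{thm:propssucmap} requires in this non-monotone setting; the paper's route is shorter but leans on an unverified hypothesis, while yours is self-contained. Your strengthening of the induction hypothesis to an arbitrary superset $W$ of the descendant closure of the free variables is a harmless reformulation of the paper's own device of allowing any variable list $\vlist{z}\supseteq FV(\varphi)$, and your reduction of the general $W$ back to the stated lemma is sound.
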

\begin{proof}
	%
	Fix $p\in \props$, we prove the statement by induction on $\varphi$.
	%
	\begin{itemize}
		\itemsep 0pt 
		\item If $\varphi$ does not include $p$ or $\varphi = p(x)$ the statement is clear.
		\item Let $\varphi(p,\vlist{x},\vlist{y}) = \psi_1(p,\vlist{x}) \lor \psi_2(p,\vlist{y})$; and consider $\vlist{z}$ such that $\vlist{x},\vlist{y} \subseteq \vlist{z}$.

		\fbox{$\Rightarrow$} Without loss of generality suppose $\npmodel,g \models \psi_1$, then by inductive hypothesis we know that $\npmodel[p\resto R^*[\vlist{z}]],g \models \psi_1$. From this we can conclude $\npmodel[p\resto R^*[\vlist{z}]],g \models \varphi$.

		\fbox{$\Leftarrow$} Without loss of generality suppose $\npmodel[p\resto R^*[\vlist{z}],g \models \psi_1$. By inductive hypothesis we get $\npmodel,g \models \psi_1$ which clearly implies $\npmodel,g \models \varphi$.
		%
		%
		\item Negation is handled by the inductive hypothesis.
		\item Let $\varphi(p,\vlist{x}) = \exists y. (R_{d}(x_j,y) \land \psi(\vlist{x},y))$; and consider $\vlist{z}$ such that $\vlist{x}\subseteq\vlist{z}$.

		\fbox{$\Rightarrow$} Suppose $\npmodel,\ass \models \varphi$. Then there is $s_y \in R_{d}[g(x_j)]$ such that $\npmodel,\ass[y\mapsto s_y] \models \psi(\vlist{x},y)$. By inductive hypothesis we get $\npmodel[p\resto R^*[\vlist{z},y]],\ass[y\mapsto s_y] \models \psi(\vlist{x},y)$ and as $s_y \in R_{d}[g(x_j)]$ and $x_j\in\vlist{z}$ we get that $\npmodel[p\resto R^*[\vlist{z}]],\ass[y\mapsto s_y] \models \psi(\vlist{x},y)$. From this, we can conclude that $\npmodel[p\resto R^*[\vlist{z}]],\ass \models \exists y. (R_{d}(x_j,y) \land \psi(\vlist{x},y))$.

		\fbox{$\Leftarrow$} Suppose $\npmodel[p\resto R^*[\vlist{z}]],\ass \models \varphi$. Then there exists an element $s_y \in R_{d}[g(x_j)]$ such that $\npmodel[p\resto R^*[\vlist{z}]],\ass[y\mapsto s_y] \models \psi(\vlist{x},y)$. As $s_y \in R_{d}[g(x_j)]$ and $x_j\in\vlist{z}$ we know that $R^*[\vlist{z}] = R^*[\vlist{z},y]$. Therefore we also have that $\npmodel[p\resto R^*[\vlist{z},y]],\ass[y\mapsto s_y] \models \psi(\vlist{x},y)$.
		By inductive hypothesis we get $\npmodel,\ass[y\mapsto s_y] \models \psi(\vlist{x},y)$. From this, we can conclude $\npmodel,\ass \models \exists y. (R_{d}(x_j,y) \land \psi(\vlist{x},y))$.
		\item Let $\varphi = [\lfp_{q{:}y}.\psi(q,y)](z)$. Observe that by definition of the fragment, we have $FV(\varphi) = \{z\}$, $q$ is positive in $\psi$ and $FV(\psi) \subseteq \{y\}$. Consider $\vlist{z}$ such that $z\in\vlist{z}$, we have to prove that 
		\[
		\npmodel,\ass \models \varphi \quad\text{iff}\quad \npmodel[p\resto R^*[\vlist{z}]],\ass \models \varphi.
		\]
		By semantics of the fixpoint operator $\npmodel,\ass \models \varphi$ iff $\ass(z) \in \lfp(F^\npmodel_{q:x})$ where
		\[
			F^\npmodel_{q:x}(Q) := \{ t\in\npmoddom \mid \npmodel[q\mapsto Q],\ass[x\mapsto t] \models \psi\}.
		\]
		It will be useful to take a slightly more general definition: consider the map
		\[
		G_{q:x}^\psi(Q,P) := \{ t\in\npmoddom \mid \npmodel[q\mapsto Q; p\mapsto P],\ass[x\mapsto t] \models \psi\}
		\]
		and observe that $F^\npmodel_{q:x}(Q) = G_{q:x}^\psi(Q,\tsval(p))$ and therefore their least fixpoints will be the same. By inductive hypothesis and Proposition~\ref{prop:rsformrsmap}, we know that $G_{q:x}^\psi(Q,P)$ restricts to descendants. Using Theorem~\ref{thm:propssucmap} we get that $\lfp_Q.G_{q:x}^\psi(Q,\tsval(p))$ restricts to descendants as well. That is,
		\[
		\ass(z) \in \lfp_Q.G_{q:x}^\psi(Q,\tsval(p))
		\quad\text{iff}\quad
		\ass(z) \in \lfp_Q.G_{q:x}^\psi(Q,\tsval(p)\cap R^*[\ass(z)]).
		\]
		Because $z\in\vlist{z}$ and the monotonicity of $G_{q:x}^\psi$, we also get that
		\[
		\ass(z) \in \lfp_Q.G_{q:x}^\psi(Q,\tsval(p)\cap R^*[\ass(z)])
		\quad\text{iff}\quad
		\ass(z) \in \lfp_Q.G_{q:x}^\psi(Q,\tsval(p)\cap R^*[\vlist{z}]).
		\]
		Using the definition of $F^\npmodel_{q:x}$ and the above equations we can conclude that
		\[
		\ass(z) \in \lfp(F^\npmodel_{q:x})
		\quad\text{iff}\quad
		\ass(z) \in \lfp(F^{\npmodel[p\resto R^*[\vlist{z}]]}_{q:x}).
		\]
		From this, we finally get $\npmodel,\ass \models \varphi$ iff $\npmodel[p\resto R^*[\vlist{z}]],\ass \models \varphi$. \qedhere
	\end{itemize}
\end{proof}

\paragraph{Historical remarks and related results.}
The fragment $\muffoe$ defined here is similar in spirit to the bounded and guarded fragments defined in~\cite{arec:hybr99a,GuardedFragment,GradelW99}. The most natural perspective is to see $\muffoe$ as an extension of the bounded fragment of first-order logic given in~\cite{arec:hybr99a} to first-order logic with fixpoints. In~\cite{GradelW99} the authors introduce a guarded fragment of $\mufoe$, however, they aim to make it as big as possible. For example, their formalism can define the mu-calculus with backward-looking modalities, and therefore is not invariant under generated submodels.

\subsection{Translations}

\subsubsection{From $\wcl$ to $\AutWA(\ofoe)$, on trees}\label{ssec:wcl2aut}

We will first prove the following auxiliary result.

\begin{proposition}
  For every formula $\varphi\in\wcl(\props,\acts)$ with free variables $\mathsf{F} \subseteq \props$ there is an automaton $\aut_\varphi\in\AutWA(\ofoe,\mathsf{F})$ such that for every $\mathsf{F}$-tree $\tmodel$ we have $\tmodel \models \varphi$ iff $\tmodel \models \aut_\varphi$.
%
\end{proposition}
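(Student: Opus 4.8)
The plan is to prove the proposition by a straightforward induction on the structure of $\varphi$, fixing once and for all a finite set $\mathsf{F}\subseteq\props$ containing all the free (second-order) variables that occur, and building every automaton over the alphabet $\wp(\mathsf{F})$ (an automaton may always ignore the letters it does not care about, so a common $\mathsf{F}$ for subformulas is harmless). There are three groups of cases: the atomic formulas $\here{p}$, $p\inc q$ and $R_\aact(p,q)$; the Boolean connectives $\lnot$ and $\lor$; and the weak-chain quantifier $\existswc p$. As explained below, the latter two groups are immediate from the closure properties established in Section~\ref{sec:aut}, so the genuine content of the argument lies in the base cases.

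For the base cases I would exhibit explicit small automata and check by hand that they lie in $\AutWA(\ofoe)$. For $p\inc q$ a single self-looping state $a_I$ of parity $0$ suffices, with
\[
\tmap(a_I,c) := \begin{cases}\bigwedge_{\aact\in\acts}\forall x{:}\aact.\,a_I(x) & \text{if } p\notin c \text{ or } q\in c,\\ \bot & \text{otherwise,}\end{cases}
\]
so the automaton walks through the whole tree verifying the local implication $p\Rightarrow q$ at every node. For $\here{p}$ I would use two states $a_I,a'$ of parity $0$: from $a_I$ the automaton accepts the root only if $p$ holds there and then sends $a'$ everywhere, while $a'$ propagates itself and rejects as soon as it sees $p$ again. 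The only delicate base case is $R_\aact(p,q)$: here I keep a self-looping ``safety'' state $a_I$ that visits every node, together with a one-shot ``checker'' state $b_q$ with $\tmap(b_q,c)=\top$ if $q\in c$ and $\bot$ otherwise; at a node coloured $c$ with $p\in c$ the transition of $a_I$ is
\[
\Big(\bigwedge_{\aact'\in\acts}\forall x{:}\aact'.\,a_I(x)\Big)\land \exists x{:}\aact.\,b_q(x),
\]
and it is just $\bigwedge_{\aact'\in\acts}\forall x{:}\aact'.\,a_I(x)$ when $p\notin c$. Thus \eloise is forced, at every $p$-node, to exhibit an $\aact$-successor whose colour contains $q$.

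Verifying the two constraints is routine but is where I expect the only real friction. Weakness is trivial, since each of these automata has only singleton strongly connected components (the self-looping states $a_I$, $a'$) with constant parity. For additivity one observes that all self-loops carry parity $0$, so the relevant transitions must be completely \emph{multiplicative} in the component: the box-like formulas $\forall x{:}\aact.\,a_I(x)$ are completely multiplicative in $a_I$ (their Boolean duals $\exists x{:}\aact.\,a_I(x)$ are completely additive by Theorem~\ref{thm:ofoeadd}), the extra conjunct $\exists x{:}\aact.\,b_q(x)$ is free of the component state and hence harmless, and $\bot$ lies trivially in every fragment; states sitting in a singleton component with no self-loop have transitions that do not mention the component state at all. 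To meet the \emph{syntactic} additivity/multiplicativity requirement in the definition of $\AutWA(\ofoe)$ literally, I would finally replace each transition formula by an equivalent one in the prescribed fragment using the effective normalization of Corollary~\ref{cor:osnormalize}.

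For the inductive steps nothing new is needed. Negation is handled by complementation: by the induction hypothesis there is $\aut_\varphi\in\AutWA(\ofoe,\mathsf{F})$, and Theorem~\ref{t:cl-cmp} together with Proposition~\ref{PROP_complementation} yields $\aut_{\lnot\varphi}:=\dual{\aut_\varphi}\in\AutWA(\ofoe)$ accepting exactly the trees rejected by $\aut_\varphi$, i.e.\ those satisfying $\lnot\varphi$. Disjunction is handled by the union automaton $\mathbb{U}$ of Theorem~\ref{t:cl-dis}. For $\existswc p.\varphi$ the induction hypothesis gives $\aut_\varphi\in\AutWA(\ofoe,\mathsf{F}\cup\{p\})$; I form the chain projection $\existswc p.\aut_\varphi\in\AutWA(\ofoe,\mathsf{F})$ of Definition~\ref{def:autproj}, and Lemma~\ref{lem:projection} gives $\trees(\existswc p.\aut_\varphi)=\existswc p.\trees(\aut_\varphi)$. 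Unfolding the definitions, $\existswc p.\aut_\varphi$ accepts $\tmodel$ iff $\tmodel[p\mapsto X_p]\models\varphi$ for some finite chain $X_p\subseteq\tmoddom$; and since on trees the generalized finite chains coincide with the finite chains (Proposition~\ref{prop:chainsontrees}), this is exactly the semantics of $\existswc p.\varphi$. The hardest part of the whole proof is thus getting the atomic automata --- especially for $R_\aact(p,q)$ --- correct and fitting their transitions into the required completely multiplicative one-step fragments.
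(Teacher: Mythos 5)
Your proof is correct and follows essentially the same route as the paper: explicit small automata for the atomic formulas (your $\aut_{p\inc q}$ and $\aut_{R_\aact(p,q)}$ are structurally identical to the paper's, corresponding to $\Box^*(p\to q)$ and $\Box^*(p\to\tup{\aact}q)$), closure under Booleans via Theorems~\ref{t:cl-dis} and~\ref{t:cl-cmp}, and the chain projection of Definition~\ref{def:autproj} with Lemma~\ref{lem:projection} for $\existswc p$. You are in fact slightly more complete than the paper, which silently omits the base case $\here{p}$ and leaves implicit both the chain/generalized-chain coincidence on trees and the verification that the box-like transitions land in the completely multiplicative fragment.
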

\begin{proof} The proof is by induction on $\varphi$.
\begin{itemize}
  \item For the base cases $\varphi = p \inc q$ and $\varphi = R_\aact(p,q)$,
  we give the following automata. 

  $\aut_{p\inc q} := \tup{\{a_0\}, \tmap, \pmap, a_0}$ where $\pmap(a_0) = 0$ and
  \[
    \tmap(a_0, c) :=
    \begin{cases}
      \bigwedge_\asort \forall x{:}\asort.a_0(x) & \text{if $q \in c$ or $p \notin c$,}\\
      \bot & \text{otherwise}
    \end{cases}
  \]

  $\aut_{R_\aact(p,q)} := \tup{\{a_0,a_1\}, \tmap, \pmap, a_0}$ where $\pmap(a_0) = \pmap(a_1) = 0$ and
  \begin{align*}
    \tmap(a_0, c) & :=
    \begin{cases}
      \exists x{:}\aact.a_1(x) \land \bigwedge_\asort(\forall y{:}\asort.a_0(y)) & \text{if $p \in c$,}\\
      \bigwedge_\asort \forall x{:}\asort.a_0(x) & \text{otherwise.}
    \end{cases} \\
    \tmap(a_1, c) & :=
    \begin{cases}
      \top & \text{if $q\in c$,}\\
      \bot & \text{if $q\notin c$.}
    \end{cases}
  \end{align*}
  It is easy to syntactically check that these automata are additive-weak and also it is not too difficult to see that they do what they should.

  \noindent\emph{Remark.} A nice observation is that, modally, these automata correspond to $\Box^*(p\to q)$ and $\Box^*(p \to \tup{\aact}q)$ respectively. Also, none of the following automata constructions (i.e., Booleans and projection) creat cycles on the automata. This shows that all the ``recursive power'' of these automata boils down to the $\Box^*$ construction.

  \item For the Boolean cases, where $\varphi = \psi_1 \vee \psi_2$ or $\varphi = \neg\psi$
  we refer to the closure properties of recognizable tree languages, see 
  Theorem~\ref{t:cl-dis} and Theorem~\ref{t:cl-cmp}, 
  respectively.
  
\item 
For the case $\varphi = \exists p. \psi$ let $\mathsf{F}$ be the set of free variables of $\varphi$.
We only consider the case where $p$ is free in $\psi$ as otherwise $\varphi \equiv \psi$ and by
induction hypothesis we already have an automaton $\aut_\psi$ which we can use as $\aut_\varphi$.

Let $\aut_{\psi}\in\AutWA(\ofoe,\mathsf{F}\dcup\{p\})$ be given by the
inductive hypothesis.
We define $\aut_\varphi := {\existswc p}.\aut_{\psi}$
using the construction given in Definition~\ref{def:autproj}.
Observe that $\aut_\varphi$ is an automaton over $\wp(\mathsf{F})$
and that:
\begin{align*}
\tmodel \models {\existswc p}.\aut_{\psi}
   & \quad\text{iff}\quad
     \tmodel[p\mapsto X_p] \models \aut_{\psi} \text{ for a finite chain $X_p \subseteq \tmoddom$}
   \tag{Lemma~\ref{lem:projection}}
\\ & \quad\text{iff}\quad
     \tmodel[p\mapsto X_p] \models \psi \text{ for a finite chain $X_p \subseteq \tmoddom$}
   \tag{induction hypothesis}
\\ & \quad\text{iff}\quad
    \tmodel \models \existswc p. \psi
   \tag{semantics of $\wcl$}
\end{align*}
\end{itemize}
This finishes the proof of the auxiliary result.
\end{proof}

\noindent For the general case, we show the following proposition.

\begin{proposition}
For every $\varphi\in\wcl(\props,\acts)$
there is an automaton $\aut_\varphi\in\AutWA(\ofoe,\props)$ such that
for every $\props$-tree $\tmodel$ we have
$\tmodel \models \varphi$ iff $\tmodel \models \aut_\varphi$.
%
%
\end{proposition}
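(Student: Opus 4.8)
The plan is to close the gap between the auxiliary result and the present statement, which is purely a matter of \emph{alphabet lifting}. The auxiliary proposition produces, for the set $\mathsf{F}\subseteq\props$ of proposition letters occurring free in $\varphi$, an automaton $\aut'_\varphi\in\AutWA(\ofoe,\mathsf{F})$ that is correct on $\mathsf{F}$-trees, whereas here we need an automaton over the full alphabet $\wp(\props)$ that is correct on $\props$-trees. The semantic fact underlying the reduction is that the truth of $\varphi$ depends only on the interpretation of its free variables: writing $\tmodel_\mathsf{F}$ for the $\mathsf{F}$-tree obtained from a $\props$-tree $\tmodel=\tup{\tmoddom,R_{\aact\in\acts},\tscolors,s_I}$ by setting $\tscolors_\mathsf{F}(s):=\tscolors(s)\cap\mathsf{F}$, a routine induction on $\wcl$-formulas yields $\tmodel\models\varphi$ iff $\tmodel_\mathsf{F}\models\varphi$. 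Combined with the auxiliary result this gives $\tmodel\models\varphi$ iff $\tmodel_\mathsf{F}\models\aut'_\varphi$, so it remains only to convert $\aut'_\varphi$ into an automaton over $\wp(\props)$ recognising the same trees.

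First I would let $\aut_\varphi=\tup{A,\tmap,\pmap,a_I}$ have the same state set, initial state and parity map as $\aut'_\varphi=\tup{A,\tmap',\pmap,a_I}$, and define its transition map by forgetting the colours outside $\mathsf{F}$:
\[
  \tmap(a,c):=\tmap'(a,c\cap\mathsf{F})\qquad\text{for all }a\in A,\ c\in\wp(\props).
\]
Then I would verify that $\aut_\varphi\in\AutWA(\ofoe,\props)$. The key observation is that as $c$ ranges over $\wp(\props)$ the argument $c\cap\mathsf{F}$ ranges over all of $\wp(\mathsf{F})$, so the collection of one-step formulas appearing in the range of $\tmap$ is exactly that of $\tmap'$. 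In particular the induced reachability relation $\reach$ (and hence the maximal strongly connected components) together with the parity map are unchanged, so the \textbf{(weakness)} condition transfers verbatim; and for each component $\mccomp$ every formula $\tmap(a,c)=\tmap'(a,c\cap\mathsf{F})$ is one of the transition formulas of $\aut'_\varphi$ already known to lie in $\add{\ofoe^+}{\mccomp}(A)$ or $\mult{\ofoe^+}{\mccomp}(A)$, so \textbf{(additivity)} holds as well.

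Finally I would observe that the two acceptance games coincide on the nose. The games $\agame(\aut_\varphi,\tmodel)$ and $\agame(\aut'_\varphi,\tmodel_\mathsf{F})$ share the same domain $\tmoddom$, relations $R_{\aact\in\acts}$, states and parities, and at every basic position $(a,s)$ the one-step formula governing \eloise's move is identical, since $\tmap(a,\tscolors(s))=\tmap'(a,\tscolors(s)\cap\mathsf{F})=\tmap'(a,\tscolors_\mathsf{F}(s))$. Hence $\eloise$ wins $\agame(\aut_\varphi,\tmodel)@(a_I,s_I)$ iff she wins $\agame(\aut'_\varphi,\tmodel_\mathsf{F})@(a_I,s_I)$, giving $\tmodel\models\aut_\varphi$ iff $\tmodel_\mathsf{F}\models\aut'_\varphi$; chaining the equivalences yields $\tmodel\models\varphi$ iff $\tmodel\models\aut_\varphi$. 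I expect no genuine obstacle: the only step needing a little care is the membership check $\aut_\varphi\in\AutWA(\ofoe,\props)$, and even that reduces to the remark that the lifting alters neither the component structure nor the actual one-step formulas occurring in the transitions.
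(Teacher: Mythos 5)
Your proposal is correct and follows exactly the paper's approach: the paper's own proof consists of precisely the alphabet-lifting construction $\tmap^\props(a,c):=\tmap(a,c\cap\mathsf{F})$, given there only as a brief remark. Your write-up simply supplies the routine verifications (membership in $\AutWA(\ofoe,\props)$ and the coincidence of the acceptance games) that the paper leaves implicit.
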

\begin{proof}
The only observation that we need is that we can transform every automaton $\aut\in\Aut(\llang,\mathsf{F})$ 
which runs on $\mathsf{F}$-trees to an automaton $\aut^\props := \tup{A,\tmap^\props,\pmap,a_I}\in\Aut(\llang,\props)$ which runs on $\props$-trees by defining 
\[
\tmap^\props(a,c) := \tmap(a,c \cap \mathsf{F})
\]
for every $a\in A$ and $c\in \wp(\props)$. The intuition behind this construction is that $\aut^\props$ ignores the $(\props\setminus\mathsf{F})$ part of the colors of the nodes.
\end{proof}

\subsubsection{From $\AutWA(\ofoe)$ to $\mucaffoe$, on all models}\label{sec:aut2caf}

With each initialized automaton $\aut\in\AutWA(\ofoe)$ we associate a formula $\varphi_\aut\in\mucaffoe$ such that $\aut\equiv\varphi_\aut$ for all transition systems (that is, not necessarily trees). To do it, we first show that every parity automaton can be transformed into an automaton whose induced graph is a tree with back edges. This kind of structure has a natural counterpart as a formula.


\begin{figure}[h]
\centering
\begin{tikzpicture}[
font=\small,
->,>=stealth',
n0/.style={circle,thick,draw=blue!75,fill=blue!20},
n1/.style={circle,thick,draw=red!75,fill=red!20},
n2/.style={circle,thick,draw=orange!75,fill=yellow!20},
n3/.style={circle,thick,draw=black!75,fill=black!20},
every node/.style={inner sep=1pt,minimum size=12pt}
]
\begin{scope}[level/.style={level distance=1.7cm, sibling distance=2.5cm},yshift=-0.25cm]
\node[n0] (o0) {$0$}
    child {
        node[n1] (o1) {$1$}
    }
    child {
        node[n2] (o2) {$2$}
    };

\node[below of=o0,n3] (o3) {$3$};
\draw (o1) -- (o2);
\draw (o3) edge [bend right=20] (o1);
\draw (o1) edge [bend right=20] (o3);
\draw (o2) edge [bend right] (o0);
\end{scope}
\begin{scope}[xshift=5.5cm,level/.style={level distance=1cm, sibling distance=2cm}]
\node[n0] (u0) {$0$}
    child {
        node[n1] (u1) {$1$}
        child {
            node[n2] (u2') {$2'$}
        }
        child {
            node[n3] (u3) {$3$}
        }
    }
    child {
        node[n2] (u2) {$2$}
    };

\draw[dashed,draw=black!75] (u3) edge [bend right=40] (u1);
\draw[dashed,draw=black!75] (u2) edge [bend right=40] (u0);
\draw[dashed,draw=black!75] (u2') edge [bend left=40] (u0);
\end{scope}
\begin{scope}[xshift=10.5cm,level/.style={level distance=1cm, sibling distance=2cm}]
\node {$\mu a_0.\varphi_0$}
    child {
        node {$\mu a_1.\varphi_1$}
        child {
            node {$\varphi_{2'}(a_0)$}
        }
        child {
            node {$\varphi_3(a_1)$}
        }
    }
    child {
        node {$\varphi_2(a_0)$}
    };
\end{scope}
\end{tikzpicture}
\caption{Automata, finite unravelling and formula structure.}
\label{fig:autunravelling}
\end{figure}
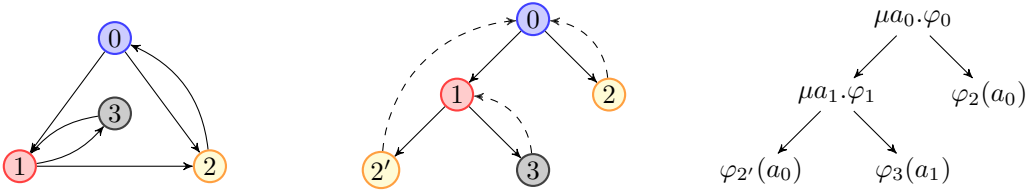

Intuitively, the tree part of the automaton is used to define the scaffolding of the corresponding formulas. On top of that, the nodes which are the target of back-edges will correspond to binding definitions of fixpoint variables. Fig.~\ref{fig:autunravelling} shows an illustration of this intuition where the target formula is taken to be in the $\mu$-calculus. This is done for illustrative reasons, in our case we will actually have binding definitions given by the first-order fixpoint operator $[\lfp_{a_i{:}y}.\varphi_i(y)](x)$.

\begin{definition}
A directed graph $(G,R \subseteq G^2)$ is a \emph{tree with back edges} if there is a partition $R = E \uplus B$ of the edges into tree edges and back edges such that $(G, E)$ is indeed a directed tree, and whenever $(u, v) \in B$, then $(v,u) \in E^*$.
\end{definition}

Berwanger~\cite{Berwanger-Thesis} shows that every finite model can be transformed, via partial unravelling, into a bisimilar finite model which is a tree with back edges. An unravelling technique is also present in Janin's habilitation thesis~\cite[Section~3.2.3]{Janin2006}, where he puts modal parity automata into the shape of trees with back edges. We adapt these ideas to our setting by defining a similar transformation on parity automata of an arbitrary one-step language~$\llang$. 

\begin{definition}\label{def:parunral}
The \emph{finite unravelling} of a parity automaton $\aut = \tup{A,\tmap,\pmap,a_I}$ is the parity automaton $\aut^u = \tup{A^u,\tmap^u,\pmap^u,a_I^u}$ such that
\begin{enumerate}[1.]
	\itemsep 0 pt
	%
	\item $A^u$ is made of non-empty finite sequences $\vlist{a}\in A^+$ such that $a_0 = a_I$ and $a_i \reach_\aut a_{i+1}$,
	\item $a_I^u$ is the one-element sequence containing only $a_I$,
	\item Every element of $A^u$ is reachable from $a_I^u$,
	\item $\pmap^u(\vlist{a}{\cdot}a_k) = \pmap(a_k)$, and
	\item \label{def:finunrav:tmap}$\tmap^u(\vlist{a}{\cdot}a_k,c) = \tmap(a_k,c)[ b \mapsto \mathsf{update}(\vlist{a}{\cdot}a_k,b) \mid b \in A]$ where $\mathsf{update}(a_0,\dots,a_k,b)$ is defined as~(a) the shortest prefix $a_0,\dots,a_i$ of $a_0,\dots,a_k,b$ such that $a_i=b$ and,~(b) for every $i< j \leq k$ we have that $\pmap(a_i) \leq \pmap(a_j)$; that is, the minimum parity encountered in the cycle $a_i,a_{i+1},\dots,a_i$ is $\pmap(a_i)$.
	%
\end{enumerate}
\end{definition}

\begin{remark}
	Condition~(\ref{def:finunrav:tmap}b) is there to ensure that the target of a back-edge is `of maximum priority' among the elements of the given cycle.	This condition is not necessary for $\aut^u$ to be a tree with back edges. However, it is necessary to make the (possible) alternation of fixpoints in the target formula mimic the parity game.

	In our case, as the automata that we use are \emph{weak}, all the parities of the elements of a given cycle are the same. Since the resulting formula will not have any alternation, we could have simply left condition~(\ref{def:finunrav:tmap}b) out. We chose to keep it for compatibility with the results of~\cite{Janin2006}, and for completeness.
\end{remark}


\begin{lemma}[{\cite[Lemma~3.1.2.3]{Janin2006}}]\label{lem:autunravel}
	$\aut \equiv \aut^u$ for any finite unravelling $\aut^u$ of $\aut$.
\end{lemma}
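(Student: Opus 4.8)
The plan is to show that $\aut$ and $\aut^u$ accept exactly the same transition systems, by exhibiting a tight correspondence between their acceptance games. The organising observation is the \emph{last-element} map $h\colon A^u \to A$ sending each sequence $\vlist{a}{\cdot}a_k$ to its last state $a_k$. By the very definition of the finite unravelling (Definition~\ref{def:parunral}) this map is compatible with all the relevant data: $h(a_I^u) = a_I$, the parity is preserved on the nose since $\pmap^u(\vlist{a}{\cdot}a_k) = \pmap(a_k)$, and for every name $b \in A$ the updated sequence $\mathsf{update}(\vlist{a}{\cdot}a_k, b)$ is again a state of $A^u$ whose last element is $b$, i.e.\ $h(\mathsf{update}(\vlist{a}{\cdot}a_k, b)) = b$.

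The single technical ingredient I would isolate first is a routine \emph{substitution lemma} for one-step formulas: for any $\varphi \in \ofoe^+(A)$, any injective relabelling $\sigma\colon A \to A^u$, any one-step model $(D, \val^u)$ over $A^u$, and the induced valuation $\val := \val^u \circ \sigma$ over $A$, one has $(D,\val) \models \varphi$ iff $(D, \val^u) \models \varphi[b \mapsto \sigma(b)]$. Applied with $\sigma(b) := \mathsf{update}(\vlist{a}{\cdot}a_k, b)$ and clause~\ref{def:finunrav:tmap} of Definition~\ref{def:parunral}, this says precisely that a valuation $\val^u$ on $A^u$ satisfies $\tmap^u(\vlist{a}{\cdot}a_k, c)$ iff the projected valuation $\val(b) := \val^u(\mathsf{update}(\vlist{a}{\cdot}a_k, b))$ satisfies $\tmap(a_k, c)$.

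With this in hand I would run a shadow-match argument in both directions. For $\aut^u \Rightarrow \aut$, given a winning strategy $f^u$ for $\eloise$ in $\agame(\aut^u,\model)@(a_I^u,s_I)$, I play a match of $\agame(\aut,\model)@(a_I,s_I)$ while maintaining a shadow $f^u$-guided match whose current basic position $(\vlist{a}{\cdot}a_k, s)$ always projects under $h$ to the current $\aut$-position $(a_k, s)$. At a basic position, $\eloise$ takes the valuation $\val^u$ prescribed by $f^u$ and plays its projection $\val$; the substitution lemma guarantees $\val$ is admissible. When $\abelard$ answers with $(b,t)$, then $t \in \val(b) = \val^u(\mathsf{update}(\vlist{a}{\cdot}a_k,b))$, so $\abelard$'s move can be mirrored in the shadow match by $(\mathsf{update}(\vlist{a}{\cdot}a_k,b), t)$, restoring the invariant. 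The converse direction $\aut \Rightarrow \aut^u$ is symmetric: from a winning $f$ for $\eloise$ in the $\aut$-game, at a shadow position $(\vlist{a}{\cdot}a_k, s)$ projecting to $(a_k, s)$ she plays the lift $\val^u$ of $f$'s valuation $\val$, namely $\val^u(\vlist{b}) := \val(h(\vlist{b}))$ on the updated sequences occurring in $\tmap(a_k,c)$ and $\nada$ elsewhere, which is well-defined by injectivity of $\sigma$ and admissible again by the substitution lemma, and she mirrors $\abelard$'s moves downward.

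In either direction the produced match and its shadow project onto each other position by position, and because $\pmap^u = \pmap \circ h$ (while the $\abelard$-positions carry the parity $\max(\pmap^u[A^u]) = \max(\pmap[A])$ in both games), the two matches induce the same stream of priorities and hence have the same winner. Thus $\eloise$'s winning strategy transfers both ways, giving that $\aut$ accepts $\model$ iff $\aut^u$ accepts $\model$ for every $\model$, which is the claim. I expect the only point requiring genuine care to be the bookkeeping of the $\mathsf{update}$ function, namely checking that it always returns a legitimate state of $A^u$ projecting to $b$ so that $\abelard$'s moves really can be mirrored; the substitution lemma and the priority match are entirely routine.
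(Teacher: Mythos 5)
The paper does not prove this lemma itself: it is imported verbatim from Janin's habilitation thesis (the stated citation), so there is no in-paper proof to compare against. Your reconstruction --- projecting along the last-element map $h$, a one-step substitution lemma matching $\tmap^u(\vlist{a}{\cdot}a_k,c)$ against $\tmap(a_k,c)$ via $b\mapsto\mathsf{update}(\vlist{a}{\cdot}a_k,b)$, shadow matches in both directions, and the observation that $\pmap^u=\pmap\circ h$ makes the priority streams of a match and its shadow coincide --- is correct and is essentially the standard argument the citation refers to.
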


As we are working with additive-weak automata, we need to prove that the construction preserves the properties of weakness and additivity. In other words,

\begin{proposition}\label{prop:finunravwa}
	If $\aut \in \AutWA(\llang)$ then $\aut^u \in \AutWA(\llang)$.
\end{proposition}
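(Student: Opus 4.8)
The plan is to exploit the structural fact, implicit in Definition~\ref{def:parunral}, that $\aut^u$ sits over $\aut$ via the last-coordinate projection $\pi\colon A^u\to A$, $\pi(\vlist{a}):=\last(\vlist{a})$, and that this projection is a graph homomorphism for the one-step reachability relation $\leadsto$. First I would record the relevant bookkeeping. By clause~(\ref{def:finunrav:tmap}) the names occurring in $\tmap^u(\vlist{a},c)$ are exactly the states $\mathsf{update}(\vlist{a},b)$ with $b$ occurring in $\tmap(\last(\vlist{a}),c)$, and by construction $\mathsf{update}(\vlist{a},b)$ is a sequence whose last element is $b$. Hence (i) the map $\sigma:=\mathsf{update}(\vlist{a},-)$ is injective (distinct $b$ yield sequences with distinct last elements) and satisfies $\pi(\sigma(b))=b$; and (ii) $\vlist{a}\leadsto\vlist{c}$ in $\aut^u$ implies $\pi(\vlist{a})\leadsto\pi(\vlist{c})$ in $\aut$. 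Consequently $\pi$ preserves mutual reachability, so whenever $\vlist{a}\ord\vlist{a}'$ and $\vlist{a}'\ord\vlist{a}$ in $\aut^u$ the images $\pi(\vlist{a}),\pi(\vlist{a}')$ are mutually reachable in $\aut$; in particular $\pi$ maps each maximal strongly connected component $\mccomp^u$ of $\aut^u$ into a single maximal SCC $\mccomp$ of $\aut$.

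Weakness is then immediate. Since $\pmap^u=\pmap\circ\pi$ by clause~4 of Definition~\ref{def:parunral}, the reachability-preservation of $\pi$ together with the weakness of $\aut$ gives $\pmap^u(\vlist{a})=\pmap(\pi(\vlist{a}))=\pmap(\pi(\vlist{a}'))=\pmap^u(\vlist{a}')$ for any mutually reachable $\vlist{a},\vlist{a}'$, which is exactly the weakness condition for $\aut^u$.

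For additivity I would fix a maximal SCC $\mccomp^u$, a state $\vlist{a}\in\mccomp^u$, and a color $c$, and let $\mccomp$ be the maximal SCC of $\aut$ containing $\pi[\mccomp^u]$. Assuming $\mccomp^u$ has odd parity (the even case being dual via Proposition~\ref{prop:adddualmult}), I must show $\tmap^u(\vlist{a},c)$ is completely additive in $\mccomp^u$. By the additivity of $\aut$ the formula $\tmap(\last(\vlist{a}),c)$ is completely additive in $\mccomp$, and $\tmap^u(\vlist{a},c)$ is obtained from it by the injective renaming $\sigma$. The crux is a substitution lemma for the semantic notion of complete additivity: if $\sigma$ is injective and $\varphi$ is completely additive in $A'$, then the renamed formula $\varphi[\sigma]$ is completely additive in $\sigma(A')$. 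This is proved through the pullback identity $(D,W)\models\varphi[\sigma]$ iff $(D,W\circ\sigma)\models\varphi$, under which—using injectivity of $\sigma$—a quasi-atom of $W(\sigma(A'))$ corresponds to a quasi-atom of $(W\circ\sigma)(A')$. Applying it yields complete additivity of $\tmap^u(\vlist{a},c)$ in $\sigma(\mccomp)$ (monotonicity in all names is free, as $\tmap^u(\vlist{a},c)\in\llang^+(A^u)$).

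It remains to pass from $\sigma(\mccomp)$ to $\mccomp^u$, for which I would use two elementary closure properties of complete additivity, each proved by restoring coordinates through monotonicity: the property is downward closed in the set of names (additive in $X$ implies additive in any $Y\subseteq X$) and is insensitive to names that do not occur in the formula. By observations (i)–(ii), every occurring name of $\tmap^u(\vlist{a},c)$ lying in $\mccomp^u$ projects into $\mccomp$ and hence lies in $\sigma(\mccomp)$, so the occurring $\mccomp^u$-names form a subset of $\sigma(\mccomp)$ while the non-occurring ones are irrelevant; combining this with the substitution lemma delivers complete additivity in $\mccomp^u$. The main obstacle I anticipate is precisely this meshing of the substitution lemma with the SCC structure—verifying that the $\mccomp^u$-names genuinely inject into the $\mccomp$-names so that additivity in $\sigma(\mccomp)$ can be cut down to additivity in $\mccomp^u$. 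The homomorphism property of $\pi$ and the injectivity of $\mathsf{update}(\vlist{a},-)$ are exactly what make this succeed, so the remaining work is the careful (but routine) handling of quasi-atoms under renaming, restriction to subsets, and adjunction of fresh names.
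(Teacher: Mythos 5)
Your proof is correct and follows essentially the same route as the paper's: project states onto their last coordinate, observe that this respects the reachability relation and hence the SCC structure and parities (giving weakness), and obtain additivity from the fact that $\tmap^u(\vlist{a},c)$ is an injective renaming of $\tmap(\last(\vlist{a}),c)$. You are in fact more careful than the paper on the last step, where the paper simply asserts that renaming the $\last(\mccomp)$-names preserves complete additivity, whereas you spell out the pullback identity for the substitution and the closure of complete additivity under shrinking the name set and adjoining non-occurring names.
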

\begin{proof}
	Define the projection $\last:A^+ \to A$ as $\last(a_0,\dots,a_k) := a_k$. For sets $B \subseteq A^+$ the projection is extended to $\last:\wp(A^+)\to\wp(A)$ by defining $\last(B) := \{\last(b) \mid b\in B\}$. The following observations will be useful:

	\begin{claimfirst}\label{prop:finunravwa:c1}
	 	If $\mccomp \subseteq A^u$ is a strongly connected component in $\aut^u$ then $\last(\mccomp)$ is a strongly connected component in $\aut$.
	\end{claimfirst}
	\begin{pfclaim}
		It is enough to prove that if $a_0,\dots,a_k \ord_{\aut^u} b_0,\dots,b_{k'}$ then $a_k \ord_\aut b_{k'}$, As the notion of strongly connected component is defined in terms of $\ord$.

		Now, because $\ord$ is the reflexive-transitive closure of $\reach$, it will actually be enough to prove that if $a_0,\dots,a_k \reach_{\aut^u} b_0,\dots,b_{k'}$ then $a_k \reach_\aut b_{k'}$. For this, just observe that if $a_0,\dots,a_k \reach_{\aut^u} b_0,\dots,b_{k'}$ then, by contruction of $\tmap^u$ in Definition~\ref{def:parunral}, we have that $b_{k'}$ occurs in $\tmap(a_k,c)$ for some $c\in\wp(\props)$. That is, $a_k \reach_\aut b_{k'}$.
	\end{pfclaim}

	\begin{claim}\label{prop:finunravwa:c2}
	For every strongly connected component $\mccomp \subseteq A^u$ we have $\pmap(\last(\mccomp)) = \pmap^u(\mccomp)$.
	\end{claim}
	\begin{pfclaim}
		By definition of $\pmap^u$.
	\end{pfclaim}

	For the \emph{weakness} condition we proceed as follows: by Claim~\ref{prop:finunravwa:c1} we know that if $C$ is a maximal strongly connected component in $\aut^u$ then $\last(C)$ will also be a strongly connected component in $\aut$. As $\aut$ is weak, then every element of $\last(C)$ will have the same parity, which we call $\pmap^u(\last(C))$. Using Claim~\ref{prop:finunravwa:c2}, we know that $\pmap(\last(\mccomp)) = \pmap^u(\mccomp)$, and therefore get that every element of $C$ has the same parity.
	
	For the \emph{additivity} condition let $\mccomp \subseteq A^u$ be a maximally connected component with $\pmap^u(\mccomp) = 1$ and let $\vlist{a}$ be an element of $\mccomp$. We want to prove that $\tmap^u(\vlist{a},c)$ is completely additive in $\mccomp$, for every color $c\in\wp(\props)$. Define $\varphi := \tmap(\last(\vlist{a}),c)$. It is not difficult to observe that, as $\last(\vlist{a})$ is in the connected component $\last(C)$, then $\varphi$ is completely additive in $\last(C)$. The key observation now is that if we substitute all the names in $\varphi$ from $\last(C)$ with some new set of names $A'$ then the new formula will be completely additive in $A'$. To conclude, we just recall that $\tmap^u(\vlist{a},c)$ is obtained by substituting the names from $\last(C)$ in $\varphi$ with new names that belong to $C$. Using the previous observation, we get that $\tmap^u(\vlist{a},c)$ is completely additive in $C$. We leave the case of $\pmap^u(\mccomp)=0$ to the reader.
\end{proof}

Next, we show that for every $\aut\in\Aut(\ofoe)$ it is possible to give an equivalent formula $\varphi_\aut(x)\in\mufoe$. Shortly after that, we will focus on the completely additive fragments of these formalisms. 

\begin{theorem}\label{thm:autofoetomuffoe}
	For every automaton $\aut\in\AutWA(\ofoe,\props)$ there is a formula $\varphi_\aut(x) \in\mucaffoe(\props)$ with exactly one free variable $x$, such that for every transition system $\model$,
	\[
	\model \models \aut \quad\text{iff}\quad \model \models \varphi_\aut(s_I).
	\]
\end{theorem}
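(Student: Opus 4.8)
The plan is to convert $\aut$ into an automaton whose state graph is a \emph{tree with back edges} and then read off a fixpoint formula whose nesting structure mirrors the tree while its fixpoint bindings sit at the targets of back edges, exactly as illustrated in Figure~\ref{fig:autunravelling}. First I would replace $\aut$ by its finite unravelling $\aut^u$ (Definition~\ref{def:parunral}). By Lemma~\ref{lem:autunravel} we have $\aut\equiv\aut^u$ over all models, and by Proposition~\ref{prop:finunravwa} the automaton $\aut^u$ is again additive-weak; by construction its induced graph $(A^u,\reach)$ is a tree with back edges. Since the acceptance game moves from a position $(a,s)$ only to positions $(b,t)$ with $t\in R^*[s]$, it suffices to produce, for each state $a$, a formula $T_a(x)$ with the single free variable $x$ expressing that $\eloise$ wins $\agame(\aut^u,\model)@(a,s)$; then $\varphi_\aut:=T_{a_I^u}$ is the required formula.

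For the translation of a transition, write $\chi_c(x):=\bigwedge_{p\in c}p(x)\land\bigwedge_{p\notin c}\lnot p(x)$ for the first-order formula selecting the nodes of colour $c$. The one-step sentence $\tmap^u(a,c)\in\ofoe^+(A^u,\acts)$ is turned into a first-order formula $\widehat{\tmap^u(a,c)}(x)$ by the homomorphic pass that (i) replaces each sorted quantifier $\exists y{:}\aact$ by the forward-looking $\exists y.(R_\aact(x,y)\land\cdots)$ and each $\forall y{:}\aact$ by $\forall y.(R_\aact(x,y)\to\cdots)$, and (ii) replaces each atom $b(y)$ by $T_b(y)$. The body of $T_a$ is then $\bigvee_{c\in\wp(\props)}\bigl(\chi_c(x)\land\widehat{\tmap^u(a,c)}(x)\bigr)$. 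Clause (i) is precisely the shape allowed by the grammar of $\muffoe$; and since each successor quantifier binds a fresh variable and the recursion substitutes $T_b(y)$, every $T_a$ retains $x$ as its only free individual variable, so the fixpoints introduced below are parameter-free.

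On a tree with back edges the recursion $T_a$ invokes $T_b$ only for descendants $b$ (tree edges, which are simply inlined) and for ancestors $b$ that are targets of back edges; the recursion is therefore a well-founded structural recursion computed bottom-up. Each back-edge target $b$ becomes a fixpoint variable $q_b$, and the occurrences $T_b(y)$ inside its own subtree are replaced by $q_b(y)$, the binding being placed at $b$. Because $\aut^u$ is weak, all states of a given cycle carry the same parity, so no alternation arises and each binding is a single fixpoint: a cycle of parity $1$ yields a least fixpoint $[\lfp_{q_b{:}y}.\Phi_b(q_b,y)](x)$, while a cycle of parity $0$ yields a greatest fixpoint, which I simulate by $\lnot[\lfp_{q_b{:}y}.\lnot\Phi_b(\lnot q_b,y)](x)$. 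The standard unfolding argument then shows $T_{a_I^u}(s)$ holds iff $\eloise$ wins from $(a_I^u,s)$: for a parity-$1$ cycle the finite approximants of the least fixpoint (Fact~\ref{fact:constructivity}) match the finite, well-founded portions of $\eloise$'s winning plays, while the dualised fixpoint captures the safe infinite plays through a parity-$0$ cycle. That the formula computes the correct value on an \emph{arbitrary} transition system, not merely on trees, follows from Lemma~\ref{lem:restosuc}, since $T_{a_I^u}\in\muffoe$ restricts to descendants, matching the forward-only behaviour of the game.

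It remains to verify membership in $\mucaffoe$, i.e.\ that each fixpoint body is completely additive in its fixpoint variable. This is where the additivity constraint on $\aut^u$ is used. For a parity-$1$ cycle with component $\mccomp$, each relevant transition $\tmap^u(a,c)$ lies in $\add{\ofoe^+}{\mccomp}(A^u)$, i.e.\ is completely additive at the one-step level in the states of $\mccomp$; the variable $q_b$ stands for exactly these states, and I expect this one-step additivity to pass directly to complete additivity of $\Phi_b$ in $q_b$ in the sense used by Proposition~\ref{prop:caformcamap}. For a parity-$0$ cycle the transitions lie in $\mult{\ofoe^+}{\mccomp}(A^u)$, so after the de Morgan dualisation the body of the resulting least fixpoint becomes completely additive by Proposition~\ref{prop:adddualmult}; the forward-looking shape is preserved by this dualisation, since negating $\exists y.(R_\aact(x,y)\land\psi)$ yields $\forall y.(R_\aact(x,y)\to\lnot\psi)$ and conversely, and double negation keeps $q_b$ positive. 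The main obstacle I anticipate is exactly this bookkeeping: transporting the \emph{one-step} notion of complete additivity in $\mccomp$ (Section~\ref{subsec:one-stepcont}) to the \emph{first-order/fixpoint} notion of complete additivity in the fixpoint predicate, through the forward-looking quantifier translation. Once that correspondence is pinned down, the remaining verifications are a routine induction.
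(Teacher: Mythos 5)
Your proposal follows essentially the same route as the paper: finite unravelling into a tree with back edges (Lemma~\ref{lem:autunravel}, Proposition~\ref{prop:finunravwa}), the guarded relativization of the one-step sentences, inlining of tree edges with fixpoint bindings placed at back-edge targets, weakness giving alternation-freeness, and the additivity/multiplicativity of the transition map (via Proposition~\ref{prop:adddualmult} for the parity-$0$ case) yielding membership in $\mucaffoe$. The only cosmetic difference is that you spell out the greatest fixpoint as a dualised least fixpoint where the paper writes $\gfp$ directly, and the "main obstacle" you flag is exactly the step the paper also leaves as a routine induction, so the two arguments match in both structure and level of detail.
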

\begin{proof}
	Because of Lemma~\ref{lem:autunravel} we assume that $\aut$ can be decomposed as a tree with back edges $(A,E,B)$. First we need the following definitions:
	\begin{align*}
	\beta_a(x) & :=  \bigvee_{c\in C} \big( \tau_c(x) \land \tmap^g_{a,c}(x) \big)\\
	\tau_c(x) & :=  \bigwedge_{p\in c}p(x) \land \bigwedge_{p\in \props\setminus c} \lnot p(x)
	\end{align*}
	where the $\tmap^g_{a,c}$ is a guarded version of $\tmap(a,c)$, defined as
	\begin{align*}
	\tmap^g_{a,c}(x) & := \tmap(a,c)[\exists y{:}\asort.\alpha \mapsto \exists y.R_\asort(x,y)\land \alpha; \forall y{:}\asort.\alpha \mapsto \forall y.R_\asort(x,y)\to \alpha].
	\end{align*}
	Now we define auxiliary formulas $\chi_a(x)$ and $\beta_a^\dag(z)$ by mutual induction, on the tree $(A,E)$. 
	\[
		\beta_a^\dag(z) :=
		\begin{cases}
			\beta_a(z) & \text{if $a$ is a leaf,}\\
			\beta_a(z)[a'(y) \mapsto \chi_{a'}(y) \mid (a,a') \in E] & \text{otherwise.}
		\end{cases}
	\]
	\[
	\chi_a(x) :=
	\begin{cases}
		\beta_a^\dag(x) & \text{if $a \notin \Ran(B)$} \\
		[\lfp_{a{:}z}. \beta_a^\dag(z)](x) & \text{if $a \in \Ran(B)$ and $\pmap(a)=1$} \\
		[\gfp_{a{:}z}. \beta_a^\dag(z)](x) & \text{if $a \in \Ran(B)$ and $\pmap(a)=0$}.
	\end{cases}
	\]
	Finally, we set $\varphi_\aut(x) := \chi_{a_I}(x)$. It is left to the reader to prove that $\aut \equiv \varphi_\aut(x)$. A very similar translation (for modal automata) is given in~\cite[Lemma~3.2.3.2--3]{Janin2006}. We still have to prove that $\varphi_\aut(x)$ lands in the appropriate fragment, i.e., that $\varphi_\aut(x)\in\mucafoe\cap\muffoe$.
	\begin{claimfirst}
		$\varphi_\aut(x)\in\mucafoe$.
	\end{claimfirst}
	\begin{pfclaim}
		It is not difficult to show, inductively, that if $a \in A$ belongs to a maximal strongly connected component $\mccomp\subseteq A$ of parity 1 (resp. 0) then $\beta_a^\dag(z)$ will be completely additive (resp. multiplicative) in $\mccomp\subseteq A$. This is enough, because then the fixpoint operators bind formulas of the right kind.
	\end{pfclaim}

	\begin{claim}
		$\varphi_\aut(x)\in\muffoe$.
	\end{claim}
	\begin{pfclaim}
	The formula $\varphi_\aut(x)$ can be seen to belong to $\muffoe$ by a simple inspection of the construction: more specifically, the definition of $\tmap^g_{a,c}(x)$ guards every quantifier, and the fixpoint operators introduced in every $\chi_a(x)$ are exactly of the form required by the fragment $\muffoe$. 
	\end{pfclaim}
	It is worth observing, as a consequence of the last claim, that the fixpoints operators of $\varphi_\aut(x)$ do not use parameters (that is, they don't have extra free --individual-- variables).
\end{proof}

\subsubsection{From $\mucaffoe$ to $\mucafoe$, on all models}

This inclusion is trivial because $\mucaffoe\subseteq\mucafoe$.

\subsubsection{From $\mucafoe$ to $\wcl$, on trees}\label{ssec:cafoe2wcl}

In this case we will make use of the correspondence between one-sorted \wcl and the two-sorted version 2\wcl given in Section~\ref{sec:wcl} and give a translation from $\mucafoe$ to $2\wcl$. The translation is given inductively, it is clear that the interesting part is the simulation of the fixpoint operator using the weak chain quantifier.

Let $\varphi = [\lfp_{p{:}y}.\psi(p,y,\vlist{z})](x) \in \mucafoe$ be such that $FV(\psi) \subseteq \{y,\vlist{z}\}$.
Observe that this formula might have parameters in the fixpoint. That is, the variables $\vlist{z}$ will be free in $\varphi$ if they are free in $\psi$. It will be useful (and necessary) to get rid of them. First we define the auxiliary formula
\[
\psi'(Z_1,\dots,Z_k,p,y) := \exists \vlist{z}.\bigwedge_i Z_i(z_i) \land \psi,
\]
and we observe that, assuming that we could use second-order quantifiers, the following equivalence holds:
\[
[\lfp_{p{:}y}.\psi(p,y,\vlist{z})](x) \equiv \existswc \vlist{Z}.\bigwedge_i Z_i = \{z_i\} \land [\lfp_{p{:}y}.\psi'(Z_1,\dots,Z_k,p,y)](x),\tag{$*$}\label{eq:cafoe2wcl}
\]
where on the right-hand side the fixpoint operator does \emph{not} have (individual) parameters. Using an adequate complexity on the formulas we can still apply the induction hypothesis to $\psi'$. For example, assign a complexity to the formulas in $\mucafoe$ with a function $|\cdot|:\mucafoe\to\nat$ defined as follows:
\begin{align*}
	&|P(x_1,\dots,x_n)| = n+1 && |\lnot\alpha| = |\alpha|+1 \\
	&|[\lfp_{p{:}y}.\alpha(p,y,\vlist{z})](x)| = 10 * |\alpha| && |\alpha_1\land\alpha_2| = |\alpha_1|+|\alpha_2|+1 \\
	&|\existswc X.\alpha| = |\alpha|+1 && |\exists x.\alpha| = |\alpha|+1 .
\end{align*}
\begin{claimfirst}
	$|\varphi| > |\psi'|$.
\end{claimfirst}
\begin{pfclaim}
	By definition of the complexity we have that $|\varphi| = 10*|\psi|$. Now we approximate the complexity of $\psi'$. Recall that $\psi' = \exists \vlist{z}.\bigwedge_i Z_i(z_i) \land \psi$, then
	\[
		|\psi'| \leq \underbrace{|\psi|}_{\exists\vlist{z}} + \underbrace{3*|\psi|}_{\bigwedge_i Z_i(z_i)} + \underbrace{1}_{\land} + \underbrace{|\psi|}_{\psi}.
	\]
	Therefore, $|\psi'| \leq 6*|\psi| < 10*|\psi| = |\varphi|$.
\end{pfclaim}
By induction hypothesis we know that there is a formula $\psi'_w(Z_1,\dots,Z_k,p,y) \in 2\wcl$, which is equivalent to $\psi'(Z_1,\dots,Z_k,p,y)$.
\begin{claim}\label{claim:psiinv}
	The formula $\psi'_w(Z_1,\dots,Z_k,p,y)$ restricts to descendants (on trees).
\end{claim}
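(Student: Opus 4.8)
The plan is to transfer the property from $\psi'_w$ back to the $\mucafoe$-formula it was extracted from. By the induction hypothesis of the translation we have $\psi'_w \equiv \psi'$, and since ``restricts to descendants'' is a purely semantic notion it suffices to prove it for $\psi'(Z_1,\dots,Z_k,p,y)$ with respect to $p$; that is, for every tree $\tmodel$ and assignment $\ass$,
\[
\tmodel,\ass \models \psi' \quad\text{iff}\quad \tmodel[p\resto R^*[\ass(y)]],\ass \models \psi'.
\]
Note that $y$ is the only free individual variable of $\psi'$, the parameters $\vlist{z}$ having been bound by the guarded block $\exists\vlist{z}.\bigwedge_i Z_i(z_i)\land(-)$ that was introduced for exactly this purpose, while $Z_1,\dots,Z_k$ and $p$ are treated as free (second-order) names.

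First I would reduce the statement for $\psi'$ to the analogous statement for the body $\psi(p,y,\vlist{z})$. The guards $Z_i(z_i)$ and the quantifiers $\exists\vlist{z}$ do not mention $p$, so they are unaffected by the operation $p\resto R^*[\ass(y)]$; hence it is enough to know that $\psi$ restricts to descendants of its free individual variables and then to argue that, once each witness $z_i$ is forced to lie in the fixed set $Z_i$, passing from $R^*[\ass(y)]\cup\bigcup_iR^*[\ass(z_i)]$ down to $R^*[\ass(y)]$ does no harm. For the restriction property of $\psi$ itself I would run the same structural induction as in Lemma~\ref{lem:restosuc}: the atomic, Boolean and fixpoint cases go through verbatim, the fixpoint case invoking Proposition~\ref{prop:rsformrsmap} and Theorem~\ref{thm:propssucmap} to push the property through the $\lfp$ operator.

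The delicate — and, I expect, genuinely hard — step is the quantifier case of $\psi$. In the forward-looking fragment $\muffoe$ treated by Lemma~\ref{lem:restosuc}, every quantifier is \emph{guarded} by a relation $R_\aact(x_j,y)$ forcing the new witness to be a child of a point already in play, and this is precisely what confines the dependency on $p$ to descendants. A general body $\psi\in\mucafoe$ is, however, only assumed to be completely additive in $p$: its quantifiers need not be guarded, so $\psi$ may legitimately inspect $p$ at \emph{ancestors} of $y$ (for instance $\exists x.(R(x,y)\land p(x))$ is completely additive in $p$ yet reads $p$ at the parent of $y$). Complete additivity alone guarantees only that $\psi$ depends on at most one point of the extension of $p$; it does not locate that point among the descendants of $\ass(y)$. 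Closing this gap — showing that on trees, and in the presence of the surrounding fixpoint, the single relevant $p$-point can always be taken inside $R^*[\ass(y)]$ — is the crux, and it is exactly where the inductive invariant of Lemma~\ref{lem:restosuc} fails to transfer. The cleanest remedy I would pursue is to strengthen the induction so that only forward-looking bodies ever arise, i.e.\ to first route $\mucafoe$ through $\mucaffoe$ and carry out the present translation on $\mucaffoe$-formulas, where restriction to descendants is available directly from Lemma~\ref{lem:restosuc}.
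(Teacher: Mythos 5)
You have correctly diagnosed the obstruction to your own plan: a fixpoint body in $\mucafoe$ is only assumed to be completely additive in $p$, not forward-looking, so the quantifier case of the induction in Lemma~\ref{lem:restosuc} does not transfer (your example $\exists x.(R(x,y)\land p(x))$ makes the point precisely). But the repair you sketch --- prove restriction to descendants for $\psi'$ itself, and to make that feasible reroute the whole translation so that it starts from $\mucaffoe$ --- is not available at this stage of the argument: the equivalence of $\mucafoe$ and $\mucaffoe$ on trees is exactly what the link $\mucafoe\to\wcl$ is needed to close (in the paper it is obtained only as a corollary of the full circle of Theorem~\ref{thm:allthesame}), so presupposing a translation $\mucafoe\to\mucaffoe$ here would be circular. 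As written, your argument reduces the claim to a statement about $\psi'$ that you then (rightly) cannot prove, so there is a genuine gap.

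The paper closes the claim by transferring the property in the \emph{opposite} direction and never analysing the syntax of $\psi'$ at all. By the induction hypothesis of the translation, $\psi'_w$ is already a concrete $2\wcl$-formula; the translations $\wcl\to\AutWA(\ofoe)\to\mucaffoe$ established in the preceding subsections --- which are independent of the $\mucafoe\to\wcl$ link being constructed --- convert $\psi'_w$ into a $\mucaffoe$-formula equivalent to it on trees; Lemma~\ref{lem:restosuc} states that every $\mucaffoe$-formula restricts to descendants; and since restriction to descendants is a purely semantic property, it is inherited by $\psi'_w$ through the equivalence. The missing idea in your proposal is thus to exploit that $\psi'_w$ lives in $\wcl$ and can be round-tripped through the automata into the forward-looking fragment, rather than trying to establish the property for its $\mucafoe$ preimage. (Your structural remarks about the guards $Z_i(z_i)$ and the parameters $\vlist{z}$ are fine as far as they go, but they become unnecessary once the argument is routed this way.)
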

\begin{pfclaim}
	Combining the results of the last sections, that is, the translations
	\[
	\wcl \to \AutWA(\ofoe) \to \mucaffoe
	\]
	we get that $\psi'_w(Z_1,\dots,Z_k,p,y)$ is equivalent (on trees) to a formula in $\mucaffoe$. By Lemma~\ref{lem:restosuc} we know that these formulas restrict to descendants.
\end{pfclaim}

Now we know that $\psi'$ is completely additive in $p$ and restricts to descendants. Also, we modified it in such a way that there is only one free variable. We did this in order to use Proposition~\ref{prop:rsformrsmap} and obtain that the functional $F^{\psi'}$ induced by $\psi'$ is completely additive and restricts to descendants as well.
The following theorem, which will be critical for our translation, combines the content of Theorem~\ref{thm:propscafmap} and Theorem~\ref{thm:propssucmap} and gives a characterization of the fixpoint of maps that, at the same time, are completely additive and restrict to descendants.

\begin{theorem}\label{thm:restofinsucchain}
	Let $F:\wp(\npmoddom)\to\wp(\npmoddom)$ be completely additive and restrict to descendants. 
	For every $s\in\npmoddom$ we have that $s \in \lfp(F)$ iff $s \in \lfp(F_{\resto Y})$ for some finite chain $Y$.
\end{theorem}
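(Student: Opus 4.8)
The statement is an equivalence, and the two directions are of very different character. The right-to-left direction requires no hypothesis beyond monotonicity: complete additivity entails monotonicity (see the remark following the definition of complete additivity), and since $F_{\resto Y}(X) = F(X)\cap Y \subseteq F(X)$ for every $X$, a routine transfinite induction gives $F^\alpha_{\resto Y}(\nada)\subseteq F^\alpha(\nada)$ for all $\alpha$, whence $\lfp(F_{\resto Y})\subseteq\lfp(F)$ and in particular $s\in\lfp(F)$. This is exactly the argument already used in the $\Leftarrow$ direction of Theorem~\ref{thm:propscafmap}(2), and it uses neither the chain condition nor restriction to descendants.

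All the content is therefore in the left-to-right direction. Starting from $s\in\lfp(F)$ and using that $F$ is completely additive, Corollary~\ref{cor:fpcafrel} supplies a $\stl_F$-path: elements $t_1,\dots,t_k$ with $t_k=s$, $t_1\in F(\nada)$, and $t_{i+1}\in F(\{t_i\})$ for $1\le i<k$. The plan is to turn this path into a finite chain. The key tool is a dichotomy obtained by feeding each step into the restriction-to-descendants property (Definition~\ref{def:restodes}): since $t_{i+1}\in F(\{t_i\})$ iff $t_{i+1}\in F(\{t_i\}\cap R^*[t_{i+1}])$, either $t_i\in R^*[t_{i+1}]$ (so $t_i$ is a descendant of $t_{i+1}$), or else $\{t_i\}\cap R^*[t_{i+1}]=\nada$ and hence $t_{i+1}\in F(\nada)$.

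Next I would truncate the path. Let $j$ be the largest index with $t_j\in F(\nada)$; this is well-defined since $t_1\in F(\nada)$. For every $i$ with $j<i\le k$, maximality of $j$ rules out the second horn of the dichotomy, forcing $t_{i-1}\in R^*[t_i]$, i.e.\ $t_i\mathrel{R^*}t_{i-1}$. Thus the truncated sequence $t_j,t_{j+1},\dots,t_k$ is linearly ordered by $R^*$, namely $t_k\mathrel{R^*}t_{k-1}\mathrel{R^*}\cdots\mathrel{R^*}t_j$, so $Y:=\{t_j,\dots,t_k\}$ is finite and totally ordered by $R^*$; in the intended application to trees $R^*$ is a partial order, so $Y$ is a genuine chain. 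Relabelling the truncated sequence as $u_1=t_j,\dots,u_m=t_k=s$, we still have $u_1\in F(\nada)$, $u_{i+1}\in F(\{u_i\})$, and every $u_i\in Y$, and an induction mirroring the claim inside the proof of Theorem~\ref{thm:propscafmap}(2) gives $u_i\in F^i_{\resto Y}(\nada)$: the base case is $u_1\in F(\nada)\cap Y=F_{\resto Y}(\nada)$, and in the step monotonicity of $F$ yields $u_{i+1}\in F(\{u_i\})\subseteq F(F^i_{\resto Y}(\nada))$, so $u_{i+1}\in F(F^i_{\resto Y}(\nada))\cap Y=F^{i+1}_{\resto Y}(\nada)$. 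Hence $s=u_m\in F^m_{\resto Y}(\nada)\subseteq\lfp(F_{\resto Y})$, as required.

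The one delicate point I expect is the truncation-and-chain step: one must apply restriction to descendants to $F(\{t_i\})$ rather than to the ambient relation, and cutting the $\stl_F$-path at the last $F(\nada)$-witness is precisely what forces every surviving step to be a genuine descendant step, so that $R^*$ orders the survivors linearly. Everything else is bookkeeping with finite approximants that is already licensed by Fact~\ref{fact:constructivity} and Theorem~\ref{thm:propscafmap}.
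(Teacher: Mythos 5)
Your proof is correct and follows essentially the same route as the paper: both directions rest on the path characterization of completely additive fixpoints (Lemma~\ref{lem:charcaffp}, Corollary~\ref{cor:fpcafrel}) combined with the restriction-to-descendants property to turn that path into a finite chain, plus the monotonicity argument $F_{\resto Y}^\alpha(\nada)\subseteq F^\alpha(\nada)$ for the converse. The only difference is one of packaging: the paper builds the descendant condition into the downward construction of the $u_i$ by intersecting $F^i(\nada)$ with $R^*[u_{i+1}]$ before extracting a quasi-atom, whereas you post-process an arbitrary path from Corollary~\ref{cor:fpcafrel} via your dichotomy and truncate at the last $F(\nada)$-witness --- both work, and your version has the small advantage of using the corollary as a black box rather than reopening the proof of Lemma~\ref{lem:charcaffp}.
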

\begin{proof}
	The proof of this theorem follows from a minor modification of Lemma~\ref{lem:charcaffp} for completely additive maps, which provides elements $t_1,\dots,t_k = s$ such that $t_i \in F^i(\nada)$. The key observation is that, because $F$ restricts to descendants, we can choose the elements in a way that $t_i R^* t_{i+1}$ for all $i$.
	
	The main change to the proof of Lemma~\ref{lem:charcaffp} is when we want to define $u_i$ in terms of $u_{i+1} \in F^{i+1}(\nada)$. By definition we have that $u_{i+1} \in G(F^i(\nada),\vlist{Y})$. Now we can use that \emph{$G$ restricts to descendants} and get that $u_{i+1} \in G(F^i(\nada) \cap R^*[u_{i+1}],\vlist{Y} \cap R^*[u_{i+1}])$.
	By complete additivity of $G$ there is a quasi-atom $(T,\vlist{Q}')$ of $(F^i(\nada)\cap R^*[u_{i+1}],\vlist{Y}\cap R^*[u_{i+1}])$ such that $u_{i+1} \in G(T,\vlist{Q}')$. This means that the element chosen from $T$ will be a descendant of $u_{i+1}$ and therefore we will get a (finite) chain.
\end{proof}

\begin{proposition}\label{prop:mucafoe2wcl}
There is an effective translation $(-)^t : \mucafoe \to 2\wcl$ such that
for every $\tmodel$ and $\varphi \in\mucafoe$ we have
$\tmodel \models \varphi$ iff $\tmodel \models \varphi^t$.
\end{proposition}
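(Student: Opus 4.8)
The plan is to define the translation $(-)^t$ by induction on the complexity measure $|\cdot|:\mucafoe\to\nat$ introduced above, so that the parameter-stripped body $\psi'$ of a fixpoint is strictly simpler than the fixpoint itself (this is exactly the complexity estimate $|\psi'|<|\varphi|$ established above, which licenses the induction hypothesis on $\psi'$). On the atomic formulas $q(x)$, $R_\aact(x,y)$ and $x\foeq y$ the translation is the identity, since these are already atoms of $2\wcl$; the Boolean connectives and first-order quantification are handled homomorphically, e.g.\ $(\exists x.\varphi)^t := \exists x.\varphi^t$. All the work is concentrated in the fixpoint clause $\varphi = [\lfp_{p{:}y}.\psi(p,y,\vlist{z})](x)$, where the weak chain quantifier must simulate the least fixpoint.

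For the fixpoint I would first invoke the already-established equivalence~\eqref{eq:cafoe2wcl} to absorb the individual parameters $\vlist{z}$ into singleton second-order variables $\vlist{Z}$, reducing the task to the \emph{parameter-free} fixpoint $[\lfp_{p{:}y}.\psi'(\vlist{Z},p,y)](x)$. By the induction hypothesis there is $\psi'_w(\vlist{Z},p,y)\in 2\wcl$ equivalent to $\psi'$. Fixing an interpretation of $\vlist{Z}$, let $F:=F^{\psi'}$ be the induced map $P\mapsto\{t\mid \npmodel[p\mapsto P],\ass[y\mapsto t]\models\psi'\}$. Since $\psi\in\add{\mufoe}{\{p\}}$, the formula $\psi'=\exists\vlist{z}.\bigwedge_i Z_i(z_i)\land\psi$ remains completely additive in $p$, so by Proposition~\ref{prop:caformcamap} the map $F$ is completely additive; and by Claim~\ref{claim:psiinv} together with Proposition~\ref{prop:rsformrsmap}, $F$ restricts to descendants. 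Hence both hypotheses of Theorem~\ref{thm:restofinsucchain} are met, and
\[
x\in\lfp(F)\quad\text{iff}\quad x\in\lfp(F_{\resto W})\text{ for some finite chain }W\subseteq\tmoddom .
\]

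It remains to express $x\in\lfp(F_{\resto W})$ in $2\wcl$. The crucial point is that, because $W$ is a \emph{finite chain}, every subset of $W$ is again a finite chain, and $\lfp(F_{\resto W})$ is the least pre-fixed point of the monotone map $F_{\resto W}:\wp(W)\to\wp(W)$; by Knaster--Tarski it equals the intersection of all $X\subseteq W$ with $F(X)\cap W\subseteq X$. Writing $\theta_X(t)$ for the $2\wcl$ formula expressing $t\in F(X)$ (obtained from $\psi'_w$ by renaming $p$ to the set variable $X$ and $y$ to $t$), and abbreviating $\mathrm{Cl}(X,W):=\forall t.\big((\theta_X(t)\land W(t))\to X(t)\big)$ for the statement $F(X)\cap W\subseteq X$, I set
\[
([\lfp_{p{:}y}.\psi'(\vlist{Z},p,y)](x))^t := \existswc W.\Big(W(x)\land \forallwc X.\big((X\inc W\land \mathrm{Cl}(X,W))\to X(x)\big)\Big),
\]
and then recover the full fixpoint translation by prefixing the singleton quantifiers from~\eqref{eq:cafoe2wcl}:
\[
\varphi^t := \existswc \vlist{Z}.\Big(\bigwedge_i(\texttt{singleton}(Z_i)\land Z_i(z_i))\land ([\lfp_{p{:}y}.\psi'(\vlist{Z},p,y)](x))^t\Big).
\]
Here $W$ ranges over the finite chains supplied by Theorem~\ref{thm:restofinsucchain} and $X$ over their (automatically chain-shaped) subsets, so every quantifier stays inside the weak-chain fragment; effectiveness is immediate, since each inductive step is a purely syntactic construction and $\psi'\equiv\psi'_w$ is produced by the induction hypothesis.

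The main obstacle I anticipate is justifying that $F=F^{\psi'}$ really restricts to descendants, i.e.\ Claim~\ref{claim:psiinv}. This is precisely what forces the witness $W$ in Theorem~\ref{thm:restofinsucchain} to be a \emph{finite chain} rather than an arbitrary, possibly unbounded, subset, and it is the only place where the detour $\wcl\to\AutWA(\ofoe)\to\mucaffoe$ and Lemma~\ref{lem:restosuc} are needed. Since that detour leans on the entire automata-theoretic development of the earlier sections, the correctness of this step is the delicate part of the argument: without the finiteness of $W$, the $\existswc/\forallwc$ encoding of $\lfp(F_{\resto W})$ would no longer be expressible in $2\wcl$, and any soundness gap in the descendant-restriction claim would propagate to the whole translation.
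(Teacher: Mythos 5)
Your proposal is correct and follows essentially the same route as the paper: the same parameter-stripping via the singleton second-order variables of~\eqref{eq:cafoe2wcl}, the same complexity-measure induction, and the same relativization of the Knaster--Tarski pre-fixed-point encoding to a finite chain supplied by Theorem~\ref{thm:restofinsucchain} (your extra conjunct $W(x)$ is harmless, being already forced by instantiating the universal quantifier with $X=W$). Note only that, exactly as you anticipate, the argument inherits the paper's own reliance on Claim~\ref{claim:psiinv} and hence on the detour through $\AutWA(\ofoe)$ and $\mucaffoe$ --- the very point the manuscript's warning identifies as the gap in this section.
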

\begin{proof}
Clearly the interesing case is that of the fixpoint operator. We define the translation of the fixpoint as follows:
\begin{align*}
([\lfp_{p{:}y}.\psi(p,y)](x))^t & :=  \existswc \vlist{Z}.\bigwedge_i Z_i = \{z\} \land~ \\
								& \phantom{:=\;\;} \existswc Y.\big(\forallwc W \subseteq Y.W \in \mathsf{PRE}(F^{\psi'}_{\resto Y}) \to x\in W \big) \\
W \in \mathsf{PRE}(F^{\psi'}_{\resto Y}) & := \forall v. \psi_w'(Z_1,\dots,Z_k,W,v) \land v\in Y \to v\in W.
\end{align*}
The first conjunct of this translation is introduced to get rid of the parameters of $\psi$, and is justified by~\eqref{eq:cafoe2wcl}. To justify the second part we proceed as follows:
first recall that the translation of $[\lfp_{p{:}y}.\psi(p,y)](x)$ into \mso is given by
\[
\forall W.\big(W \in \mathsf{PRE}(F^\psi) \to x\in W \big).
\tag{t-MSO}\label{eq:tmso}
\]
where $W \in \mathsf{PRE}(F^\psi)$ expresses that $W$ is a prefixpoint of $F^\psi:\wp(\moddom)\to\wp(\moddom)$. 
This translation is based on the following fact about fixpoints of monotone maps:
\[
s \in \lfp(F^{\psi}) \quad\text{iff}\quad s \in \bigcap\{W\subseteq S \mid W \in \mathsf{PRE}(F^{\psi})\}.
\tag{$\mathsf{PRE}$}\label{eq:pre}
\]
It is easy to see that~\eqref{eq:tmso} exactly expresses that $\ass(x)$ has to belong to every prefixpoint of $F^\psi$. In our translation $(-)^t$, however, we cannot make use of the set quantifier $\exists W$, since we are dealing with \wcl. The crucial observation is that, as $F^{\psi'}:\wp(\tmoddom)\to\wp(\tmoddom)$ is completely additive and restricts to descendants, then we can use Theorem~\ref{thm:restofinsucchain} to prove that, without loss of generality, we can restrict ourselves to finite chains, in the following sense:
\begin{align*}
s \in \lfp(F^{\psi'})
& \quad\text{iff}\quad s \in \lfp(F^{\psi'}_{\resto Y}) \text{ for some f.c. $Y$}
\tag{Theorem~\ref{thm:restofinsucchain}}\\
& \quad\text{iff}\quad s \in \bigcap\{W\subseteq \tmoddom \mid W \in \mathsf{PRE}(F^{\psi'}_{\resto Y})\} \text{ for some f.c. $Y$}
\tag{\ref{eq:pre}}\\
& \quad\text{iff}\quad s \in \bigcap\{W \subseteq Y \mid W \in \mathsf{PRE}(F^{\psi'}_{\resto Y})\} \text{ for some f.c. $Y$}.
\tag{Image of $F^{\psi'}_{\resto Y}$}
\end{align*}
\noindent Therefore, the second part of the translation $(-)^t$ basically expresses the same as~\eqref{eq:tmso} but relativized to a finite chain $Y$. The correctness of the translation is then justified by the above equations.
\end{proof}

\section{Expressiveness modulo bisimilarity}\label{sec:char}

In this section, we characterize the bisimulation-invariant fragment of the main formalisms that we have been using throughout the article. Our final objective is to show that
\[
\pdl \equiv \binfotc/{\bis}
\quad\text{and}\quad
\pdl \equiv \wcl/{\bis}.
\]
%
That is, we will prove Theorems~\ref{thm:mainbinfotc} and~\ref{thm:mainwcl}. Moreover, we show that the equivalences are effective. As a first step, the first subsection characterizes the bisimulation-invariant fragment of $\AutWA(\ofoe)$, which is our bridge between the logics of this paper. The following subsections prove the two main bisimulation-invariance results. Even though the technique is almost the same, these statements will be proved in three different subsections which also provide additional remarks about the logic in question. We think this is the clearest way to present it, in spite of some repetition of the arguments.

\subsection{Bisimulation-invariant fragment of $\AutWA(\ofoe)$}

In this subsection we will define a construction $(-)^{\bullet}: \AutWA(\ofoe) \to \AutWA(\ofo)$ such that
for every automaton $\aut$ and transition system $\model$ we have:
\[
\aut^{\bullet} \text{ accepts } \model \quad\text{iff}\quad \aut \text{ accepts } \model^{\omega}
\]
where $\model^\omega$ is the $\omega$-unravelling of $\model$ (defined in Section~\ref{ssec:prelim_trees}). From this, it is easy to prove, as a byproduct, that $\AutWA(\ofoe)/{\bis} \equiv \AutWA(\ofo)$. As we shall see, the map $(-)^{\bullet}$ is completely determined at the 
one-step level, that is, by some model-theoretic connection between 
$\ofoe$ and $\ofo$.


\begin{definition}
We define the one-step translation
$(-)_1^{\bullet} : \ofoe^+(A,\sorts) \pto \ofo^+(A,\sorts)$ on one-step formulas of $\ofoe^+(A,\sorts)$ which are in strict basic form, as follows:
\[
( \posdbnfofoe{\vlist{T}}{\Pi}_\asort )^{\bullet}_1 :=
\posdgbnfofo{\vlist{T}}{\Pi}_\asort
\]
where on the right hand side $\vlist{T}$ is seen as a set.
\end{definition}

\noindent The key property of this translation is the following.

\begin{proposition} 
\label{prop:pinvariant}
For every one-step model $(D_1,\dots,D_n,V)$ and $\alpha \in \ofoe^+(A)$ we have
\[
(D_1,\dots,D_n,V) \models \alpha^{\bullet} \text{ iff } (D_1\times\{1\}\times \omega,\dots,D_n\times\{n\}\times \omega,\val_\pi) \models \alpha,
\]
where $\val_{\pi}$ 
 is the induced valuation given by 
$\val_{\pi}(a) := \{ (d,i,k) \mid d \in \val(a), 1\leq i \leq n, k\in\omega\}$.
\end{proposition}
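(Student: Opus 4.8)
The plan is to prove Proposition~\ref{prop:pinvariant} by induction on the structure of the strict basic form of $\alpha$. Since every $\alpha\in\ofoe^+(A)$ is equivalent over strict models to a formula in strict basic form (Theorem~\ref{thm:sbnfofoe}), and the translation $(-)^\bullet_1$ is only defined on formulas in strict basic form, I would first reduce to the case where $\alpha = \bigvee\bigwedge_\asort \posdbnfofoe{\vlist{T}}{\Pi}_\asort$. Because both $\models$ on the left and $\models$ on the right distribute over the outer disjunction and the conjunction over sorts, it suffices to prove the equivalence for a single conjunct $\posdbnfofoe{\vlist{T}}{\Pi}_\asort$ of a fixed sort. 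Throughout, write $\osmodel = (D_1,\dots,D_n,V)$ for the left-hand model and $\osmodel_\pi = (D_1\times\{1\}\times\omega,\dots,D_n\times\{n\}\times\omega,\val_\pi)$ for the blown-up model on the right, and note that $\osmodel_\pi$ is strict by construction, so Theorem~\ref{thm:sbnfofoe} applies to it.

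The heart of the argument is the single-conjunct equivalence
\[
\osmodel \models \posdgbnfofo{\vlist{T}}{\Pi}_\asort
\quad\text{iff}\quad
\osmodel_\pi \models \posdbnfofoe{\vlist{T}}{\Pi}_\asort .
\]
Here the key structural observation is that $\val_\pi$ assigns to each element $(d,i,k)$ of $\osmodel_\pi$ exactly the same positive $A$-type as $d$ has in $\osmodel$, and that every type realized in $\osmodel$ by at least one element of sort $\asort$ is realized in $\osmodel_\pi$ by \emph{infinitely many} elements of the corresponding sort. I would split into the two directions. For the left-to-right direction, suppose $\osmodel\models\posdgbnfofo{\vlist{T}}{\Pi}_\asort$: the existential part gives, for each $i$, a witness $d_i$ realizing $\tau^+_{T_i}$, and the universal part guarantees every element of sort $\asort$ realizes some $\tau^+_S$ with $S\in\Pi$. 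In $\osmodel_\pi$ I can pick \emph{distinct} witnesses $(d_i,i',k_i)$ for the types $T_i$ — this is exactly where the $\omega$-many copies pay off, since the $\foe$ formula demands the witnesses be pairwise different ($\arediff{\vlist{x}}$), whereas the $\fo$ formula does not. The universal part of the $\foe$ conjunct, relativised to elements distinct from the witnesses, then follows because every element of $\osmodel_\pi$ carries a type already realized in $\osmodel$, hence lying in $\Pi$. For the converse, dropping the distinctness requirement and forgetting the copy indices collapses the $\foe$ witnesses to $\fo$ witnesses and the universal statement transfers directly.

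The main obstacle I anticipate is handling the distinctness clause $\arediff{\vlist{x}}$ in the $\foe$ basic form cleanly: the whole point of passing to the $\omega$-unravelling is that it supplies unboundedly many copies of each type, so that the equality-sensitive counting quantifiers of $\ofoe$ cannot distinguish $\osmodel_\pi$ from the ``ideal'' model in which every nonempty type is infinite, and hence $\ofoe$ over $\osmodel_\pi$ behaves exactly like equality-free $\ofo$ over $\osmodel$. I would make this precise by observing that $|S|^\asort_{\osmodel_\pi}$ is either $0$ (when $S$ is unrealized in sort $\asort$ of $\osmodel$) or $\geq k$ for every $k$, so that by Lemma~\ref{lem:connofoe} the number-theoretic conditions $\sim^=_k$ saturate; this is what makes the existential witnesses freely separable. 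The remaining bookkeeping — that $\val_\pi$ preserves positive types and that $\Pi\subseteq\vlist{T}$ is respected under the set-versus-sequence identification in the definition of $(-)^\bullet_1$ — is routine and I would state it as an observation rather than belabour it.
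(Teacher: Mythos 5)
Your proposal is correct and follows essentially the same route as the paper's proof: reduce to a single conjunct of the strict basic form and establish $\osmodel \models \posdgbnfofo{\vlist{T}}{\Pi}_\asort$ iff $\osmodel_\pi \models \posdbnfofoe{\vlist{T}}{\Pi}_\asort$ by exploiting the $\omega$-many copies of each element. The only place where the paper is more explicit is the universal clause of the converse direction --- to see that an element $d$ of the original model realizes some type in $\Pi$, one observes that at most $|\vlist{T}|$ of its $\omega$-many copies can serve as existential witnesses, so some copy falls under the relativized universal quantifier --- whereas you say this ``transfers directly''; your saturation remark contains the idea, but that one sentence is worth spelling out.
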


In the above proposition the model on the left (call it $\osmodel$) is an arbitrary (not necessarily strict) one-step model, whereas the one in the right (call it $\osmodel_\omega$) is a \emph{strict} one-step model. The strictness is obtained by tagging elements of $D_i$, so that the union $\bigcup_i D_i$ is disjoint. Also, unrelated to the strictness of the models, observe that $\osmodel_\omega$ has $\omega$-many copies of each element of $\osmodel$.

\begin{proof}
\fbox{$\Rightarrow$} Let $\osmodel \models \posdgbnfofo{\vlist{T}}{\Pi}_\asort$, we prove that $\osmodel_\omega \models \posdbnfofoe{\vlist{T}}{\Pi}_\asort$. The existential part ($\vlist{T}$) is straightforward, by observing that in $\osmodel_\omega$ we can choose as many distinct witnesses for each $T_i$ as we want, because of the $\omega$-expansion. For the universal part, observe that $\posdgbnfofo{\vlist{T}}{\Pi}_\asort$ states that \emph{every} $d\in D$ satisfies some type in $\Pi$. Therefore, the same happens with the elements of $\osmodel_\omega$. In particular, for the elements that are not witnesses for $\vlist{T}$. Therefore, $\osmodel_\omega \models \posdbnfofoe{\vlist{T}}{\Pi}_\asort$.

\medskip\noindent
\fbox{$\Leftarrow$} Let $\osmodel_\omega \models \posdbnfofoe{\vlist{T}}{\Pi}_\asort$, we prove that $\osmodel \models \posdgbnfofo{\vlist{T}}{\Pi}_\asort$. For the existential part, consider some $T_i$, we show that it has a witness in $\osmodel$. We know by hypothesis that there is some $(d,i,k) \in \osmodel_\omega$ which is a witness for $T_i$. It is easy to see that $d$ works as a witness for $T_i$ in $\osmodel$. For the universal part, consider $d\in D$, we show that it satisfies some type in $\Pi$. If there is some $(d,i,k) \in \osmodel_\omega$ such that $(d,i,k)$ is not a witness of $\vlist{T}$ then we are done, as it should satisfy some type in $\Pi$ by the semantics of $\posdbnfofoe{\vlist{T}}{\Pi}_\asort$. The key observation is that there is always such an element, because at most $|\vlist{T}|$ elements of $\{(d,i,n) \mid n\in\nat\}$ function as witnesses for $\vlist{T}$.
\end{proof}

\begin{remark}
The above proposition is shown in~\cite{Zanasi:Thesis:2012,DBLP:conf/lics/FacchiniVZ13} for formulas in a special normal form, that is, where the elements of $\vlist{T}$ and $\Pi$ are either singletons or empty. In their case this is enough because, thanks to the simulation theorem, they can assume that their (non-deterministic) automata have this kind of formulas in their transition map. In our case this is a priory not true. Moreover, we are also in a multi-sorted setting.

An unsorted version of Proposition~\ref{prop:pinvariant} is implied by the original proof of $\muML\equiv\mso/{\bis}$ given in~\cite{Jan96}. However, the first explicit presentation of these bisimulation-invariance results in terms of `one-step models' was published by Venema in~\cite{Venxx}. A similar approach can also be found in his lecture notes on the $\mu$-calculus~\cite{Ven12}.
\end{remark}

The following proposition will be crucial for the development of this section. It states that we can assume the transition map of our automata to be in normal form. This is easily achieved by transforming the transition map using Corollary~\ref{cor:osnormalize}.

\begin{definition}
We say that an automaton $\aut \in \AutWA(\ofoe)$ is \emph{normalized} if the formulas of the transition map of $\aut$ are in the basic normal forms of Section~\ref{sec:onestep}. Namely,
	\begin{itemize}
		\itemsep 0pt
		\item Every $\tmap(a,c)$ is of the form $\bigvee \bigwedge_\aSort \posdbnfofoe{\vlist{T}}{\Pi}_\aSort$ as given in Corollary~\ref{cor:ofoepositivenf}; and,
		\item If $\tmap(a,c)$ is completely additive in $A'\subseteq A$, the additional properties of the normal form stated in Corollary~\ref{cor:ofoeadditivenf} apply.
	\end{itemize}
We say that $\aut$ is \emph{strictly normalized} if the formulas of the transition map of $\aut$ are in the \emph{strict} normal forms of Section~\ref{sec:onestep}. Namely,
	\begin{itemize}
		\itemsep 0pt
		\item Every $\tmap(a,c)$ is of the form $\bigvee \bigwedge_\asort \posdbnfofoe{\vlist{T}}{\Pi}_\asort$ as given in Corollary~\ref{cor:ofoepositivenf}; and,
		\item If $\tmap(a,c)$ is completely additive in $A'\subseteq A$, the additional properties of the normal form stated in Corollary~\ref{cor:ofoeadditivenf} apply.
	\end{itemize}
\end{definition}
\begin{proposition}\label{prop:normalize}
	For every $\aut \in \AutWA(\ofoe)$ we can effectively construct:
	\begin{itemize}
		\itemsep 0pt
		\item A normalized automaton ${\aut'\in \AutWA(\ofoe)}$ such that $\aut\equiv\aut'$ over all models.
		\item A strictly normalized automaton ${\aut'\in \AutWA(\ofoe)}$ such that $\aut\equiv\aut'$ over strict trees.
	\end{itemize} 
\end{proposition}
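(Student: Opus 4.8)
The plan is to leave the state space $A$, the initial state $a_I$ and the parity map $\pmap$ untouched, and to replace each one-step formula $\tmap(a,c)$ by an equivalent formula in the required normal form, obtained effectively from Corollary~\ref{cor:osnormalize}. Concretely, fix a maximal strongly connected component $\mccomp$ of $\aut$ and a state $a \in \mccomp$. Since $\aut \in \AutWA(\ofoe)$, the formula $\tmap(a,c) \in \ofoe^+(A)$ is monotone and, according to the parity of $\mccomp$, completely additive in $\mccomp$ (if $\pmap(a)=1$) or completely multiplicative in $\mccomp$ (if $\pmap(a)=0$). I would compute, via Corollary~\ref{cor:osnormalize}, an equivalent formula $\tmap'(a,c)$ in the corresponding basic normal form of Section~\ref{sec:onestep} (the positive additive normal form of Corollary~\ref{cor:ofoeadditivenf}(ii) in the parity-$1$ case, and its multiplicative dual in the parity-$0$ case, built on top of the positive basic form of Corollary~\ref{cor:ofoepositivenf}). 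The crucial bookkeeping point is to perform this computation inside the minimal name set $A_{a,c} \subseteq A$ of $\tmap(a,c)$, so that the resulting normal forms only mention names from $A_{a,c}$ and no new occurrences — and hence no new edges of $\reach$ — are ever created. Effectiveness of the whole construction is immediate, since Corollary~\ref{cor:osnormalize} is effective and the component decomposition of a finite graph is computable.

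The equivalence $\aut \equiv \aut'$ over all models is the easy half. In the acceptance game $\agame(\aut,\model)$ the set of admissible moves of \eloise at a basic position $(a,s)$ is exactly the set of valuations $\val\colon A \to \wp(R[s])$ making $\tmap(a,\tscolors(s))$ true in the one-step model $(R_{\aact_1}[s],\dots,R_{\aact_n}[s],\val)$, while \abelard's moves and the parity condition do not refer to $\tmap$ at all. As $\tmap'(a,c)$ is equivalent to $\tmap(a,c)$ in every one-step model, the games $\agame(\aut,\model)$ and $\agame(\aut',\model)$ coincide position-for-position, and so have the same winner from $(a_I,s_I)$.

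The real content is checking that $\aut'$ still lies in $\AutWA(\ofoe)$, and here the main obstacle is precisely that weakness and additivity are phrased relative to the $\reach$-induced component decomposition, which may change under rewriting. This is where computing inside $A_{a,c}$ pays off: since each $\tmap'(a,c)$ mentions only names occurring in $\tmap(a,c)$, we get $\reach' \subseteq \reach$ and hence $\ord' \subseteq \ord$, so every maximal component $\mccomp'$ of $\aut'$ is contained in a unique maximal component $\mccomp$ of $\aut$. Weakness of $\aut'$ is then immediate, as two $\ord'$-equivalent states are $\ord$-equivalent and carry the same (unchanged) parity. For additivity, take a parity-$1$ component $\mccomp'$ of $\aut'$ and $a \in \mccomp'$; then $\mccomp' \subseteq \mccomp$ for a parity-$1$ component $\mccomp$ of $\aut$, and $\tmap'(a,c)$ was put in the positive additive normal form relative to $\mccomp$. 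I would conclude by two observations. First, complete additivity in $\mccomp$ implies complete additivity in the subset $\mccomp'$: by the quasi-atom characterization, satisfaction already depends on a single singleton among the $\mccomp$-coordinates once the remaining coordinates are held fixed, a fortiori among the $\mccomp'$-coordinates (here one uses that additivity entails monotonicity). Second, the syntactic side conditions of Corollary~\ref{cor:ofoeadditivenf} (that $\Pi$ be $\mccomp$-free and that at most one name of $\mccomp$ occur in $T_1\cdots T_k$) only weaken when $\mccomp$ is replaced by the smaller $\mccomp'$, so the formula remains in the additive normal form relative to $\mccomp'$. The parity-$0$ case is dual, through Proposition~\ref{prop:adddualmult}. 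Hence $\aut'$ is normalized and belongs to $\AutWA(\ofoe)$.

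For the second item I would run the identical argument with the strict normal forms of Section~\ref{sec:onestep} (Theorem~\ref{thm:sbnfofoe} and Corollary~\ref{cor:ofoeadditivenf}(iii)), which use single sorts $\asort$ in place of the macro-sorts $\aSort$. The only extra observation required is that over a strict tree $\tmodel$ the one-step models $(R_{\aact_1}[s],\dots,R_{\aact_n}[s],\val)$ arising in the acceptance game are themselves strict: strictness of $\tmodel$ gives $\Ran(R_\aact)\cap\Ran(R_{\aact'}) = \nada$ for $\aact\neq\aact'$ and hence pairwise disjoint successor sets $R_\aact[s]$. Consequently the acceptance game again only probes the transition formulas on strict one-step models, where the strict normal forms are equivalent to the originals; this yields a strictly normalized $\aut'$ with $\aut \equiv \aut'$ over strict trees.
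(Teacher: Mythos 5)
Your proof is correct and follows exactly the route the paper intends: the paper offers no argument beyond the remark that the proposition ``is easily achieved by transforming the transition map using Corollary~\ref{cor:osnormalize}'', and that is precisely what you do. The extra care you take --- normalizing each $\tmap(a,c)$ inside its minimal name set so that ${\reach'}\subseteq{\reach}$, checking that weakness and complete additivity/multiplicativity survive the resulting refinement of the strongly connected components, and observing that strict trees yield strict one-step models in the acceptance game --- supplies verification steps the paper leaves implicit, and is sound.
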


Finally, we can give the main definition of this section. That is, we define $\aut^\bullet$ for every $\aut$. This definition will be tailored to satisfy the condition:
\[
\aut^{\bullet} \text{ accepts } \model \quad\text{iff}\quad \aut \text{ accepts } \model^{\omega}.
\]
Therefore, it is worth reminding that as $\aut$ is run on $\model^\omega$ (which is a strict tree) we can assume that the transition map is in \emph{strict} normal form.

\begin{definition}
Let $\aut = \tup{A,\tmap,\pmap,a_{I}}$ be an automaton in $\Aut(\ofoe)$.
Using Proposition~\ref{prop:normalize} we assume that $\aut$ is strictly normalized. We define 
the automaton $\aut^{\bullet} := \tup{A,\tmap^{\bullet},\pmap,a_{I}}$ in 
$\Aut(\ofo)$ by putting, for each $(a,c) \in A \times \wp(\props)$:
\[
\tmap^{\bullet}(a,c) := (\tmap(a,c))^{\bullet}_1.
\]
\end{definition}

First, it needs to be checked that the construction $(-)^{\bullet}$, which has
been defined for arbitrary automata in $\Aut(\ofoe)$, transforms 
the additive-weak automata of $\AutWA(\ofoe)$ into automata in the right class, that is, $\AutWA(\ofo)$.

\begin{proposition}
Let $\aut \in \Aut(\ofoe)$.
If $\aut \in \AutWA(\ofoe)$, then $\aut^{\bullet} \in \AutWA(\ofo)$.
\end{proposition}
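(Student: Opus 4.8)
The plan is to verify the two defining conditions of $\AutWA(\ofo)$ — weakness and additivity — for $\aut^{\bullet}$, exploiting that by construction $\aut^{\bullet}$ shares its state space $A$, initial state $a_I$ and parity map $\pmap$ with $\aut$, so it is already an automaton in $\Aut(\ofo)$. Throughout, for $\alpha\in\ofoe^+(A)$ I write $\alpha^{\bullet}$ for the $\ofo$-formula obtained by applying $(-)^{\bullet}_1$ to a strict basic form of $\alpha$; by Proposition~\ref{prop:pinvariant} this is well defined up to logical equivalence.

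First I would observe that the one-step translation preserves the set of \emph{names occurring} in each transition: the formula $\posdbnfofoe{\vlist{T}}{\Pi}_\asort$ and its image $\posdgbnfofo{\vlist{T}}{\Pi}_\asort$ both mention exactly the names in $\bigcup_i T_i \cup \bigcup_{S\in\Pi} S$, and this is unaffected by passing from the sequence $\vlist{T}$ to the set it induces. Since $\tmap^{\bullet}(a,c)$ is built from the disjuncts and conjuncts of $\tmap(a,c)$ by applying this translation sortwise, the name sets of $\tmap(a,c)$ and $\tmap^{\bullet}(a,c)$ agree for every colour $c$. Hence the one-step reachability relation $\reach$, and therefore its reflexive–transitive closure $\ord$, is literally the same for $\aut$ and $\aut^{\bullet}$; in particular the two automata have identical maximal strongly connected components. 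As $\pmap$ is unchanged, the weakness condition for $\aut^{\bullet}$ is immediate from that of $\aut$.

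For the additivity condition I treat the two parities separately. Fix a maximal SCC $\mccomp$ and $a\in\mccomp$ with $\pmap(a)$ odd. Since $\aut$ is strictly normalized and $\tmap(a,c)$ is completely additive in $\mccomp$, it is in the additive normal form of Corollary~\ref{cor:ofoeadditivenf}: in each disjunct at most one sort mentions $\mccomp$ in its existential block, $\Pi$ is $\mccomp$-free, and the concatenation $T_1\cdots T_k$ contains at most one name from $\mccomp$. I would then check that $\tmap^{\bullet}(a,c)=\bigvee\bigwedge_\asort \posdgbnfofo{\vlist{T}}{\Pi}_\asort$ satisfies the $\ofo$ additive normal form conditions of Corollary~\ref{cor:ofoadditivenf}. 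The ``at most one sort'' clause transfers because name sets are preserved; and the condition that $L_\Sigma$ contains at most one name of $\mccomp$ holds because an $\mccomp$-name occurring in the sequence $\vlist{T}$ at most once can lie in at most one \emph{distinct} type of $\Sigma=\{T_1,\dots,T_k\}$ (two distinct types each containing it would force two occurrences in $\vlist{T}$). By Corollary~\ref{cor:ofoadditivenf} this makes $\tmap^{\bullet}(a,c)$ completely additive in $\mccomp$, as required.

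The even-parity (completely multiplicative) case is the main obstacle, because $(-)^{\bullet}_1$ is defined only on strict basic (disjunctive) form and so does not directly expose multiplicativity. I would route around this via Boolean duals, which preserve positivity (Remark after Definition~\ref{DEF_dual}). The key lemma is that $(-)^{\bullet}$ commutes with dualization up to equivalence, i.e. $\dual{(\alpha^{\bullet})}\equiv(\dual{\alpha})^{\bullet}$ for $\alpha\in\ofoe^+(A)$: using Proposition~\ref{prop:pinvariant} both sides reduce to statements about the $\omega$-blow-up $\osmodel_\omega$, and one checks that blowing up the complemented valuation $\val^c$ yields exactly the complement of $\val_\pi$ (both have underlying set $\{(d,i,k)\mid d\notin\val(a)\}$), so the two Boolean-dual clauses coincide. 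Granting this, suppose $\pmap(a)$ is even, so $\tmap(a,c)$ is completely multiplicative in $\mccomp$; by Proposition~\ref{prop:adddualmult} its dual $\dual{(\tmap(a,c))}$ is completely additive in $\mccomp$. Normalizing the dual and applying the additive case gives that $(\dual{(\tmap(a,c))})^{\bullet}$ is completely additive in $\mccomp$, and by the commutation lemma this formula is equivalent to $\dual{(\tmap^{\bullet}(a,c))}$. Hence $\dual{(\tmap^{\bullet}(a,c))}$ is completely additive, so by Proposition~\ref{prop:adddualmult} again $\tmap^{\bullet}(a,c)$ is completely multiplicative in $\mccomp$. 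This establishes the additivity condition in both cases and completes the verification that $\aut^{\bullet}\in\AutWA(\ofo)$.
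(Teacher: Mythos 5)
Your proof is correct and follows the same strategy as the paper's, which simply asserts that a one-step-level inspection shows the translation $(-)^{\bullet}_1$ maps the completely additive fragment of $\ofoe^+$ into that of $\ofo^+$ and likewise for the multiplicative fragments. You supply considerably more detail than the paper's one-line argument — in particular the observation that occurring names (hence SCCs) are preserved, the normal-form check for the additive case, and the commutation of $(-)^{\bullet}$ with Boolean dualization needed to handle the multiplicative case — all of which the paper leaves implicit but which check out.
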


\begin{proof}
This proposition can be verified by a straightforward inspection, at the 
one-step level, that if a formula $\alpha \in \ofoe^+(A)$ belongs to the fragment 
$\add{\ofoe^+}{A'}(A)$, then its translation $\alpha^{\bullet}_1$ lands in 
the fragment $\add{\ofo^+}{A'}(A)$. The same relationship holds for $\mult{\ofoe^+}{A'}(A)$
and $\mult{\ofo^+}{A'}(A)$.
\end{proof}

\noindent We are now ready to prove the main lemma of this section.

\begin{lemma}\label{lem:bisautofoe}
There is an effective construction $(-)^{\bullet}: \AutWA(\ofoe) \to \AutWA(\ofo)$ such that
for every automaton $\aut$ and transition system $\model$ we have:
\[
\aut^{\bullet} \text{ accepts } \model \quad\text{iff}\quad \aut \text{ accepts } \model^{\omega}.
\]
\end{lemma}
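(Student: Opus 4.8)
Since the construction $(-)^{\bullet}$ has already been defined, shown to be effective (via the normalization of Proposition~\ref{prop:normalize} followed by the one-step translation $(-)^{\bullet}_1$) and shown to map $\AutWA(\ofoe)$ into $\AutWA(\ofo)$, the only thing left to establish is the acceptance equivalence. The plan is to prove it by transferring winning strategies between the two acceptance games $\agame(\aut^{\bullet},\model)$ and $\agame(\aut,\omegaunrav{\model})$. Throughout I assume, using Proposition~\ref{prop:normalize}, that $\aut$ is \emph{strictly} normalized, so that each $\tmap(a,c)$ is a disjunction of conjunctions of formulas $\posdbnfofoe{\vlist{T}}{\Pi}_\asort$ and $\tmap^{\bullet}(a,c)$ is obtained by applying $(-)^{\bullet}_1$ to each such conjunct homomorphically over $\vee$ and $\wedge$. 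This is legitimate precisely because $\omegaunrav{\model}$ is a strict tree, so running $\aut$ on it only ever evaluates these strict normal forms. By positional determinacy of parity games I may work throughout with positional strategies, viewed as maps $(a,s)\mapsto\val$.

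The heart of the argument is a local, one-step correspondence. Fix a node $u$ of $\omegaunrav{\model}$ and let $s\in\moddom$ be the $\model$-node underlying $u$ (so $u$ and $s$ carry the same color $c=\tscolors(s)$). By the definition of the $\omega$-unravelling, the successor set $R[u]$ consists of $\omega$-many copies of each $\aact$-successor of $s$, these sets being pairwise disjoint across actions by strictness. Hence, for any valuation $\val\colon A\to\wp(R[s])$, the one-step model over $R[u]$ whose valuation is the cylinder $\val_{\pi}$ of $\val$ is, up to the tagging, exactly the $\omega$-expansion $(D_1\times\{1\}\times\omega,\dots,D_n\times\{n\}\times\omega,\val_{\pi})$ of Proposition~\ref{prop:pinvariant}, with $D_\aact=R_\aact[s]$. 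Consequently
\[
(R[s],\val)\models\tmap^{\bullet}(a,c)
\quad\text{iff}\quad
(R[u],\val_{\pi})\models\tmap(a,c)
\]
by Proposition~\ref{prop:pinvariant}. Moreover, since every $\tmap(a,c)\in\ofoe^{+}(A)$ is positive, hence monotone, any $\val'\colon A\to\wp(R[u])$ satisfies $\val'\subseteq\val_{\pi}$ when $\val$ is its projection (defined by $s'\in\val(b)$ iff some copy of $s'$ lies in $\val'(b)$); therefore $(R[u],\val')\models\tmap(a,c)$ implies $(R[u],\val_{\pi})\models\tmap(a,c)$, and so implies $(R[s],\val)\models\tmap^{\bullet}(a,c)$.

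With this, both directions follow by strategy transfer, noting that in either game the sequence of \emph{automaton states} visited is preserved (we only ever change the node component), so that parities agree and the winning condition transfers; likewise the supports of corresponding valuations agree, so finite ``stuck'' matches transfer. For $(\Rightarrow)$, given a positional winning strategy for \eloise on $\model$, define her strategy on $\omegaunrav{\model}$ at a position $(a,u)$ to play the cylinder $\val_{\pi}$ of her move $\val$ at the underlying $(a,s)$; legitimacy is the ``iff'' above, and any \abelard move $(b,u')$ with $u'$ a copy of some $s'\in\val(b)$ is mirrored by $(b,s')$ in the shadow match on $\model$. For $(\Leftarrow)$, given a positional winning strategy on $\omegaunrav{\model}$, at $(a,s)$ take \eloise's move $\val'$ at a fixed lift $(a,u)$ of $(a,s)$ and play its projection $\val$; legitimacy is the monotonicity step above, and any \abelard move $(b,s')$ in $\model$ is mirrored by choosing some copy $u'\in\val'(b)$, which exists by the definition of the projection.

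The main obstacle, and the only genuinely subtle point, is that in $\omegaunrav{\model}$ \eloise may play \emph{non-uniform} valuations across the $\omega$-many copies, whereas Proposition~\ref{prop:pinvariant} only concerns the uniform cylinder $\val_{\pi}$. This is resolved exactly by the positivity of the one-step formulas $\tmap(a,c)$, which lets me sandwich any played valuation $\val'$ below the cylinder $\val_{\pi}$ of its projection, so that satisfaction is preserved upward and then transported by Proposition~\ref{prop:pinvariant}. The remaining verifications — the local isomorphism between the successor structure of $u$ and the $\omega$-expansion at $s$, and the fact that $(-)^{\bullet}$ on full transition formulas is the homomorphic extension of $(-)^{\bullet}_1$ over the strict basic form — are routine given the strict normalization and the definition of the $\omega$-unravelling.
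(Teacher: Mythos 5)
Your proof is correct and takes essentially the route the paper intends: the paper itself only sketches this lemma as ``a fairly routine comparison of the acceptance games $\agame(\aut^{\bullet},\model)$ and $\agame(\aut,\model^{\omega})$, using that $\model^{\omega}$ is a strict tree,'' with Proposition~\ref{prop:pinvariant} as the one-step bridge, and that is exactly the argument you carry out in detail (strict normalization, the cylinder/projection correspondence on valuations, monotonicity to handle non-uniform valuations across the $\omega$-many copies, and state-preserving strategy transfer). One minor presentational point: in the $(\Leftarrow)$ direction the lift $(a,u)$ of $(a,s)$ should be the one determined by the maintained shadow match rather than ``fixed'' in advance (so the induced strategy on $\model$ is history-dependent), but your description of how \abelard's moves are mirrored already implements this correctly.
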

\begin{proof}
The proof of this lemma is based on a fairly routine comparison of the 
acceptance games $\agame(\aut^{\bullet},\model)$ and 
$\agame(\aut,\model^{\omega})$, using the fact that $\model^{\omega}$ is a strict tree.
A similar proof for the unsorted case can be found in~\cite{Jan96,Zanasi:Thesis:2012,DBLP:conf/lics/FacchiniVZ13}.
\end{proof}

\begin{remark}
	Actually, for each automaton $\aut$ it is possible to give a number $k$ such that
	\[\aut^{\bullet} \text{ accepts } \model \quad\text{iff}\quad \aut \text{ accepts } \model^{k}\]
	taking $k:=\max\{n \mid \diff{x_1,\dots,x_n} \text{ occurs in $\aut$} \} + 1$. This means that our results transfer to the class of finitely branching trees and finite trees as well.
\end{remark}

\subsection{Bisimulation-invariant fragment of \binfotc}

In this section we prove the following equivalence:
\[
\pdl \equiv \binfotc/{\bis}.
\]
Moreover, we prove that the equivalence is effective.

\medskip
\noindent One of the inclusions is given by a straightforward translation from \pdl to \binfotc.

\begin{proposition}
	There is an effective translation $\st_x^{tc}:\pdl\to\binfotc$ such that $\varphi\equiv \st_x^{tc}(\varphi)$ for every $\varphi\in\pdl$.
\end{proposition}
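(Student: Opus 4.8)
The plan is to define, by simultaneous recursion on the syntax of \pdl, two translations: a formula translation $\st_x(\cdot)$ producing a \binfotc-formula with a single free variable $x$, and a program translation $\st_{x,y}(\cdot)$ producing a \binfotc-formula with two free variables $x,y$ that defines, in every model, the binary relation $\relprog_\prog$ associated with a program $\prog$. For formulas I would use the usual clauses of the modal standard translation, routing the diamond through the program translation:
\begin{align*}
  \st_x(p) &:= p(x), &
  \st_x(\lnot\varphi) &:= \lnot\st_x(\varphi), \\
  \st_x(\varphi\lor\psi) &:= \st_x(\varphi)\lor\st_x(\psi), &
  \st_x(\tup{\prog}\varphi) &:= \exists y.\big(\st_{x,y}(\prog) \land \st_y(\varphi)\big).
\end{align*}
For programs I would mirror the inductive definition of $\relprog^\model_\prog$, the crucial point being that iteration is captured by the binary transitive-closure operator:
\begin{align*}
  \st_{x,y}(\aact) &:= R_\aact(x,y), &
  \st_{x,y}(\prog\choice\pprog) &:= \st_{x,y}(\prog)\lor\st_{x,y}(\pprog), \\
  \st_{x,y}(\prog\seq\pprog) &:= \exists z.\big(\st_{x,z}(\prog)\land\st_{z,y}(\pprog)\big), &
  \st_{x,y}(\varphi?) &:= (x\foeq y)\land\st_x(\varphi), \\
  \st_{x,y}(\prog^*) &:= [\tc_{u,v}.\st_{u,v}(\prog)](x,y). &&
\end{align*}

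I would adopt the convention that the variables $y,z,u,v$ introduced in each clause are fresh, so that no capture occurs as formulas are nested inside programs (through tests) and vice versa. Since the diamond case decreases into a strictly smaller program and formula, and the test case into a strictly smaller formula, the combined structural order on \pdl-formulas and programs is well-founded; hence the two maps are simultaneously well-defined, and, being given by structural recursion on a finite syntax tree, effectively computable. I would then set $\st_x^{tc} := \st_x$.

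Correctness I would establish by a simultaneous induction proving, for every transition system $\model$ and assignment $\ass$, the two statements
\begin{align*}
  \model[s_I\mapsto s] \mmodels \varphi
    &\quad\text{iff}\quad \model,\ass[x\mapsto s] \models \st_x(\varphi), \\
  (s,t) \in \relprog^\model_\prog
    &\quad\text{iff}\quad \model,\ass[x\mapsto s,\, y\mapsto t] \models \st_{x,y}(\prog).
\end{align*}
The atomic, Boolean, and diamond cases of the first statement, and the atomic, choice, composition, and test cases of the second, follow immediately from the respective semantic clauses of \pdl together with the induction hypothesis.

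The main obstacle — indeed the only nonroutine point — will be the iteration case $\prog^*$. Here the induction hypothesis for $\prog$ gives that $\st_{u,v}(\prog)$ defines exactly the relation $\relprog^\model_\prog$; that is, $R_{\st_{u,v}(\prog)} = \relprog^\model_\prog$ in the notation of Section~\ref{sec:prel:fotc}. By the semantics of the transitive-closure operator, $\model,\ass[x\mapsto s,y\mapsto t]\models[\tc_{u,v}.\st_{u,v}(\prog)](x,y)$ holds iff $(s,t)\in (R_{\st_{u,v}(\prog)})^{*}$, while the \pdl-semantics gives $(s,t)\in\relprog^\model_{\prog^*}$ iff $(s,t)\in(\relprog^\model_\prog)^{*}$. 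The decisive observation is that both the \tc-operator and the program construct $\prog^*$ denote the \emph{reflexive}-transitive closure, so once the underlying relations are identified by the induction hypothesis the two closures coincide, closing the induction. Effectiveness of $\st_x^{tc}$ is then immediate from its recursive definition.
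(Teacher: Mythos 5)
Your proposal is correct and follows essentially the same route as the paper: a mutual induction on formulas and programs, with the standard translation on the formula side, a two-variable relational translation mirroring the semantic clauses on the program side, and $\prog^*$ handled by the $\tc$ operator (both denoting reflexive-transitive closure). The paper simply states the clauses and asserts truth-preservation; your additional remarks on well-foundedness and the simultaneous induction fill in details the paper leaves implicit.
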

\begin{proof}
The translation is defined by mutual induction on formulas and programs, as follows:
\begin{itemize}
	\itemsep 0 pt
	\item $\st_x^{tc}(p) := p(x)$
	\item $\st_x^{tc}(\lnot\varphi) := \lnot \st_x^{tc}(\varphi)$
	\item $\st_x^{tc}(\varphi \lor \psi) := \st_x^{tc}(\varphi) \lor \st_x^{tc}(\psi)$
	\item $\st_x^{tc}(\tup{\prog}\varphi) := \exists y. \relprog_\prog(x,y) \land \st_y^{tc}(\varphi)$
\end{itemize}
where the complex programs are translated as follows
\begin{itemize}
	\itemsep 0 pt
	\item $\relprog_{\prog\seq\prog'}(x,y) := \exists x'. \relprog_\prog(x,x') \land \relprog_{\prog'}(x',y)$
	\item $\relprog_{\prog\choice\prog'}(x,y) := \relprog_\prog(x,y) \lor \relprog_{\prog'}(x,y)$
	\item $\relprog_{\varphi?}(x,y) := x\foeq y \land \st_x^{tc}(\varphi)$
	\item $\relprog_{\prog^*}(x,y) := [\tc_{z,w}.\relprog_{\prog}(z,w)](x,y)$.
\end{itemize}
It is clear that the translation is truth-preserving.
\end{proof}

\noindent For the other inclusion we prove the following stronger lemma.

\begin{lemma}\label{lem:fotc2pdl}
	There is an effective translation $(-)_\tmodal:\binfotc\to\pdl$ such that
	for every $\varphi\in\binfotc$ we have that $\varphi \equiv \varphi_\tmodal$ iff $\varphi$ is bisimulation-invariant.
\end{lemma}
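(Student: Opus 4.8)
The plan is to build the translation $(-)_\tmodal$ by routing every \binfotc-formula through the automata-theoretic machinery assembled in the previous sections, and then reading the resulting automaton back as a \pdl-formula. The key idea is that bisimulation-invariance, for automata, is witnessed by the $\omega$-unravelling, and that $\AutWA(\ofo)$ is exactly \pdl (Fact~\ref{fact:autwaofo2pdl}).

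\medskip\noindent\textbf{Approach.}
First I would use Theorem~\ref{thm:fotcmucafoe} to pass from $\varphi\in\binfotc$ to an equivalent $\varphi'\in\mucafoe$. Since the equivalence in Theorem~\ref{thm:allthesame} is established only \emph{on trees}, I would then unravel: for any model $\model$ we have $\model\bis\unravel{\model}$ (Fact~\ref{prop:tree_unrav}), and in fact $\model\bis\omegaunrav{\model}$, where $\omegaunrav{\model}$ is a strict tree. Using the chain of effective translations through $\wcl$, $\AutWA(\ofoe)$ and $\mucaffoe$ developed in Section~\ref{sec:soaut-logic}, I obtain an automaton $\aut_\varphi\in\AutWA(\ofoe,\props)$ with $\tmodel\models\varphi\iff\tmodel\models\aut_\varphi$ for every tree $\tmodel$. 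Next I apply the construction $(-)^\bullet$ of Lemma~\ref{lem:bisautofoe} to get $\aut_\varphi^\bullet\in\AutWA(\ofo)$ satisfying, for \emph{every} transition system $\model$,
\[
\aut_\varphi^\bullet \text{ accepts } \model \quad\text{iff}\quad \aut_\varphi \text{ accepts } \omegaunrav{\model}.
\]
Finally, by Fact~\ref{fact:autwaofo2pdl} there is an effective \pdl-formula $\varphi_\tmodal$ equivalent to $\aut_\varphi^\bullet$ over all models; this is the output of the translation.

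\medskip\noindent\textbf{Correctness of the biconditional.}
The right-to-left direction is the content of the previous paragraph combined with bisimulation-invariance of \pdl: if $\varphi$ is bisimulation-invariant, then for every $\model$,
\begin{align*}
\model\models\varphi
 &\iff \omegaunrav{\model}\models\varphi
   &&\text{($\varphi$ bis.-inv., $\model\bis\omegaunrav{\model}$)}\\
 &\iff \omegaunrav{\model}\models\aut_\varphi
   &&\text{($\aut_\varphi\equiv\varphi$ on trees)}\\
 &\iff \model\models\aut_\varphi^\bullet
   &&\text{(Lemma~\ref{lem:bisautofoe})}\\
 &\iff \model\models\varphi_\tmodal,
   &&\text{(Fact~\ref{fact:autwaofo2pdl})}
\end{align*}
so $\varphi\equiv\varphi_\tmodal$. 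The left-to-right direction is immediate: if $\varphi\equiv\varphi_\tmodal$ with $\varphi_\tmodal\in\pdl$, then $\varphi$ inherits bisimulation-invariance from \pdl. Effectiveness follows because each intermediate translation (Theorems~\ref{thm:fotcmucafoe},~\ref{thm:allthesame}, Lemma~\ref{lem:bisautofoe}, Fact~\ref{fact:autwaofo2pdl}) is effective, and one can decide $\varphi\equiv\varphi_\tmodal$ to determine whether the input was bisimulation-invariant.

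\medskip\noindent\textbf{Main obstacle.}
The delicate point is the interface between the \emph{tree-only} equivalence of Theorem~\ref{thm:allthesame} and the \emph{all-models} statement we want. The translation $\binfotc\to\AutWA(\ofoe)$ is only guaranteed to agree with $\varphi$ on trees, so the construction $\aut_\varphi^\bullet$ must be the object carrying the behaviour back to arbitrary models — and this works precisely because $(-)^\bullet$ is designed so that acceptance of $\model$ by $\aut_\varphi^\bullet$ reduces to acceptance of the strict tree $\omegaunrav{\model}$ by $\aut_\varphi$. Thus the only place where bisimulation-invariance of $\varphi$ is genuinely used is the very first step, closing the loop $\model\models\varphi\iff\omegaunrav{\model}\models\varphi$; without that hypothesis $\aut_\varphi$ would still compute $\varphi$ on the unravelling, but $\varphi_\tmodal$ would only capture the bisimulation-invariant ``shadow'' of $\varphi$. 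I expect verifying that the composite translation is well-defined on all of \binfotc (in particular that the passage to $\mucafoe$ and the normalisation required by Lemma~\ref{lem:bisautofoe} compose effectively) to require the most care, though each individual component has already been established above.
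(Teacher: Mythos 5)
Your proposal is correct and follows essentially the same route as the paper: translate $\binfotc\to\mucafoe$ via Theorem~\ref{thm:fotcmucafoe}, build $\aut_\varphi\in\AutWA(\ofoe)$ via Section~\ref{sec:soaut-logic}, apply $(-)^\bullet$ from Lemma~\ref{lem:bisautofoe}, read off a \pdl-formula via Fact~\ref{fact:autwaofo2pdl}, and verify the biconditional with exactly the chain of equivalences the paper uses. The only superfluous addition is your closing remark that one can \emph{decide} $\varphi\equiv\varphi_\tmodal$; the lemma only asserts effectiveness of the translation and the biconditional, not decidability of bisimulation-invariance, so that claim is neither needed nor established.
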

\begin{proof}
	The translation $(-)_\tmodal:\binfotc\to\pdl$ is defined as follows: given a formula ${\varphi\in\binfotc}$ we first translate it to $\mucafoe$ using Theorem~\ref{thm:fotcmucafoe} and construct an automaton $\aut_\varphi\in\AutWA(\ofoe)$ as done in Section~\ref{sec:soaut-logic}. Next, we compute the automaton $\aut_\varphi^\bullet\in\AutWA(\ofo)$ using Lemma~\ref{lem:bisautofoe}. To finish, we use Fact~\ref{fact:autwaofo2pdl} to get a formula $\varphi_\tmodal\in\pdl$.

	\begin{claimfirst}
		$\varphi \equiv \varphi_\tmodal$ iff $\varphi$ is invariant under bisimulation.
	\end{claimfirst}
	The left to right direction is trivial because $\varphi_\tmodal\in\pdl$, therefore if $\varphi\equiv\varphi_\tmodal$ it also has to be invariant under bisimulation. The opposite direction is obtained by the following chain of equivalences:
	\begin{align*}
	\model \models \varphi
		& \quad\text{iff}\quad \model^{\omega} \models \varphi
		& \tag{$\varphi$ bisimulation invariant}
	\\
		& \quad\text{iff}\quad \model^{\omega} \models \aut_\varphi.
		& \tag{Theorem~\ref{thm:allthesame}: $\mucafoe \equiv \AutWA(\ofoe)$ on trees}
	\\
		& \quad\text{iff}\quad \model \mmodels \aut_\varphi^\bullet
		& \tag{Lemma~\ref{lem:bisautofoe}}
	\\
		& \quad\text{iff}\quad \model \mmodels \varphi_\tmodal.
		& \tag{Fact~\ref{fact:autwaofo2pdl}}
	\end{align*}
	This finishes the proof for \binfotc.
\end{proof}

\noindent As a corollary of this lemma we get Theorem~\ref{thm:mainbinfotc}.
\subsection{Bisimulation-invariant fragment of \wcl}

In this section we prove the following equivalence:
\[
\pdl \equiv \wcl/{\bis}.
\]
Moreover, we prove that the equivalence is effective.

\medskip
\noindent One of the inclusions is given by a translation from \pdl to \wcl.
We prove this through a detour via the modal $\mu$-calculus. In~\cite{AiML2014} it is shown that \pdl is equivalent to the fragment $\mucaML$ where the fixpoint operator $\mu p.\varphi$ is restricted to formulas which are completely additive in $p$. We will therefore give a translation $\st_x^{wc}:\mucaML \to 2\wcl$ which proves that $\pdl \leq \wcl$. The idea is to use basically the same translation as in Section~\ref{ssec:cafoe2wcl} where we prove that $\mucafoe \leq \wcl$ on trees.

The only interesting case of the translation is the fixpoint operator. Let $\varphi = \mu p.\psi(p)$ where $\psi$ is completely additive in $p$. We state the following claim

\begin{claimfirst}\label{claim:psiinv:2}
	The formula $\psi\in\mucaML$ restricts to descendants.
\end{claimfirst}
\begin{pfclaim}
	This is clear because the formula belongs to $\muML$. These formulas are invariant under generated submodels, in particular, they restrict to descendants.
\end{pfclaim}

\noindent To finish, define the translation of the fixpoint as follows:
\begin{align*}
\st_x^{wc}(\mu p.\psi) & := \existswc Y.\big(\forallwc W \subseteq Y.W \in \mathsf{PRE}(F^\psi_Y) \to x\in W \big) \\
W \in \mathsf{PRE}(F^\psi_Y) & := \forall v. \st_v^{wc}(\psi)[p\mapsto W] \land v\in Y \to v\in W.
\end{align*}
The correctness of this translation is a simplified version of the proof of Proposition~\ref{prop:mucafoe2wcl}, using Claim~\ref{claim:psiinv:2} and Theorem~\ref{thm:restofinsucchain}.

\medskip
\noindent For the other inclusion we prove the following stronger lemma.

\begin{lemma}
	There is an effective translation $(-)_\tmodal:\wcl\to\pdl$ such that
	for every $\varphi\in\wcl$ we have that $\varphi \equiv \varphi_\tmodal$ iff $\varphi$ is bisimulation-invariant.
\end{lemma}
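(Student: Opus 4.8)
The plan is to follow exactly the same strategy as in the proof of Lemma~\ref{lem:fotc2pdl}, replacing the initial passage through $\mucafoe$ by the direct translation of \wcl into additive-weak automata. Concretely, given $\varphi\in\wcl$, I would first build an automaton $\aut_\varphi\in\AutWA(\ofoe)$ recognizing the same tree language as $\varphi$, using the translation of Section~\ref{ssec:wcl2aut}. Then I would apply the construction $(-)^\bullet$ of Lemma~\ref{lem:bisautofoe} to obtain $\aut_\varphi^\bullet\in\AutWA(\ofo)$, and finally invoke Fact~\ref{fact:autwaofo2pdl} to extract a formula $\varphi_\tmodal\in\pdl$ equivalent to $\aut_\varphi^\bullet$ over all models. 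Since every step in this chain is effective, the resulting translation $(-)_\tmodal:\wcl\to\pdl$ is effective as well.

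For the correctness statement, the left-to-right direction is immediate: if $\varphi\equiv\varphi_\tmodal$ then $\varphi$ inherits bisimulation-invariance from $\varphi_\tmodal\in\pdl$. For the converse, I would assume $\varphi$ to be bisimulation-invariant and establish the following chain of equivalences, using that the $\omega$-unravelling $\model^\omega$ is a (strict) tree bisimilar to $\model$ (Fact~\ref{prop:tree_unrav}):
\begin{align*}
\model \models \varphi
	& \quad\text{iff}\quad \model^{\omega} \models \varphi
	& \text{($\varphi$ bisimulation invariant)}
\\
	& \quad\text{iff}\quad \model^{\omega} \models \aut_\varphi
	& \text{(Section~\ref{ssec:wcl2aut}: $\wcl \equiv \AutWA(\ofoe)$ on trees)}
\\
	& \quad\text{iff}\quad \model \mmodels \aut_\varphi^\bullet
	& \text{(Lemma~\ref{lem:bisautofoe})}
\\
	& \quad\text{iff}\quad \model \mmodels \varphi_\tmodal
	& \text{(Fact~\ref{fact:autwaofo2pdl}).}
\end{align*}
The second equivalence is exactly where I need $\model^\omega$ to be a tree, since the translation of Section~\ref{ssec:wcl2aut} only guarantees agreement of $\varphi$ and $\aut_\varphi$ over tree models; the third equivalence is the defining property of $(-)^\bullet$, which relates acceptance of $\model$ by $\aut_\varphi^\bullet$ with acceptance of $\model^\omega$ by $\aut_\varphi$.

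Most of the technical work has already been discharged by the earlier sections, so the proof itself is essentially an assembly of existing results; the genuinely load-bearing ingredient is Lemma~\ref{lem:bisautofoe}, whose one-step core (Proposition~\ref{prop:pinvariant}) governs the passage from $\ofoe$ to $\ofo$ via $\omega$-unravelling. The only point requiring a small amount of care is that the translation of \wcl into $\AutWA(\ofoe)$ is stated for tree models, so I must make sure bisimulation-invariance is used precisely to reduce the general model $\model$ to its tree unravelling $\model^\omega$ before applying it; this is the step that makes the \emph{modulo bisimilarity} hypothesis indispensable. I would then remark, as in the \binfotc case, that Theorem~\ref{thm:mainwcl} follows immediately as a corollary of this lemma.
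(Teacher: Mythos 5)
Your proposal is correct and follows essentially the same route as the paper's own proof: construct $\aut_\varphi\in\AutWA(\ofoe)$ from $\varphi$ via the translation for trees, pass to $\aut_\varphi^\bullet\in\AutWA(\ofo)$ with Lemma~\ref{lem:bisautofoe}, extract $\varphi_\tmodal\in\pdl$ via Fact~\ref{fact:autwaofo2pdl}, and close the argument with the identical chain of equivalences through the $\omega$-unravelling. The only cosmetic difference is that you cite the $\wcl$-to-automata translation of Section~\ref{ssec:wcl2aut} directly where the paper invokes Theorem~\ref{thm:allthesame}; the content is the same.
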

\begin{proof}
	The translation $(-)_\tmodal:\wcl\to\pdl$ is defined as follows: given a formula ${\varphi\in\wcl}$ we first construct an automaton $\aut_\varphi\in\AutWA(\ofoe)$ as done in Section~\ref{sec:soaut-logic}. Next, we compute the automaton $\aut_\varphi^\bullet\in\AutWA(\ofo)$ using Lemma~\ref{lem:bisautofoe}. To finish, we use Fact~\ref{fact:autwaofo2pdl} to get a formula $\varphi_\tmodal\in\pdl$.

	\begin{claimfirst}
		$\varphi \equiv \varphi_\tmodal$ iff $\varphi$ is invariant under bisimulation.
	\end{claimfirst}
	The left to right direction is trivial because $\varphi_\tmodal\in\pdl$, therefore if $\varphi\equiv\varphi_\tmodal$ it also has to be invariant under bisimulation. The opposite direction is obtained by the following chain of equivalences:
	\begin{align*}
	\model \models \varphi
		& \quad\text{iff}\quad \model^{\omega} \models \varphi
		& \tag{$\varphi$ bisimulation invariant}
	\\
		& \quad\text{iff}\quad \model^{\omega} \models \aut_\varphi.
		& \tag{Theorem~\ref{thm:allthesame}: $\wcl \equiv \AutWA(\ofoe)$ on trees}
	\\
		& \quad\text{iff}\quad \model \mmodels \aut_\varphi^\bullet
		& \tag{Lemma~\ref{lem:bisautofoe}}
	\\
		& \quad\text{iff}\quad \model \mmodels \varphi_\tmodal.
		& \tag{Fact~\ref{fact:autwaofo2pdl}}
	\end{align*}
	This finishes the proof for \wcl.
\end{proof}

\noindent As a corollary of this lemma we get Theorem~\ref{thm:mainwcl}.

\subsection{\pdl versus \wcl versus \binfotc}\label{ssec:versus}

In this section we prove a few results regarding the relative expressive power of \pdl, \wcl and \binfotc. Namely, we prove that:
\begin{itemize}
	\itemsep 0pt
	\item \pdl cannot be translated to a naive generalization of \wcl from trees to arbitrary models. This gives insight into the relationship of \pdl and generalized chains.
	\item \wcl and \binfotc are not expressively equivalent.
\end{itemize}

\paragraph{The letter of the law.}
Recall from the preliminaries (Section~\ref{ssec:prelim_trees}) that,

\begin{itemize}
	\itemsep 0 pt
	\item A \emph{chain} on $\model$ is a set $X\subseteq \moddom$ such that $(X,R^*)$ is a totally ordered set.
	\item A \emph{generalized chain} is a set $X \subseteq \moddom$ such that $X\subseteq P$, for some path $P$ of $\model$.
\end{itemize}
In the definition of \wcl in Section~\ref{sec:wcl} we chose to follow the spirit of the original definition of CL in~\cite{Thomas:1984:LAS:2868.2871}, and we required the quantifier to range over generalized finite chains, instead of finite chains, in the context of arbitrary models. 

There is another reason for this choice: if we had followed the letter of the definition and had required the quantifier to range over (non-generalized) finite chains in the context of arbitrary models, then \pdl \emph{would not} have been translatable to the resulting logic!
Suppose then that we define a variant $L$ of MSO with the quantifier:
\[
\model \models \tilde{\exists} p.\varphi \quad\text{ iff }\quad  \text{there is a \emph{finite chain} $X\subseteq \moddom$ such that $\model[p\mapsto X] \models \varphi$}.
\]
We show that $L$ cannot express the \pdl-formula $\varphi := \tup{\aact^*}p$. That is, $L$ cannot express the property ``I can reach an element colored with $p$.'' Intuitively, the problem is that chains are a lot more restricted than paths (on arbitrary models).

We will first define a class of models where the expressive power of $L$ is reduced to that of \foe, and then prove that \foe cannot express $\varphi$ on this class of models. Let $\mathbb{C}_i$ be defined as a model with $i$ elements laid out on a circle (see Fig.~\ref{fig:circles}) and $\mathbb{C}^p_i$ be as $\mathbb{C}_i$ but with one (any) element colored with $p$. We define the class of models $K := \{\mathbb{C}_i \uplus \mathbb{C}^p_i \mid i \geq 3\}$.


\begin{figure}[h]
\centering
\begin{tikzpicture}[
scale=0.85,
font=\small,
every node/.style={circle,draw=black!60,fill=black!10,thick,inner sep=1pt,minimum size=13pt},
->,>=latex
]

\begin{scope}[xshift=-2.5cm]
\def \n {6}
\def \radius {1.5cm}

\foreach \s in {1,...,\n}
{
  \node (n\s) at ({360/\n * (\s - 1)}:\radius) {$\s$};
}

\node (ni) at ({360/6 * (6 - 1)}:\radius) {$i$};

\draw (n1) edge[bend right=20,solid] (n2)
      (n2) edge[bend right=20,solid] (n3)
      (n3) edge[bend right=20,solid] (n4)
      (n4) edge[bend right=20,solid] (n5)
      (n5) edge[bend right=20,dashed] (ni)
      (n6) edge[bend right=20,solid] (n1);
\end{scope}

\begin{scope}[xshift=2.5cm]
\def \n {6}
\def \radius {1.5cm}

\foreach \s in {1,...,\n}
{
  \node (n\s) at ({360/\n * (\s - 1)}:\radius) {$\s'$};
}

\node[draw=black!60,fill=black!60] (ni) at ({360/6 * (6 - 1)}:\radius) {\textcolor{black!10}{$i'$}};

\draw (n1) edge[bend right=20,solid] (n2)
      (n2) edge[bend right=20,solid] (n3)
      (n3) edge[bend right=20,solid] (n4)
      (n4) edge[bend right=20,solid] (n5)
      (n5) edge[bend right=20,dashed] (ni)
      (n6) edge[bend right=20,solid] (n1);
\end{scope}

\end{tikzpicture}
\caption{Model $\mathbb{C}_i \uplus \mathbb{C}^p_i$. The element $i'$ is colored with $p$.}
\label{fig:circles}
\end{figure}
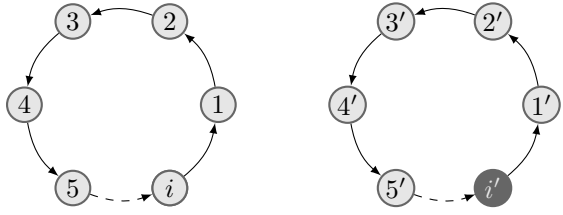

\begin{proposition}
	Over the class $K$, the logic $L$ is exactly as expressive as \foe.
\end{proposition}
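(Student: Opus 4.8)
The plan is to prove both inclusions, the trivial one being $\foe \leq L$, since every $\foe$-formula is already an $L$-formula (namely one in which the chain quantifier $\tilde{\exists}$ does not occur). The work lies in the converse, $L \leq \foe$ over $K$, and the whole argument rests on the extremely degenerate shape of the chains available in the models of $K$.

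First I would establish the key structural fact: on any model $\mathbb{C}_i \uplus \mathbb{C}^p_i \in K$, the only chains are the empty set and the singletons. Indeed, each $\mathbb{C}_i$ is a directed cycle $1 \to 2 \to \dots \to i \to 1$, so for any two elements $x,y$ lying on the same cycle one has both $x R^* y$ and $y R^* x$, since one can travel all the way around. Hence a chain containing two distinct elements of the same cycle would violate \textbf{antisymmetry}. On the other hand, two elements lying on different cycles are $R^*$-incomparable, as the two cycles are distinct connected components, so a chain meeting both cycles would violate \textbf{totality}. Together these observations force every chain $X$ (finite or not) to satisfy $|X| \leq 1$. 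This is precisely where the specific geometry of $K$ is used in an essential way.

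From this I would conclude that, over $K$, the weak chain quantifier collapses to first-order quantification: for every $L$-formula $\varphi$,
\[
\model \models \tilde{\exists} p.\varphi
\quad\text{iff}\quad
\model[p\mapsto\nada] \models \varphi
\quad\text{or}\quad
\model[p\mapsto\compset{a}] \models \varphi \text{ for some $a\in\moddom$} .
\]
I would then define a translation $(-)^{\circ} : L \to \foe$ by induction on $\varphi$, maintaining for each set variable $p$ currently in scope a flag recording whether it is being read as empty or as the singleton $\compset{x_p}$ for a dedicated fresh individual variable $x_p$. Atomic membership statements $t \in p$ are translated to $\bot$ or to $t \foeq x_p$ according to the flag; Boolean connectives and first-order quantifiers are handled homomorphically; and $\tilde{\exists} p.\psi$ is translated to the disjunction $\psi^{\circ}_{[p:=\nada]} \lor \exists x_p.\,\psi^{\circ}_{[p:=\compset{x_p}]}$. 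A routine induction using the displayed equivalence then yields $\model \models \varphi$ iff $\model \models \varphi^{\circ}$ for all $\model \in K$, completing the collapse $L \leq \foe$.

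The only delicate point — and the step I would flag as the main (though essentially bookkeeping) obstacle — is the correct propagation of the empty/singleton interpretation through \emph{nested} chain quantifiers and Boolean combinations, so that the inductive hypothesis applies to the subformulas under the right flag assignment. Once the structural chain fact is in place, this bookkeeping is mechanical, and no model-theoretic difficulty remains.
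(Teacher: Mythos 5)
Your proposal is correct and follows exactly the paper's approach: the paper's entire proof is the one-line observation that every chain on a model of $K$ is either a singleton or empty, which is precisely the structural fact you establish (via the failure of antisymmetry within a cycle and of totality across components). Your elaboration of the resulting translation of $\tilde{\exists}$ into an empty-or-singleton first-order disjunction is the routine consequence the paper leaves implicit.
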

\begin{proof}
	Every chain on a model of $K$ is either a singleton or empty.
\end{proof}

Observe now that our formula $\varphi = \tup{\aact^*}p$ is true exactly in the elements of $\mathbb{C}^p_i$, and false in $\mathbb{C}_i$, for every $i$. Assume towards a contradiction that there is a formula $\psi \in L$ such that $\varphi \equiv \psi$ on all models. If we focus on $K$, using the above proposition, we must also have a formula $\gamma \in \foe$ such that $\psi \equiv \gamma$ (on $K$). We show that such a $\gamma\in\foe$ cannot exist.

To do it, we rely on the fact that first-order logic is ``essentially local'', proved by Gaifman~\cite{gaifman}. Recall that an $n$-neighbourhood of an element $e$ is the set of all the elements $e'$ such that the undirected distance $\mathrm{dist}(e,e')$ is smaller or equal than $n$. The following fact is a corollary of Gaifman's theorem.

\begin{fact}\label{fact:gaifman}
	For every first-order formula $\gamma(\vlist{x})$ there is a number $t\in\nat$ (which depends only on the quantifier rank of $\gamma$) such that for every model $\npmodel$ and elements $\vlist{a},\vlist{a}' \in \npmoddom$: if the $t$-neighbourhoods of $a_i$ and $a'_i$ are isomorphic for every $i$ then $\npmodel \models \gamma(\vlist{a})$ iff $\npmodel \models \gamma(\vlist{a}')$.
\end{fact}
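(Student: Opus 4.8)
The plan is to derive the statement directly from Gaifman's locality theorem, of which it is advertised as a corollary. Gaifman's theorem tells us that every first-order formula $\gamma(\vlist{x})$ is logically equivalent to a Boolean combination of (a) $r$-\emph{local} formulas around $\vlist{x}$ --- formulas all of whose quantifiers are relativised to the $r$-ball around the tuple $\vlist{x}$ --- and (b) \emph{basic local sentences}, i.e.\ sentences asserting the existence of a prescribed number of pairwise far-apart elements each satisfying some local condition. Crucially, the radius $r$ appearing in this normal form can be bounded by a function of the quantifier rank $qr(\gamma)$ alone (the standard bound is exponential in $qr(\gamma)$), so I would simply set $t := r$ and thereby guarantee the required dependence ``only on the quantifier rank''.

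First I would fix the decomposition $\gamma \equiv B(\psi_1, \dots, \psi_m, \sigma_1, \dots, \sigma_\ell)$ furnished by the theorem, where each $\psi_j$ is $r$-local around $\vlist{x}$ and each $\sigma_k$ is a basic local sentence. Since the model $\npmodel$ is held fixed and only the interpretation of the free variables changes between $\vlist{a}$ and $\vlist{a}'$, each $\sigma_k$ --- having no free variables --- takes the same truth value in the two cases automatically. It therefore suffices to show that every local conjunct $\psi_j$ agrees at $\vlist{a}$ and at $\vlist{a}'$; combining these agreements through the Boolean combination $B$ then yields $\npmodel \models \gamma(\vlist{a})$ iff $\npmodel \models \gamma(\vlist{a}')$.

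For the local conjuncts I would invoke the defining property of $r$-locality: the truth of $\psi_j(\vlist{a})$ depends only on the isomorphism type of the substructure induced by $\npmodel$ on the $r$-ball around $\vlist{a}$. In the case of a single free variable --- which is exactly the case used in this paper, where $\gamma(x)$ transcribes $\tup{\aact^*}p$ --- this ball is precisely the $t$-neighbourhood of $a_1$, and the hypothesis supplies an isomorphism with the $t$-neighbourhood of $a'_1$; hence $\psi_j(\vlist{a}) \Leftrightarrow \psi_j(\vlist{a}')$, completing the argument.

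The step I would be most careful about --- and the only real subtlety --- is the passage from the \emph{joint} $t$-neighbourhood of the whole tuple, which is what Gaifman's local formulas actually inspect, to the \emph{individual} $t$-neighbourhoods of the components $a_i$ named in the statement. For singleton tuples the two notions coincide and the argument is clean; for longer tuples one genuinely needs an isomorphism of the neighbourhood of the entire tuple $\vlist{a}$ (which also records the relative distances between the $a_i$), so the literal componentwise reading should be understood in that stronger sense. Since the application instantiates the Fact only at single elements --- the elements $i$ and $i'$ of $\mathbb{C}_i \uplus \mathbb{C}^p_i$ acquire isomorphic $t$-neighbourhoods as soon as $i$ exceeds $2t$ --- this distinction does not affect the use made of it here.
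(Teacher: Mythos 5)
Your derivation is correct and coincides with what the paper intends: the paper offers no proof of this Fact beyond citing Gaifman's theorem, and your argument --- pass to the Gaifman normal form with radius $r$ bounded by a function of the quantifier rank, observe that the basic local sentences are insensitive to the choice of tuple since the model is fixed, and settle the $r$-local conjuncts by the neighbourhood isomorphism --- is exactly the standard way to extract this corollary. Your caveat that Gaifman locality really concerns the \emph{joint} neighbourhood of the tuple (so that the componentwise phrasing of the Fact is literally false for tuples of length $\geq 2$, e.g.\ for $\gamma(x,y) = R(x,y)$ on a disjoint union of edges) is well taken and worth keeping, since the paper does later invoke this Fact for a binary formula $\varphi'(x,y)$ in Section~\ref{ssec:versus}, where the stronger joint-neighbourhood hypothesis is what actually needs to be checked.
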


Let $t$ be the number obtained by the above fact applied to $\gamma(x)$. To finish, we prove the following fact.
\begin{claimfirst}
	$\mathbb{C}_{4t} \models \gamma(2t)$ iff
	$\mathbb{C}_{4t} \models \gamma(2t').$
\end{claimfirst}
\noindent Observe that this leads to a contradiction, since $\gamma$ should be false at $2t$ and true at $2t'$.
\begin{pfclaim}
	The $t$-neighbourhoods of $2t$ and $2t'$ are isomorphic, since no element is colored with $p$ with distance lower than $t$. Therefore by the above fact about first-order locality the two elements satisfy the same first-order formulas. A more detailed proof of a similar argument can be found in~\cite[Ex.~2]{libkinsurvey}.
\end{pfclaim}

\noindent As a consequence, we get the following proposition:

\begin{proposition}
	$\pdl \not \leq L$.
\end{proposition}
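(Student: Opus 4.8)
The plan is to exhibit a single \pdl-formula with no equivalent in $L$, namely $\varphi := \tup{\aact^*}p$, and to derive a contradiction from the assumption that an equivalent exists. All the ingredients have been assembled in the paragraphs just above, so the remaining work is to chain them together correctly rather than to introduce new machinery.

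First I would record the semantic behaviour of $\varphi$ on the separating family: $\varphi$ holds at every node of the $p$-coloured cycle $\mathbb{C}^p_i$ (following the directed edges one eventually loops around to the unique $p$-node) and fails at every node of the uncoloured cycle $\mathbb{C}_i$ (no node carries $p$, and the two cycles lie in disjoint components). Assuming toward a contradiction that $\pdl \leq L$, there is a $\psi \in L$ with $\psi \equiv \varphi$ on \emph{all} models, hence in particular over the class $K$. Restricting attention to $K$ and invoking the Proposition that $L$ collapses to \foe there (because every chain in a model of $K$ is a singleton or empty), I obtain a first-order $\gamma \in \foe$ with $\gamma \equiv \psi \equiv \varphi$ on $K$.

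The contradiction then comes from first-order locality. Let $t$ be the Gaifman radius supplied by Fact~\ref{fact:gaifman} for $\gamma$ (enlarging it if necessary so that $t \geq 1$, which a larger radius still satisfies), and consider the model $\mathbb{C}_{4t} \uplus \mathbb{C}^p_{4t} \in K$. I would compare the uncoloured node $2t$ with the coloured-cycle node $2t'$: in a cycle of length $4t$ the unique $p$-node sits at undirected distance $2t > t$ from $2t'$, so the $t$-neighbourhood of $2t'$ is a $p$-free directed path and is therefore isomorphic to the (trivially $p$-free) $t$-neighbourhood of $2t$. By Fact~\ref{fact:gaifman}, $\gamma$ cannot distinguish the two nodes, whereas $\varphi$ does (false at $2t$, true at $2t'$), contradicting $\gamma \equiv \varphi$ on $K$. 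Hence no $\psi \in L$ is equivalent to $\varphi$, giving $\pdl \not\leq L$.

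The only genuinely delicate step, which I expect to be the main obstacle to phrase cleanly, is the neighbourhood-isomorphism claim: one must check that taking the cycle length to be $4t$ (rather than, say, $2t$) places the $p$-node strictly outside the $t$-ball of $2t'$, and that working inside the disjoint union $\mathbb{C}_{4t}\uplus\mathbb{C}^p_{4t}$ prevents any edge or short path from leaking between the two components, so that the local isomorphism genuinely extends to the whole $t$-neighbourhoods as coloured directed graphs. Everything else reduces to bookkeeping over the equivalences already established.
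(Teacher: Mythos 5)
Your proposal is correct and follows essentially the same route as the paper: the same separating formula $\tup{\aact^*}p$, the same class $K$ of disjoint unions of cycles on which $L$ collapses to \foe, and the same Gaifman-locality comparison of the nodes $2t$ and $2t'$ in $\mathbb{C}_{4t}\uplus\mathbb{C}^p_{4t}$. The extra care you take with the neighbourhood-isomorphism step (checking that the $p$-node lies at undirected distance $2t>t$ from $2t'$ and that the disjoint union keeps the components separate) is exactly the content the paper compresses into its one-line proof of the claim.
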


\paragraph{Separating \binfotc and \wcl on all models.}

We prove that $\binfotc \not\leq \wcl$ by showing that \emph{undirected reachability} is expressible in \binfotc but not in \wcl. First observe that in \binfotc the formula
\[
\varphi(x,y) := [\tc_{u,v}. R(u,v) \lor R(v,u)](x,y)
\]
is true iff $x=y$ or there is a way to get from $x$ to $y$ disregarding the direction of the edges.

Consider the model shown in Fig.~\ref{fig:separate}, which has two copies of the integers but with an alternating successor relation. The arrows denote the binary relation $R$ which is \emph{not} taken to be transitive.


\begin{figure}[h]
\centering
\tikzstyle{node}=[circle,fill=black,inner sep=1pt,minimum size=7pt]
\begin{tikzpicture}[thick]
\begin{scope}
	\node (0) at (0,0) [node,label=above:{$(a,0)$}] {};

	\node (1) [right=of 0,node,label=above:{$(a,1)$}] {};
	\node (2) [right=of 1,node,label=above:{$(a,2)$}] {};
	\node (3) [right=of 2,node,label=above:{$(a,3)$}] {};
	\node (end) [right=10pt of 3] {$\dots$};

	\node (-1) [left=of 0,node,label=above:{$(a,-1)$}] {};
	\node (-2) [left=of -1,node,label=above:{$(a,-2)$}] {};
	\node (-3) [left=of -2,node,label=above:{$(a,-3)$}] {};
	\node (-end) [left=10pt of -3] {$\dots$};

	\draw [->] (0) -- (1);
	\draw [<-] (1) -- (2);
	\draw [->] (2) -- (3);

	\draw [->] (0) -- (-1);
	\draw [<-] (-1) -- (-2);
	\draw [->] (-2) -- (-3);
\end{scope}
\begin{scope}[yshift=-1.5cm]
	\node (0') at (0,0) [node,label=above:{$(b,0)$}] {};

	\node (1') [right=of 0',node,label=above:{$(b,1)$}] {};
	\node (2') [right=of 1',node,label=above:{$(b,2)$}] {};
	\node (3') [right=of 2',node,label=above:{$(a,3)$}] {};
	\node (end') [right=10pt of 3'] {$\dots$};

	\node (-1') [left=of 0',node,label=above:{$(b,-1)$}] {};
	\node (-2') [left=of -1',node,label=above:{$(b,-2)$}] {};
	\node (-3') [left=of -2',node,label=above:{$(b,-3)$}] {};
	\node (-end') [left=10pt of -3'] {$\dots$};

	\draw [->] (0') -- (1');
	\draw [<-] (1') -- (2');
	\draw [->] (2') -- (3');

	\draw [->] (0') -- (-1');
	\draw [<-] (-1') -- (-2');
	\draw [->] (-2') -- (-3');
\end{scope}
\end{tikzpicture}
\caption{Separating example for $\binfotc$ and $\wcl$.}
\label{fig:separate}
\end{figure}

It was observed by Yde Venema (private communication) that on the model of Fig.~\ref{fig:separate} the expressive power of \wcl collapses to that of plain first-order logic with equality. The reason for this is that every generalized chain (finite or not) has length at most one, and therefore the second order existential $\existswc X.\psi$ can be replaced by $\exists x_1,x_2.\psi'$ with a minor variation of $\psi$. Therefore, it will be enough to show that first-order logic cannot express undirected reachability over this model. Again, we will use Fact~\ref{fact:gaifman}.



Assume that $\varphi$ has an equivalent formulation $\varphi' \in \mathrm{FOE}$. Let $t$ be the number obtained by Fact~\ref{fact:gaifman}. To finish, we prove that
\begin{claimfirst}
	$\npmodel \models \varphi'((a,0),(b,0))$
	iff
	$\npmodel \models \varphi'((a,0),(a,2t)).$
\end{claimfirst}
Observe that this leads to a contradiction, since the first two elements are not connected and the second ones are.
\begin{pfclaim}
	The $t$-neighbourhoods of $(b,0)$ and $(a,2t)$ are isomorphic, therefore by the above fact about first-order locality the two elements satisfy the same first-order formulas. A more detailed proof of a similar argument can be found in~\cite[Ex.~2]{libkinsurvey}.
\end{pfclaim}

Observe also that in the above model $\npmodel$ we also have that non-weak chain logic (CL) collapses to first-order logic, therefore the same proof gives us that $\binfotc \not\leq \mathrm{CL}$. Therefore, we have proved the following proposition:

\begin{proposition}\label{prop:sepwclbinfotc}
	$\binfotc \not\leq \wcl$ and $\binfotc \not\leq \mathrm{CL}$.
\end{proposition}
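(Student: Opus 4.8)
The plan is to separate the two logics by a single property that \binfotc can define but \wcl cannot, namely \emph{undirected reachability}. First I would note that this property is expressed in \binfotc by
\[
\varphi(x,y) := [\tc_{u,v}.\,R(u,v) \lor R(v,u)](x,y),
\]
which holds precisely when $x=y$ or there is an undirected $R$-path from $x$ to $y$. All the work then lies in showing that no \wcl-formula (and no CL-formula) can be equivalent to $\varphi$ over all models.

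The strategy is to build a model $\npmodel$ on which the chain quantifier degenerates to ordinary first-order quantification. I would take two disjoint copies of $\mathbb{Z}$ carrying a successor relation whose direction \emph{alternates}, so that even-indexed nodes are sources and odd-indexed nodes are sinks. Then no directed path has more than one edge, whence every generalized chain --- finite or infinite --- contains at most two elements. Consequently each quantifier $\existswc X.\psi$ can be rewritten as $\exists x_1,x_2.\psi'$, where $\psi'$ is a first-order variant of $\psi$ speaking of the (at most two-element) set $\{x_1,x_2\}$; thus \wcl collapses to \foe on $\npmodel$. The same bound on chain length makes CL collapse as well. It therefore suffices to prove that \foe cannot define $\varphi$ on $\npmodel$.

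For this final step I would apply Gaifman locality in the form of Fact~\ref{fact:gaifman}. Assuming for contradiction that some $\varphi'\in\foe$ is equivalent to $\varphi$, let $t$ be the locality radius the fact provides. Because $\npmodel$ is uniform and uncoloured, the $t$-neighbourhoods of suitably chosen points are isomorphic while their reachability status differs: I would compare the pair $((a,0),(b,0))$, whose components lie in different copies and are therefore not undirectedly connected, with the pair $((a,0),(a,2t))$, whose components lie in the same copy and are connected. Since the $t$-neighbourhoods of $(b,0)$ and $(a,2t)$ are isomorphic, Fact~\ref{fact:gaifman} forces $\varphi'$ to take the same truth value on both pairs, contradicting the fact that $\varphi$ distinguishes them. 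As the argument uses only the first-order fragment, it simultaneously yields $\binfotc \not\leq \mathrm{CL}$.

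The step I expect to be the crux is the collapse argument: one must verify rigorously that the alternating-successor construction bounds every generalized chain (and every CL-chain) to length one, and that the resulting replacement of $\existswc$ by bounded first-order quantification is semantically faithful. Once this reduction to \foe is in place, the locality contradiction is routine.
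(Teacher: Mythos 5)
Your proposal is correct and follows essentially the same route as the paper: the same \binfotc formula for undirected reachability, the same two-copies-of-$\mathbb{Z}$ model with alternating successor direction collapsing \wcl (and CL) to \foe, and the same Gaifman-locality comparison of the pairs $((a,0),(b,0))$ and $((a,0),(a,2t))$. No substantive differences to report.
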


The above results can also be proved with a finite part of $\npmodel$, for example, restricting it to the segments $(a,\pm 4t)$ and $(b,\pm 4t)$.
\fcwarning{Is $\wcl$ included in CL?}


\section{Conclusions and open problems}

In this article we proved several characterization results for modal and classical fixpoint logics. The main results can be grouped as follows:

\begin{enumerate}
  \itemsep 0 pt
  \item
  We proved that $\binfotc \equiv \mucafoe$. That is, first-order logic with (binary) transitive closure is expressively equivalent to the fragment of $\unfolfp$ where the fixpoint is restricted to completely additive formulas (Section~\ref{sec:fotcinmufoe}).
  \item
  We introduced a new class $\AutWA(\ofoe)$ of additive-weak parity automata and proved that, on trees, the following formalisms are equivalent (Section~\ref{sec:soaut-logic}):
	\begin{itemize}
		\itemsep 0 pt
		\item $\wcl$: Weak chain logic,
		\item $\AutWA(\ofoe)$: Additive-weak automata based on $\ofoe$,
		\item $\mucafoe$: Completely additive restriction of $\mufoe$.
	\end{itemize}
  \item 
  We gave two characterization results for \pdl and, at the same time, solved the open problems of the bisimulation-invariant fragments of \binfotc and \wcl (Section~\ref{sec:char}). Namely, we proved that
  $\pdl \equiv \binfotc/{\bis}$ and $\pdl \equiv \wcl/{\bis}$.
\end{enumerate}

\noindent In order to obtain the results of the above list, we also had to develop many secondary results that are of independent interest. Among others, we proved:

\begin{enumerate}
	\itemsep 0pt
	\item Characterization of fixpoints of completely additive maps (Theorem~\ref{thm:propscafmap}).
	\item Characterization of fixpoints of maps that restrict to descendants (Theorem~\ref{thm:propssucmap}).
	\item Separation of \binfotc and CL/\wcl (Proposition~\ref{prop:sepwclbinfotc}).
	\item Normal forms and completely additive fragments for $\ofo$ and $\ofoe$ (Section~\ref{sec:onestep}).
\end{enumerate}

\paragraph{Open questions.}
There are many old and new open problems related to the content of this paper. In the following non-exhaustive list we name a few of them:
\begin{itemize}
	\item \textit{Automata for Monadic Path Logic}: Moller and Rabinovich~\cite{expressiveCTL} show that the full computation tree logic ($\mathrm{CTL}^*$) corresponds, on trees, to the bisimulation-invariant fragment of monadic path logic (MPL). The latter logic is a variant of MSO which quantifies over full (finite or infinite) paths. In contrast, \wcl quantifies over subsets of finite paths. Moller and Rabinovich do not use an automata-theoretic approach and leave this approach as an open question. Given the similarities between \pdl and $\mathrm{CTL}^*$, it would be interesting to see if the approach in this article can be applied to get an automata characterization of MPL and also characterize its bisimulation-invariant fragment on the class of all models.
	\item \textit{Characterization of full \fotc inside \folfp}: In this article we gave a precise characterization of the relationship between \binfotc and \unfolfp. It would be worth checking if this relationship lifts to \fotc and \folfp. As the logic \fotc captures NLOGSPACE queries, and \folfp captures PTIME this result could shed light on the relationship between NLOGSPACE and PTIME.
	\item \textit{Finitary versions of the bisimulation-invariance theorems}: It would be interesting to know if the bisimulation-invariance results of this article hold in the class of finite models. However, it is also not known whether the more fundamental equivalence $\muML \equiv \mso/{\bis}$ holds on finite models or not.
	\item \textit{The confusion conjecture}: In~\cite{Mikolaj-Thesis} Boja{\'n}czyk defines a notion of `confusion' and conjectures that a regular language (i.e., MSO definable) of finite trees is definable in CL iff it contains no confusion. A remarkable property of the notion of confusion is that it is \emph{decidable} whether a language has it or not. As the results of our paper transfer to finite trees (and $\mathrm{CL} \equiv \wcl$ in that class) the conjecture implies that a language definable in the mu-calculus (on finite trees) is definable in \pdl iff it contains no confusion. It is a major open problem whether we can decide if an arbitrary formula of $\muML$ is equivalent to some formula in \pdl. Therefore, it would be important to check the confusion conjecture.
	\item \textit{Characterization of $\AutWA(\ofoe)$ on all models}: It is known that, on all models, $\Aut(\ofoe)$ is equivalent to the graded $\mu$-calculus~\cite{Janin2006}. As we know that the equivalences $\Aut(\ofoe) \equiv \mso \equiv \unfolfp$ hold on trees~\cite{Walukiewicz96}, this means that, on trees, all these formalisms are equivalent to the graded $\mu$-calculus. It would be interesting to see if a similar result can be obtained for $\AutWA(\ofoe)$.

	Our conjecture is that $\AutWA(\ofoe)$ is, on all models, equivalent to a graded version of $\mucaML$. We think that, equivalently, $\AutWA(\ofoe)$ corresponds to a kind of graded \pdl where the number restrictions can only be applied to atomic programs. This may be better seen as the description logic $\mathcal{ALCQ}_\mathrm{reg}$~(see~\cite{Schild:1991:CTT:1631171.1631241,DeGiacomoLenzerini1996}). As a corollary, we would get that, on trees, \binfotc is equivalent to $\mathcal{ALCQ}_\mathrm{reg}$.
	%
	%
	\item \textit{Restriction to descendants}: The autors of~\cite{arec:hybr99a} show that the bounded fragment of first-order logic is equivalent to the hybrid language $\mathcal{H}(\downarrow,@)$. If $\muffoe$ is a proper generalization of this language, it should be possible to prove that $\muffoe \equiv \mu\mathcal{H}(\downarrow,@)$.
	\item \textit{Standalone proof of $\pdl \equiv \binfotc/{\bis}$}: One of the most cumbersome parts of this article is the logical characterization of $\AutWA(\ofoe)$ as \wcl on trees. In particular, the simulation theorem and the closure under projection are very technical. If we are only interested in proving that $\AutWA(\ofoe)\equiv \binfotc$ on trees, it shouldn't be strictly necessary to go through \wcl as we do in Section~\ref{sec:soaut-logic}.

	The crucial point is to show that from $\mucaffoe$ we can directly construct an equivalent automaton in $\AutWA(\ofoe)$. While doing it, we would have to prove that $\AutWA(\ofoe)$ is `closed under the $\lfp$ operation.' It is possible that a proof like~\cite[Theorem~3.2.2.1]{Janin2006} could be adapted to this setting. In that case, \wcl could be completely taken out of the picture.
	%
\end{itemize}

\paragraph{Acknowledgements.}
The author is grateful to Yde Venema for thoroughly reading and commenting on \emph{many} previous versions of this document; to Alessandro Facchini, Sumit Sourabh and Fabio Zanasi for fruitful discussions on the topics of this article; and to Balder ten Cate for his patience with my many emails on $\fotc$.

\bibliographystyle{alpha}
\bibliography{main}

\end{document}